\newtheorem{theorem}{Theorem}
\newtheorem{lemma}[theorem]{Lemma}
\newtheorem{corollary}[theorem]{Corollary}
\newtheorem{eg}{Example}[section]
\newtheorem{proposition}{Proposition}
\newcommand{\mb}[1]{\ensuremath{\boldsymbol{#1}}}
\newcommand{\onenu}{\mathbbm{1}_\nu}
\newcommand{\onee}{\mathbbm{1}}
\newcommand{\ed}{\mathbb{E}_{\boldsymbol {\nu}}}
\newcommand{\eo}{\mathbb{E}_{\boldsymbol {\omega}}}
\newcommand{\eod}{\mathbb{E}_{\boldsymbol {\omega,\nu}}}
\newcommand{\pd}{\mathbb{P}_{\boldsymbol {\nu}}}
\newcommand{\nuk}{\nu_{k}}
\newcommand{\nukp}{\nu_{k^+}}
\newcommand{\bsigma}{\boldsymbol {\sigma}}
\def\opt{\textsc{OPT}}
\def\alg{\textsc{ALG}}
\def\galg{\textsc{G-ALG}}
\def\salg{Sample\textsc{ G-ALG}}
\def\astalg{Sample Assort\textsc{ G-ALG}}
\def\astgalg{Assort\textsc{ G-ALG}}
\def\dpg{\textsc{RBA}}
\DeclarePairedDelimiter{\floor}{\lfloor}{\rfloor}
\definecolor{shadecolor}{gray}{0.90}
\begin{document}
	
	\RUNTITLE{Optimal Competitive Ratio for Reusable Resource Allocation}
	\RUNAUTHOR{Goyal, Iyengar, Udwani}
	\TITLE{Asymptotically Optimal Competitive Ratio for Online Allocation of Reusable Resources}
	
	\ARTICLEAUTHORS{%
		\AUTHOR{Vineet Goyal}
		\AFF{
			Columbia University, \EMAIL{vgoyal@ieor.columbia.edu}} 
	\AUTHOR{Garud Iyengar}
	\AFF{
		Columbia University, \EMAIL{garud@ieor.columbia.edu}}
	\AUTHOR{Rajan Udwani}
	\AFF{
		UC Berkeley, \EMAIL{rudwani@berkeley.edu}}}

\ABSTRACT{
	We consider the problem of online allocation (matching, budgeted allocations, and assortments) of reusable resources where an adversarial sequence of resource requests is revealed over time and any allocated resource is used/rented for a stochastic duration drawn independently from a resource dependent usage distribution. Previously, it was known that a greedy algorithm is 0.5--competitive against the clairvoyant benchmark that knows the entire sequence of requests in advance (Gong et al.\ (2021)). 
		We give a novel algorithm that is $(1-1/e)$--competitive for arbitrary usage distributions when the starting capacity of each resource is large and the usage distributions are known. This is the best achievable competitive ratio guarantee for the problem, i.e., no online algorithm can have better competitive ratio. We also give a distribution oblivious online algorithm and show that it is $(1-1/e)$--competitive in special cases.   
	At the heart of our algorithms is a new quantity that factors in the potential of reusability for each resource by (computationally) creating an asymmetry between identical units of the resource. We establish the performance guarantee for our algorithms by constructing a feasible solution to a novel 
		system of inequalities that allows direct comparison with the clairvoyant benchmark instead of a linear programming (LP) relaxation of the benchmark. Our technique generalizes the primal-dual analysis framework for online resource allocation and may be of broader interest. 
}
\KEYWORDS{Online resource allocation, Reusable resources, Primal-dual, Optimal competitive ratio}

\maketitle




\section{Introduction}\label{sec:intro}

The problem of online bipartite matching~ \citep{kvv}, and its various generalizations such as online assortment optimization~\citep{negin}, online budgeted allocations or Adwords~\citep{msvv} and others (see \cite{survey}), have played an important role in shaping the theory and practice of resource allocation in dynamic environments. Driven by applications such as internet advertising, personalized recommendations, crowdsourcing etc., 
these settings focus on online allocation of non-reusable resources, where resources can only be used once.  
	Emerging applications in sharing economies, data centers, and make-to-order services, 
	has prompted a surge of interest in understanding online allocation of resources that are reusable, i.e., \emph{resources that may be used/allocated multiple times}~\citep{dickerson, RST18, reuse}. 

\textbf{Online Bipartite Matching of Reusable Resources (OBMR):} We start by describing a fundamental model where reusable resources need to be matched to sequentially arriving demand. Consider a bipartite graph $G=(I,T,E)$, one side of which is a set of resources $i\in I$ with inventory/capacity $\{c_i\}_{i\in I}$. Vertices $t\in T$ arrive sequentially at times $\{a(t)\}_{t\in T}$. When vertex $t$ arrives, we see the edges incident on $t$ and must make an irrevocable decision to match $t$ to at most one resource with available capacity, without any knowledge about future arrivals (that could be set by an adversary). If we match $t$ to resource $i$, one unit of the resource is used by arrival $t$ for a random duration $d_{it}$ drawn independently from distribution $F_i$.  
At time $a(t)+d_{it}$, this unit comes back to the system and is immediately available for a re-match. While distributions $F_i$ are known a priori, the exact duration of any use is revealed to us only upon return. A match between $t$ and $i$ earns reward $r_i$ (later, we allow rewards to be an arbitrary function of the random usage duration). With the goal of maximizing the expected total reward, we seek online algorithms that \emph{compete} well against an optimal \emph{clairvoyant} benchmark on every arrival sequence. The clairvoyant knows the entire arrival sequence in advance and it matches arrivals in order, observing the realizations of usage durations only when resources return\footnote{In Appendix \ref{appx:stronger}, we show that the stronger offline benchmark which also knows realizations of usage durations in advance does not allow a non-trivial competitive ratio guarantee for any online algorithm. We also show that clairvoyant is equivalent to a natural offline LP benchmark in the settings of interest here.} (same as an online algorithm). Formally, let $\mathcal{G}=(G,\{c_i,r_i,F_i\}_{i\in I},\{a(t)\}_{t\in T})$ denote a problem instance and let $\mb{\mathcal{G}}$ denote the set of all instances. {\color{black} Let $\alg(\mathcal{G})$ and $\opt(\mathcal{G})$ denote the expected reward of an online algorithm \alg\ and the clairvoyant respectively.
	\[\text{Competitive ratio of \alg: } \qquad \inf_{\mathcal{G}\in \mb{\mathcal{G}}}\frac{\alg(\mathcal{G})}{\opt(\mathcal{G})}. \]}
This model generalizes classic online bipartite matching with non-reusable resources (OBM), which corresponds to the special case where every usage duration is larger than $a(T)$ (time of last arrival) with probability 1.  
Borrowing an example application from \cite{reuse}, modern cloud platforms such as Amazon Web Services, Google Cloud, and Microsoft Azure, commonly support large scale data storage. This service is widely used by online video platforms such as YouTube and Netflix. A given data file is typically stored in only a subset of servers that are part of the cloud. A user request for data can only be sent to the subset of servers with the required file(s). Each server can concurrently serve only a limited number of requests while meeting the stringent low latency requirements on such platforms. The online matching model described above captures this setting, with servers in the cloud modeled as reusable resources. The capacity of a server is the maximum number of requests that it can simultaneously serve. 
We are interested in developing the best possible 
online algorithm for allocating reusable resources in the \emph{large capacity/inventory regime}, i.e., we seek algorithms that attain the highest achievable 
competitive ratio as $c_i\to+\infty$ for every $i\in I$. 
The large inventory assumption is prominent in a variety of online resource allocation problems. This includes the seminal Adwords problem~\citep{msvv}, where this is called the small bids assumption, and settings such as display advertising~\citep{displayad}, online assortment optimization~\citep{negin}, online order fulfillment~\citep{fulfil}, two-sided online assortment~\citep{aouad2020online} and many others \citep{survey}. 
{ \color{black}  
For the sake of simplicity, we focus most of our discussion on the OBMR model. Our main results as well as results shown in prior work, hold in more general settings 
that we will discuss in Section \ref{sec:mtoasst}.}

\textbf{Prior work:} \cite{reuse} showed that the greedy algorithm that matches each arrival to the available resource with the highest reward, is 0.5--competitive, i.e., on every arrival sequence, greedy has expected total reward that is at least half the expected reward of clairvoyant. The result holds for arbitrary usage distributions and starting inventory. 
Typically, the main algorithm design question in online resource allocation is to find algorithms that outperform greedy. Prior to our work, no result beating the 0.5--competitive guarantee of greedy was known for our setting. 
When resources are non-reusable, 
a \emph{simple and scalable} algorithm that incorporates the fraction of remaining inventory of resources in allocation decisions, achieves the best possible competitive ratio guarantee of $(1-1/e)$ 
~\citep{msvv,negin}. This algorithm, often called \emph{Inventory Balancing} or simply \emph{Balance}, 
tracks the remaining capacity $y_i(t)$ of each resource and matches an arrival $t$ with edge set $S_t$ as follows,
\begin{equation}\label{Bal}
		\text{Balance: }\qquad \text{Match $t$ to   } \argmax_{i\in S_t}\,\, r_i \left(1-g\left(\frac{y_i(t)}{c_i}\right)\right),
	\end{equation}
where 
$g(x)$ is a non-increasing function. By choosing $g(\cdot)$ we can trade-off between conserving resources with low inventory for future arrivals and maximizing the revenue from each allocation. Setting $g(x)=e^{-x}$ leads to the algorithm with the best possible guarantee of $(1-1/e)$ and we refer to this particular choice as the Balance algorithm. We recover the greedy algorithm by setting $g(x)=\onee(x=0)$, where $\onee(x=0)$ is a binary indicator that is 1 only when $x=0$.  Finally, note that $(1-1/e)$ is also an upper bound on the best possible competitive ratio for our (more general) setting. 

\subsection{Our Contributions}\label{sec:ourcontrib}
{\color{black} 
	When resources are reusable, in addition to conserving resources for unknown future arrivals we may also need to account for the fact that resources that are currently in use may return ``soon'' (and not require conservation). However, recall that the time of return of an in-use resource and the types of future arrivals are both unknowns 
- a feature common in many practical settings. Under this combination of uncertainty, we show that previously known algorithms, such as Balance, have a competitive ratio that is strictly lower than $(1-1/e)$. 
\begin{theorem}\label{balup}
	The competitive ratio of Balance with $g(x)=e^{-x}$ is less than $0.626$ $(<(1-1/e))$ for OBMR. 
\end{theorem}
The proof is included in Appendix \ref{appx:twopoint}. We construct an instance where usage durations are sampled from two-point distributions. In concurrent work, \cite{feng3} showed that Balance is $(1-1/e)$--competitive for reusable resources when usage durations are deterministic. Our result demonstrates the need for new algorithms when durations are uncertain. }
We start with a discussion of our algorithmic ideas and our main results, followed by a brief description of the technical ideas. 
Appendix \ref{sec:challenges} substantiates these discussions with examples. 

\subsubsection*{Algorithmic Ideas.}

Our first algorithmic ingredient is a new quantity that factors in the potential of reusability for each resource by (computationally) creating an asymmetry between identical units of the resource. For resource $i\in I$, consider an arbitrary ordering over its $c_i$ units. 
Let indices $k\in[c_i]$ denote the ordering, with index $k=c_i$ representing the highest ranked unit. 
{\color{black}When a customer is matched to $i$ we treat this as a match to the \emph{highest ranked} unit of $i$ available at the time that the customer arrives. This unit is then unavailable until the customer returns the resource.} Over time, higher ranked units get matched more often and this induces a difference between units that are otherwise identical. Formally, let $z_i(t)$ denote the highest ranked unit of $i$ that is available just before $t$ arrives and let $S_t$ denote the set of available resources with an edge to $t$. Then our algorithm, \emph{Rank Based Allocation} (\dpg), matches $t$ to 
\begin{equation*}
\text{\dpg: }\qquad \argmax_{i\in S_t}\quad  \left(1-g\left(\frac{z_i(t)}{c_i}\right)\right)r_i,
\end{equation*}
{\color{black} with ties broken arbitrarily.} We refer to $ \left(1-g\left(\frac{z_i(t)}{c_i}\right)\right)r_i$ as the \emph{reduced price} for resource $i\in I$ at $t$. Observe that \dpg\ is a greedy algorithm on reduced prices. Similar to Balance, it turns out that choosing $g(x)=e^{-x}$ gives the best guarantee for a variety of usage distributions.  
Notice that $z_i(t)$ is always at least as large as the remaining inventory $y_i(t)$. Therefore, for a given value of $y_i(t)$, the reduced price of $i$ is higher in \dpg\ than in Balance. 
Examples \ref{passivity} -- \ref{passivity2} in  Appendix \ref{sec:challenges} numerically demonstrate the difference between RBA and Balance and also show how 
the reduced prices in \dpg\ can capture the ``effective" inventory of each resource -- the units that are available \emph{and} units that are in use but return ``soon". Remarkably, \dpg\ can capture this despite being completely oblivious to usage distributions. 

We establish that \emph{\dpg\ is $(1-1/e)$--competitive for OBMR over a broad class of usage distributions}, including two-point distributions, exponential distributions, and more generally, increasing failure/hazard rate (IFR) distributions.  
A formal statement of the theorem is deferred to Section \ref{sec:rba}. Despite the simplicity of \dpg, analyzing its performance is quite challenging. A major reason is that the output of \dpg\ is stochastic. 
{\color{black} We say that an online algorithm is \emph{realization dependent} when its output depends on the realizations of random usage durations. 
In \dpg, the quantities $z_i(t)$ are random variables that depend on usage durations of matches prior to $t$. In fact, varying a single past duration can sometimes change the value of $z_i(t)$ from $c_i$ to $0$. Note that realization dependence} presents a challenge in evaluating the performance of Balance as well, though the remaining capacity $y_i(t)$ is much more stable to small variations on the sample path. For online matching, we can control some of these stochastic dependencies in the output of \dpg\ 
to show that it achieves the optimal competitive ratio of $(1-1/e)$ for a broad class of usage distributions. 

With the interpretation of \dpg\ as a greedy algorithm on reduced prices, the algorithm generalizes quite naturally to more general settings such as online assortments and budgeted allocation {\color{black}(see Appendix \ref{sec:rbafail})}. However, in analyzing \dpg\ in these more general models we find that the stochastic element of reusability can interact with certain aspects of these models 
in non-trivial ways, causing a failure of 
the structures that enable the analysis of \dpg\ for online matching. 

To address these challenges and show a more general result, we consider {\color{black} \emph{realization independent} algorithms that are not directly influenced by the vagaries of stochastic parts of the problem. Our second algorithm uses \dpg\ at its core but \emph{wraps it in a realization independent framework}.} The final algorithm is not as simple as \dpg, {\color{black} nor oblivious to usage distributions,} but it address the difficulties with analyzing \dpg\ 
in a crisp and unified way. We introduce our realization independent algorithm for OBMR in Section \ref{sec:matching}. 

Let $c_{\min}=\min_{i\in I} c_i$. In the large inventory regime, we have $c_{\min}\to +\infty$. We establish the following guarantee for arbitrary usage distributions.
{\color{black}	
\begin{theorem}[Special Case of Main Result]\label{main}
For OBMR with arbitrary usage distributions, there exists a realization independent algorithm with competitive ratio $(1-1/e-\delta)$, for $\delta=O\left(\sqrt{\frac{\log c_{\min}}{c_{\min}}} \right)$. 
For $c_{\min}\to +\infty$, the guarantee approaches $(1-1/e)$. This is the best possible asymptotic guarantee, i.e., no online algorithm can have asymptotic competitive ratio better than $(1-1/e)$. 
\end{theorem}



\noindent \textbf{Remarks.} Related to Theorem \ref{main}:  
\begin{enumerate} [(i)]
	\item 
	We obtain the $(1-1/e)$ guarantee in a more general setting that is formally described in Section \ref{sec:mtoasst}. 
	\item Our realization independent algorithm requires knowledge of the usage distributions. In contrast, \dpg\ is realization dependent but distribution oblivious. 
		\item Our results also hold when rewards depend on duration of usage. 
	In fact, for resource $i\in I$ the per-unit reward can be any function $r_i(d)$ of the usage duration $d$, provided the expectation $r_i=\mathbb{E}_{d\sim F_i}[r_i(d)]$ is finite. Algorithmically, it suffices to consider just the expected reward $r_i$ in this more general setting (see Appendix \ref{appx:miscstatic}). 
	\item We focus on showing competitive ratio guarantees against clairvoyant but our results also hold against a linear programming (LP) benchmark that has been used in prior work. In fact, we show that the two benchmarks are equivalent in the large inventory regime (see Appendix \ref{appx:LP}).
\end{enumerate}


\subsubsection*{Technical Ideas.}\label{sec:techintro}
The primal-dual technique of \cite{buchbind} and \cite{devanur}, commonly used to analyze algorithms for online resource allocation, converts the problem of finding a lower bound on the competitive ratio of an algorithm to that of finding a feasible solution for the dual of an LP upper bound on the clairvoyant. As we demonstrate in Appendix \ref{primalimpossible}, due to a mix of combinatorial and stochastic elements the standard approach is insufficient for analyzing algorithms in case of reusability. 
Remarkably, this is true even with \emph{fluid} version of reusability where if a unit of some resource $i$ with usage distribution $F_i$ is matched at time $\tau$, then exactly $F_i(d)$ fraction of this unit returns by time $\tau+d$, for every $d\geq 0$. 
{\color{black}
We take a novel approach by developing a certificate of optimality that is \emph{LP free}, i.e., not based on LP duality. 
We propose a system of linear inequalities 
such that proving feasibility of the system certifies competitiveness of the online algorithm that is being analyzed. The inequalities in our system depend directly on the actions of clairvoyant and 
are \emph{easier to satisfy} than the dual of an LP upper bound on the clairvoyant. The increased flexibility of the LP free system allows us to prove competitive ratio results in settings where the classic primal-dual approach fails. 
We present this generalized certificate in Section \ref{sec:certificate}.	 To use the LP free certificate and prove performance guarantees, we require several new technical ideas and structural properties that are discussed in more detail later on. Our LP free analysis framework has also been used in 
subsequent work beyond the settings considered in this paper \citep{aouad2020online, unkads, pricing, manshadi2022online, osow}.} 
\subsection{Related Work}\label{related}
{\color{black}
\begin{table}{\color{black}
\centering
\begin{tabular}{|c|c|c|c|c|c|}
	\cline{3-6}
	\multicolumn{2}{c|}{}   & \multicolumn{2}{c|}{General Case} & \multicolumn{2}{c|}{Large Inventory}               
	                                                        \\
	                                                        \cline{3-6}
	\multicolumn{2}{c|}{}   & \multicolumn{1}{c|}{Lower Bound} & \multicolumn{1}{c|}{Upper Bound} & \multicolumn{1}{c|}{Lower Bound} & \multicolumn{1}{c|}{Upper Bound}                                                                       \\	
	\hline
	\multirow{2}{*}{Non-Reusable}& Matching & ${(1-1/e)}$ (R)      & \multirow{2}{*}{${(1-1/e)}$}  & \multirow{2}{*} {$(1-1/e) $ (D)}&\multirow{2}{*} {$(1-1/e) $}\\
\cline{2-3}
& Assortment & \multirow{1}{*}{${0.5}$ (D)} &&&  \\
\cline{1-6}
	\multirow{2}{*}{Reusable}& Matching &  \multirow{2}{*}{${0.5}$ (D)}     &  \multirow{2}{*}{${(1-1/e)}$} & \multirow{2}{*} {$0.5\to \mb{(1-1/e)} $ \textbf{(R)}}& \multirow{2}{*}{${(1-1/e)}$}\\
\cline{2-2}
& Assortment & &&&  \\
\hline
\end{tabular}
\caption{Summary of best known competitive ratio guarantees (lower and upper bounds) in relevant settings. We use ``Assortment" in the table to refer to settings that generalize matching (see Section \ref{sec:mtoasst} for the general model). The entry in bold represents our contribution. The lower bounds reported for large inventory are in the limit $c_i\to +\infty\,\, \forall i\in I$. The letters (D) and (R) refer to deterministic and randomized algorithms respectively.  }
\label{tab}
}
\end{table}
 We start by discussing prior work on non-reusable resources and then review prior work on reusable resources.  Our focus is mainly on settings that are directly related to our work and we refer to various other sources for missing details of other settings. 
 Table \ref{tab} summarizes our competitive ratio results in the context of previously known results. 
\smallskip

 \textbf{Non-reusable Resources.}  \cite{kvv} introduced the classic problem of online bipartite matching (OBM), a special case of OBMR, where each resource has identical reward (i.e., $r_i=1\,\, \forall i\in I$) and the usage durations are larger than $a(T)$, making the resources non-reusable. They observed that any deterministic online algorithm that generates a maximal matching is 0.5--competitive for OBM and no deterministic online algorithm has a better competitive ratio. Further, they showed that matching arrivals based on a random ranking over all resources has a competitive guarantee of $(1-1/e)$, and this is the highest possible competitive ratio guarantee achievable by an online algorithm. Thus, $(1-1/e)$ serves as an upper bound on the best possible guarantee in various settings that generalize OBM. The original analysis of \cite{kvv} was clarified and considerably simplified in \cite{baum, goel2, devanur}. 
 There are various generalizations of OBM, ranging from vertex-weighted matching \citep{goel} to online submodular welfare maximization \citep{kapralov}. We refer to \cite{survey} and \cite{newsurvey} for a detailed review of these settings. 

Under the assumption of large starting inventory, \cite{pruhs} showed that the Balance algorithm that matches each arrival to the resource with the largest amount of unmatched inventory, is asymptotically $(1-1/e)$--competitive for OBM when all resources have the same starting inventory. This setting is called online $b$-matching where $b$ denotes the starting inventory of each resource. Note that Balance is a deterministic algorithm that is 0.5--competitive without the large inventory assumption\footnote{\cite{kvv} note that all deterministic myopic algorithms that output a maximal matching are 0.5--competitive for online $b$-matching when $b=1$.}. 
\cite{pruhs} also showed that $(1-1/e)$ is the best possible asymptotic guarantee even under the large inventory assumption. This gives an upper bound of $(1-1/e)$ in the large inventory regime for 
various generalizations of OBM. Note that the Balance algorithm for $b$-matching is a special case of the Balance algorithm that we discussed in Section \ref{sec:intro} (see \eqref{Bal}). The generalized form \eqref{Bal} with the trade-off function $g(\cdot)$ was 
proposed by \cite{msvv}, who introduced the Adwords problem and showed that Balance is $(1-1/e)$--competitive for Adwords in the large inventory regime. Adwords generalizes online $b$-matching by allowing resources to have heterogeneous starting inventory and considering arrivals that may require multiple units of the resource that they are matched to  
(see Section \ref{sec:mtoasst} for more details). Finding the best possible competitive ratio for Adwords without the large inventory assumption  remains an open problem\footnote{Greedy is 0.5--competitive for Adwords. The first improvement over greedy was made very recently by \cite{huang2024adwords}.}. 


  More recently, \cite{negin} considered another generalization of OBM where each arrival is offered an assortment (set) of resources from which the arrival randomly chooses at most one resource according to a known choice model 
  (see Section \ref{sec:mtoasst} for more details). They showed that a natural (deterministic) generalization of Balance is $(1-1/e)$--competitive for online assortment optimization in the large inventory regime and 0.5--competitive in general. \cite{chan} showed that the greedy algorithm is 0.5--competitive for a general family of problems that includes online assortment optimization as a special case. 
  Note that $(1-1/e)$ is the optimal guarantee for this problem due to the upper bounds for OBM and online $b$-matching. 
Finding the best possible guarantee for online assortment optimization with arbitrary starting inventory remains an open problem even in a well-studied special case known as online matching with stochastic rewards~\citep{deb}. 
We refer the reader to \cite{negin} for a detailed review of prior work on online allocation with customer choice. 
\smallskip

\textbf{Reusable Resources.} 
In general, the best known guarantee for our setting is due to \cite{reuse}, who showed the greedy algorithm is 0.5--competitive for online assortment optimization with reusable resources. Recall that our basic model (OBMR)  generalizes OBM,  so $(1-1/e)$ is the best possible guarantee for our model even in the large inventory case.  
We focus on the large inventory regime and establish the guarantee of $(1-1/e)$ for a generalization of the setting in \cite{reuse}. Concurrently, 
\cite{feng3} showed that Balance is $(1-1/e)$--competitive for the special case of deterministic usage durations. We note that our result is based on a randomized algorithm and finding a deterministic $(1-1/e)$--competitive algorithm (for large inventory), or proving that no such algorithm exists, remains an open problem. We make some progress in this direction by analyzing \dpg\ (see Section \ref{sec:rba} for more details). 
\smallskip

\textbf{Other Related Settings.} Prior to \cite{reuse}, \cite{dickerson} and \cite{RST18} considered a bayesian setting where resources are reusable and the arrival sequence is not adversarial but sampled from a distribution that is known to the online algorithm. 
The state-of-the art in this model is due to \cite{feng2}, who gave an online algorithm with competitive ratio 1 (asymptotically) for large inventory and competitive ratio 0.5 in the general case. Their result generalizes prior work on OBM with stochastic arrivals (and non-reusable resources). Note that the results for stochastic arrivals hold even if the usage distributions are arrival dependent, i.e., arrival $t$ uses resource $i$ for a duration sampled independently from a distribution $F_{it}(\cdot)$. For adversarial arrivals, \cite{reuse} showed that no non-trivial competitive ratio result is possible when the usage distributions depend on arrivals. We show that this hardness holds even for the special case of \emph{deterministic} arrival dependent usage durations (see Appendix \ref{appx:miscdet}). We refer to \cite{survey} and \cite{feng2} for a detailed review of prior work in the stochastic arrival model for non-reusable resources and reusable resources respectively.  

 Recent works in queuing theory study a setting similar to ours but with the objective of latency minimization on a bipartite network, for instance see \cite{srikant} and \cite{jsq}. Our model ignores the queuing aspect and provides a complementary perspective from the point of view of maximizing the number of successful matches in a \emph{loss system} (where the latency of accepted jobs is zero).
Our setting also bears some resemblances with the stochastic online scheduling problem \citep{megow} and online load balancing problems \citep{azar1,azar2}. While the settings and objectives in these problems differ substantially from ours, it may be an interesting connection to explore in future work.
}
\subsubsection*{Outline: }  
{\color{black}	In the next section, we present the generalized certificate for proving competitive ratio guarantees. 
In Section \ref{sec:matching}, we introduce our realization independent algorithm for online matching and use the LP free certificate to prove Theorem \ref{main}. In Section \ref{sec:rba}, we present our theoretical results for \dpg\ and give an overview of the technical challenges in analyzing \dpg. The detailed analysis of \dpg\ is included in Appendix \ref{appx:rbaoverview}. In Section \ref{sec:mtoasst}, we introduce the more general setting of online assortment optimization with budgeted allocation and present the generalization of Theorem \ref{main}. 
In Section \ref{sec:numerics}, we numerically evaluate the performance of our algorithms and compare with existing algorithms.	In Section \ref{sec:conclusion}, we summarize the results and discuss possible directions for future work.} 


\section{Generalized Certificate of Optimality}\label{sec:certificate}
Inspired by the classical primal-dual analysis, we formulate a system of linear constraints such that finding a feasible solution to the system establishes a lower bound on the competitive ratio of an online algorithm. In primal-dual analysis, this linear system is given by the constraints in the dual that depend only on the natural problem parameters: the graph, starting capacity, usage distributions and per unit revenues. In a departure from this approach we work with a linear system that depends directly on the actions of the clairvoyant algorithm that we are comparing against. 
This results in a set of conditions that are weaker, i.e., easier to satisfy. 

Let \alg\ denote the online algorithm under consideration as well as the expected total reward of the algorithm. Let $\alg_i$ denote the expected reward generated from matching resource $i$ in \alg. Similarly, \opt\ refers to the clairvoyant and its expected reward and $\opt_i$ denotes the expected reward generated from matching resource $i$ in \opt.  
We generate a sample path $\omega$ for \opt\ as follows: For every unit $k\in[c_i]$ of every resource $i\in I$, construct a long list (say of size $T$) of independent samples from distribution $F_i$. When a unit is matched, we draw the first unused sample from the corresponding list of samples. This collection of lists and samples uniquely determines a sample path $\omega$. The matching output by \opt\ is a function of the sample path $\omega$ and we denote the set of arrivals matched to resource $i$ on sample path $\omega$ using $O(\omega,i)$. In the more general case of assortments, $\omega$ also includes randomness due to customer choice. To simplify the discussion and without loss of generality (w.l.o.g.), we let clairvoyant (\opt) be a deterministic algorithm, so that $\omega$ captures all the randomness in \opt\ (see Appendix \ref{appx:miscfrac} for further justification). We use $\nu$ to refer to the sample path of usage durations in the online algorithm (\alg). Let $\eod[]$ denote the expectation over the randomness in sample paths $\omega$ and $\nu$.

At a high level, our main idea is to define non-negative \emph{pseudo-rewards} $\text{PR}_i$ for every resource $i\in I$ such that,
\[\text{PR}_i\geq \alpha\, \opt_i \quad \forall i\in I,\]
here $\alpha$ is the desired competitive ratio. We also want the sum of pseudo-rewards to be upper bounded by the total expected revenue of \alg, i.e.,
\[\sum_{i\in I} \text{PR}_i \leq \alg.\]
Clearly, if we can define such pseudo rewards for every instance then we have a competitive ratio guarantee of $\alpha$. Observe that if we set $\text{PR}_i=\opt_i\,\, \forall i\in I$, then the second constraint may not hold. Setting $\text{PR}_i=\alg_i\,\, \forall i\in I$ will violate the first constraint because $\alg_i$ could be 0 on some instances even if $\opt_i$ is non-zero. We define $\text{PR}_i$ 
by using non-negative variables $\lambda_t(\omega,\nu)$ and $\theta_i$, 
\begin{equation*}
	 \text{PR}_i\,=\,\theta_i + \eod\Big[ \sum_{t\in O(\omega,i) }\lambda_t(\omega,\nu)\Big]\quad  \forall i\in I. 
\end{equation*}
This novel definition allows the pseudo-rewards to be an instance dependent combination of \alg\ and \opt. In particular, we often set the values of these variables to reflect the online actions of \alg\ and \opt. Observe that the pseudo-reward also depends on \opt\ through the sets $O(\omega,i)$. Since \opt\ matches each arrival to at most one resource, we have, $O(\omega,i)\cap O(\omega,j)=\emptyset$ for every $i\neq j$ and the union $\cup_{i\in I} O(\omega,i)$ includes the set of all arrivals matched by \opt\ on sample path $\omega$.
Now, our (LP free) linear system is as follows, 
\begin{eqnarray}
		\sum_{i}\theta_i +\sum_t \eod[\lambda_t(\omega,\nu)]&\leq &\beta\,  \alg,\label{cert1}\\
		\theta_i + \eod\Big[ \sum_{t\in O(\omega,i) }\lambda_t(\omega,\nu)\Big]& \geq &\alpha_i \, \opt_i 
		\quad \forall i\in I ,\label{cert2}\label{cert'2} 
\end{eqnarray}
here $\alpha_i, \beta\geq 0$ are constants that define the competitive ratio. 
Notice that constraints \eqref{cert2} depend directly on the actions of \opt. 
The following lemma states that these conditions are sufficient to certify approximate optimality. See Appendix  \ref{appx:certificate} for a proof. 
\begin{lemma}\label{certificate}
	Given an online algorithm \alg, non-negative values $\{\lambda_t(\omega,\nu)\}_{t,\omega,\nu}$ and $\{\theta_i\}_{i}$ such that conditions \eqref{cert1} and \eqref{cert2} hold, 
	we have
	\[\alg \geq \frac{\min_{i\in I} \alpha_i}{\beta}  \opt.\] 
\end{lemma}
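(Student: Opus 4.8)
The plan is to sum the resource-wise constraints \eqref{cert2} over all $i\in I$, sandwich the resulting quantity between $\beta\cdot\alg$ (using \eqref{cert1}) and $(\min_{i}\alpha_i)\cdot\opt$, and then divide. Throughout I work with the special case \eqref{cert'2}, i.e.\ $P_i(\omega,\nu)=O(\omega,i)$, which is the relevant one for online matching; the general case \eqref{cert2} is identical once the collection $\{P_i(\omega,\nu)\}$ is chosen so that $\sum_i r_i\,\eo[\sum_{t\in P_i(\omega,\nu)}1]$ recovers (at least) the expected reward of \opt.

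First I would add \eqref{cert2} over $i\in I$. The left-hand side becomes $\sum_i\theta_i+\eod\big[\sum_i\sum_{t\in P_i(\omega,\nu)}\lambda_t(\omega,\nu)\big]$. The key structural observation is that the sets $P_i(\omega,\nu)$ are, by construction, pairwise disjoint subsets of $T$ on every sample path; combined with $\lambda_t(\omega,\nu)\ge 0$ this yields the pathwise bound $\sum_i\sum_{t\in P_i(\omega,\nu)}\lambda_t(\omega,\nu)=\sum_{t\in\cup_i P_i(\omega,\nu)}\lambda_t(\omega,\nu)\le\sum_{t\in T}\lambda_t(\omega,\nu)$. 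Taking $\eod[\cdot]$ and using linearity of expectation together with $\lambda_t=\eod[\lambda_t(\omega,\nu)]$ bounds the double sum by $\sum_t\lambda_t$. Hence the summed left-hand side is at most $\sum_i\theta_i+\sum_t\lambda_t$, which by \eqref{cert1} is at most $\beta\cdot\alg$.

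Next I would lower bound the summed right-hand side. Factoring out the smallest multiplier gives $\sum_i\alpha_i r_i\,\eo[\sum_{t\in O(\omega,i)}1]\ge(\min_{i\in I}\alpha_i)\sum_i r_i\,\eo[\sum_{t\in O(\omega,i)}1]$. Here each summand $r_i\,\eo[|O(\omega,i)|]$ is exactly $\opt_i$, the expected reward \opt\ collects from resource $i$, so summing over $i$ yields $\sum_i\opt_i=\opt$. Chaining the two bounds produces $\beta\cdot\alg\ge(\min_i\alpha_i)\cdot\opt$, and dividing by $\beta$ gives the claim. The only step requiring genuine care — beyond routine bookkeeping — is the disjointness-plus-nonnegativity telescoping that lets the per-resource $\lambda$-terms, summed against \opt's allocation sets, be absorbed into the single global budget $\sum_t\lambda_t$ of \eqref{cert1}; this is precisely where the freedom to make the constraints depend on \opt's actions (rather than on a fixed dual-feasible region) is exploited, and one must take care to interchange the finite sums with $\eod[\cdot]$ correctly and to keep every inequality oriented the same way.
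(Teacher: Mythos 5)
Your proposal is correct and follows essentially the same argument as the paper's proof: sum the per-resource constraints \eqref{cert2} over $i\in I$, use disjointness of the sets $P_i(\omega,\nu)$ together with nonnegativity of $\lambda_t(\omega,\nu)$ to absorb the double sum into $\sum_t\lambda_t$, apply \eqref{cert1}, and lower bound the right-hand side by $(\min_i\alpha_i)\,\opt$. Your remark that the general case requires $\sum_i r_i\,\eo\bigl[\sum_{t\in P_i(\omega,\nu)}1\bigr]$ to recover the reward of \opt\ is an accurate reading of the implicit assumption in the paper's statement as well.
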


{\color{black}
	\textbf{Remarks:}	
		In the classic primal-dual approach for analyzing online matching algorithms, the dual has a constraint for every edge in the graph (see Appendix \ref{primalimpossible}). In contrast, the LP free system has a constraint for every resource. In other words, the dual constraints corresponding to all the edges incident on a resource are replaced by a single constraint in the LP free system.  
		It can be verified that any feasible solution to the dual of the natural LP for reusable resources, provides a feasible solution to the linear system given by \eqref{cert1} and \eqref{cert'2},  but not vice-versa. In fact, in Appendix \ref{appx:lpfreedual}, 
		we show that the converse of Lemma \ref{certificate} is true, i.e., if $\alg\geq \frac{\min_{i\in I}\alpha_i}{\beta} \opt$, then there exists a feasible solution to the linear system. 
		In other words, \alg\ is $\frac{\min_{i\in I}\alpha_i}{\beta}$--competitive \emph{if and only if} our linear system is feasible. 
		This (stronger) statement is not true for standard primal-dual where, typically, the primal LP is a relaxation of the clairvoyant benchmark. In contrast, the LP free framework is not tied to the tightness of a relaxation and the increased flexibility makes it easier to find a feasible solution in this framework. We note that the LP free framework is a generalization of the path based analysis approach of \cite{stochrew}, that was designed for a specific setting of online matching with non-reusable resources.  
	}
	
	\section{$\left(1-\frac{1}{e}\right)$ Competitive Algorithm for Arbitrary Usage Distributions}\label{sec:matching}
	
	
	{\color{black} Let us recall the setting of online matching with reusable resources. We have a bipartite graph $G=(I,T,E)$, one side of which is a set of resources $i\in I$ with starting inventory/capacity $\{c_i\}_{i\in I}$. Vertices $t\in T$ arrive sequentially at times $\{a(t)\}_{t\in T}$. When vertex $t$ arrives, we see the edges incident on $t$ and must make an irrevocable decision to match $t$ to at most one resource with available capacity, without any knowledge about future arrivals. If we match $t$ to resource $i$, one unit of the resource is used by arrival $t$ for a random duration $d_{it}$ drawn independently from distribution $F_i$.  
		At time $a(t)+d_{it}$, this unit comes back to the system and is immediately available for a re-match. Distributions $F_i$ are known to the online algorithm but the exact duration of use is revealed only when the unit returns. A match between $t$ and $i$ earns reward $r_i$. The objective is to maximize the expected total reward.
		
		\textbf{\dpg\ algorithm: } For resource $i\in I$, consider an arbitrary ordering over its $c_i$ units. 
		Let indices $k\in[c_i]$ denote the ordering, with index $k=c_i$ representing the highest ranked unit. 
		When a customer is matched to $i$ we treat this as a match to the \emph{highest ranked} unit of $i$ available at the time that the customer arrives. This unit is then unavailable until the customer returns the resource. Let $z_i(t)$ denote the highest available unit of resource $i\in I$ just before the arrival of $t\in T$. Note that, $z_i(1)=c_i\,\, \forall i\in I$. Let $S_t$ denote the set of resources that have an edge to $t$ and that also have at least one unit available when $t$ arrives. Then, \dpg\ matches $t$ to 
		\begin{equation*}
			\qquad \argmax_{i\in S_t}\quad  r_i\,\left(1-g\left(\frac{z_i(t)}{c_i}\right)\right).
		\end{equation*}
		Choosing $g(x)=e^{-x}$ gives the best guarantee for a variety of usage distributions (see Section \ref{sec:rba}).
		As we discussed in Section \ref{sec:ourcontrib}, \dpg\ is hard to analyze for general usage distributions. 
		Now, we introduce the realization independent algorithm that we referred to in Theorem \ref{main} and show that 
		it is $(1-1/e)$--competitive for arbitrary usage distributions. 
	Our algorithm uses \dpg\ at its core but \emph{wraps it in a realization independent framework}. 
	{\color{black}	We discuss the generalization of this algorithm to online assortments and budgeted allocations in Section \ref{sec:mtoasst}, with more details included in Appendix \ref{appx:asst}.} 
}

We introduce the algorithms in two parts. The first part (Algorithm \ref{galg}), which is the main new object, is a relaxed online algorithm that is powered by \dpg\ but only subject to a \emph{fluid version of reusability} such that if a unit of some resource $i$ with usage distribution $F_i$ matched at time $\tau$, then $F_i(d)$ fraction of this unit returns by time $\tau+d$, for every $d\geq 0$. 
The output of this relaxed algorithm guides the final matching decision in a realization independent way (second part). 
We refer to the relaxed algorithm as \galg, since it serves as a guide.

\begin{algorithm}[h]
	\SetAlgoLined
	\textbf{Output:} Fractional matching given by values $x_{it}\in[0,1]$\;
	Let $g(t)=e^{-t}$, and 
	initialize $Y(k_i)=1$ for every $i\in I, k_i\in[c_i]$\;
	\For{every new arrival $t$}{
		For every $i\in I, k_i\in[c_i]$ and $t\geq 2$, update values 	\text{{\color{black}\tcp{Fluid update of inventory to capture the fractions of unit $k_i$ returning at $t$}}}\;
		\[ 
			Y(k_i)=Y(k_i)+\sum_{\tau=1}^{t-1} \Big(F_i\big(a(t)-a(\tau)\big)-F_i\big(a(t-1)-a(\tau)\big)\Big)y(k_i,\tau);\]	
			Initialize $S_t=\{i \mid (i,t)\in E\}$, values $\eta=0$, $y(k_i,t)=0$ and $x_{it}=0$ for all $i\in S_t,k_i\in [c_i]$\;
			\While{$\eta<1$ and $S_t\neq \emptyset$}{
				\For{$i\in S_t$}{
					\textbf{if}  $Y(k_i)=0$ for every $k_i\in[c_i]$ \textbf{ then} remove $i$ from $S_t$\;
					\textbf{else }$ z_i=\underset{k_i\in[c_i]}{\arg\max}\, \{k_i \mid Y(k_i)>0\}; \text{ {\color{black}\tcp{Highest\,\, available\,\, unit of $i$} }} $	
				}
				$ i(\eta)=\underset{i\in S_t }{\arg\max} \,\, r_i \Big(1-g\big(\frac{z_i}{c_i}\big)\Big);\, \text {{\color{black}\tcp{
							Rank based allocation}}}$\
				
				$y(z_{i(\eta)},t)=\min\{Y(z_{i(\eta)}),1-\eta\}$; \text{{\color{black} \tcp{Fractional\,\, match}}}\\
				Update $x_{i(\eta)t}\to x_{i(\eta)t}+y(z_{i(\eta)},t)$; \quad $\eta\to \eta+ y(z_{i(\eta)},t)$;\quad $Y(z_{i(\eta)})\to Y(z_{i(\eta)})- y(z_{i(\eta)},t) $\;
		}}	
		\caption{\galg}
		\label{galg}
	\end{algorithm}
	\begin{algorithm}[h]
		\SetAlgoLined
		Initialize inventory $y_i(0)=c_i$ and values $\delta_i=\sqrt{\frac{\log c_i}{c_i}}$ for $i\in I$\;
		Start a parallel (online) execution of \galg\;
		\For{every new arrival $t$}{
			For every $i\in I$ and $t\geq 2$, compute inventory 
			\[y_i(t)=y_i(t-1)+ \text{\# units of $i$ returned at $t$}; \]
			\text{{\color{black}\tcp{\# units returned depends on realization of usage durations}}}\;
			Read $(x_{it})_{i\in I}$ from the parallel execution of \galg\;
			Independently sample a resource from $I$ with probabilities $p_i=\frac{1}{1+\delta_i}x_{it}$ $\forall i\in I$\;\, \text{{\color{black} \tcp{Probability that no resource is sampled may be non-zero}}}\\
			Match $t$ to $i$ if $y_{i}(t)>0$ and update remaining capacity\;
			
		}
		\caption{\salg}
		\label{alg}
	\end{algorithm}
\smallskip
	
	\noindent \textbf{Description of the algorithm:} In \galg, we use $k_i$ to refer to unit $k$ of resource $i$. The variable $Y(k_i)$ keeps track of the fraction of unit $k_i$ available. 
	Each arrival may be matched to numerous units of different resources and the variable $y(k_i,t)$ is the fraction of unit $k_i$ matched to arrival $t$. {\color{black} Before matching $t$, $Y(k_i)$ is updated based on a fluid view of reusability whereby, if fraction $y(k_i,\tau)$ of $k_i$ is matched to $\tau$ ($\leq t$) then fraction $\big(F_i\big(a(t)-a(\tau)\big)-F_i\big(a(t-1)-a(\tau)\big)\big)y(k_i,\tau)$ returns after the arrival of $t-1$ and before arrival of $t$. After updating inventory, \galg\ uses \dpg\ to perform its fractional matching.   After matching $t$, the inventory is updated again and $Y(k_i)$ is reduced by $y(k_i,t)$ (which may be 0). 
		
		The values $x_{it}$ represent the total fraction of resource $i$ matched to arrival $t$. Notice that,
		\[\sum_{i\in I} x_{it}\leq 1\; \qquad x_{it}\geq0\quad \forall i\in I.\]
		At arrival $t$, \salg\ uses the fractional matching $(x_{it})_{i\in I}$ of \galg\ to make integral matches by independently sampling a resource from $I$ with the probability distribution $(\frac{1}{1+\delta_i}x_{it})_{i\in I}$. If the sampled resource is unavailable then $t$ departs unmatched. The value $\delta_i=\sqrt{\frac{\log c_i}{c_i}}$ is set so that the probability of sampling $i$ is reduced by a small amount that is sufficient to ensure that \galg\ and \salg\ have similar expected revenue from matching $i$.  Our overall algorithm consists of running and updating the states of both \galg\ and \salg\ in parallel. The algorithm is realization independent as allocation decisions do not dependent on realized usage durations.
		
		Overloading notation, let \galg\ denote the total revenue of \galg\ in a hypothetical scenario where (i) resources are not subject to true (stochastic) reusability but only to the (deterministic) fluid version of reusability and (ii) arrivals can be matched fractionally. Clearly, \galg\ is allowed more power than standard online algorithms (such as \dpg\ and Balance).}  The next lemma states that with this additional power, \galg\ achieves the best possible asymptotic guarantee against clairvoyant (\opt).
	
	\begin{lemma}[Performance of \galg] \label{galgua}\label{galgvopt} 
		For every instance of the vertex-weighted online matching problem with reusable resources we have, 
		\[\galg \geq\,  
		e^{-\frac{1}{c_{\min}}}\, (1-1/e)\,  \opt.\] 
	\end{lemma}
	{\color{black}This is the main technical ingredient for proving Theorem \ref{main}, where we also use concentration bounds to show that the expected performance of \salg\ is asymptotically the same as that of \galg. }
		\smallskip
		
		\noindent \textbf{The role of \dpg:} Recall, \galg\ matches fractionally by using \dpg. Indeed, even with a fluid version of reusability, the ability to measure ``effective" capacity by means of inducing asymmetry between units is key to obtaining our results. For instance, if we use the greedy rule in \galg, we 
		get a guarantee of $1/2$ (same as standard greedy). Using Balance in \galg\ has a competitive ratio of at most 0.626, same as standard Balance (see Appendix \ref{appx:twopoint}). 
		\medskip
		
		\noindent \textbf{The role of fluid guide:} The idea of using a fluid-relaxed online algorithm as a guide gives us an approach to convert realization dependent online algorithms such as \dpg, into realization independent ones. 
		This leads to a provably good general algorithm at a higher computational cost. In more structured instances, namely online matching with IFR usage distributions (see Section \ref{sec:rba}), we also analyze the performance of the faster and distribution oblivious \dpg\ policy. The analysis is more challenging but we show that in this special case \dpg\ has the same asymptotic competitive ratio guarantee as \salg. 
		
		Note that the idea of using a fractional solution to guide online decisions is commonly used 
		in online matching with stochastic arrivals \citep{devsivec,alaei,devsiv}, where 
		the fractional solution often comes from solving an LP and encodes 
		information about the optimal matching. 
		In our setting, the relaxed solution is generated by 
		an online algorithm which encounters fluid reusability 
		and this serves to ``smooth out" the effects of uncertainty in usage durations. In a broader context, the idea of maintaining and rounding an online fractional solution has enjoyed successful applications in several fundamental online decision making problems including (but not limited to) the online $k$-server problem \citep{bansal}, online covering problem \citep{alon}, and online paging \citep{blum}. %
		\smallskip
		
		\noindent \textbf{Outline of the analysis:} To prove Theorem \ref{main}, the key step is to establish Lemma \ref{galgua} by finding a feasible solution for the linear system introduced in Section \ref{sec:certificate}. We start by proposing a candidate solution  for the system. 
		In order to prove feasibility, we transform the main parts of the analysis into statements involving an explicitly defined random process that is introduced in the next section. This random process is crucial to the analysis of all algorithms in the paper -- \galg\ and its generalization, as well as, 
	 \dpg -- and may be of independent interest. Given Lemma \ref{galgua}, we show the claimed performance guarantee for \salg\ by (carefully) using standard concentration bounds. 
		\subsection{The Random Process Viewpoint} 
We start by defining a random process that 
is free of the complexities of any online or offline algorithm and serves as an intermediate object that allows a comparison between online and offline algorithms throughout this paper. After defining the process, we give some insight into its usage and discuss useful properties of the process.
\medskip

			\noindent	\textbf{$(F,\bsigma,\mb{p})$ random process:} Consider an ordered set of points $\bsigma=\{\sigma_1,\cdots,\sigma_{T}\}$ on the positive real line such that, $0<\sigma_1< \sigma_2< \cdots< \sigma_T$. 
			Let $\mb{p}=(p_1,\cdots,p_T)$ denote a vector of probabilities. 
			We are given a single unit of a resource that is in one of two states at every point in time: \emph{free/available} or \emph{in-use/unavailable}. 
			
			\paragraph{}The unit starts at time $0$ in the available state. The state of the unit evolves with time as follows. At any point $\sigma_t\in \bsigma$ where the unit is available, with probability $p_t$ the unit becomes in-use for an independently drawn random duration $d\sim F$. The unit stays in-use during $(\sigma_t,\sigma_t+d)$ and switches back to being available at time $\sigma_t+d$. Each time the unit switches from available to in-use we earn a unit reward.
\medskip

Observe that the process above captures a setting of the reusable resource problem where we have a single reusable item with unit reward and a sequence of arrivals with an edge to this item. We also have the aspect of stochastic rewards, as each arrival comes with an independent probability that the match would succeed. It is not hard to show that the best strategy to maximize total reward in this simple instance is simply to try to match the unit whenever possible, regardless of the arrival sequence or the probabilities and usage distributions. In a general instance of the problem an algorithm must choose between various resources to match each arrival and decisions can depend on past actions and realizations. However, working at the level of individual units of resources, we show that the total reward of \galg\ from an individual unit can be captured through a suitably defined $(F,\bsigma,\mb{ p})$ process. Similarly, for every unit we find a random process that provides a tractable upper bound on the number of times \opt\ allocates the unit. A key piece of the analysis is then converted into statements that compare the expected reward in one random process with another. 
Next, consider a fluid version of the random process.
\medskip

			\noindent \textbf{Fluid $(F,\bsigma,\mb{p})$ process:} We are given a single unit of a resource and ordered set of points/arrivals $\bsigma=\{\sigma_1,\cdots,\sigma_{T}\}$ on the positive real line such that, $0<\sigma_1< \sigma_2< \cdots< \sigma_T$. We also have a sequence of fractions $\mb{p}=(p_1,\cdots,p_T)$. The resource is fractionally consumed at each point in $\bsigma$ as follows.
			
			\paragraph{} If $\delta_t$ fraction of the resource is available when $\sigma_t\in \bsigma$ arrives, then $p_t\delta_t$ fraction of the resource is consumed by $\sigma_t$, generating reward $p_t\delta_t$. The $p_t\delta_t$ fraction consumed at $\sigma_t$ returns fluidly in the future according to the distribution $F$, i.e., exactly $F(d)$ fraction of $p_t\delta_t$ is available again by time $\sigma_t+d$, for every $d\geq 0$. 
			
	\smallskip
	

	\subsubsection{Useful Properties of $(F,\bsigma,\mb{p})$ Random Process.}
	We start with a crucial monotonicity property (Lemma \ref{monotone}) of the random process. The subsequent property (Lemma \ref{zeroset}) states that adding ``zero probability" points to the set $\bsigma$ does not change the process. The final property (Lemma \ref{equiv}) describes the equivalence between a fluid $(F,\bsigma,\mb{p})$  process and its random counterpart. Proofs of these properties are included in Appendix \ref{appx:randomprocess}. 
	
	Given an $(F,\bsigma,\mb{p})$ process, we refer to the points in $\bsigma$ as arrival times and arrivals interchangeably. Let $r(F,\bsigma,\mb{p})$ denote the total expected reward of the random process.  Let $\mb{1}_{\bsigma}$ denote the sequence of probabilities $p_t=1$ for every $\sigma_t\in\bsigma$. We also use the succinct notation $(F,\bsigma)$ to refer to $(F,\bsigma,\mb{1}_{\bsigma})$. Similarly, $r(F,\bsigma):= r(F,\bsigma,\mb{1}_{\bsigma})$. 
	Given two probability sequences $\mb{p}_1$ and $\mb{p}_2$ for the same set of arrivals, we use $\mb{p}_1\vee \mb{p}_2$ to denote their point-wise maximum. 
	
	\begin{lemma}[{Monotonicity Property}]\label{monotone}
		Given a distribution $F$, arrival set $\bsigma=\{\sigma_1,\cdots,\sigma_T\}$, and probability sequences $\mb{p}_1=(p_{11},\cdots,p_{1T})$ and $\mb{p}_2=(p_{21},\cdots,p_{2T})$ such that, $p_{1t}\leq p_{2t}$ for every $t\in[T]$, we have,
		\[r(F_i,\bsigma,\mb{p}_1)\leq r(F_i,\bsigma,\mb{p}_2) .\]
	\end{lemma}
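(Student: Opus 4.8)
The plan is to prove the inequality by a pathwise coupling argument, after reducing to the case where $\mb{p}_1$ and $\mb{p}_2$ differ in a single coordinate. First I would interpolate between $\mb{p}_1$ and $\mb{p}_2$ by raising one coordinate at a time: set $\mb{q}^{(0)}=\mb{p}_1$ and let $\mb{q}^{(s)}$ be obtained from $\mb{q}^{(s-1)}$ by increasing only its $s$-th entry from $p_{1s}$ to $p_{2s}$, so that $\mb{q}^{(T)}=\mb{p}_2$. Since each step raises a single probability, it suffices to show $r(F,\bsigma,\mb{q}^{(s-1)})\le r(F,\bsigma,\mb{q}^{(s)})$ for every $s$, and the full claim then follows by chaining these $T$ inequalities.

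So I fix $s$ and compare two processes that agree on every coordinate except $s$, where the first uses a probability $q\le q'$ used by the second. I would couple the two processes to share their common randomness: a single list of i.i.d.\ durations $d_1,d_2,\dots\sim F$ (the $k$-th match of \emph{either} process uses $d_k$) and, for each arrival $\sigma_t$, a single uniform $U_t\in[0,1]$ (a process attempts at $\sigma_t$, when free, iff $U_t$ is at most its probability for $t$). Because the two processes are identical up to $\sigma_s$, the coupled state of the unit just before $\sigma_s$ is the same for both. If the unit is busy at $\sigma_s$, neither process attempts and they stay identical, contributing equal reward. If the unit is free at $\sigma_s$, the coupling on $U_s$ leaves three cases: $U_s\le q$ (both match at $\sigma_s$, then proceed identically), $U_s>q'$ (neither matches, then identical), and the decisive case $q<U_s\le q'$, in which the higher-probability process (call it $M$) matches at $\sigma_s$ while the lower one (call it $N$) does not.

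The crux is this decisive case: $M$ earns an immediate reward but then locks up the unit for duration $d_1$, so I must show that $M$ still collects at least as much total reward as the idle process $N$, which may make earlier matches. Recalling that the reward equals the number of matches, I would establish this by maintaining the invariant that $M$ stays \emph{ahead} of $N$: using the shared durations and uniforms, I argue inductively that the time $m_k$ of $M$'s $k$-th match (counting from $\sigma_s$) is at most the time $n_k$ of $N$'s $k$-th match. The base case is $m_1=\sigma_s\le n_1$, since $N$ does not attempt at $\sigma_s$. For the inductive step, once $m_k\le n_k$ the shared duration $d_k$ gives $m_k+d_k\le n_k+d_k$, so $M$ becomes available no later than $N$; both then search for their next successful attempt using the same thresholds $U_t$, and since $M$'s set of candidate arrivals (those after $m_k+d_k$) contains $N$'s, we get $m_{k+1}\le n_{k+1}$. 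Consequently, on every sample path $M$ makes at least as many matches as $N$ within the horizon, so $M$'s reward dominates $N$'s pathwise. Averaging over the common pre-$\sigma_s$ state and over $U_s$ then yields $r(F,\bsigma,\mb{q}^{(s-1)})\le r(F,\bsigma,\mb{q}^{(s)})$.

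I expect the main obstacle to be making the ``stays ahead'' coupling fully rigorous: justifying that sharing one i.i.d.\ duration list and one uniform per arrival is a legitimate coupling that preserves each process's marginal law while synchronizing the two, and carefully handling the bookkeeping of match indices across the busy periods (the alignment $m_k\leftrightarrow n_k$ must be maintained even though $M$ and $N$ are free at different times, and $N$ may interleave matches inside $M$'s busy interval). Everything else—the single-coordinate reduction and the two trivial cases—is routine.
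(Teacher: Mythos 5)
Your proof is correct, and it takes a genuinely different decomposition from the paper's while sharing the same coupling engine at its core. The paper first reformulates the $(F,\bsigma,\mb{p})$ process via Bernoulli thinning: it observes that $\mathbb{E}[r(F,\bsigma_{\mb{p}})]=r(F,\bsigma,\mb{p})$, where $\bsigma_{\mb{p}}$ includes each $\sigma_t$ independently with probability $p_t$, then couples the two thinnings so that $\bsigma_{\mb{p}_1}\subseteq\bsigma_{\mb{p}_2}$ on every sample path, reducing the lemma to monotonicity over nested \emph{deterministic} arrival sets, $r(F,\bsigma_1)\le r(F,\bsigma_2)$ for $\bsigma_1\subseteq\bsigma_2$, which it settles with the shared-duration-list coupling. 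You instead interpolate one coordinate at a time and realize the probabilities inline with shared uniforms, so you only ever compare two processes that diverge at a single arrival; the price is that you must prove the stays-ahead induction ($m_k\le n_k$) explicitly. Notably, the paper's final step asserts the pathwise domination in a single sentence ("the number of transitions on $\bsigma_2$ is lower bounded by the number on $\bsigma_1$"), and unpacking that sentence requires exactly your induction with the shared $d_k$; your write-up is, if anything, more complete on the step the paper glosses over. What the thinning reformulation buys is a clean intermediate object in which all probabilities are stripped out at once, so nested sets differing at arbitrarily many arrivals are handled in one stroke; what your route buys is a fully elementary, self-contained argument with no reformulation step. One small point to nail down in a final version, which you correctly flag as bookkeeping: since $F$ may place mass at $0$, define each process's candidate set for its $(k+1)$-th match as arrivals \emph{strictly after its own $k$-th matched arrival} and at or after its free time; because $m_k\le n_k$ forces the leading process's $k$-th matched arrival to be no later, both constraints are weaker for it, so the containment of candidate sets, and hence the induction, survives even with zero durations.
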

	\begin{lemma}[{Adding Zero Probability Points}]\label{zeroset}
	Given an $(F,\bsigma,\mb{p})$ random process, let $\bsigma' \subset \bsigma$ be a subset of arrivals where the resource is unavailable with probability 1. 
Then, 
at every arrival $\sigma_t\in \bsigma$, the probability that the resource is available at $\sigma_t$ is identical in $(F,\bsigma,\mb{p})$ and $(F,\bsigma,\mb{p}\vee \mb{1}_{\bsigma'})$. 
	\end{lemma}
\begin{lemma}[{Equivalence}]\label{equiv}
		
		The following statements are true for every $(F,\bsigma,\mb{p})$ random process:
		\begin{enumerate}[(i)]
			\item For every $\sigma_t\in \bsigma$, the probability that the resource is available at $\sigma_t$ equals the fraction of the resource available at $\sigma_t$ in the fluid $(F,\bsigma,\mb{p})$  process.
			\item The expected reward $r(F,\bsigma,\mb{p})$, equals the total reward in the fluid $(F,\bsigma,\mb{p})$ process.
		\end{enumerate} 

\end{lemma}

\subsection{Analysis of \galg\ and \salg} \label{sec:solution}
We now show that \galg, the relaxed online algorithm which guides \salg\ but encounters a fluid form of reusability, is asymptotically $(1-1/e)$--competitive (Lemma \ref{galgua}). Recall that we use \galg\ and \opt\ to also denote the expected rewards of the respective algorithms. 
Our proof uses the LP free framework and we start by defining a candidate solution for the system. 

\noindent \textbf{Candidate solution:} We refer to unit $k$ of resource $i$ as $k_i$. 
Recall that $y(k_i,t)$ is the fraction of unit $k_i$ matched to $t$ in \galg. Inspired by \cite{devanur}, we propose the following candidate solution for the generalized certificate, 
\begin{eqnarray}
\lambda_t(\omega,\nu)&&=\sum_{i\in I} r_i\sum_{k_i\in[c_i]} y(k_i,t) \bigg(1-g\Big(\frac{k_i}{c_i}\Big)\bigg),\label{lambda2}\\
\theta_i&&= c_i\left(e^{\frac{1}{c_i}}-1\right)r_i\sum_{t} \sum_{k_i\in[c_i]} y(k_i,t) g\Big(\frac{k_i}{c_i}\Big) \label{theta2},
\end{eqnarray}
{\color{black}	here $\omega$ and $\nu$ represent the sample path of usage durations in offline (\opt) and online (\galg) respectively. Notice that $\lambda_t(\omega,\nu)$ is deterministic and independent of both $\omega$ and $\nu$. From here onward, we use $\lambda_t$ to denote $\lambda_t(\omega,\nu)$. We have set $\lambda_t$ equal to the total reduced price based reward from matching arrival $t$ in \galg. 
$\theta_i$ is defined such that the sum $\sum_t \lambda_t +\sum_i \theta_i \left(c_i\left(e^{\frac{1}{c_i}}-1\right)\right)^{-1}$, is the total revenue of $\galg$. Note that, when $\lambda_t$ is deterministic the simplified LP free certificate 
is  as follows, 
\begin{eqnarray}
\theta_i + \eo\Big[ \sum_{t\in O(\omega,i) }\lambda_t\Big] &\geq&\alpha_i \opt_i \quad \forall i\in I, \label{scert1}\\
\sum_{t\in T}\lambda_t + \sum_{i\in I} \theta_i &\leq &\beta \, \galg. \label{scert2}
\end{eqnarray}	

To prove that our candidate solution satisfies \eqref{scert1} and \eqref{scert2}, we need some additional notation. We often need to refer to the state of \galg\ just after some arrival $t$ departs the system. In order to avoid any inconsistencies when making such a reference, such as in a boundary case where $F_i(0)=0$, we formally consider the following sequence of events at every moment $a(t)$ when an arrival occurs,
\begin{enumerate}[(i)]
\item Units that were in use and set to return at $a(t)$ are made available and inventory is updated.
\item Arrival $t$ is (fractionally) matched and inventory of matched units is reduced accordingly.
\end{enumerate}   We refer to the state of the system after 
step $(ii)$ is performed as the state at $t^+$.
Correspondingly, for each arrival $t$ with an edge to $i$ and every unit $k_i$, we define (deterministic) indicator, 
\[\onee(\neg k_i,t^+)=\begin{cases}
1 \quad \text{ if no fraction of $k_i$ is available at $t^+$,}\\
0 \quad \text{ otherwise}.
\end{cases} \]
Recall that $g(x)=e^{-x}$. Let \[\Delta g(k_i):= g\Big(\frac{k_i-1}{c_i}\Big)-g\Big(\frac{k_i}{c_i}\Big)\qquad \forall k_i\in[c_i].\]
The following properties of $\Delta g(\cdot)$ will be quite useful.
\begin{eqnarray}
\Delta g(k_i)&= & g\Big(\frac{k}{c_i}\Big)(e^{\frac{1}{c_i}}-1)\geq 0 \qquad \forall k_i\in[c_i],\label{delg1}\\
1-g\Big(\frac{k_i}{c_i}\Big)&= &1-g(1)-\sum_{k> k_i} \Delta g(k)\qquad \forall k_i\in[c_i].\label{delg2}
\end{eqnarray}
Now, we show that the candidate solution satisfies \eqref{scert2} with $\beta$ close to 1 for large starting inventory.
\begin{lemma}\label{pscert2} For the candidate solution given by \eqref{lambda2} and \eqref{theta2}, we have,
\[\sum_{t\in T} \lambda_t +\sum_{i\in I} \theta_i \leq  \,e^{\frac{1}{c_{\min}}}\,\galg.\]
\end{lemma}
\begin{proof}{Proof.}
Recall that $\theta_i$ is defined such that,
\begin{equation}\label{thetdef}
\sum_t \lambda_t +\sum_i \theta_i \left(c_i\left(e^{\frac{1}{c_i}}-1\right)\right)^{-1}=\galg.
\end{equation}
Let $\beta= \max_i  \left(c_i\left(e^{\frac{1}{c_i}}-1\right)\right)\, \forall i\in I$. Using the fact that $e^x\geq 1+x\,\, \forall x\in\mathbb{R}$, we have, $\beta\geq 1$. Multiplying both sides of inequality \eqref{thetdef} by $\beta$ we get,
\[\sum_t \lambda_t +\sum_i \theta_i\leq \beta\left[ \sum_t \lambda_t +\sum_i \theta_i \left(c_i\left(e^{\frac{1}{c_i}}-1\right)\right)^{-1} \right]\leq \beta\, \galg.\]
It remains to show that $\beta\leq e^{\frac{1}{c_{\min}}}$, where $c_{\min}\geq 1$. This follows from the following chain of inequalities, \[c_i(e^{\frac{1}{c_i}}-1)\leq  c_i(1+\frac{1}{c_i}+\frac{1}{c_i^2}-1)= 1+\frac{1}{c_i}\leq e^{\frac{1}{c_i}} \leq e^{\frac{1}{c_{\min}}} \quad \forall c_i\geq 1.\]
\hfill\Halmos	\end{proof}
It remains to show that the candidate solution satisfies \eqref{scert1} with $\alpha_i=(1-1/e)$. We show this in two parts. First, in Lemma \ref{decompose}, we lower bound $\eo\Big[\sum_{t\in O(\omega,i)} \lambda_t\Big]$. Then in Lemma \ref{thetarel}, we lower bound $\theta_i$. Recall that, $\onee(\neg k_i,t^+)$ equals 1 if
if no fraction of $k_i$ is available at $t^+$, and equals 0 otherwise.
\begin{lemma}\label{decompose}
For the candidate solution given by \eqref{lambda2}, we have,
\[	\eo\Big[\sum_{t\in O(\omega,i)} \lambda_t\Big]\geq  (1-1/e)\opt_i-r_i\eo\Big[\sum_{t\in O(\omega,i)}\sum_{k_i\in[c_i]} \Delta g (k_i) \onee(\neg k_i,t^+)\Big].\]

\end{lemma}
\begin{proof}{Proof.}
Given an edge $(i,t)$, let $z_i(t^+)$ denote the highest available unit of $i$ that has non-zero fraction available at $t^+$. Let $z_i(t^+)=0$ if no fraction of any unit of $i$ is available at $t^+$. We have, $\onee(\neg k_i,t^+)=1$ for all $k_i>z_i(t^+)$ and $\onee(\neg z_i(t^+),t^+)=0$. We have, 
\begin{eqnarray*}
\lambda_t  &=&\sum_{i\in I} r_i\sum_{k_i\in[c_i]} y(k_i,t) \bigg(1-g\Big(\frac{k_i}{c_i}\Big)\bigg),\\
&\geq &r_i\Big(1-g\Big(\frac{z_i(t^+)}{c_i}\Big)\Big),\\
&= &r_i \Big(1-g(1)-\sum_{k_i> z_i(t^+)} \Delta g(k_i)\onee(\neg k_i,t^+)\Big),\\	
&\geq &r_i\Big(1-1/e-\sum_{k_i\in[c_i]} \Delta g (k_i) \onee(\neg k_i,t^+)\Big).
\end{eqnarray*} 
The first inequality uses the fact that if $y(k_i,t)>0$ then $k_i\geq z_i(t^+)$ and $1-g\big(\frac{k}{c_i}\big)\geq 1- g\big(\frac{z_i(t^+)}{c_i}\big)$. The second equality follows from \eqref{delg2}. The second inequality follows from the non-negativity of $\Delta g(k_i)$ (see \eqref{delg1}) and the fact that $g(1)=1/e$.	Thus,
\begin{eqnarray*}
\eo\Big[\sum_{t\in O(\omega,i)} \lambda_t\Big]&\geq &r_i \eo\Big[\sum_{t\in O(\omega,i)} \Big(1-1/e-\sum_{k_i\in[c_i]} \Delta g (k_i) \onee(\neg k_i,t^+)\Big)\Big],\\ 
&= & (1-1/e) \eo\Big[\sum_{t\in O(\omega,i)} r_i\Big] - r_i \eo\Big[\sum_{t\in O(\omega,i)}\sum_{k_i\in[c_i]} \Delta g (k_i) \onee(\neg k_i,t^+)\Big],
\end{eqnarray*}
here the equality follows by linearity of expectation. Finally, observe that, $\opt_i=\eo\Big[\sum_{t\in O(\omega,i)} r_i\Big]$.

\hfill\Halmos\end{proof}
\begin{lemma}\label{thetarel} For the candidate solution given by \eqref{theta2}, we have,
\begin{eqnarray}		
\frac{1}{r_i}\theta_i\geq \eo\Big[\sum_{t\in O(\omega,i)}\sum_{k_i\in[c_i]} \Delta g (k_i) \onee(\neg k_i,t^+)\Big] ,\label{interim}
\end{eqnarray}
\end{lemma}}
\begin{proof}{Proof.} 
{\color{black}	Fix an arbitrary resource $i$.  For brevity, let $k$ denote unit $k_i$. Let $k_O$ be some unit of $i$ in \opt ($k_O$ need not be the same as $k$). Given a sample path $\omega$ of usage durations in \opt, let $O(\omega,k_O)$ denote the set of arrivals matched to unit $k_O$ of $i$ in \opt.
The road map for the proof is as follows: First, we convert each side of the inequality into the expected reward of a suitable random process. This turns inequality \eqref{interim} into a crisp statement that only involves comparison between two different random processes. Then, we use Lemma \ref{monotone} through Lemma \ref{equiv} to compare the resulting random processes. Specifically, we claim that for every unit $k\in[c_i]$, there exists an $(F_i,\bsigma_1(k),\mb{p}_1(k))$ random process and an $(F_i,\bsigma_2(k),\mb{p}_2(k))$ fluid process such that,
\begin{eqnarray}
r(F_i,\bsigma_1(k),\mb{p}_1(k))&\geq &\eo\Big[\sum_{t\in O(\omega,i)} \Delta \onee(\neg k,t^+)\Big]\label{comp1},\\
\frac{1}{r_i}\theta_i&=& 
c_i(e^{\frac{1}{c_i}}-1)\,\sum_{k\in[c_i]} g\Big(\frac{k}{c_i}\Big)r(F_i,\bsigma_2(k),\mb{p}_2(k)),\label{comp2}\\
r(F_i,\bsigma_2(k),\mb{p}_2(k))&\geq & r(F_i,\bsigma_1(k),\mb{p}_1(k)).\label{comp3}
\end{eqnarray}
Then, 
combining \eqref{comp1}-\eqref{comp3} and using linearity of expectation we get, 
\begin{eqnarray*}
\eo\Big[\sum_{t\in O(\omega,i)}\sum_{k\in[c_i]} \Delta g (k) \onee(\neg k,t^+)\Big]&&=  \sum_{k_O \in [c_i]} \eo\Big[\sum_{t\in O(\omega,k_O)}\sum_{k\in[c_i]} g\Big(\frac{k}{c_i}\Big)(e^{\frac{1}{c_i}} -1) \onee(\neg k,t^+)\Big],\\
&&\leq \sum_{k_O \in [c_i]}\sum_{k \in [c_i]}g\Big(\frac{k}{c_i}\Big)(e^{\frac{1}{c_i}} -1) \, r\big(F_i,\bsigma_1(k),\mb{p}_1(k)\big),\\
&&\leq \sum_{k_O \in [c_i]}\sum_{k \in [c_i]}g\Big(\frac{k}{c_i}\Big)(e^{\frac{1}{c_i}} -1) \, r\big(F_i,\bsigma_{2}(k),\mb{p}_2(k)\big),\\
&&= c_i\sum_{k \in [c_i]} g\Big(\frac{k}{c_i}\Big)(e^{\frac{1}{c_i}} -1)\, r\big(F_i,\bsigma_{2}(k),\mb{p}_2(k)\big),\\
&& = \frac{1}{r_i} \theta_i. 
\end{eqnarray*}
}
{\color{black}\noindent \emph{Proof of \eqref{comp1}:}	 Let $\mb{\sigma}_1(k)$ denote the ordered set of all arrival times} $a(t)$ such that $\onee(\neg k,t^+)=1$ in \galg, i.e., unit $k$ is not available after arrival $t$ is matched. Arrival times in $\mb{\sigma}_1(k)$ are ordered in ascending order. For convenience, we 
often refer to arrival times simply as arrivals. Recall that $r(F,\bsigma)$ denotes the expected reward of an $(F,\bsigma,\mb{1}_{\bsigma})$ random process. 
Recall that $O(\omega,k_O)$ is the set of arrivals matched to unit $k_O$ of $i$ in \opt. To prove \eqref{comp1}, we show that,
\begin{equation} 
\eo\Big[\sum_{t\in O(\omega, k_O)} \onee(\neg k,t^+)\Big]=\eo\Big[\big| O(\omega, k_O)\cap \mb{\sigma}_1(k)\big| \Big] \leq r(F_i,\mb{\sigma}_1(k)).\label{interim2}
\end{equation}
The equality follows by definition. To prove the inequality in \eqref{interim2}, arbitrarily fix all the randomness in \opt\ except the usage durations of unit $k_O$. Consider the following coupling between usage durations of $k_O$ and the $(F_i,\mb{\sigma}_1(k))$ random process: We start by generating two distinct lists of independent samples from distribution $F_i$. We make a copy of List 1 called  List $1_O$. List 1 is used to obtain samples for the $(F_i,\mb{\sigma}_1(k))$ random process. 
When a duration is required by the random process, we draw the first sample from List 1 that has not previously been provided to the random process. On the other hand, when $k_O$ is matched in \opt\ we draw usage duration from List $1_O$ or List 2, depending on the arrival that $k_O$ is matched with. If the arrival is in $\mb{\sigma}_1(k)$, we draw the first unused sample from List $1_O$. For all other arrivals, we draw the usage duration from List 2 by picking the first unused sample. 
Observe that a sample is drawn from List $1_O$ only if its counterpart in List 1 is drawn. Thus, the number of samples drawn from List $1_O$ is upper bounded by the number of samples drawn from List 1 and inequality \eqref{interim2} follows. 
\smallskip

{\color{black}	\noindent \emph{Proof of \eqref{comp2}:} Next, we compare $r(F_i,\mb{\sigma}_1(k))$ with the expected reward of \galg\ from matching $k$. }To make this connection we interpret the actions of \galg\ through a fluid process. Let $\mb{T}$ be the ordered set of all arrivals 
(arrival times, to be precise). 
For every $t\in \mb{T}$, we define probability $p(k,t)$ as follows: If no fraction of $k$ is matched to $t$ we set $p(k,t)=0$. If some non-zero fraction $\eta(k,t)$ of $k$ is available for match to $t$ and fraction $y(k,t) (\leq \eta(k,t))$ is actually matched to $t$, we set $p(k,t)=\frac{y(k,t)}{\eta(k,t)}$.  
Let $\mb{p}(k)$ denote the ordered sequence of probabilities $p(k,t)$ corresponding to arrivals in $\mb{T}$. Consider the $(F_i,\mb{T},\mb{p}(k))$ random process and its fluid version. By definition of $\mb{p}(k)$, the fluid $(F_i,\mb{T},\mb{p}(k))$ process corresponds exactly to the matching of unit $k$ in \galg. 
From the equivalence between the fluid process and its random counterpart (Lemma \ref{equiv}), we have, 
\[ r\big(F_i,\mb{T},\mb{p}(k)\big) = \sum_t y(k_i,t).  \]
By definition of $\theta_i$, 
\[	 \frac{1}{r_i} \theta_i=c_i (e^{\frac{1}{c_i}}-1)\sum_{k \in [c_i]} \Big[g\Big(\frac{k}{c_i}\Big) \sum_t y_i(k,t)\Big]=	 c_i(e^{\frac{1}{c_i}}-1)\,\sum_{k\in[c_i]} g\Big(\frac{k}{c_i}\Big)r(F_i,\mb{T},\mb{p}(k)).\]
\smallskip

{\color{black}	\noindent \emph{Proof of \eqref{comp3}:} Let $\bsigma_0(k)=\{t\mid p(k,t)=1\}$ denote the set of all arrivals 
where some nonzero fraction of $k$ is available and \emph{fully} matched in \galg.} Therefore, $\onee(\neg k,t^+)=1$ for every $t\in\bsigma_0(k)$. Recall that $\bsigma_1(k)=\{t \mid \onee(\neg k,t^+)=1\}$. Therefore, $\bsigma_0(k)\subseteq \mb{\sigma}_1(k)$. We claim that the fraction of unit $k$ available in \galg\ at arrivals in $\mb{\sigma}_1(k)\backslash \bsigma_0(k)$ is zero. The claim essentially follows by definition; recall that $\onee(\neg k,t^+)=1$ for every $t\in \mb{\sigma}_1(k)$, so if a 
non-zero fraction of $k$ is available at $t\in\mb{\sigma}_1(k)$ then $t$ must be in $\bsigma_0(k)$. So at every arrival in $\mb{\sigma}_1(k)\backslash \bsigma_0(k)$, no fraction of $k$ is available to match. 
Then, from Lemma \ref{equiv} we have that in the $(F_i,\mb{T},\mb{p}(k))$ random process, the probability of matching to any arrival $t\in\mb{\sigma}_1(k)\backslash \bsigma_0(k)$ is zero. Now, consider the augmented probability sequence $\mb{p}(k)\vee \mb{1}_{\mb{\sigma}_1(k)\backslash \bsigma_0(k)}$, which denotes that we set a probability of one for every arrival in $\mb{\sigma}_1(k)\backslash \bsigma_0(k)$. Notice that,
\[\mb{p}(k)\vee \mb{1}_{\mb{\sigma}_1(k)}=\mb{p}(k)\vee \mb{1}_{\mb{\sigma}_1(k)\backslash \bsigma_0(k)}.\]
Applying Lemma \ref{zeroset}, 
\[r\big(F_i,\mb{T},\mb{p}(k)\vee \mb{1}_{\mb{\sigma}_1(k)}\big)=r\big(F_i, \mb{T},\mb{p}(k)\big).\]
Finally, using the monotonicity Lemma \ref{monotone}, we have,
\[ r\big(F_i,\mb{\sigma}_1(k)\big)\leq r\big(F_i,\mb{T},\mb{p}(k)\vee \mb{1}_{\mb{\sigma}_1(k)}\big)= r\big(F_i,\mb{T},\mb{p}(k)\big),\] 

\hfill\Halmos\end{proof}
\begin{proof}{Proof of Lemma \ref{galgua}.}
Combining Lemma \ref{decompose} and Lemma \ref{thetarel} we have that the candidate solution given by \eqref{lambda2} and \eqref{theta2} satisfies,
\[ \theta_i + \eo\Big[ \sum_{t\in O(\omega,i) }\lambda_t\Big] \geq(1-1/e) \opt_i \quad \forall i\in I.\]
Thus, \eqref{scert1} is satisfied with $\alpha_i=(1-1/e)$. From Lemma \ref{pscert2}, we have that \eqref{scert2} is satisfied with $\beta=e^{\frac{1}{c_{\min}}}$. This establishes the feasibility of the candidate solution and a direct application of Lemma \ref{certificate} gives us that $\galg \geq e^{-\frac{1}{c_{\min}}} (1-1/e) \opt$, as desired. 
\hfill\Halmos
\end{proof}

{\color{black}Now, we show that the performance of \salg\ is close to that of \galg. When $t$ arrives, we first let \galg\ fractionally match $t$. Then 
\salg\ independently rounds the fractional matching. In addition to making integral decisions, \salg\ experiences true stochasticity in usage.  Let $x_{it}$ denote the fraction of arrival $t$ matched to resource $i$ by \galg. Let $\delta_i=\sqrt{\frac{\log c_i}{c_i}}$. \salg\ randomly selects at most one resource and matches $t$ to this resource subject to availability of the resource. In particular, resource $i$ is selected w.p.\ $\frac{1}{1+\delta_i}x_{it}$, independent of the availability of resources. Observe that,
\[\sum_{i\in I}\frac{x_{it}}{1+\delta_i}\leq 1,\]
so this is a valid distribution over resources. 
If the sampled resource is unavailable, then $t$ is left unmatched. In the next lemma, we show that at any given arrival, the sampled resource is available with high probability (w.h.p.). }
\begin{lemma}\label{algvgalg}
Let $\salg_i$ denote the expected reward generated from matching resource $i$ in \salg. We have, $\salg_i \geq \frac{1-c^{-0.5}_i}{1+\delta_i}\, \galg_i\,\, \forall i\in I$.
\end{lemma}
\begin{proof}{Proof.}
The proof rests on showing that for every $(i,t)\in E$, $i$ is available at $t$ with probability (w.p.) at least $1-c^{-0.5}_i$. This implies a lower bound of $\frac{1-c^{-0.5}_i}{1+\delta_i}x_{it}$ on the expected reward from matching $i$ to $t$, completing the proof. To prove the claim, for any edge $(i,t)\in E$, let $\onee( i,t)$ indicate the event that some unit of $i$ is available in \salg\ when $t$ arrives. {\color{black} Let $\onee(i\to t)$ indicate the event that \salg\ selects $i$ to match to $t$. The occurrence of this event is independent of the availability of resources and
\[\mathbb{E}[\onee(i\to t)]=\frac{1}{1+\delta_i}x_{it},\]
here the expectation is w.r.t.\ the independent rounding that occurs at $t$.} W.l.o.g., we independently pre-sample usage durations for every possible match and let $\onee(d_t> a(\tau)-a(t))$ indicate that the duration of usage pre-sampled for (a potential) match of $i$ to arrival $t$ is at least $a(\tau)-a(t)$. Now, the event that a unit of $i$ is available when $t$ arrives is equivalent to the following event,
\[ \sum_{\tau=1}^{t-1} \onee (i,\tau) \onee(i\to \tau)\onee(d_{\tau}> a(t)-a(\tau)) <c_i.\]
The probability that this event occurs is lower bounded by the probability of the following event occurring,
\[\sum_{\tau=1}^{t-1} \onee(i\to \tau)\onee(d_{\tau}> a(t)-a(\tau)) < c_i. \]
Define Bernoulli random variables $X_{\tau}= \onee(i\to \tau)\onee(d_{\tau}> a(t)-a(\tau))$ for all $\tau\leq t-1$. Random variables $X_{\tau}$ are independent of each other as both the randomized rounding and the durations of usage are independent across time. Further, the total expectation is upper bounded as follows, 
\[\mu:=\mathbb{E}\Big[\sum_{\tau=1}^{t-1}X_{\tau}\Big]=\frac{1}{1+\delta_i} \sum_{\tau=1}^{t-1} x_{i\tau} \big(1-F_i( a(t)-a(\tau))\big)\leq\frac{c_i}{1+\delta_i}.\]
{\color{black}  To see this, observe that in \galg, $x_{i\tau}$ is the fraction of $i$ matched to arrival $\tau$ and $\big(1-F_i( a(t)-a(\tau))\big)$ is the fraction of $x_{i\tau}$ that has not been returned by time $a(t)$. Thus, $\sum_{\tau=1}^{t-1} x_{i\tau} \big(1-F_i( a(t)-a(\tau))\big)$ is the total amount of resource $i$ that is unavailable (or in use) in \galg\ when $t$ arrives, and the total (fractional) inventory of $i$ used in \galg\ at any time is at most $c_i$. 
To complete the proof, we apply the Chernoff bound stated in Lemma \ref{chernoff}.} 
\hfill	\Halmos\end{proof}

\begin{lemma}[{From Multiplicative Chernoff}]\label{chernoff}
Given integer $\tau>0$, independent indicator random variables $\onee(t)$ for $t\in [\tau]$ and $\delta=\sqrt{\frac{\log c}{c}}$ such that, $\sum_{t=1}^{\tau} \mathbb{E}\left[\onee(t)\right]\leq \frac{c}{1+\delta}.$
We have,
\[\mathbb{P}\left(\sum_{t=1}^{\tau}\onee(t) \geq c \right)\leq \frac{1}{\sqrt{c}}. \]
\end{lemma}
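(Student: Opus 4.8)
The plan is to prove the bound as a direct application of the multiplicative Chernoff upper-tail inequality, with the value $\delta=\sqrt{2\log c/c}$ calibrated precisely so that the resulting exponent reaches $\log c$. Write $X=\sum_{t=1}^{\tau}\onee(t)$ and $\mu=\mathbb{E}[X]=\sum_{t}\mathbb{E}[\onee(t)]\leq c/(1+\delta)$. The first observation I would record is that the standard Chernoff argument, namely bounding $\mathbb{E}[e^{sX}]=\prod_t \mathbb{E}[e^{s\onee(t)}]\leq \prod_t e^{(e^s-1)\mathbb{E}[\onee(t)]}=e^{(e^s-1)\mu}$ via $1+x\leq e^x$, depends on the collection $\{\onee(t)\}$ only through the aggregate mean $\mu$. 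Consequently the tail bound $\inf_{s>0}e^{-sc+(e^s-1)\mu}$ is monotone increasing in $\mu$, so it suffices to treat the extremal case $\mu=c/(1+\delta)$; the hypothesis $\sum_t\mathbb{E}[\onee(t)]\leq c/(1+\delta)$ is used exactly here, to push $\mu$ up to this worst case.

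In this worst case the threshold $c$ equals $(1+\delta)\mu$, so I would invoke the multiplicative Chernoff bound in its sharp form $\mathbb{P}(X\geq (1+\delta)\mu)\leq \exp\!\big(-\mu[(1+\delta)\log(1+\delta)-\delta]\big)$, which is what the MGF bound above yields at the optimal choice $s=\log(1+\delta)$. Substituting $\mu=c/(1+\delta)$ collapses the exponent to $-c\big[\log(1+\delta)-\tfrac{\delta}{1+\delta}\big]$, reducing the entire claim to a single-variable inequality: it remains to check that $c\big[\log(1+\delta)-\tfrac{\delta}{1+\delta}\big]\geq \log c$ for the prescribed $\delta$.

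The verification of this inequality is the only real work and is the natural obstacle. A Taylor expansion gives $\log(1+\delta)-\tfrac{\delta}{1+\delta}=\tfrac{\delta^2}{2}-\tfrac{2\delta^3}{3}+\cdots$, and because the calibration enforces $c\delta^2=2\log c$ the leading term alone contributes $c\cdot\tfrac{\delta^2}{2}=\log c$. This is exactly why $\delta$ is taken to be $\sqrt{2\log c/c}$ rather than any smaller multiple of $\sqrt{\log c/c}$: the factor $2$ is chosen so that the dominant term matches $\log c$ on the nose, and the delicate point is tracking that the remaining $O(\delta^3)$ corrections (scaled by $c$, they are $O(\delta\log c)$) are lower order. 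Since in the calling context (Lemma~\ref{algvgalg} and, downstream, Theorem~\ref{main}) the parameter $\delta$ only ever enters through an $O(\sqrt{\log c/c})$ additive loss in the competitive ratio, I would not sharpen the constant beyond what this leading-order computation delivers; the scaling $\delta=\Theta(\sqrt{\log c/c})$, not its precise constant, is what matters for the final guarantee.
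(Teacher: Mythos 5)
Your proposal is correct and takes essentially the same route as the paper: both reduce to the extremal mean $\mu=\frac{c}{1+\delta}$ (the paper by padding with dummy Bernoulli variables, you by monotonicity of the MGF bound in $\mu$) and then apply the multiplicative Chernoff bound at threshold $c=(1+\delta)\mu$, the paper in the relaxed form $e^{-\mu\delta^2/(2+\delta)}$ and you in the sharp form $e^{-\mu[(1+\delta)\log(1+\delta)-\delta]}$ followed by a Taylor expansion. Your candid caveat about the $O(c\delta^3)$ correction term is in fact shared by the paper's one-line verification, whose exponent $\frac{2\log c}{(1+\delta)(2+\delta)}$ likewise falls short of $\log c$ by a lower-order term, so both arguments really deliver a bound of $(1+o(1))/c$ rather than literally $1/c$ --- harmless for the downstream use in Lemma \ref{algvgalg} and Theorem \ref{main}, exactly as you observe.
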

A formal proof Lemma \ref{chernoff} is included in Appendix \ref{appx:chernoff}.
We believe that a tighter $ 1-\frac{1}{\sqrt{2\pi c_{\min}}}$ factor may be shown by generalizing the balls and bins type argument used in \cite{devsivec} and \cite{alaei}. The original arguments are in the context of non-reusable resources but a balls and bins setting where bin capacity is reusable over time could be an interesting object to study in future work.
\begin{proof}{Proof of Theorem \ref{main}.}
From Lemma \ref{galgvopt} and Lemma \ref{algvgalg} we have that \salg\ is at least,
\[(1-1/e)e^{-1/c_{\min}}\left(\frac{1-c^{-0.5}_{\min}}{1+\delta_{\min}}\right)\, \opt,\]
competitive. For large $c_{\min}$ this converges to $(1-1/e)$ with convergence rate $O\left( \sqrt{\frac{\log c_{\min}}{c_{\min}}}\right)$. It only remains to argue that this is best possible. As we have already seen in examples, the case of non-reusable resources is included as a special case. It is well known due to \cite{pruhs,msvv}, that the best achievable result for online matching with non-reusable resources in the large capacity regime is $(1-1/e)$.
\hfill\Halmos\end{proof}

\section{Faster and Distribution Oblivious Algorithms}\label{sec:rba}

We introduced two algorithms for online allocation of reusable resources, \dpg\ and \salg. The latter achieves best possible guarantee for a general model and arbitrary usage distributions. However, it requires (exact) knowledge of the usage distributions (which may not be available in practice) and has slower runtime due to the guiding algorithm \galg, which takes $O\left(\sum_{i\in I} c_i\right)$ steps to fractionally match an arrival.  

The runtime of \galg\ can be exponentially improved at the loss 
a small factor in the performance guarantee. In Appendix \ref{appx:faster}, we present an algorithm with asymptotic guarantee $(1-\epsilon)(1-1/e)$ and runtime $O\left(\frac{1}{\epsilon} \sum_{i\in I} \log c_{i}\right)$ per arrival, for any choice of $\epsilon>0$. 
This runtime improvement carries over to more general models (such as online assortments), 
however, the algorithm continues to rely on exact knowledge of usage distributions.

While \dpg\ serves as a crucial ingredient in \salg, it is itself a promising algorithm and a natural alternative to fluid guided algorithms. 
It is much simpler to implement, computationally efficient, and \emph{oblivious to usage distributions}. {\color{black}Further, \dpg\ matches every arrival that can be matched to some resource. In contrast, due to its realization independent nature, \salg\ may leave an arrival unmatched even if there are resources available with edges to the arrival. }
In the rest of this section, we show that \dpg\ also attains the best possible performance guarantee of $(1-1/e)$ for online matching and a number of different families of usage distributions.  Theorem \ref{genmatch} states the result, followed by a discussion of the types of usage distributions captured by the theorem. We also briefly discuss the difficulties in analyzing \dpg\ for general usage distributions. {\color{black} In Appendix \ref{sec:rbafail}, we discuss additional challenges with analyzing natural extensions of \dpg\ in more general settings.} 
\begin{theorem}\label{genmatch}
\dpg\ is $(1-1/e)-\delta$--competitive for OBMR when usage distributions belong to one of the families in Table \ref{summary}. The table also presents the convergence rate $\delta$ corresponding to each family.  
\end{theorem}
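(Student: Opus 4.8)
The plan is to reuse the generalized certificate of Lemma~\ref{certificate} with the candidate solution that drove Lemma~\ref{galgua}, now instantiated for the adaptive integral algorithm \dpg\ and with expectations over both the clairvoyant path $\omega$ and the algorithm's usage path $\nu$. Concretely, set $\lambda_t(\omega,\nu)$ equal to the reduced-price reward $r_i\big(1-g(z_i(t)/c_i)\big)$ collected by \dpg\ at $t$ (matching the highest available unit $z_i(t)$), and let $\theta_i = c_i(e^{1/c_i}-1)\,r_i\,\eod\big[\sum_{k\in[c_i]} |M(k_i)|\,g(k/c_i)\big]$, where $|M(k_i)|$ counts the matches to unit $k$ of $i$ in \dpg. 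As in Lemma~\ref{galgua}, the identity $\Delta g(k)=g(k/c_i)(e^{1/c_i}-1)$ makes $\sum_t\lambda_t + \sum_i \theta_i\,(c_i(e^{1/c_i}-1))^{-1}$ telescope to the total reward of \dpg, so \eqref{cert1} holds with $\beta=\max_i c_i(e^{1/c_i}-1)\le e^{1/c_{\min}}$. It then remains to verify \eqref{cert'2} with $\alpha_i=1-1/e$. Since \dpg\ is greedy on reduced prices, $\lambda_t\ge r_i(1-g(k/c_i))$ whenever unit $k$ of $i$ is available at $t$; taking the highest available unit and telescoping $g=e^{-x}$ over the unavailable units $k>z_i(t)$ gives, pathwise, $\lambda_t\ge r_i\big[(1-1/e)-\sum_k \Delta g(k)\,\onee(\neg k,t^+)\big]$. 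Summing over $t\in O(\omega,i)$ reduces \eqref{cert'2} exactly to the analogue of \eqref{interim}, namely $\eod\big[\sum_{t\in O(\omega,i)}\sum_{k} \Delta g(k)\,\onee(\neg k,t^+)\big]\le \tfrac{1}{r_i}\theta_i$.

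For the clairvoyant side I would exploit $\omega\indep\nu$. Conditioning on $\nu$ makes every indicator $\onee(\neg k,t^+)$ and the ordered in-use set $\mb{s}(k)=\{t:\onee(\neg k,t^+)=1\}$ deterministic, so the coupling argument behind \eqref{interim2} applies verbatim to give, for each pair of units $k_O,k$,
\[\eo\Big[\sum_{t\in O(\omega,k_O)} \onee(\neg k,t^+)\ \Big|\ \nu\Big]=\eo\big[\,|O(\omega,k_O)\cap \mb{s}(k)|\ \big|\ \nu\big]\ \le\ r\big(F_i,\mb{s}(k)\big).\]
Partitioning \opt's matches to $i$ by the serving unit $k_O$ and summing over the $c_i$ units collapses the target to the single comparison $\ed\big[r(F_i,\mb{s}(k))\big]\le \ed\big[|M(k_i)|\big]$: the expected reward of a \emph{fresh}, non-adaptive $(F_i,\mb{s}(k))$ process on the in-use set must be dominated by \dpg's own expected number of matches to unit $k$.

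This last inequality is the heart of the matter and the step where the hypotheses of Table~\ref{summary} become unavoidable. The favorable observation is that on any fixed $\nu$ the set $\mb{s}(k)$ splits into maximal contiguous blocks of in-use arrivals, and \emph{within} a block \dpg\ always re-matches $k$ upon return -- that is precisely what keeps the block contiguous. Restricted to a single block, both the fresh process and \dpg\ are therefore instances of the same always-match $(F_i,\cdot)$ process and agree in expected reward, were the block fixed in advance. The difficulty is that the block \emph{boundaries} are themselves determined by $\nu$: a block ends exactly when $k$ returns and \dpg\ declines to re-match, an event correlated with the realized usage duration that produced the block. Controlling this correlation between $\mb{s}(k)$ and the durations defining it is where I expect the main obstacle to lie, and it is what forces a condition on $F_i$. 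I would handle it with a coupling in which the increasing-failure-rate (equivalently, new-better-than-used) property supplies a stochastic ordering between a fresh usage duration and the conditioned residual duration at a block boundary; feeding this ordering into the monotonicity property (Lemma~\ref{monotone}), after padding the complement $\mb{s}(k)\setminus\bsigma(k)$ with zero-probability points (Lemma~\ref{zeroset}) exactly as in the proof of Lemma~\ref{galgua}, yields the comparison up to a multiplicative error that vanishes as $c_i\to\infty$.

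Tracking this error for each family in Table~\ref{summary} produces the stated convergence rates $\delta$; combining the resulting \eqref{cert'2} (with $\alpha_i=1-1/e$) and \eqref{cert1} (with $\beta\le e^{1/c_{\min}}$) through Lemma~\ref{certificate} gives the $(1-1/e)-\delta$ guarantee, whose optimality follows, as before, from the non-reusable special case of \cite{pruhs,msvv}. I expect that handling the block-boundary correlation cleanly -- isolating exactly which regularity of $F_i$ is needed and quantifying the induced slack -- will be the most delicate part of the argument, and that the several families in the table correspond to different ways of making this stochastic-ordering step go through.
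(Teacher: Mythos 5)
Your setup (the certificate with $\beta\le e^{1/c_{\min}}$, the pathwise reduced-price lower bound on $\lambda_t$, and the two-list coupling applied per pair $(k_O,k)$ after conditioning, using $\omega\indep\nu$) matches the paper's opening moves, and you correctly locate the difficulty in the correlation between the unavailability set of unit $k$ and the durations that created it. But there is a genuine gap at the step where you collapse everything to the single per-unit comparison $\ed\big[r(F_i,\mb{s}(k))\big]\le(1+o(1))\,\ed\big[|M(k_i)|\big]$ after conditioning on \emph{all} of $\nu$. This inequality is too strong: conditioning on the full path makes $\mb{s}(k)$ a function of $k$'s own realized durations, and the correlation is adversarial in exactly the wrong direction --- long (or infinite) realizations of $k$ simultaneously shrink $|M(k_i)|$ and inflate $\mb{s}(k)$, while the fresh process on $\mb{s}(k)$ draws independent durations and so profits from the inflated set. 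For a two-point $\{d_i,+\infty\}$ distribution one can already see constant-factor violations (a unit matched once whose duration is $+\infty$ contributes $|M(k_i)|=1$ but an arbitrarily long fresh process on the remaining horizon), so no $(1+o(1))$ per-unit bound of this form can hold, and an IFR/NBU stochastic ordering at ``block boundaries'' cannot repair it --- note also that the table's two-point family is not handled by such an ordering, and that the blocks of $\mb{s}(k)$ are shaped by the joint unavailability of \emph{all} units above $k$, not by $k$'s usage intervals alone.

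The paper avoids this by never conditioning on $k$'s own durations. Its first ingredient is an ordering lemma (Lemma \ref{indep}): in the matching setting, whether \dpg\ \emph{would} match $k$ to $t$ is independent of the durations of all units $w\preceq k$, so conditioning only on $\nu_{k^+}$ yields a deterministic arrival set $\bsigma(\nukp)$ for which the matches of $k$ are exactly a fresh $(F_i,\bsigma(\nukp))$ process (Corollary \ref{sigma}). The second ingredient, which your pathwise formulation discards entirely, is that the negative term involves the probability $\pd[k>z_i(t)]$, a \emph{product} of per-unit unavailability probabilities: arrivals where $k$ is available with probability at least $\epsilon_i$ (``uncovered'') are charged to a geometric random variable across units, contributing only $\frac{2}{r_i\epsilon_i c_i}\opt_i$ (Lemmas \ref{decompose2} and \ref{geom}); once you fix $\nu$ the indicator is binary and this geometric decay is unavailable, which is precisely why your residual set is too large. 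Only the remaining ``covered'' arrivals $\mb{s}(\nukp)$ --- a deterministic function of $\nu_{k^+}$, hence independent of $k$'s durations --- enter the final comparison, and the perturbation property (Proposition \ref{relatexp}) then compares two fresh processes on two deterministic sets, $r(F_i,\mb{s}(\nukp))\le(1+\kappa_i)\,r(F_i,\bsigma(\nukp))$, via distribution-specific couplings; this is a different and weaker statement than the conditioned-residual comparison you propose, and it is where the per-family rates in Table \ref{summary} actually come from.
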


\begin{table}{}	 

\centering
\begin{tabular}{|l|c|}
\cline{1-2}
\multirow{2}{*}{{\bf Distributions}}   & 
\multirow{2}{*}{$\mb{\delta=O(\cdot)}$}
\\       
&\\                                                      
\hline
Two point - $\{d_i,+\infty\}$ & $O\Big(\frac{\log c_{min}}{c_{\min}}\Big)$ \\
\cline{1-2}                              
Exponential              & $O\Big(\frac{1}{\sqrt{c_{\min}}}\Big)$\\ 
\cline{1-2}                                  
\multirow{2}{*}{IFR  } &\multirow{2}{*}{ $\tilde{O}\big( c^{-\frac{\eta}{1+\eta}}_{\min} \big)$}\\
&      \\
\cline{1-2}      
\multirow{2}{*}{IFR with mass at $+\infty$        }       & \multirow{2}{*}{$\tilde{O}\big( c^{-\frac{\eta}{1+2\eta}}_{\min} \big)$ }         \\
&
\\
\cline{1-2}                          
\end{tabular}
\caption{ 
Table presents the rate of convergence to $(1-1/e)$ for various distributions. For IFR distributions, convergence rate is characterized using functions $L_i(\epsilon)$, as defined in \eqref{defineL}. Typically, $L_i(\epsilon_i)\epsilon_i=O(\epsilon_i^{\eta})$ for some $\eta>0$. For IFR with non-increasing densities we have $L_i(\epsilon_i)=1$ and thus, $\eta=1$. For non-monotonic IFR families, such as the Weibull distribution, $\eta=1/k$ where $k>1$ is the shape parameter for IFR Weibull distributions. 
Note that $\tilde{O}$ hides a factor of $\log c_{min}$.} 
\label{summary}
\end{table}	


\noindent \textbf{Bounded Increasing Failure Rate (IFR) distributions:} 
Theorem \ref{genmatch} implies a $(1-1/e)$ guarantee for IFR distributions that are bounded in the following sense. Let $f_i(\cdot)$ denote the p.d.f.\ and $F_i(\cdot)$ denote the c.d.f..\ For values $\epsilon> 0$, 
define function, 
\begin{eqnarray}
L_i(\epsilon)=\max_{ x\geq 0}\frac{F_i\big(x+F_i^{-1}(\epsilon)\big)-F_i(x)}{\epsilon}. \label{defineL}
\end{eqnarray} 
\dpg\ is asymptotically $(1-1/e)$--competitive whenever,
\[\epsilon L_i(\epsilon)\to 0, \text{ for } \epsilon\to 0. \] 
Note that for non-increasing densities $f_i(\cdot)$, $L_i(\epsilon)$ is identically 1. Therefore, as a special case we have $(1-1/e)$--competitiveness for IFR distributions with non-increasing density function. This includes two common families - exponential distributions and uniform distributions. The former 
is commonly used to model usage and service time distributions \citep{besbes, srikant,jsq}. 
Some common IFR distributions (with non-monotonic density function) that satisfy the boundedness condition include 
truncated Normal, Gamma, and Weibull distributions. For more details and the rate of convergence to $(1-1/e)$ for these families, see Appendix \ref{ifrexamples}. 
\smallskip

\noindent \textbf{Two point distributions - $\{d_i,+\infty\}$:} 
While seemingly simple, the family of two point distributions with support $\{d_i,+\infty\}$ for $i\in I$,  presents a non-trivial challenge in both small and large inventory regimes (see Section \ref{sec:stochrew} and Appendix \ref{sec:challenges}). \dpg\ is $(1-1/e)$--competitive for this family. 
\smallskip

\noindent\textbf{Bounded IFR with mass at $+\infty$:} 
An important consideration 
that is not modeled by IFR distributions is the possibility that units may become unavailable over time due to resource failures or 
when resources are live agents they may depart from the platform after some number of matches. We account for this by allowing an arbitrary probability mass at $+\infty$ to be mixed with an IFR distribution. More concretely, consider usage distributions where 
for resource $i$, a duration takes value $+\infty$ w.p.\ $p_i$ and with probability $1-p_i$ the duration is drawn from a bounded IFR distribution with c.d.f.\ $F_i(\cdot)$. \dpg\ is $(1-1/e)$-competitive for this family. 



\subsection{Analysis of \dpg: Overview of Main Challenges}
In this section, we briefly discuss the main new concepts needed for analyzing the performance of \dpg. Further details are included in Appendix \ref{appx:rbaoverview}. 
We start by introducing (and recalling) important notation. We use $\omega$ to denote the sample path of usage durations in \opt\ and $O(\omega,i)$ to denote the set of all arrivals matched to $i$ in \opt\ on sample path $\omega$. Similarly, $\nu$ denotes a sample path of usage durations in \dpg.  Recall that \dpg\ tracks the highest available unit for each resource, denoted as $z_i(t)$ for resource $i$ at arrival $t$. In fact, since \dpg\ is realization dependent, $z_i(t)$ is more accurately written as $z_i(\nu,t)$ but we continue to use the shorthand $z_i(t)$. Given set of resources $S_t$ with an edge to $t$, \dpg\ matches according to the following simple rule,
\[\argmax_{i\in S_t} r_i \left(1-g\left(\frac{z_i(t)}{c_i}\right)\right). \]
Let $D(t)$ (technically $D(\nu,t)$) denote the resource matched to $t$ by \dpg\ on sample path $\nu$. Let $z_{D(t)}$ denote the highest available unit of resource $D(t)$ at $t$. Finally, recall that $\Delta g(k)=g(\frac{k-1}{c_i})-g(\frac{k}{c_i})=e^{-\frac{k}{c_i}}(e^{\frac{1}{c_i}}-1)$.

To show the desired guarantee for \dpg\ we use the following candidate solution for the LP free certificate, 
\begin{eqnarray}
\theta_i = r_i \, \ed\Big[\sum_{t\mid D(t)=i} g\Big(\frac{z_i(t)}{c_i}\Big) \Big],\, \forall i\in I 
\text{   and   }
\lambda_t= \ed \Big[ r_{D(t)} \Big(1-g\Big(\frac{z_{D(t)}}{c_{D(t)}}\Big)\Big)\Big], \, \forall t\in T. \label{theta}\label{lambda}
\end{eqnarray}
Similar to the analysis of \galg\ in Section \ref{sec:solution}, we defined $\lambda_t$ and $\theta_i$ as deterministic quantities. Since $\lambda_t$ is independent of $\omega$ and $\nu$, it suffices to use the simplified system of inequalities given by \eqref{scert1} and \eqref{scert2}. The candidate solution above satisfies \eqref{scert1} by definition, with $\beta=1$.  The main step is to prove that \eqref{scert2} is satisfied with $\alpha_i=(1-1/e)$ for every $i\in I$. 
The following lemma brings out the key source of difficulty in proving this claim. See Appendix \ref{appx:rbaproofs} for the proof. 


\begin{lemma}\label{decompose1} Given $\lambda_t$ as defined by \eqref{lambda}, we have for every resource $i$,
\begin{equation}
\eo\Big[\sum_{t\in O(\omega,i)} \lambda_t\Big]\geq (1-1/e) \opt_i - r_i \eo\Big[ \sum_{t\in O(\omega,i)} \sum^{c_i}_{k=1} \Delta g (k) \pd \big[k> z_i(t)\big]\Big]. \label{lam4}
\end{equation}
\end{lemma}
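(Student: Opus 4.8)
The plan is to lower bound $\eo[\sum_{t\in O(\omega,i)}\lambda_t]$ by exploiting the greedy nature of \dpg\ on reduced prices, and then re-express the resulting bound via a telescoping identity. Fix a resource $i$. For every $t\in O(\omega,i)$ the edge $(i,t)$ is present (since \opt\ matched $t$ to $i$), so $i$ is a candidate for \dpg\ whenever it is available. The key starting point is the pointwise inequality, valid on \emph{every} sample path $\nu$ of \dpg,
\[ r_{D(t)}\Big(1-g\big(\tfrac{z_{D(t)}}{c_{D(t)}}\big)\Big)\ \geq\ r_i\Big(1-g\big(\tfrac{z_i(\nu,t)}{c_i}\big)\Big). \]
First I would verify this in all cases. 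If $z_i(\nu,t)>0$ then $i\in S_t$ and, because \dpg\ matches the resource of largest reduced price, the left side (the reduced price \dpg\ actually extracts at $t$) dominates the reduced price of $i$. If $z_i(\nu,t)=0$ then $i$ is fully in use and the right side equals $r_i(1-g(0))=0$, which is no larger than the left side; the same observation covers the degenerate case where $t$ is left unmatched, since then $S_t=\emptyset$ forces $z_i(\nu,t)=0$ and the extracted reduced price is $0$. Taking $\ed$ of this inequality and recalling the definition of $\lambda_t$ in \eqref{lambda} then gives $\lambda_t\geq r_i\,\ed[1-g(z_i(t)/c_i)]$.

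Next I would sum over $t\in O(\omega,i)$, take $\eo$, and commute the two (independent) sources of randomness so that everything lives under $\eod$. The central algebraic step is the telescoping identity for $g(x)=e^{-x}$: since $1-g(z/c_i)=g(0)-g(z/c_i)=\sum_{k=1}^{c_i}\Delta g(k)\,\mathbbm{1}[k\le z]$ and $\sum_{k=1}^{c_i}\Delta g(k)=g(0)-g(1)=1-1/e$, one obtains
\[ 1-g\big(\tfrac{z_i(t)}{c_i}\big)=(1-1/e)-\sum_{k=1}^{c_i}\Delta g(k)\,\mathbbm{1}[k> z_i(t)], \]
and taking $\ed$ turns the indicator into $\pd[k> z_i(t)]$. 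Here I would use that $z_i(t)$ takes integer values in $\{0,\dots,c_i\}$ so the sum runs over the correct range.

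Finally I would invoke $\opt_i=r_i\,\eo[|O(\omega,i)|]$ to identify the constant term, namely $r_i\,\eo[\sum_{t\in O(\omega,i)}(1-1/e)]=(1-1/e)\,\opt_i$, and collect the two pieces to arrive at
\[ \eo\Big[\sum_{t\in O(\omega,i)}\lambda_t\Big]\geq (1-1/e)\,\opt_i-r_i\,\eo\Big[\sum_{t\in O(\omega,i)}\sum_{k=1}^{c_i}\Delta g(k)\,\pd[k> z_i(t)]\Big], \]
which is exactly \eqref{lam4}. I expect the only delicate point to be the bookkeeping in the pointwise inequality—correctly handling unmatched arrivals and unavailable $i$ through the fact that $z_i=0$ zeroes out the reduced price—together with justifying the interchange of $\eo$ and $\ed$. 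There is no genuine difficulty hidden in this lemma; its role is precisely to \emph{isolate} the residual term $r_i\,\eo[\sum_{t\in O(\omega,i)}\sum_k\Delta g(k)\,\pd[k> z_i(t)]]$, whose control is the real crux of analyzing \dpg\ and is deferred to the perturbation property of the $(F,\bsigma)$ random process.
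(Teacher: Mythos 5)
Your proof is correct and follows essentially the same route as the paper's: the greedy-domination inequality $r_{D(t)}(1-g(z_{D(t)}/c_{D(t)}))\geq r_i(1-g(z_i(t)/c_i))$, the telescoping identity $1-g(z_i(t)/c_i)=(1-1/e)-\sum_{k=z_i(t)+1}^{c_i}\Delta g(k)$, and the conversion of the indicator into $\pd[k>z_i(t)]$ under $\ed$, with $\opt_i=r_i\,\eo[|O(\omega,i)|]$ identifying the constant term. Your explicit case analysis for $z_i(t)=0$ (resource unavailable or arrival unmatched) is a detail the paper leaves implicit in its step $(a)$, but the argument is otherwise identical.
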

{\color{black}
Consider a resource $i\in I$ and let $k$ denote a unit of $i$. Contrast decomposition \eqref{lam4} with its counterpart given in Lemma \ref{decompose}. 
Deterministic binary valued quantities $\onee(\neg k,t^+)$, that signified unavailability of a unit $k$ after arrival $t$ is matched in \galg, have been replaced by (non-binary) probabilities $\pd[k>z_i(t)]$. For any given arrival $t$, probability $\pd [k> z_i(t)]$ represents the likelihood that unit $k$ and all higher units of $i$ are unavailable in \dpg\ when $t$ arrives. This dependence on other units, combined with the sensitivity of $z_i(t)$ to small changes on the sample path, makes it challenging to bound these probabilities in a meaningful way. 
To address these challenges we introduce two new ingredients. 
The first ingredient simplifies the stochastic dependencies by introducing a conditional version of the probability $\pd[k>z_i(t)]$. The second ingredient builds on the first one and addresses the non-binary nature of these probabilities. We discuss these ingredients in more detail in Appendix \ref{appx:rbaoverview}. These ingredients use special structure that is not available in settings beyond online matching. Analyzing \dpg\ in more general settings remains a challenging open problem  (see Appendix \ref{sec:rbafail} for further discussion).

Recall that an $(F,\bsigma, \mb{p})$ random process with probabilities $p_t=1$ for every $\sigma_t\in \bsigma$ is called an $(F,\bsigma)$ random process. Using the new ingredients outlined above and some general steps (similar to the analysis of \galg), we reduce the entire analysis to proving a certain \emph{perturbation property} of $(F,\bsigma)$ random processes  (see Proposition \ref{relatexp} in Appendix \ref{appx:rbaoverview}). We show this property for the families of usage distributions described in Table \ref{summary}. We believe that the property holds for arbitrary usage distributions but proving this remains a challenging open problem. This is the only missing step towards proving a general $(1-1/e)$--competitive ratio for \dpg\ in the context of online matching. In other words, our analysis shows that \dpg\ is $(1-1/e)$--competitive for every family of usage distributions that satisfies the perturbation property. 

\section{Extensions to Online Assortments and Budgeted Allocation} \label{sec:mtoasst}

Our algorithms -- \dpg, \salg\ and its variants --  extend quite naturally to more general settings. In this section, we formally introduce a more general model and discuss the extension of our results as well as the new ideas needed to generalize our analysis. In a nutshell, we show that a natural generalization of \salg\ is $(1-1/e)$--competitive in the following model.  We outline the key difficulty with proving performance guarantee for \dpg\ in this model in Appendix \ref{sec:rbafail}.
\smallskip

\noindent \textbf{Online Assortment Optimization with Multi-unit Demand:} 
In this setting, we do not directly match arrivals to resources but instead offer a set of available resources (an assortment) to each arriving customer. Customer $t\in T$, probabilistically chooses at most one resource from the set according to a choice model $\phi_t: 2^{I}\times I\to [0,1]$, where $\phi_t(S,i)$ is the probability that resource $i$ is chosen from assortment $S$. The choice model $\phi_t$ is revealed to the online algorithm when customer $t$ arrives but the customer's choice is realized after the assortment decision is made. 
We consider a multi-unit demand setting where customer $t$ is interested in up to $b_{it}\geq 0$ units of resource $i\in I$. 
If $t$ chooses $i$ from the offered assortment and $y_i(t)$ units of $i$ are available, then $t$ takes $\min\{y_i(t),\, b_{it}\}$ units of $i$, generating a reward $r_i\,\min\{y_i(t),\, b_{it}\}$. The choice model $\phi_t$ and unit requirements $(b_{it})_{i\in I}$ are revealed to the online algorithm on arrival of $t$. Resources are reusable and different units of resource $i$ allocated to a given customer may all be used for the same random duration $d\sim F_i$, or each unit for an independently drawn random duration (with distribution $F_i$). Let $\gamma=\min_{(i,t)\in E} \frac{c_{i}}{ b_{it}}$. In the large inventory regime for this setting we have,  $\gamma\, \to +\infty$. 

The setting generalizes two important models from online allocation of non-reusable resources: (i) Online assortment optimization~\citep{negin}, where $b_{it}\in\{0,1\}$ for every $i\in I, t\in T$, and (ii) Online budgeted allocation or Adwords~\citep{msvv}, where values $b_{it}$ can be arbitrary but choice is deterministic (as in	 online matching). 	We show the following guarantee for arbitrary usage distributions\footnote{Guarantees are shown against clairvoyant that knows $\phi_t$ for all customer $t\in T$ at the start of the planning horizon but observes the realization of customer $t'$s choice only after it offers an assortment to the customer (same as the online algorithm). Clairvoyant can offer at most one assortment to each customer (same as the online algorithm).}. 

\begin{theorem}[{Generalization of Theorem \ref{main}}]\label{main2}
For online assortment with multi-unit demand and arbitrary usage distributions, there is a $(1-1/e-\delta)$--competitive realization independent algorithm for $\delta=O\left(\sqrt{\frac{\log \gamma}{\gamma}} \right)$. For $\gamma\to+\infty$, the guarantee approaches $(1-1/e)$ and this is the best possible asymptotic guarantee. 
\end{theorem}
}
\galg\ and \salg\ admit a natural generalization for the setting of online assortment optimization with multi-unit demand. We defer a formal generalization of the algorithm and analysis to Appendix \ref{appx:asst}. At a high level, we generalize \galg\ to obtain a fluid guide that outputs a fractional solution over assortments at each arrival. To generalize \salg, we could independently round the fractional solution given by the guide to get a candidate assortment at each arrival. However, it is possible that some items in the candidate assortment are unavailable. In fact, for large assortments the probability that at least one item is unavailable can be close to 1. In light of this, perhaps a natural approach is to offer the available subset of items from the candidate assortment. However, due to substitution, this can increase the probability of individual items being chosen, prohibiting a direct application of concentration bounds. To address this we introduce an algorithm that constructs new assortments with lower choice probability for the individual items in order to precisely control the probability of each resource being chosen. The following lemma and the subsequent \emph{Probability Matching} algorithm formalizes these ideas. 


\begin{lemma}\label{probmatch}
Consider a choice model $\phi: 2^{N}\times N \to [0,1]$ satisfying the weak substitution property\footnote{Weak substitution: $\phi(S,i)\geq \phi(S\cup\{j\},i),\, \forall i\in S, j\not\in S$.}, an assortment $A\subseteq N$ belonging to a downward closed feasible set $\mathcal{F}$, a subset $S\subseteq A$ and target probabilities $p_s$ such that, $p_s\leq \phi(A,s)$ for every  $s\in S$. There exists a collection $\mathcal{A}=\{A_1,\cdots,A_m\}$ of $m=|S|$ assortments along with weights $\big(u_i\big) \in[0,1]^{m}$, such that the following properties are satisfied:
\begin{enumerate}[(i)]
\item For every $i\in[m]$, $A_i \subseteq S$ and thus, $A_i\in \mathcal{F}$.
\item Sum of weights, $\sum_{i\in[m]} u_i \leq 1$.
\item For every $s\in S$, $\sum_{A_i\ni s} \, u_i\, \phi(A_i,s) = p_s$.
\end{enumerate}
Algorithm \ref{pmatch} computes such a collection $\mathcal{A}$ along with weights $(u_i)$ in $O(m^2)$ time.
\end{lemma}
\begin{algorithm}[H]
\SetAlgoNoLine
\textbf{Inputs:} set $S$, choice model $\phi(\cdot,\cdot)$, target probabilities $p_s\leq \phi(S,s)$ for $s\in S$\;
\textbf{Output:} Collection $\mathcal{A}=\{A_1,\cdots,A_m\}$, weights $\mathcal{U}=\{u_1,\cdots,u_m\}$ with $\underset{A_i\ni s}{\sum}u_i \phi(A_i,s)=p_s$ $\forall s\in S$\;
\For{$j=1$ to $|S|$}{
Compute values $\zeta_s=\frac{p_s}{\phi(S,s)}$ for all $s\in S$\;
Define $s^*=\arg\min_{s\in S} \zeta_s$\;
Set $A_j=S$ and $u_j=\zeta_{s^*}$\;
Update $S \to S\backslash \{s^*\}$ and $p_s\to p_s-u_j\,\phi(S,s),\, \forall s\in S$\;
}

\caption{Probability Match $(S,\{p_s\}_{s\in S})$}
\label{pmatch}
\end{algorithm}

{\color{black}
\textbf{Remarks:} It can be shown that this subroutine runs in $O(m\log m)$ time for the commonly used MNL choice model.  We note that \cite{feng2} concurrently developed a randomized version of our probability matching subroutine (Algorithm \ref{pmatch}) in a setting with \emph{stochastic arrivals} (see sub-assortment sampling in \cite{feng2}). While probability matching outputs a distribution over assortments, sub-assortment sampling outputs a single randomized assortment from the same distribution. 
Prior work by \cite{chen}, which focuses on a setting with non-reusable resources, also proposes an idea that has similarities with our subroutine. They designed an algorithm that computes a distribution over assortments by iteratively eliminating items that have reached a kind of capacity limit\footnote{See discussion on breakpoints on page 5 and Procedures 1 and 2 in \cite{chen}.}. Their procedure has a different goal and they do not need (or prove) the result stated in Lemma \ref{probmatch} above.
Finally, recent works show that a generalization of this subroutine also has applications in settings with product retirement \citep{elmachtoub2022revenue} as well as pricing of substitutable goods \citep{pricing}, where the fact that probability match outputs a distribution over nested assortments turns out to quite be important. 
}

{\color{black}

\section{Numerical Experiments}\label{sec:numerics}

In this section, we describe results from a numerical experiment where we compare different algorithms on a synthetically generated data set. In order to obtain a clear comparison between the algorithms, we focus on the OBMR setting where there is no sampling noise from the randomness in customer choice. Overall, we find that \dpg\ tends to outperform \salg, Balance, and the greedy algorithm in many of the scenarios that we consider. 

Before describing our experimental setup, we recall a family of ``hard'' instances for online $b$-matching~\citep{pruhs}. Consider an instance with $n$ resources, $n$ different types of arrivals, and $n^2$ arrivals in total. There are $n$ units of each resource and resources have identical rewards ($r_i=1\,\, \forall i\in [n]$). The first $n$ arrivals have an edge to every resource but arrivals get \emph{pickier} over time. The next $n$ arrivals have an edge to every resource in the set $[n-1]$. In general, for $k\in[n]$, arrivals $\{(k-1)n+1,\cdots, kn\}$ are of the same type and each of them has an edge to every resource in $[n-k+1]$. It is not hard to see that the optimal offline solution has value $n^2$. For $n\to+\infty$, greedy (with the worst tie breaking rule) matches $\approx 0.5n^2$ arrivals and Balance matches $\approx (1-1/e)n^2$ arrivals.

\subsection{Experimental Setup} 
Our setup is inspired by the family of hard instances above. We use four `parameters' to generate different scenarios: the number of resources $n$, the starting inventory of resources $c$, the type of usage distribution $F$, and a parameter $\kappa$ that influences the arrival sequence. We describe the setup in terms of these parameters and then discuss the different parameter settings that we consider. 
\smallskip

\noindent \textbf{Resources:} We use identical starting inventory ($c$) and identical rewards ($r_i=1\,\,\forall i\in [n]$) for all resources. The resources also have the same usage distribution $F$. For the subsequent discussion, let $F$ be the two-point distribution where a matched unit returns after 1 unit of time w.p.\ 0.5 and w.p.\ 0.5 the unit does not return until the end of the planning horizon. In our experiments, a unit that returns at time $t$ can be matched to an arrival at time $t$. We discuss other usage distributions that we experiment with later on. 
\smallskip

\noindent \textbf{Arrivals:} There are $2n$ different types of arrivals and $2cn$ arrivals in total.  An arrival of type $i\in[n]$ has an edge to every resource in $[i]$. Arrivals of type $i\in[2n]\backslash[n]$ have an edge to resource $i-n$ only.  Observe that for the two-point usage distribution, the expected value of the number of times we can match a single unit of resource is 2, and the maximum total expected number of matches is $2cn$.

We describe the sequence of arrivals by splitting it into $n$ consecutive phases with $2c$ arrivals in each phase. Phase 1 is spread over time interval $[1,c+1]$ and starts with a ``burst'' of $c$ arrivals at time 1 followed by one ``normal'' arrival at every time $t\in\{2,\cdots,c+1\}$. Every arrival at time 1 is of type $2n$, i.e., has an edge to resource $n$. 
For every normal arrival, the type is sampled independently from distribution $D_1$ over arrival types that we will define later. In general, for $k\in[n]$, phase $k$ is spread over time interval $[(c+1)(k-1)+1,\, (c+1)k]$ and consists of a burst of $c$ arrivals of type $2n-k+1$ at time $(c+1)(k-1)+1$, followed by a (single) normal arrival at every time $t\in\{(c+1)(k-1)+2,\cdots, (c+1)k\}$. The type of each normal arrival is sampled independently from distribution $D_k$ (to be described). Overall, let $T_1$ denote the set of bursty arrivals 
and let $T_2$ denote the set of normal arrivals.
\smallskip

\noindent \emph{Distributions $D_k$:} A normal arrival in phase $k$ is type $i$ with probability proportional to $e^{-\kappa |i-(n-k+1)|}$, here $|\cdot|$ denotes the absolute value. Observe that the probability peaks for type $n-k+1$ and for $\kappa\to+\infty$, every normal arrival in phase $k$ is of type $n-k+1$ w.p. 1. This mimics the worst case arrival sequence for non-reusable resources where the arrivals become pickier over time. On the other hand, when $\kappa=0$, every normal arrival is independently sampled from the uniform distribution over $[n]$. 
\smallskip

\noindent \emph{Remarks on the arrival sequence:} The distribution of normal arrivals closely mimics the distributions over arrival types that have been used for numerical experiments in prior work~\citep{RST18,reuse}. Suppose that we remove the bursty arrivals ($T_1$) and consider the sub-sequence of normal arrivals $T_2$ without changing the arrival times. In the resulting setup, we find that all the online algorithms that we test have similar performance and are close to optimal in most scenarios. We include a sample of the results in Table \ref{suptab1} in Appendix \ref{appx:exp}. 
This outcome matches with the experimental results of \cite{reuse}, who considered a very similar setup. 
By mixing normal arrivals with bursty arrivals, we obtain some insights on scenarios where \dpg\ (and \salg) may outperform Balance and Greedy. To see this, let $\kappa\to+\infty$ and consider the arrivals in phase 1. Ideally, we should match all arrivals in this phase to resource $n$ (if we can) because future arrivals do not have an edge to resource $n$. The algorithms under consideration will all match the bursty arrivals at time 1 to resource $n$. However, it can be shown that \dpg\  will match a relatively larger number of normal arrivals to resource $n$ as well (for more details, see Examples A.1, A.2, and A.3 in Appendix \ref{sec:challenges}).  
\smallskip

\noindent {\bf Algorithms and Benchmark}: We compare the performance of \dpg, \salg, Balance, and greedy with an LP benchmark. We use an LP that upper bounds the expected total revenue of clairvoyant, where the expectation is over the randomness in arrival sequence as well as usage durations (see Appendix \ref{appx:exp} for more details). In our implementation of \salg, we set $\delta_i=0\,\, \forall i\in I$. Tuning this parameters did not change the overall performance of \salg\ significantly. For all algorithms, we use a deterministic tie breaking rule that chooses the resource with the lowest index \footnote{This tie breaking rule is without loss of generality because one could, for example, set $r_i=1-\epsilon i\,\, \forall i\in [2n]$ for a suitably small value $\epsilon>0$. This would eliminate ties in the online algorithms we consider without significantly changing any of the experimental results.}. 

\subsection{Experimental Results}
We experiment with $n\in\{5,20\}$, starting inventory $c\in\{5,15,25\}$, and $\kappa\in\{0,1\}$. 
In addition to the two-point distribution, we consider the following two usage distributions ($F$): the exponential distribution with mean $c$, i.e., $F(x)=1-e^{-x/c}$, and the non-IFR Weibull distribution with mean $c$, shape parameter 0.5, and scale parameter $c/2$, i.e., $F(x)=1-e^{-\left(\frac{2x}{c}\right)^{0.5}}$. 
The Weibull distribution allows us to examine the performance of \dpg\ in a setting that is not included in our theoretical result for \dpg. Note that, setting a mean value of $c$ (approximately) calibrates the value of the LP benchmark across all three types of usage distributions in many of the scenarios that we consider. 

\begin{table}[h]{\color{black}
	\centering
	\begin{tabular}{|c|c|c|c|c|c|}
		\hline
		  \textbf{$c$} &\textbf{$F$}&\text{$\quad$\textbf{\dpg}$\quad$} & \text{$\quad$\textbf{Balance}$\quad$} & \text{$\quad$\textbf{Greedy}$\quad$} &\textbf{\salg} \\ \hline
				\multirow{3}{*}{5} & Two-point  & 0.781 & 0.754 & 0.727 & 0.613  \\ \cline{2-6}
		 & Exponential   & 0.883  & 0.876  & 0.889 & 0.646  \\ \cline{2-6}
		 & Weibull  & 0.942 & 0.936 & 0.936 & 0.672 \\ \hline
		\multirow{3}{*}{15} & Two-point  & 0.814 & 0.771 & 0.737 & 0.728  \\ \cline{2-6}
		 & Exponential & 0.896 & 0.878  & 0.899 & 0.793\\ \cline{2-6}
		 & Weibull & 0.937 & 0.923 & 0.931 & 0.846 \\ \hline
		 	\multirow{3}{*}{25} & Two-point & 0.827 & 0.781 & 0.737 & 0.816 \\ \cline{2-6}
		 & Exponential & 0.896 & 0.878  & 0.897 & 0.884 \\ \cline{2-6}
		 & Weibull & 0.934 & 0.918 & 0.923 &0.925 \\ \hline
	\end{tabular}
	\caption{Average performance of algorithms for \( n = 5 \) and \( \kappa = 1 \) in comparison to the LP benchmark. For \dpg, Balance, and greedy, the standard deviation of the ratio is less than $0.0001$ in every scenarios. For \salg, the standard deviation is less than $0.01$. All values have been rounded up to three decimal places.}\label{maintab}}
\end{table}
Let us consider $n=5$ and $\kappa=1$. Table 1 summarizes our results for different values of $c$ and different types of usage distributions $F$. For each setting of $c$ and $F$, we randomly generate 100 instances (arrival sequences) and simulate all four algorithms 20 times on each instance. The performance of each algorithm (\alg) in the table is reported as the ratio of the empirical average performance of \alg\ (based on $20\times100$ trials) and the optimal value of the LP benchmark. Recall that we use an LP that upper bounds the \emph{expected} performance of clairvoyant (see Appendix \ref{appx:exp}). In other words, the LP is independent of the random arrival sequence and usage durations only needs to be solved once\footnote{One could also use an instance dependent LP relaxation (say ILP) that depends on the (random) arrival sequence and then compute the empirical average of the ratio of \alg\ and ILP. We believe that this would not significantly change the overall trends that follow from our results.}. The standard deviation of the values is small for all algorithms  and we report the values below the table.
 Observe that \dpg\ dominates the other algorithms in most scenarios, earning up to $ 5\%$ more reward than Balance and up to $10\%$ more than greedy in some scenarios. Balance dominates greedy for the two-point distribution but greedy performs slightly better for exponential and Weibull distributions. \salg\ performs poorly when the starting inventory is small ($c=5$) but the performance improves as the inventory increases. We believe that the unimpressive numerical performance of \salg\ is partly due to the realization independent nature of the algorithm that can lead the algorithm to propose matches to resources that are unavailable. This may be more likely to occur when the starting inventory is small.  

In Appendix \ref{appx:exp}, we also include results for $n=5, \kappa=0$ (Table \ref{suptab2}), $n=20, \kappa=1$ (Table \ref{suptab3}), and $n=20, \kappa=0$ (Table \ref{suptab4}). Together, these tables sufficiently cover all the scenarios that we considered. Overall, we find that \dpg\ continues to lead the other algorithms across these different scenarios. Recall that our $(1-1/e)$ lower bound on the performance of \dpg\ does not extend to settings where the usage durations are non-IFR. However, the numerical performance of \dpg\ indicates the possibility that it may have a strong performance guarantee in general. 

%

}
\section{Summary and Future Directions}\label{sec:conclusion}
We considered settings in online resource allocation when resources are reusable. 
Focusing on the large capacity regime, we proposed a new distribution oblivious algorithm called Rank Based Allocation (\dpg). By developing a scheme to turn realization dependent online algorithms into realization independent ones, and using a new LP free system for certifying competitiveness, 
we gave a randomized algorithm that achieves the best possible guarantee of $(1-1/e)$ for arbitrary usage distributions. To show this guarantee in the setting of online assortments we also developed a novel probability matching subroutine that gives us full control over the substitution behavior in assortments. Finally, we showed that the much simpler (and deterministic) \dpg\ algorithm also achieves the optimal guarantee of $(1-1/e)$ for online matching when the usage distributions are IFR (roughly speaking). 

{\color{black}
A natural question that remains open for future work is whether there exists an algorithm that performs better than greedy 
when resources have unit capacity\footnote{Very recently, \cite{delong} (subsuming \cite{periodic})  made progress in the unit inventory setting for the special case of online matching with \emph{deterministic} and \emph{identical} usage durations i.e., when every matched resource is used for a fixed duration $d\geq 0$.} (the most general setting). In fact, many such questions remain open in the low capacity regime and in Appendix \ref{apx:connection} we shed some light on the difficulty of this problem by establishing a connection between a very special case of reusability and a long standing open problem in the setting of online matching with stochastic rewards. 
On the more technical side, it would be interesting to optimize the convergence rates for the asymptotic guarantees or to find a deterministic algorithm (like \dpg) is asymptotically $(1-1/e)$ for general usage distributions (beyond IFR). Finally, it would be interesting to see if our algorithmic and analytical framework can be fruitfully applied to other related settings\footnote{Recent works such as \cite{aouad2020online, manshadi2022online,unkads, pricing, osow} have demonstrated new applications of the LP free framework}.} 
{\small
\bibliographystyle{informs2014.bst}
\bibliography{./../../bib}
}

\begin{APPENDICES}
\newpage
\appendixpage
\startcontents[sections]
\printcontents[sections]{l}{1}{\setcounter{tocdepth}{2}}
\newpage
\setcounter{theorem}{0}
\renewcommand{\thetheorem}{\Alph{section}\arabic{lemma}}
\renewcommand{\thetheorem}{\Alph{section}\arabic{theorem}}
\section{Limitations of Balance and Primal-dual for Stochastic Usage Durations} \label{sec:challenges}
Recall that in case of non-reusable resources, the Balance algorithm combined with primal-dual analysis leads to the best possible $(1-1/e)$ guarantee in a variety of settings. Through simple examples we now demonstrate some of the challenges with applying these ideas to the more general case of reusable resources. These examples also illustrate the ability of our new algorithm and analysis approach to address uncertainty in reusability. 

\subsection{Performance of Balance for Two-point usage distributions}\label{appx:twopoint}
We consider settings where for every resource $i\in I$, the usage durations are stochastic but take only one of two values - a value $d_i>0$ or $+\infty$. This gives us a simple and natural generalization of the setting of non-reusable resources as well as the setting of deterministic usage durations. We show that Balance has a competitive ratio strictly less than $(1-1/e)$ in this setting. This indicates that quantities such as the time interval between arrivals and the probability of matched units returning before the next arrival may play an important role. 

We start with a simple example that demonstrates the difference between Balance and \dpg.
\begin{eg}\label{passivity}
\emph{Consider an instance with two resources, labeled $1$ and $2$. Resources have the same reward and a large capacity $n$. We use a two-point usage distribution for both resources, with support $\{1,+\infty\}$ and probability of return $0.5$. Consider a sequence of $4n$ arrivals as follows: At time 0 we have a burst of $2n$ arrivals all within a very short amount of time $\epsilon\to 0$. These arrivals can only be matched to resource 1. 
This is followed by $n$ regularly spaced arrivals at time epochs $\{2,4,\dots,2n\}$, each of which can be matched to either resource. Finally, we have a burst of $n$ arrivals that can only be matched to resource 2, at time $2n+2$. For large $n$, clairvoyant can match $\sim 3n$ arrivals with high probability (w.h.p.). }

\end{eg}  

\noindent \textbf{Comparing Balance with \dpg:} Balance will match to the resource with highest fraction of remaining capacity. So it matches the first $n$ arrivals $t\in[n]$ (half of the first burst), to resource 1. With high probability (w.h.p.), nearly $n/2$ units of resource 1 return by time 2 and the rest never return. W.h.p., Balance matches most of $n$ the arrivals $t\in[2n]\backslash[n]$ occurring between time $2$ and $2n$ to resource 2 and thus, can only match half of the final burst of $n$ arrivals $t\in[3n]\backslash[2n]$ at time $2n+2$. Balance fails to recognize that due to reusability the $n/2$ remaining units of resource 1 could all be matched to the second set of $n$ arrivals and, in this way, the ``effective" remaining capacity of resource 1 is  $n$. 

Consider the decisions of \dpg\ in this instance. \dpg\ coincides with Balance over the first $2n$ arrivals. 
After the first burst, w.h.p., the highest available unit of resource 1 in \dpg\ is no lower than $n-O(\log n)$. In fact, w.h.p., \dpg\ manages to match a constant fraction ($\sim 1/3$) of the $n$ spread out arrivals to resource 1, successfully gauging ``effective" inventory of the resource. Overall, \dpg\ matches $\sim n/6$ more arrivals than Balance. In general, the metric $z_i(t)$ is very sensitive to reusability, i.e., it tends to have a high value when arrivals are spaced out and units return ``frequently enough", and when this is not the case it acts closer to Balance and protects resources with low inventory.
\smallskip

\noindent \textbf{What about fluid reusability?} Does the framework of turning realization dependent algorithms into realization independent ones by means of using a fluid guide (see Section \ref{sec:matching}), addresses the above issue with Balance? Notice that even when we consider the fluid versions of usage distributions in Example \ref{passivity}, the actions of Balance do not change. 
{\color{black}	In fact, by modifying Example \ref{passivity}, we establish an upper bound of 0.626 on the competitive ratio of Balance.

In Example \ref{passivity}, let $r_t$ denote the reduced price computed by Balance for the resource matched to arrival $t\in[4n]$.
\begin{eg}\label{ega2} \emph{We consider the 
instance in Example \ref{passivity} and augment it as follows. 
For every $t$, we have a new low reward non-reusable resource $i_t$ with price $\max\{0,\frac{r_t}{(1-1/e)} -\delta\}$, for some small $\delta>0$. Resource $i_t$ does not have an edge to any arrival except $t$. For every $t$, let $i_t$ have large capacity and usage duration of $+\infty$. We also augment the arrival sequence in Example \ref{passivity} with additional $n/2$ arrivals at the end (time $2n+2$), making a total of $1.5n$ arrivals in the final burst and $4.5n$ arrivals in total. The final $n/2$ arrivals have an edge to resource 2 only.}
\end{eg} 


\begin{repeattheorem}[Theorem \ref{balup} (restated)]
Consider the family of instances given in Example \ref{ega2}. 
Suppose we allow fractional matching and experience fluid versions of usage distributions such that, if an $\epsilon$ amount of resource $i$ is matched to $t$ then $0.5\epsilon$ returns at time $t+1$ and the other half never returns.  
Then, the total revenue of fractional Balance algorithm is $< 0.626\, \opt$ $(<(1-1/e)\, \opt)$. 
\end{repeattheorem}
\begin{proof}{Proof.}
Notice that, the matching output by Balance does not change with the addition of the low reward resources. Further, Balance does not match the additional $n/2$ arrivals added in the final burst. Therefore, the total reward of Balance is $2.5n$. We now compute a lower bound on the clairvoyant. Let \opt\ match the first $n$ arrivals (from 1 to $n$) to the low reward resources. This gives a reward of at least \[\frac{\sum_{i=1}^{n} 1-e^{-i/n}}{1-1/e}-O(n\delta),\] here $\frac{ 1-e^{-t/n}}{1-1/e}-\delta$ is the revenue from matching arrival $t$ to $i_t$. Let \opt\ match the next $2n$ arrivals (from $n+1$ to $3n$) to resource 1 (as much as possible). Using concentration bounds for large $n$, we have that \opt\ matches $2n-o(n)$ of the arrivals to resource 1 w.h.p..\ In the final burst of arrivals, \opt\ matches first $n/2$ arrivals (from $3n+1$ to $3.5n$) to low reward resources and final $n$ arrivals (from $3.5n$ to $4.5n$)  to resource 2. This gives additional reward of at least $ \frac{\sum_{i=0}^{0.5n-1}1-e^{-0.5+i/n}}{1-1/e}-O(n\delta)+n$. The total value of \opt\ is at least,
\begin{eqnarray*}
&&\frac{\sum_{i=1}^{n} 1-e^{-i/n}}{1-1/e}+2n+\frac{\sum_{i=0}^{0.5n-1}1-e^{-0.5+i/n}}{1-1/e}+n-O(n\delta),\\
&&\qquad \geq n\left(3+\frac{1/e}{1-1/e}+\frac{0.5-(1/\sqrt{e} -1/e)}{1-1/e}-O(\delta)\right),\\ 
&&\qquad \overset{\delta\to0}{\geq }\frac{1}{0.626}\times 2.5\,n.
\end{eqnarray*}
\hfill\Halmos
\end{proof}
}
%
%
\subsection{Comparison for Exponential usage distributions}
We consider a different example below to further demonstrate \dpg's ability to 
adapt to arrival sequence and usage distribution, without explicitly using the distribution.

\begin{eg}\label{passivity2}
\emph{
Consider a setting with two resources that have rewards $r_1=1$ and $r_2=2$ and large starting capacities $n$. The usage time distribution of both resources is exponential with rate $\mu$ (mean $1/\mu$). We receive an arrival sequence where the first $n-1$ arrivals come in a very short span of time $\epsilon\to 0$ and are only interested in a unit of resource 2. Suppose that the next arrival, call it $t_0$, is 1 unit of time later and can be matched to either resource. 
}\end{eg}
\noindent \textbf{Comparing Greedy, Balance, and \dpg:} All algorithms match the first $n-1$ arrivals to resource 2. Greedy matches $t_0$ to resource 2 for every value of $\mu$. On the other hand, Balance will be quite risk-averse and protect resource 2 from being matched unless $\mu>0.5$ (roughly). \dpg, on the other hand, responds quite nimbly to $\mu$. It will protect resource 2 when $\mu\to 0$ i.e., the 1 unit time interval is insignificant for inventory to replenish. For every non-infinitesimal value of $\mu$, the highest available unit of resource 2 will have index at least $n-O(\log n)$ with high probability and therefore, \dpg\ will act greedily and match the arrival to resource 2.  Figure \ref{ex2fig} summarizes these differences between the algorithms.
\begin{figure}[h]
\centering
\includegraphics[scale=0.5]{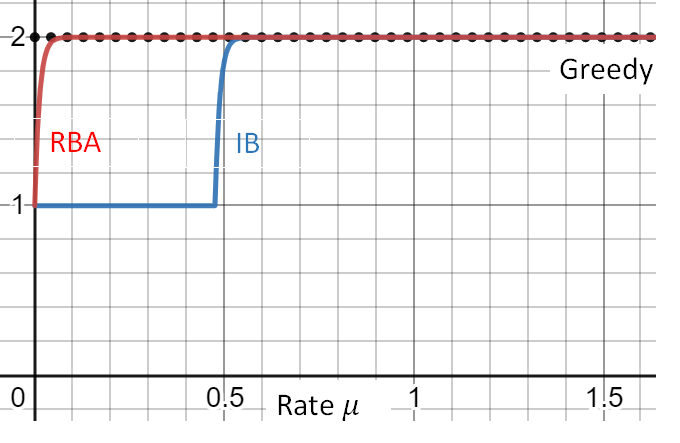}
\caption{Comparison of Greedy, Balance, and \dpg\ on Example \ref{passivity2}. Numbers on the $y$ axis denote the resource matched to $t_0$ w.h.p..}
\label{ex2fig}
\end{figure}

\subsection{Variants of Balance}\label{apx:balvar}
Examples \ref{passivity} and \ref{passivity2} hint at the following ``switching" behavior: In phases of arrivals where resources return ``frequently" relative to the arriving demand, it is better to be greedy. On the other hand, when arrivals occur in a large batch and resources can not return in time, we should follow Balance to protect diminishing resources. Since we have no information about future arrivals it is not clear how an online algorithm can make this ``switch" optimally. While \dpg\ manages to perform this switch quite nimbly, the following natural variants of Balance fail to do so. 
\begin{enumerate}[(i)]
\item  An algorithm that uses distributional knowledge to deduce when some items are not going to return and refreshes the maximum capacity instead of just the remaining capacity. In the context of Example \ref{passivity}, this algorithm would realize after 1 unit of time that the units of resource 1 that have not returned, will never return. Subsequently, it computes a new maximum capacity of $n/2$ for resource 1 at time 1, and treats the resource as if it were at full capacity. It is not hard to see that in general this ends up converging to the greedy algorithm (when the return probabilities approach 0 for instance).
\item An algorithm that anticipates that items are going to return in the future and  considers a more optimistic remaining capacity level. In Example \ref{passivity}, this algorithm would deduce that there are no further items returning after time 1. Therefore, it makes the same decisions as Balance on the instance in Example \ref{passivity}.
\end{enumerate}
\subsection{Challenges with Standard Primal-dual Analysis}\label{primalimpossible}
{\color{black}	From an analysis standpoint, 
it has previously been observed that using the primal-dual framework of \cite{devanur} and \cite{buchbind}, a cardinal technique of analysis in case of non-reusable resources, presents non-trivial challenges 
and typical dual fitting arguments do not seem to work \citep{RST18,reuse}. For a concrete demonstration, consider the LP upper bound on clairvoyant and its dual,
\begin{eqnarray}
\textbf{Primal}\quad  \max &&  \sum_{(i,t)\in E} r_i y_{it}\nonumber\\
s.t.  &&\sum_{t=1}^{\tau}[1-F_i(a(\tau)-a(t))]y_{it}\leq c_i \quad \forall \tau\in\{1,\dots,T\},\forall i\in I \nonumber\\
&& \sum_{i\in I} y_{it} \leq 1\quad \forall t\in T\nonumber\\
&& 0\leq y_{it}\leq 1 \quad \forall t\in T, \quad i\in I
\end{eqnarray}

\begin{eqnarray}
\textbf{Dual}\quad &\min &\sum_{t}\lambda_t +\sum_{(i,t)\in E} c_i \theta_{it}\label{dualnat}\\
&s.t.\ & \lambda_{t} +\sum_{\tau \mid a(\tau)\geq a(t)}[1-F_i(a(\tau)-a(t))]\,\, \theta_{i\tau} \geq r_i\quad \forall (i,t)\in E\nonumber\\
&& \lambda_t, \theta_{it}\geq 0 \quad \forall t\in T,i\in I	\nonumber
\end{eqnarray}
Given an online algorithm \alg\ (with expected revenue \alg), if one can find a dual fitting $\lambda_t\geq 0$ and $\theta_{i\tau}\geq 0$ such that,
\begin{enumerate}[(i)]
\item  $\lambda_{t} +\sum_{\tau \mid a(\tau)\geq a(t)}[1-F_i(a(\tau)-a(t))]\,\, \theta_{i\tau} \geq \alpha \,\, r_i,\, \forall (i,t)\in E.$
\item $\alg \geq \sum_{t}\lambda_t +\sum_{(i,t)\in E} c_i \theta_{it}.$
\end{enumerate}
Then, by weak duality \alg\ is $\alpha$--competitive. 
For this certificate, we demonstrate that the standard dual fitting approach for finding a feasible dual solution  
cannot be used to obtain tight guarantees for stochastic (or even fluid) reusability. 

In the standard procedure for defining a candidate solution to the dual, we initialize all dual variables to 0 and then update the values in tandem with the matching decisions made by the online algorithm \citep{buchbind,devanur}. For simplicity, consider a deterministic online algorithm \alg. When \alg\ matches arrival $t$ to resource $i$, the reward $r_i$ is split into two parts and the dual variables are updated as follows,
\[\lambda_t=r_i\alpha_{it};\quad \sum_{\tau\in T}\theta_{it}=\sum_{\tau\in T}\theta_{it}+\frac{r_i}{c_i}(1-\alpha_{it}),\]
here $\alpha_{it}\in[0,1]$ decides how the gain of $r_i$ is split between $\lambda_t$ and $\sum_{\tau\in T}\theta_{it}$. This procedure ensures that at the end of the planning horizon we have,
\[\sum_{t\in T}\lambda_t+\sum_{i\in I}\sum_{t\in T}c_i \theta_{it}=\alg. \]
The values of splitting parameters $\alpha_{it}$ need to be carefully tuned to obtain the desired performance guarantee. We show that when usage durations are stochastic there is, in general, no setting of the splitting parameters that can be used to certify the actual performance of Balance.

For family of instances given in Example \ref{ega2}, to show that Balance has total reward $\alpha \opt$ using this approach, we need to find splitting parameters that satisfy the following system of inequalities,
\begin{eqnarray*}
&& \lambda_{t} +\sum_{\tau \in[2n]\backslash [t-1]}  \theta_{1,\tau} + \sum_{\tau \in [3n]\backslash[2n]} 0.5\theta_{1,\tau} \geq \alpha\quad \forall t\in[2n]\\
&& \lambda_{t} +\theta_{i,t}+\sum_{\tau \in\{t+1,\cdots,(3+1.5(i-1))n\}} 0.5 \theta_{i,\tau} \geq \alpha\quad \forall i\in\{1,2\},\, t\in\{2n+1,\cdots,3n\}\\
&& \lambda_{t} +\sum_{\tau \in\{t,\cdots,4.5n\}}  \theta_{2,\tau} \geq \alpha\quad \forall t\in\{3n+1,\cdots,4.5n\}\\
&& \lambda_t + \theta_{i_t}\geq \alpha r_{i_t}\quad \forall t\in[4.5n]\\
&& \lambda_t, \theta_{it}\geq 0, 
\end{eqnarray*}
here we use a single variable $\theta_{i_t}$ for low reward resource $i_t$ since resource $i_t$ does not have an edge to any arrival other than $t$. Note that for $t\in\left([2n]\backslash[n]\right)\cup\left(\{3.5n,\cdots, 4.5n\}\right)$, the low reward resources have price 0 and can be ignored. Now, under a fluid version of reusability, Balance does not match any arrival to low reward resources (with non-zero price). Based on the standard dual fitting procedure, we set, 
\[\theta_{i_t}=0\quad \forall t\in[4.5n].\]
Thus, $\lambda_t\geq \alpha r_{i_t}$ for all $t\in[4.5n]$. Similarly, we have $\lambda_t=0$ for $t\in[2n]\backslash[n]$ and $t\geq 3.5n+1$. Now, consider a simplified and reduced system of inequalities,
\begin{eqnarray*}
\theta_{1,2n}+ \sum_{\tau \in [3n]\backslash[2n]} 0.5\theta_{1,\tau}&\geq &\alpha,\\
\theta_{1,t}+\sum_{\tau \in [3n]\backslash[t]} 0.5\theta_{1,\tau}&\geq &\alpha-\lambda_t\quad \forall t\in\{2n+1,\cdots,3n\},\\
\theta_{2,4.5n}&\geq &\alpha,\\
\lambda_t&\geq & \alpha r_{i_t}\quad \forall t\in[4.5n],\\
\sum_{t\in [4.5n]}\lambda_t +n\sum_{i\in \{1,2\}, t\in [4.5n]} \theta_{it}&=&2.5n.
\end{eqnarray*}
For $\alpha=0.626$, it can be verified that this system does not have a feasible solution, i.e., there is no setting of the splitting parameters $\alpha_{it}$ that gives a feasible solution to the inequalities above.}

\section{Validity of Generalized Certificate}\label{appx:certificate}
\begin{repeatlemma}[Lemma \ref{certificate}.]
	Given an online algorithm \alg, non-negative values $\{\lambda_t(\omega,\nu)\}_{t,\omega,\nu}$ and $\{\theta_i\}_{i}$ such that conditions \eqref{cert1} and \eqref{cert2} hold, 
we have
\[\alg \geq \frac{\min_{i\in I} \alpha_i}{\beta}  \opt.\] 
\end{repeatlemma}
\begin{proof}{Proof.}
We start by summing both sides of constraint \eqref{cert2} over $i\in I$, 
\begin{eqnarray}
\sum_{i\in I}\alpha_i\, \opt_i \leq \sum_i \Big(\theta_i + 
\eod \Big[ \sum_{ t\in O(\omega,i)}\lambda_t(\omega,\nu) \Big]\Big) 
&&=\sum_{i\in I}\theta_i + \eod \Big[ \sum_{i\in I} \sum_{  t\in O(\omega,i)}\lambda_t(\omega,\nu) \Big]\nonumber \\
&&\overset{(a)}{\leq} \sum_{i\in I}\theta_i + \eod\Big[\sum_{t\in T} \lambda_t(\omega,\nu)\Big]\nonumber\\
&&\overset{(b)}{=}\sum_{i\in I}\theta_i +\sum_{t\in T} \eod\Big[ \lambda_t(\omega,\nu)\Big]\nonumber\\
&&\leq \beta \alg. \nonumber
\end{eqnarray}
Inequality $(a)$ follows from the fact that $\{O(\omega,i)\}_{i\in I}$ is collection of disjoint subsets of $T$ and also that $\lambda_t(\omega,\nu)\geq0\,\, \forall t\in T$. 
Equality $(b)$ follows by exchanging the order of the sum and expectation. The final inequality follows from \eqref{cert1}. 
\hfill\Halmos\end{proof}
\subsection{Tightness of LP free Certificate}\label{appx:lpfreedual}
Let $\omega$ denote a sample path w.r.t.\ all the randomness in \opt. Let $\nu$ denote a sample path w.r.t.\ all the randomness in \alg.	
The LP free certificate is given by the following inequalities,
\begin{eqnarray*}
\sum_{t\in T} \eod[\lambda_t(\omega,\nu)] + \sum_{i\in I} \theta_i&\leq &\beta \alg\\
\eod\left[\sum_{t\in O(\omega,i)} \lambda_t(\omega,\nu )\right] +\theta_i&\geq &\alpha \opt_i\quad \forall i\in I,\\
&&\lambda_t,\, \theta_i\geq 0.
\end{eqnarray*}
We view the linear system above as an LP with a trivial objective of minimizing 0 (a constant). 
The dual of this LP is,
\begin{eqnarray*}
\max && -\beta\alg\,y+\alpha\,\sum_{i\in I}  \opt_i\, x_i\\
\text{s.t. }&& 0\leq x_i\leq y\quad \forall i\in I,\\
&& y\geq 0,
\end{eqnarray*}
here we use the fact that for any given arrival $t$ and sample paths $\omega$ and $\nu$, there is at most one set $O(\omega,i)$ that contains $t$.
The optimal value of this LP is $\max_{y\geq 0}\left[(-\beta\alg+\alpha\,\sum_{i\in I} \opt_i)\, y\right]$. Thus, from strong duality we have that $\alg\geq \frac{\alpha}{\beta} \opt$, if and only if our LP-free system has a feasible solution. This implies that our certificate is tight, i.e., if $\alg$ has a competitive ratio guarantee of $\gamma\in(0,1]$, then there exists a feasible solution to our linear system with $\alpha_i=1\, \forall i\in I$ and $\beta=1/\gamma$. 
\section{Missing Details from Proof of Theorem \ref{main}}
\subsection{Properties of $(F,\bsigma,\mb{p})$ Random Process}\label{appx:randomprocess}
\begin{repeatlemma}[Lemma \ref{monotone}.] 
Given a distribution $F$, arrival set $\bsigma=\{\sigma_1,\cdots,\sigma_T\}$, and probability sequences $\mb{p}_1=(p_{11},\cdots,p_{1T})$ and $\mb{p}_2=(p_{21},\cdots,p_{2T})$ such that, $p_{1t}\leq p_{2t}$ for every $t\in[T]$, we have,
\[r(F_i,\bsigma,\mb{p}_1)\leq r(F_i,\bsigma,\mb{p}_2) .\]
\end{repeatlemma}
\begin{proof}{Proof.} 
Suppose the resource is available at some arrival $\sigma_t\in\bsigma$. Recall that independent of all other randomness in the random process, w.p.\ $p_t$ we switch the resource to in-use at $\sigma_t$ and w.p.\ $1-p_t$ the resource stays available till at least the next arrival. Consider an alternative random process where given a set $\bsigma$ and probability sequence $\mb{p}$, we first sample a random subset $\bsigma_{\mb{p}}$ of $\bsigma$ as follows: independently for each arrival $\sigma_t\in\bsigma$, include the arrival in the subset w.p.\ $p_t$. Taking expectation over this random sampling we claim that,
\[\mathbb{E}[r(F_i,\bsigma_{\mb{p}})]=r(F_i,\bsigma,\mb{p}).\]
This is a direct implication of the fact that an available resource is switched to unavailable independent of other events. 

Now, consider random processes $(F_i,\bsigma,\mb{p}_1)$ and $(F_i,\bsigma,\mb{p}_2)$ as given in the statement of the lemma. We establish the main claim by using the alternative viewpoint defined above to couple the two processes. To be more precise, we couple the subset sampling stage in the two processes. First, sample a random subset $\bsigma_{\mb{p}_1}$ by including each arrival $\sigma_t$ with corresponding probability $p_{1t}$. Next, sample subset $\bsigma_{\mb{p}_2\backslash \mb{p}_1}$ of $\bsigma$ by independently including arrival $\sigma_t$ w.p.\ $\frac{1}{1-p_{1t}}(p_{2t}-p_{1t})$, for every $\sigma_t\in \bsigma$. Finally, let $\bsigma_{\mb{p}_2}=\bsigma_{\mb{p}_1}\cup \bsigma_{\mb{p}_2\backslash \mb{p}_1}$. Since for every $t$, $\sigma_t\in \bsigma_{\mb{p}_2}$ with probability $p_{2t}$ independent of other points in $\bsigma_{\mb{p}_2}$, we have,
\[\mathbb{E}[r(F_i,\bsigma_{\mb{p}_2})]=r(F_i,\bsigma,\mb{p}_2) \text{ and } \mathbb{E}[r(F_i,\bsigma_{\mb{p}_1})]=r(F_i,\bsigma,\mb{p}_1).\]
Observe that $\bsigma_{\mb{p}_1}\subseteq \bsigma_{\mb{p}_2}$ on every sample path. Thus, to finish the proof it suffices to argue that $r(F,\bsigma_{1})\leq r(F,\bsigma_{2})$ if $\bsigma_1\subseteq \bsigma_2$.
Consider $(F,\bsigma_{1})$ and $(F,\bsigma_{2})$ random processes and the straightforward coupling of usage durations where we have a list of i.i.d.\ samples drawn according to distribution $F$ and each process independently parses this list in order, moving to the next sample whenever the current sample is used and never skipping samples. On any coupled path, the number of transitions made on arrivals in $\bsigma_2$ is lower bounded by the number of transitions made on arrivals in $\bsigma_1$, giving us the desired. 
\hfill\Halmos\end{proof}

\begin{repeatlemma}[Lemma \ref{zeroset}.]
	Given an $(F,\bsigma,\mb{p})$ random process, let $\bsigma' \subset \bsigma$ be a subset of arrivals where the resource is unavailable with probability 1. 
Then, 
at every arrival $\sigma_t\in \bsigma$, the probability that the resource is available at $\sigma_t$ is identical in $(F,\bsigma,\mb{p})$ and $(F,\bsigma,\mb{p}\vee \mb{1}_{\bsigma'})$. 
\end{repeatlemma}
\begin{proof}{Proof.}
It suffices to show the lemma for a subset $\bsigma'$ consisting of a single arrival. The result for general $\bsigma'$ then follows by repeated application. Now, let $\sigma_t$ denote an arbitrary arrival in $\bsigma$ such that w.p.\ $1$, the resource is in-use when $\sigma_t$ arrives. Observe that we can change the probability $p_t$ associated with $\sigma_t$ arbitrarily, but this does not change the probability of resource being available at $\sigma_t$. In particular, w.p.\ 1, the resource is in-use when $\sigma_t$ arrives in the $(F,\bsigma,\mb{p}\vee \mb{1}_t)$ random process as well. Consequently, the probabilities at other arrivals are unchanged in going from $(F,\bsigma,\mb{p})$ to $(F,\bsigma,\mb{p}\vee \mb{1}_t)$.
\hfill\Halmos\end{proof}
\begin{repeatlemma}[Lemma \ref{equiv}.]
The following statements are true for every $(F,\bsigma,\mb{p})$ random process:
\begin{enumerate}[(i)]
	\item For every $\sigma_t\in \bsigma$, the probability that the resource is available at $\sigma_t$ equals the fraction of the resource available at $\sigma_t$ in the fluid $(F,\bsigma,\mb{p})$  process.
	\item The expected reward $r(F,\bsigma,\mb{p})$, equals the total reward in the fluid $(F,\bsigma,\mb{p})$ process.
\end{enumerate} 

\end{repeatlemma}
\begin{proof}{Proof.}
Note that, statement $(ii)$ of the lemma is a direct consequence of statement $(i)$. The proof of $(i)$ hinges on the fact that in the $(F,\bsigma,\mb{p})$ random process, the duration of every state transition is independent of past randomness. Using this we write a recursive equation for the probability of reward at every arrival.  Let $\eta(\sigma_t)$ denote the probability that the resource is available when $\sigma_t\in \bsigma$ arrives. We have, 
\[ \eta(\sigma_t)=\eta(\sigma_{t-1})\big(1-p_{t-1}\big)+\sum_{\tau=1}^{t-1} \eta(\sigma_{\tau}) p_{\tau}\big(F(\sigma_t-\sigma_{\tau})-F(\sigma_{t-1}-\sigma_{\tau})\big), \]
where $\eta(\sigma_1)=1$. By forward induction, it is easy to see that this set of equations has a unique solution. {\color{black} Now, consider the fluid $(F,\bsigma,\mb{p})$ process and let 
$\eta'(\sigma_t)$ denote the fraction of resource available at $\sigma_t$ in the fluid process. Clearly, $\eta'(\sigma_1)=1$ and we have,
\[ \eta'(\sigma_t)=\eta'(\sigma_{t-1})\big(1-p_{t-1}\big)+\sum_{\tau=1}^{t-1} \eta'(\sigma_{\tau}) p_{\tau}\big(F(\sigma_t-\sigma_{\tau})-F(\sigma_{t-1}-\sigma_{\tau})\big). \]
Thus, 
\[\eta(\sigma_t)=\eta'(\sigma_t) \quad \forall t\leq T.\]
The expected reward from a match occurring at $\sigma_t$ in the $(F,\bsigma,\mb{p})$ random process is $p_t\eta(\sigma_t)$, which, using the equality above, is equal to the fraction of resource consumed at $\sigma_t$ in the fluid process.} 
\hfill\Halmos\end{proof}

\subsection{Applying Chernoff and Missing Pieces of Theorem \ref{main}}\label{appx:chernoff}
\begin{repeatlemma}[Lemma \ref{chernoff}.]
Given integer $\tau>0$, real value $c>0$, independent indicator random variables $\onee(t)$ for $t\in [\tau]$ and $\delta=\sqrt{\frac{\log c}{c}}$ such that, $\sum_{t=1}^{\tau} \mathbb{E}\left[\onee(t)\right]\leq \frac{c}{1+\delta}.$
We have,
\[\mathbb{P}\left(\sum_{t=1}^{\tau}\onee(t) \geq c \right)\leq \frac{1}{\sqrt{c}}. \]
\end{repeatlemma}
\begin{proof}{Proof.}			
Let $\mu=\frac{c}{1+\delta}$. 
Let the condition on total mean hold with equality, i.e.,
\[\sum_{t=1}^{\tau} \mathbb{E}\left[\onee(t)\right]= \mu\]	
This is w.l.o.g., as we can always add some number of dummy independent binary random variables to make the condition hold with equality. 
{\color{black}			Applying the standard multiplicative Chernoff bound for independent Bernoulli random variables $\{\onee(t)\}_{t\in[\tau]}$, we have
\[ \mathbb{P}\Big( \sum_{t=1}^{\tau}\onee(t) \geq (1+\delta)\mu = c\Big)\leq e^{-\frac{\mu\delta^2}{2+\delta}}<\frac{1}{\sqrt{c}},\]
here $\frac{\mu\delta^2}{2+\delta}=\frac{\log c}{1+2\delta+\delta^2}\geq 0.5\log c$ for $c\geq 1$.}
\hfill\Halmos\end{proof}

\section{From Matching to Multi-unit Assortments}\label{appx:asst}
In this section we generalize the $(1-1/e)$ result for the following model. 

\noindent \textbf{Online Assortments with Multi-unit demand:} Customer $t$ requires up to $b_{it}\geq 0$ units of resource $i\in I$. Given an assortment $S$, the customer chooses at most one resource from $S$ with probabilities given by choice model $\phi_t$. 
Let $y_i(t)$ denote the number of units of resource $i$ when $t$ arrives. Selection of resource $i$ results in $\min\{y_i(t),\, b_{it}\}$ units of resource $i$ being used for an independently drawn duration $d\sim F_i$, and a reward $\min\{y_i(t),\, b_{it}\}r_i$ (results hold even if each of the $b_{it}$ units is used for an independent random duration). The assortment $S$ that we offer must belong to a downward closed feasible set $\mathcal{F}_t$. Choice model $\phi_t$ and quantities $b_{it}$ are revealed when $t$ arrives. 

Similar to the online matching problem, we focus on the large capacity regime. Due to multi-unit (budgeted) allocations this is more accurately the large budget to bid ratio regime, where the parameter,
\[\gamma:=\min_{i\in I,\, t\in T} \frac{c_i}{ b_{it}}, \text{ approaches } +\infty.\]
We compare online algorithms against a clairvoyant algorithm that knows the choice models and quantities $b_{it}$ for all arrivals in advance but makes assortment decisions in order of the arrival sequence and observes (i) realizations of customer choice after showing the assortment and (ii) realizations of usage duration when used units return (same as an online algorithm). Further, in case of assortments we make the standard assumptions (\cite{negin,RST18,reuse}) that for every arrival $t\in T$, 
choice model $\phi_t$ satisfies the weak substitution property, i.e.,
\begin{equation}\label{subs}
\phi_t(S,i)\geq \phi_t(S\cup\{j\},i),\quad \forall i,j\not\in S,\, \forall t\in T.
\end{equation}
We also assume access to an assortment optimization oracle that takes a choice model $\phi$ and set of feasible solutions $\mathcal{F}$ as input and outputs a feasible revenue maximizing assortment 
More generally, an $\alpha$ approximate oracle is also acceptable and in this case the competitive ratio guarantee is $(1-1/e)\,\alpha$.

Since we now need to think in terms of sets of resources offered to arrivals, a relatively straightforward way to generalize \galg\ will be to fractionally ``match" every arrival to a collection of revenue maximizing assortments/sets, consuming constituent resources in a fluid fashion in accordance with the choice probabilities. 

\begin{algorithm}[h]
\SetAlgoLined
\textbf{Output:} For every arrival $t$, collection of assortments and probabilities $\{A(\eta,t),u(\eta,t)\}_{\eta}$\;
Let $g(t)=e^{-t}$, and 
initialize $Y(k_i)=1$ for every $i\in I,k_i\in[c_i]$\;
\For{every new arrival $t$}{
For every $k_i\in[c_i]$ and $t\geq 2$, update values 
\[ 
Y(k_i)=Y(k_i)+\sum_{\tau=1}^{t-1} \Big(F_i\big(a(t)-a(\tau)\big)-F_i\big(a(t-1)-a(\tau)\big)\Big)y(k_i,\tau)
\] \text{ {\color{black}\tcp{Fluid\,\, update of returning capacity}}}

Initialize $S_t=\{i \mid (i,t)\in E\}$, values $\eta=0$, and $y(k_i,t)=0$ for all $i\in S_t,k_i\in [c_i]$\;
\While{$\eta<1$ and $S_t\neq \emptyset$}{
\For{$i\in S_t$}{
\textbf{if}  $Y(k_i)=0$ for every $k_i\in[c_i]$ \textbf{ then} remove $i$ from $S_t$\;
\textbf{else }$ z_i=\underset{k_i\in[c_i]}{\arg\max}\, \{k_i \mid Y(k_i)>0\}; \text{ {\color{black}\tcp{Highest\,\, available\,\, unit} }} $	
}

$ A(\eta,t)=\underset{S\in S_t }{\arg\max} \, \sum_{i\in S} b_{it}r_i \phi_t(S,i)\left(1-g\left(\frac{z_i}{c_i}\right)\right)$ 
{\color{black}\tcp{Optimal\,\, assortment\,\, with\,\, \dpg}}\

$u(\eta,t)= \min\left\{1-\eta, \min_{i\in A(\eta,t)} \frac{Y(z_i)}{b_{it}\phi_t(A(\eta,t),i)}\right\}$ \text{{\color{black} \tcp{Fractional assortment}}}
\ \\~\\

Update $\eta\to \eta + u(\eta,t)$\;

\For{$i\in A(\eta,t)$}{
Update $y(z_{i},t)\to y(z_{i},t)+  u(\eta,t)\,b_{it}\,\phi_t(A(\eta,t),i)$; \quad $Y(z_i)\to Y(z_i)-y(z_i,t)$\; \text{{\color{black} \tcp{Inventory update after fluid customer choice}}}
}
}}	
\caption{\astgalg}
\label{astgalg}
\end{algorithm}
\noindent \textbf{Description of \astgalg\ (Algorithm \ref{astgalg}):}	Observe that the stochasticity due to choice has been converted to its fluid version. 
Specifically, arrival $t\in T$ is fractionally ``matched" to assortments $A(1,t),\cdots,A(m,t)$ for some $m\geq 0$. The weight/fraction of assortment $A(j,t)$ is given by $u(j,t)>0$ and we have, $\sum_{j=1}^m u(j,t)\leq 1$. The amount of resource $i$ (fluidly) consumed as a result of this is given by $\sum_{A(j,t) \ni i} u(j,t)\, \phi\big(A(j,t),i\big)$. The collection of assortments is found by computing the revenue maximizing assortment with reduced prices computed according to \dpg\ rule, as in the case of matching. The values $y(k_i,t)$ in the algorithm correspond to the total fraction of unit $k_i$ that is fluidly chosen by arrival $t$. The weights $u(j,t)$ are chosen to ensure that $y(k_i,t)$ does not exceed the fraction of $k_i$ that was available when $t$ arrived. We assume w.l.o.g.\ that the oracle that outputs revenue maximizing assortments never includes resources with zero probability of being chosen in the assortment. 
Interestingly, the performance guarantee of the relaxed online algorithm \astgalg, depends only on $c_{\min}=\min_{i\in I} c_i$. 
\begin{lemma}
For every instance of the online budgeted assortment problem we have,
\[\astgalg \geq (1-1/e)\,e^{-\frac{1}{c_{\min}}} \, \opt.\]
\end{lemma}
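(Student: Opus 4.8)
The plan is to mirror the proof of Lemma \ref{galgua} for \galg, upgrading each step to account for customer choice and multi-unit demand, and to invoke the generalized certificate of Lemma \ref{certificate}. First I would propose the candidate solution that directly generalizes \eqref{lambda2} and \eqref{theta2}:
\[
\lambda_t=\sum_{i\in I} r_i\sum_{k_i\in[c_i]} y(k_i,t)\Big(1-g\big(\tfrac{k_i}{c_i}\big)\Big),
\qquad
\theta_i= c_i\big(e^{\frac{1}{c_i}}-1\big)\,r_i\sum_{t}\sum_{k_i\in[c_i]} y(k_i,t)\, g\big(\tfrac{k_i}{c_i}\big),
\]
where now $y(k_i,t)$ is the fluidly consumed fraction of unit $k_i$ by arrival $t$ as bookkept by \astgalg. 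Since every consumed unit contributes reward $r_i$ and since $(1-g)+g=1$, the sum $\sum_t\lambda_t+\sum_i\theta_i\big(c_i(e^{1/c_i}-1)\big)^{-1}$ equals the total fluid revenue of \astgalg, so condition \eqref{cert1} holds verbatim with $\beta=\max_i c_i(e^{1/c_i}-1)\le e^{1/c_{\min}}$. This part is unchanged from matching, and it already explains why the guarantee depends only on $c_{\min}$: the fluid guide consumes multi-unit demand \emph{exactly}, so no $\gamma$-dependent rounding loss enters here (that loss is deferred entirely to the sampling step of \astalg).

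The substance is condition \eqref{cert'2}, namely $\theta_i+\eo\big[\sum_{t\in O(\omega,i)}\lambda_t\big]\ge (1-1/e)\,\opt_i$, where $O(\omega,i)$ is now the set of arrivals that \emph{choose} $i$ on sample path $\omega$ and $\opt_i$ is the reward \opt\ earns from resource $i$, accumulated over the (up to $b_{it}$) units consumed at each such arrival. Following the matching decomposition \eqref{decompose}, I would first lower bound $\lambda_t$ for each $t\in O(\omega,i)$. In matching this was immediate from greediness; here I would use that \astgalg\ offers the \emph{reduced-price revenue maximizing} assortment among all feasible sets, together with the weak substitution property \eqref{subs}, to show that the reduced-price reward \astgalg\ collects at $t$ dominates the reduced-price value of \opt's assortment, summed over the units that \opt\ consumes at $t$. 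The availability corrections are captured, exactly as before, by indicators $\onee(\neg k_i,t^+)$, and the analysis reduces to the choice-model analogue of \eqref{interim}:
\[
\eo\Big[\sum_{t\in O(\omega,i)}\sum_{k_i\in[c_i]} \Delta g(k_i)\,\onee(\neg k_i,t^+)\Big]\le \tfrac{1}{r_i}\,\theta_i .
\]

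I would prove this inequality with the $(F_i,\bsigma,\mb{p})$ random-process machinery, repeating the argument of Lemma \ref{galgua} essentially verbatim. For a fixed unit $k$, the fluid consumption of $k$ inside \astgalg\ is described by a fluid $(F_i,\mb{T},\mb{p}(k))$ process, so by the equivalence Lemma \ref{equiv} its expected reward equals $\sum_t y(k_i,t)$; adding the zero-probability arrivals of $\mb{s}(k)$ (Lemma \ref{zeroset}) and then applying monotonicity (Lemma \ref{monotone}) gives $r(F_i,\mb{s}(k))\le r(F_i,\mb{T},\mb{p}(k))=\sum_t y(k_i,t)$. The \opt\ side is bounded by the same two-list coupling that established \eqref{interim2}, now carried out with the choice randomness in $\omega$ held fixed; summing over $k$ and the offline unit $k_O$ and collecting the factor $c_i(e^{1/c_i}-1)$ recovers $\theta_i/r_i$. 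Combining this with the lower bound on $\lambda_t$ yields \eqref{cert'2} with $\alpha_i=1-1/e$, and Lemma \ref{certificate} then gives $\astgalg\ge (\min_i\alpha_i/\beta)\,\opt=(1-1/e)\,e^{-1/c_{\min}}\,\opt$.

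The \textbf{main obstacle} is the per-arrival lower bound on $\lambda_t$ in the second step: unlike matching, where ``the unit could have been matched'' gives the bound for free, customer choice means that restricting or altering an assortment changes the choice probabilities of the surviving items. The weak substitution property \eqref{subs} is exactly what is needed to control this — it ensures that comparing \astgalg's greedily chosen assortment against a sub-assortment of \opt's offered set does not inflate per-resource choice probabilities — but stitching this together with the fractional, multi-unit consumption recorded in $y(k_i,t)$ is the delicate part, and is where the assortment proof genuinely diverges from the matching proof.
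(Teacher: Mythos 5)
There is a genuine gap, and it sits exactly at the step you flagged as the main obstacle. Your candidate keeps $\lambda_t$ deterministic --- \astgalg's total reduced-price revenue at $t$, generalizing \eqref{lambda2} --- but in the assortment setting this candidate is infeasible for constraint \eqref{cert'2}. The reason is a double discount by the choice probability: in $\eo\big[\sum_{t\in O(\omega,i)}\lambda_t\big]=\sum_t \mathbb{P}[t\in O(\omega,i)]\,\lambda_t$, the probability that the customer chooses $i$ multiplies $\lambda_t$, yet $\lambda_t$ is itself already choice-probability weighted (it is an expectation over \astgalg's fluid customer choice, including the no-purchase event). Concretely, take a single non-reusable resource $i$ with capacity $c$, one arrival $t$ with $b_{it}=1$ and $\phi_t(\{i\},i)=\epsilon$ small. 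Both \opt\ and \astgalg\ offer $\{i\}$, so your $\lambda_t=\epsilon\, r_i(1-1/e)$, $\theta_i\approx \epsilon\, r_i/e$, while $\opt_i=\epsilon\, r_i$ and $\mathbb{P}[t\in O(\omega,i)]=\epsilon$. Constraint \eqref{cert'2} then reads $\epsilon^2 r_i(1-1/e)+\epsilon r_i/e \geq (1-1/e)\,\epsilon\, r_i$, which fails for all $\epsilon<(1-2/e)/(1-1/e)\approx 0.42$. No pointwise appeal to assortment optimality or weak substitution can repair this: on a sample path where the customer actually chooses $i$, the realized quantity $b(\omega,i,t)\,r_i\big(1-g(z_i(t^+)/c_i)\big)$ is undiscounted and can exceed your $\lambda_t$ by a factor $1/\epsilon$.

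The paper's proof resolves this using the feature of the certificate that your proposal does not exploit: sample-path-dependent variables as permitted by \eqref{cert2}. It sets $\lambda_t(\omega)=b(\omega,i,t)\,r_i\big(1-g(z_i(t^+)/c_i)\big)$ for the resource $i$ chosen at $t$ on path $\omega$ (with $\omega$ now containing the choice randomness, and $z_i(t^+)$ the highest partially available unit in \astgalg\ after $t$ is processed). With this choice, \eqref{cert'2} follows essentially pointwise as in the matching proof, with weights $b(\omega,i,t)$ carried through the decomposition and the two-list coupling applied per offline unit $k_O$ (where $O(\omega,k_O)\subseteq O(\omega,i)$), exactly as the rest of your outline describes. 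The comparison with \astgalg's revenue is then needed only for condition \eqref{cert1}: after averaging over the choice at $t$, \opt's expected reduced-price value is dominated by the optimal reduced-price assortment value, so $\lambda_t=\eo[\lambda_t(\omega)]\leq \sum_{i\in I,\,k_i\in[c_i]}y(k_i,t)\,r_i\big(1-g(k_i/c_i)\big)$ and $\beta=e^{1/c_{\min}}$ follows with your $\theta_i$ unchanged. Incidentally, weak substitution --- which you invoke at the problematic step --- is not what drives this fluid lemma; it is needed later, in the Probability Matching step that converts \astgalg\ into \astalg.
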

\begin{proof}{Proof.}
Note that the sample path $\omega$ now also includes the randomness due to customer choice. Let $O(\omega,i)$ denote the set of all arrivals on sample path $\omega$ in \opt\ where some units of $i$ are \emph{chosen}. Since each arrival chooses (possibly multiple units of) \emph{at most one} resource, we interpret $O(\omega,i)$ as the set of arrivals that choose $i$. Let $b(\omega,i,t)$ denote the number of units of $i$ chosen in \opt\ at arrival $t\in O(\omega,i)$.  
Let $z_i(t^+)$ be the highest index unit of resource $i$ that has a non-zero fraction available in \astgalg\ at $t^+$. We use the generalized certification with sample path based variables $\lambda_t(\omega)$. Given a sample path $\omega$ and resource $i\in I$, we set,
\begin{eqnarray}
\lambda_t(\omega)=  b(\omega,i,t) r_i \bigg(1-g\Big(\frac{z_i(t^+)}{c_i}\Big)\bigg),\label{astlambda}
\end{eqnarray}
for every arrival $t\in O(\omega,i)$. Let $\lambda_t=\eo[\lambda_t(\omega)]$. Recall that $y(k_i,t)$ is the total fraction of unit $k_i$ that is fluidly chosen by arrival $t$. By definition of \astgalg, we have 
\[\lambda_t\leq  \sum_{i\in I,\, k_i\in[c_i]} y(k_i,t)\, r_i \left(1-g\left(\frac{k_i}{c_i}\right)\right),\]
i.e., $\lambda_t$ is at most the expected total reward at $t$ in \astgalg\ calculated with units at their reduced price. Setting $\theta_i$ as before (in the proof of Lemma \ref{galgvopt}), i.e.,
\[\theta_i= c_i\left(e^{\frac{1}{c_i}}-1\right)r_i\sum_{t} \sum_{k\in[c_i]} y(k_i,t) g\Big(\frac{k}{c_i}\Big),\] 
we have that condition \eqref{cert1} of the certificate is satisfied with $\beta=e^{1/c_{\min}}$. 

The remaining analysis now mimics the proof of Lemma \ref{galgua}. We generalize the basic setup to demonstrate this formally. Let $\onee(\neg k,t^+)$ indicate that no fraction of unit $k$ is available in \astgalg\ right after $t$ has been matched. Recall that $\Delta g(k)= g\big(\frac{k-1}{c_i}\big)-g\big(\frac{k}{c_i}\big)$. 
By definition of $\lambda_t$ (see \eqref{astlambda}),  we have 
\begin{eqnarray*}
\eo\Big[\sum_{t\in O(\omega,i)} \lambda_t(\omega) \Big]\geq (1-1/e)\opt_i - r_i \eo\Big[\sum_{t\in O(\omega,i)}b(\omega,i,t) \sum_{k\in[c_i]} \Delta g (k) \onee(\neg k,t^+)\Big].
\end{eqnarray*}
Fix an arbitrary unit $k_O$ of $i$ and let $O(\omega,k_O)$ denote the set of arrivals on sample path $\omega$ in \opt\ where $k_O$ is one of the chosen units of $i$. Note that $O(\omega,k_O)$ is a subset of $O(\omega,i)$. Using the decomposition above, it suffices to show that, 
\begin{eqnarray*}
\eo\Big[\sum_{t\in O(\omega,k_O)}\sum_{k\in[c_i]}\Delta g(k) \onee(\neg k,t^+)\Big]\leq \frac{1}{c_ir_i}\theta_i.
\end{eqnarray*}
The proof of this inequality follows Lemma \ref{galgua} verbatim. Crucially, we have the following inequalities that complete the proof,
\[	\eo\Big[\sum_{t\in O(\omega, k_O)} \onee(\neg k,t^+)\Big] \leq r(F_i,\mb{s}(k))\leq r(F_i,\mb{T},\mb{p}(k)),\]
where $\mb{s}(k),\mb{T},$ and $\mb{p}(k)$ are as defined in Lemma \ref{galgua}, i.e., $\mb{s}(k)$ is the ordered set of all arrivals $t$ (arrival times $a(t)$ to be precise) such that $\onee(\neg k,t^+)=1$ in \astgalg. $\mb{T}$ is the ordered set of all arrivals. Finally, probabilities $p(k,t)\in\mb{p}(k)$ are defined as follows: $p(k,t)=0$ if $y(k,t)=0$, otherwise $p(k,t)= \frac{y(k,t)}{\eta(k,t)}$, where $\eta(k,t)$ is the fraction of $k$ available in \astgalg\ when $t$ arrives. 
\hfill\Halmos\end{proof}
The main new challenge in turning \astgalg\ into \astalg\ is that we must deal with scenarios where \astgalg\ directs some mass towards a set $A$ but only a subset of resources in $A$ are available in \salg. Recall that in case of matching, if the randomly chosen resource is unavailable we simply leave $t$ unmatched. We could consider a similar approach here whereby if any unit of sampled set $A$ is unavailable then we do not offer $A$. However, this will not preserve the overall revenue in expectation as the probability of every resource in $A$ being available simultaneously is likely to be small. 
If it were acceptable to offer an assortment with items that are not available in \salg\ then we could also offer the set $A$ as is. The underlying assumption in such a case is that if the arrival chooses an unavailable item then we earn no reward and the arrival simply departs (called static substitution in \cite{ma2021dynamic}). However, in many applications it may not be possible or desirable to offer an assortment where some items are unavailable. 

An alternative approach is to offer the subset $S$ of $A$ that is available in \salg\ at $t$. However, this can affect the choice probability for resources $i\in S\cap A$ in non-trivial ways, and thus, affect future availability of resources in a way that is challenging to control. In other words, the concentration bounds that show \salg\ has the same performance as \galg\ for large inventory, will not apply here. 
Consequently, we need to find a way to display some subsets of $A$ such that, (i) the overall probability of any given resource being allocated is no larger than if we offered $A$ itself and (ii) we do not rely on multiple resources in $A$ being available simultaneously. The main novelty of our approach to tackle this problem will be to switch our perspective from sets of resources back to individual resources. Specifically, for each resource we find the overall probability that the resource is chosen by a given arrival and then use these probabilities as our guideline, i.e., given the subset $S\subseteq A$ of resources that is available, we find a new collection of assortments so that for every available resource, the overall probability of the resource being chosen matches this probability in the original collection of assortments in \astgalg. The main idea here is a probability matching, made non trivial by the fact that we are restricted to choice probabilities given by the choice model. We find an iterative polytime algorithm (Algorithm \ref{pmatch}) that 
ensures that the probability of a resource being chosen by an arrival in \salg, matches that in \astgalg. 

Recall that $\gamma:=\min_{(i,t)\in E} \frac{c_i}{ b_{it}}$. We are interested in the case where $\gamma\to +\infty$. Note that we assume knowledge of a lower bound on $\gamma$ in \astalg. Overloading notation, we denote this lower bound also as $\gamma$.

\begin{algorithm}[h]
\SetAlgoNoLine
Initialize capacities $y_i(0)=c_i$ and let $\delta=\sqrt{\frac{\log \gamma}{\gamma}}$\;
\For{every new arrival $t$}{
Update capacities $\{y_i(t)\}_{i\in I}$ for resources with returning units and let $S_t=\{i\mid y_i(t)\geq b_{it}\}$\;
Get collection of assortments $\{A(\eta,t),u(\eta,t)\}_{\eta}$ from \astgalg\; 
Randomly sample collection $\eta$ w.p.\ $u(\eta,t)$\; 
For sampled $\eta$, let $\mathcal{\hat{A}},\, \mathcal{\hat{U}}=\text{Probability Match } \Big(A(\eta,t)\cap S_t,\{
\frac{1}{1+\delta}\phi_t(A(\eta,t),s)\}_{s\in A(\eta,t)\cap S_t }\Big)$\;
Randomly sample asssortment $\hat{A}_j\in \mathcal{\hat{A}}$ with distribution $\mathcal{\hat{U}}$\; 
\text{{\color{black} \tcp{Assortment may be empty with non-zero probability}}}

Offer $\hat{A}_j$ to $t$ and update capacity after $t$ chooses\;
}
\caption{\astalg}
\label{astalg}
\end{algorithm}

\begin{repeatlemma}(Lemma \ref{probmatch} restated)
Consider a choice model $\phi: 2^{N}\times N \to [0,1]$ satisfying the weak substitution property (see \eqref{subs}), an assortment $A\subseteq N$ belonging to a downward closed feasible set $\mathcal{F}$, a subset $S\subseteq A$ and target probabilities $p_s$ such that, $p_s\leq \phi(A,s)$ for every  $s\in S$. There exists a collection $\mathcal{A}=\{A_,\cdots,A_m\}$ of $m=|S|$ assortments along with weights $\big(u_i\big) \in[0,1]^{m}$, such that the following properties are satisfied:
\begin{enumerate}[(i)]
\item For every $i\in[m]$, $A_i \subseteq S$ and thus, $A_i\in \mathcal{F}$.
\item Sum of weights, $\sum_{i\in[m]} u_i \leq 1$.
\item For every $s\in S$, $\sum_{A_i\ni s} \, u_i\, \phi(A_i,s) = p_s$.
\end{enumerate}
Algorithm \ref{pmatch} computes such a collection $\mathcal{A}$ along with weights $(u_i)$ in $O(m^2)$ time.

\end{repeatlemma}
\begin{proof}{Proof of Lemma \ref{probmatch}.}
We give a constructive proof that outlines Algorithm \ref{pmatch} introduced earlier. 
Let, 
\[q^0_s=\phi( S,s) \text{ and }\zeta^0_s=\frac{p_s}{q^0_s} \text{ for every $s\in S$.}\] 
Observe that $q^0_s\geq \phi(A,s)\geq p_s$, due to substitutability. Thus, $\zeta^0_s\leq 1$ for every $s\in S$.

Let $s_1$ be an element in $S$ with the smallest value $\zeta^0_{s_1}$. 
Let $A_1=S$ be the first set added to collection $\mathcal{A}$ with $u_1=\zeta^0_{s_1}$, so that $u_1 \phi(A_1,s_1)=p_{s_1}$. We will ensure that all subsequent sets added to $\mathcal{A}$ do not include the element $s_1$ and this will guarantee condition $(iii)$ for element $s_1$. 
Next, define the set $S^1=S\backslash \{s_1\}$. Let,
\[q^1_s=\phi(S^1,s)\geq q^0_s\text{ and }\zeta^1_{s}=\frac{p_s-u_1q^0_s}{q^1_s} \text{  for every $s\in S^1$.}\]
Observe that $\zeta^1_s\in[0,1]$ for every $s\in S^1$. Let $s_2$ denote the element with the smallest value $\zeta^1_{s_2}$, out of all elements in $S^1$. 
If $\zeta^1_{s_2}=0$ we stop, otherwise we now add the second set $A_2=S^1$ to the collection with $u_2=\zeta^1_{s_2}$. Inductively, after $i$ iterations of this process, we have added $i$ nested sets $A_i\subset A_{i-1}\subset \cdots\subset A_1$ to the collection and have the remaining set $S^{i}= A_{i}\backslash \{s_{i}\}$ of $|A|-i$ elements. Define values, 

\[q^{i}_s=\phi(S^{i},s)\text{ and }\zeta^{i}_s=\frac{p_s-\sum_{k=1}^{i}u_kq^{k-1}_{s}}{q^{i}_s} \text{ for every $s\in S^i$.}\]
Let $s_{i+1}\in S^i$ be the element with the smallest value $\zeta^i_{s_{i+1}}$. If $\zeta^i_{s_{i+1}}>0$, we add the set $A_{i+1}=S^i$ to the collection with $u_{i+1}=\zeta^i_{s_{i+1}}$ and continue.

Clearly, this process terminates in at most $m=|S|$ steps, resulting in a collection of size at most $m$. Each step involves updating the set of remaining elements, computing the new values $\zeta^{(\cdot)}_s$ and finding the minimum of these values. Thus every iteration requires at most $O(m)$ time and the overall algorithm takes at most $O(m^2)$ time. Due to the nested nature of the sets and downward closedness of $\mathcal{F}$, condition $(i)$ is satisfied for every set added to the collection. It is easy to verify that condition $(iii)$ is satisfied for every element by induction. We established the base case for element $s_1$ in the first iteration. Suppose that the property holds for all elements $s_1,s_2,\cdots,s_{i-1}$. Then, by the definition of $u_{i}$ we have for element $s_i$,
\[\sum_{j=1}^{i} u_j \phi(A_j,s_i) =  \big(p_s- \sum_{j=1}^{i-1} u_j q^{j-1}_{s_i} \big)+\sum_{j=1}^{i-1} u_j q^{j-1}_{s_i}=p_{s_i}. \]
Since $s_i$ is excluded from all future sets added to the collection, this completes the induction for $(iii)$. Finally, to prove property $(ii)$ it suffices to show that,
\[u_{m}\leq \zeta^0_{s_{m-1}}-\sum_{i=1}^{m-1} u_i, \]
as this immediately implies, $\sum_{i\in [m]}u_i\leq \zeta^0_{s_{m}}\leq 1$.
The desired inequality follows by substituting $u_{m}$ and using the following facts: (i) $q^{j}_{s_m}$ is non-decreasing in $j$ due to substitutability, (ii) $u_i\geq 0$ for every $i\in[m]$ since we perform iteration $i$ only if $u_i=\zeta^{i-1}_{s_i}>0$. Therefore,
\[ u_{m}=\frac{p_{s_m}-\sum_{i=1}^{m-1}u_{i}q^{i-1}_{s_m}}{q^{m-1}_{s_m}}\leq \frac{p_{s_{m}}-q^{0}_{s_{m}}\sum_{i=1}^{m-1}u_i}{q^{0}_{s_{m}}} =\zeta^0_{s_{m}}-\sum_{i=1}^{m-1}u_i.\]
\hfill	\Halmos\end{proof}

The probability matching algorithm be executed more efficiently for the commonly used MNL choice model. Using properties of MNL it suffices to sort the resources in order of values $\zeta^0_{s}$ in the beginning and this ordering does not change as we remove more and more elements. Each iteration only takes $O(1)$ time and so the process has runtime dominated by sorting a set of size $m=|S|$, i.e., $O(m\log m)$.

\begin{proof}{Proof of Theorem \ref{main2}.}

Recall that $\gamma=\min_{(i,t)\in E} \frac{c_i}{ b_{it}}$, and $y(k_i,t)$ is the total fraction of unit $k_i$ that is fluidly chosen by arrival $t$ in \astgalg. The proof rests simply on showing that for every $i\in I$ and $t\in T$, at least $c_i/\gamma$ $(\geq b_{it})$ units of $i$ are available at $t$ w.p.\ at least $1-1/\gamma$. Conditioned on this, from Lemma \ref{probmatch} we have that in \astalg, $i$ is offered to and chosen by arrival $t$ w.p.\ $\frac{1}{b_{it}(1+\delta)}\sum_{k \in [c_i]}y(k,t)$. This implies a lower bound of $\frac{1-1/\gamma}{1+\delta}\sum_{k\in[c_i]} y(k,t)$ on the expected reward from $t$ choosing $i$, completing the proof. 

Let $x_{it}:=\frac{1}{b_{it}(1+\delta)}\sum_{k \in [c_i]}y(k,t)$. To show that $i$ is available at $t$ w.h.p., we first draw out some hidden independence in the events of concern. Recall that if less than $b_{it}$ units of $i$ are available at $t$ then \astalg\ does not offer $i$ in any (randomized) assortment to $t$. Otherwise, Probability Matching (Algorithm \ref{pmatch}) ensures that $i$ is \emph{offered and chosen} by arrival $t$ w.p.\ exactly $x_{it}$. Now, consider the following alternative process at every arrival, 
\begin{enumerate}[1.]
\item Given collection $\{A(\eta,t),u(\eta,t)\}_{\eta}$ from \astgalg, sample assortment $A(\eta,t)$ independently w.p.\ $\frac{1}{1+\delta}u(\eta,t)$.
\item Offer $A(\eta,t)$ and it customer chooses resource $i$ with insufficient inventory, reject the customer request.
\end{enumerate} 
We refer to this alternative process as the \emph{static} process. Observe that the static process is probabilistically identical to \astalg, which first checks the inventory of resources and then offers an assortment (after running probability matching). Thus, it suffices to show that in the static process, for every $i\in I$ and $t\in T$, at least $c_i/\gamma$ units of $i$ are available at $t$ w.p.\ at least $1-1/\gamma$. 

Now, consider an arbitrary resource $i$ in the static process and let $\onee(i\to t)$ indicate the event that $i$ is offered to and chosen by $t$. 
W.l.o.g., we independently pre-sample usage durations for every possible match and let $\onee(d_t> a(\tau)-a(t))$ indicate that the duration of usage pre-sampled for (a potential) match of $i$ to arrival $t$ is at least $a(\tau)-a(t)$. Let $\gamma_i=\frac{c_i}{\max_{t'\in T}b_{it'}}$. Observe that $\gamma_i\leq \gamma\,\,\forall i\in I$. 
The static process never fails to satisfy customer request for resource $i$ if, 

\[\sum_{\tau=1}^{t}b_{i\tau} \onee(i\to \tau)\onee(d_{\tau}> a(t)-a(\tau)) \leq  c_i(1-1/\gamma_i) \quad \forall t\in T. \]
Define binary (not necessarily Bernoulli) random variables $X_{\tau}= \frac{b_{i\tau}}{\max_{t'\in T}b_{it'}}\onee(i\to \tau)\onee(d_{\tau}> a(t)-a(\tau))$ for all $\tau\leq t-1$. Random variables $X_{\tau}$ are independent of each other as the assortment sampled, customer choice, and the duration of usage are all independently sampled at each arrival (in the static process). From \astgalg, we have the following upper bound on the total expectation, 
\[\mu:=\mathbb{E}\Big[\sum_{\tau=1}^{t} X_{\tau}\Big]= \sum_{\tau=1}^{t} \frac{b_{i\tau}}{\max_{t'\in T}b_{it'}}\, x_{i\tau} \big(1-F_i( a(t)-a(\tau))\big)\leq\frac{\gamma_i}{1+\delta}.\] 
Applying the generalized Chernoff bound as stated in Lemma \ref{chernoff2}, 
completes the proof.
\hfill\Halmos\end{proof}
\begin{lemma}\label{chernoff2}
Given integer $\tau>0$, real value $\gamma>0$, independent binary random variables $X_t\in\{0,x_t\}$ with $x_t\in(0,1]\,\, \forall t\in [\tau]$ and $\sum_{t=1}^{\tau} \mathbb{E}\left[X_t\right]\leq \frac{\gamma}{1+\delta}$, where $\delta=\sqrt{\frac{\log \gamma}{\gamma}}$.
We have,
\[\mathbb{P}\left(\sum_{t=1}^{\tau}X_t > \gamma-1 \right)\leq \frac{1}{\sqrt{\gamma}}. \]
\end{lemma}
\begin{proof}{Proof.} Consider Bernoulli random variables $Y_t$ such that $E[Y_t]=x_tp_t\,\, \forall t\in [\tau]$ and $\sum_{t=1}^\tau E[Y_t] \leq \frac{\gamma}{1+\delta}$. From Lemma \ref{chernoff}, we have
\[\mathbb{P}\left(\sum_{t=1}^{\tau}Y_t > \gamma-1 \right)\leq \frac{1}{\sqrt{\gamma}}.\]
The key step in deriving the Chernoff bound that underlies this inequality is the following upper bound on the moment generating function of $Y_t$ \citep{goemans}. For $s\geq 0$,
\begin{equation}\label{moment}
E[e^{sY_t}]=(x_tp_t)\, e^{s}+(1-x_tp_t)=1+x_tp_t\,(e^{s}-1)\leq e^{x_tp_t\,(e^s-1)}.
\end{equation}			
Using this upper bound along with Markov's inequality and independence of random variables, 
gives the desired bound. 
Thus, to show the desired bound for independent (non-Bernoulli) random variables $X_t\in\{0,x_t\}$ (where $E[X_t]=x_tp_t$), it suffices to establish the upper bound on moment generating functions given by \eqref{moment}. For $s\geq 0$,	we have
\[E[e^{sX_t}]=p_t\,e^{sx_t}+(1-p_t)=1+p_t\,(e^{sx_t}-1).\]
Unlike \eqref{moment}, here we have the term $x_t$ in the exponent. However, for $s\geq 0$, and $x_t,\, p_t\in[0,1]$, we have, $p_t\,(e^{sx_t}-1)\leq e^{p_tx_t\,(e^s-1)}-1$, giving us the same upper bound as \eqref{moment}. This completes the proof.
\hfill\Halmos\end{proof}

\section{Missing Details for \dpg}\label{appx:rbamissing}

\subsection{Boundedness of Common IFR Distributions}\label{ifrexamples}
Given,
\[L_i(\epsilon)=\max_{ x\geq 0}\frac{F_i\big(x+F_i^{-1}(\epsilon)\big)-F_i(x)}{\epsilon},\]
we show that \dpg\ is asymptotically $(1-1/e)$--competitive if $L_i(\epsilon)\epsilon\to$ as $\epsilon\to 0$. In particular, if $\epsilon L_i(\epsilon)$ decreases to 0 as strongly as $O(\epsilon^{\eta})$ for some $\eta>0$, then we have a polynomial convergence rate of $\tilde{O}(c_i^{-\frac{\eta}{1+\eta}})$. We show that this holds for many common IFR families (Chapter 2 of \cite{barlow}). An approximation for $L(\epsilon)\epsilon$ that eases calculations is given by $\{F^{-1}(\epsilon)\max_{x\geq 0}f(x)\}$ and $F(x)\to O(1)x f(x)$ for small $x$.
\begin{itemize}
\item \textbf{For exponential, uniform and IFR families with non-increasing density:} it is easy to see that $L_i(\epsilon)=1$ and thus, $\eta=1$. This implies a $\tilde{O}(c_i^{-0.5})$ convergence.
\item \textbf{Weibull distributions:} This family is characterized by two non-negative parameters $\lambda,k$ with c.d.f.\ given by, $F(x)=1-e^{(-x/\lambda)^k}$ for $x\geq 0$. The family is IFR only for $k\geq1$. It is easy to see that,
\[L(\epsilon)\epsilon = O(1) \epsilon^{1/k}. \]
Hence, for any finite $\lambda, k$ we have $\tilde{O}(c_i^{-\frac{1}{1+k}})$ convergence to $(1-1/e)$.
\item \textbf{Gamma distributions:} This family is given by non-negative parameters $k,\theta$ such that the c.d.f.\ is $F(x)=\frac{1}{\Gamma(k)}\gamma(k,x/\theta)$. Here $\Gamma,\gamma$ are the upper and lower incomplete gamma functions resp..\ The family is IFR only for $k\geq 1$. It can be shown that for small $\epsilon$, $L(\epsilon)\epsilon\to O(1)\epsilon^{1/k}$. Thus, we have $\tilde{O}(c_i^{-\frac{1}{1+k}})$ convergence for any finite set of parameter values.
\item \textbf{Modified Extreme Value Distributions:} This family is characterized by the density function, $f(x)=\frac{1}{\lambda}e^{-\frac{e^x -1}{\lambda}+x}$ where parameter $\lambda>0$. For small $\epsilon$ we have that $L(\epsilon)\epsilon\to O(1)\epsilon$, for any finite $\lambda$. Giving a a $\tilde{O}(c_i^{-0.5})$ convergence.	
\item \textbf{Truncated Normal:} For truncated normal distribution with finite mean $\mu$, finite variance $\sigma^2$, and support $[0,b]$ where $b$ could $\to\infty$, it is easy to see that $L(\epsilon)=O(1)$,  leading to a $\tilde{O}(c_i^{-0.5})$ convergence. Note that if the support is given by $[a,b]$ for some $a>0$, then we do not have convergence as $L(\epsilon)\epsilon\to O(1)$. 
\end{itemize}

\subsection{Detailed Overview of Analysis of \dpg}\label{appx:rbaoverview}
We devote this section to setting up the overall framework for analyzing \dpg\ and postpone detailed proofs to Appendix \ref{appx:rbaproofs}. 
We start by introducing (and recalling) important notation. We use $\omega$ to denote the sample path of usage durations in \opt\ and $O(\omega,i)$ to denote the set of all arrivals matched to $i$ in \opt\ on sample path $\omega$. Similarly, $\nu$ denotes a sample path of usage durations in \dpg.  Recall that \dpg\ tracks the highest available unit for each resource, denoted as $z_i(t)$ for resource $i$ at arrival $t$. In fact, since \dpg\ is realization dependent $z_i(t)$ is more accurately written as $z_i(\nu,t)$. Given set of resources $S_t$ with an edge to $t$, \dpg\ matches according to the following simple rule,
\[\argmax_{i\in S_t} r_i \left(1-g\left(\frac{z_i(t)}{c_i}\right)\right). \]
Let $D(t)$ (technically $D(\nu,t)$) denote the resource matched to $t$ by \dpg\ on sample path $\nu$. Let $z_{D(t)}$ denote the highest available unit of resource $D(t)$ at $t$. Finally, recall that $\Delta g(k)=g(\frac{k-1}{c_i})-g(\frac{k}{c_i})=e^{-\frac{k}{c_i}}(e^{\frac{1}{c_i}}-1)$.

Recall from \eqref{theta}, 
we use the following candidate solution for the LP free certificate, 
\begin{eqnarray*}
\theta_i = r_i \, \ed\Big[\sum_{t\mid D(t)=i} g\Big(\frac{z_i(t)}{c_i}\Big) \Big],\, \forall i\in I 
\text{   and   }
\lambda_t= \ed \Big[ r_{D(t)} \Big(1-g\Big(\frac{z_{D(t)}}{c_{D(t)}}\Big)\Big)\Big], \, \forall t\in T. 
\end{eqnarray*}
Similar to the analysis of \galg\ in Section \ref{sec:solution}, we defined $\lambda_t$ and $\theta_i$ as deterministic quantities. Since $\lambda_t$ is independent of $\omega$ and $\nu$, it suffices to use the simplified system of inequalities given by \eqref{scert1} and \eqref{scert2}. The candidate solution above satisfies condition \eqref{scert1} by definition, with $\beta=1$.  To prove conditions \eqref{scert2} are satisfied with $\alpha_i=(1-1/e)$ for every $i\in I$, we start by 
lower bounding the term $\eo[ \sum_{t\in O(\omega,i)}\lambda_t]$. The following lemma brings out the key source of difficulty. A proof of the lemma is in Appendix \ref{appx:rbaproofs}. 


\begin{repeatlemma}[Lemma \ref{decompose1}.] Given $\lambda_t$ as defined by \eqref{lambda}, we have for every resource $i$,
\begin{equation}
\eo\Big[\sum_{t\in O(\omega,i)} \lambda_t\Big]\geq (1-1/e) \opt_i - r_i \eo\Big[ \sum_{t\in O(\omega,i)} \sum^{c_i}_{k=1} \Delta g (k) \pd \big[k> z_i(t)\big]\Big]. \label{lam4c}
\end{equation}
\end{repeatlemma}

Consider a resource $i\in I$ and let $k$ denote a unit of $i$. Contrast decomposition \eqref{lam4c} with its counterpart given in Lemma \ref{decompose}. 
Deterministic binary valued quantities $\onee(\neg k,t^+)$, that signified unavailability of a unit $k$ after arrival $t$ is matched in \galg, have been replaced by (non-binary) probabilities $\pd[k>z_i(t)]$. For any given arrival $t$, probability $\pd [k> z_i(t)]$ represents the likelihood that unit $k$ and all higher units of $i$ are unavailable in \dpg\ when $t$ arrives. This dependence on other units, combined with the sensitivity of $z_i(t)$ to small changes on the sample path, makes it challenging to bound these probabilities in a meaningful way. 
To address these challenges we introduce two new ingredients. 
The first ingredient simplifies the stochastic dependencies by introducing a conditional version of the probability $\pd[k>z_i(t)]$. This ingredient uses special structure that is only available in settings beyond online matching {\color{black} (see Appendix \ref{sec:rbafail} for further discussion). The second ingredient builds on the first one and addresses the non-binary nature of these probabilities.}

\subsubsection*{Ingredient 1: Conditioning.}
Recall that to analyze \galg, we captured the allocation of each individual unit 
via a natural $(F,\bsigma,\mb{p})$ random process. In an $(F,\bsigma)$ random process, at every $\sigma_t\in \bsigma$ the only event of relevance is the availability of the item at $\sigma_t$ (as probability $p_t=1$). If the item is available, then it is matched to $\sigma_t$ regardless of the usage durations realized before $\sigma_t$. 
In order to naturally capture the actions of \dpg\ on individual units through a random process, a similar property must hold in \dpg. 
To be more specific, consider a single unit $k$ of $i$ in \dpg\ and fix the randomness associated with all units and resources except $k$. Suppose we have an arrival $t$ with edge to $i$ and two distinct sample paths over usage durations of $k$ prior to the arrival of $t$. We are also given that $k$ is available at $t$ on both sample paths. Clearly, \dpg\ makes a deterministic decision on each sample path. Is it possible that \dpg\ matches $t$ to $k$ on one sample path but not the other? 

Interestingly, in the case of matching we can show that this is impossible, i.e., \dpg\ always makes the same decision on both sample paths in such instances. We show that this is not the case in general. In particular, when we include the aspect of customer choice or budgeted allocations, reusability can interact with these elements in an undesirable way (see example in Section \ref{sec:rbafail}).  
For online matching we leverage this property to show that conditioned on the randomness of all other resources and units, the allocation of $k$ in \dpg\ is characterized by an $(F_i,\bsigma)$ random process. 

Let us formalize the discussion above. First, since usage durations are independently sampled we generate $\nu$ as follows: For each edge, we independently sample a usage duration for every unit of the resource that the edge is incident on. \dpg\ only sees the samples that correspond to units that are matched and does not see the other samples. Thus, sample path $\nu$ includes a set of $c_i$ usage durations for every edge $(i,t)$. If \dpg\ matches $t$ to unit $k$ on $\nu$ then, the usage duration annotated for $k$ is realized out of the $c_i$ samples for edge $(i,t)$. 
Let $\nu_k$ be the collection of all samples annotated for unit $k$ on path $\nu$. Thus, $\nuk$ is a sample path of usage durations of $k$. 
Let $\nu_{-k}$ denote the sample path of all units and resources except unit $k$ of $i$. We use $\mathbb{P}_{\mb{\nu_{k}}}$ and $\mathbb{E}_{\mb{\nu_{k}}}$ to denote probability and expectation over randomness in sample paths $\nuk$. 
Define indicators,
\[
\onee(k,t) = \begin{cases}
1\quad  \text{ if unit $k$ is available when $t$ arrives},\\ 
0\quad  \text{ otherwise.}
\end{cases}\]
Now, consider the conditional version of probability $\pd [k> z_i(t)]$ in \eqref{lam4c},
\begin{eqnarray}
\mathbb{P}_{\mb{\nu_{k}}}[k> z_i(t) \mid \nu_{-k}]=\mathbb{E}_{\mb{\nu_{k}}}[1-\onee(k,t) \mid \nu_{-k}] \times \prod_{k'>k}^{c_i} [(1-\onee(k',t))\mid \nu_{-k}]. \label{probtoexp}
\end{eqnarray}
For every $k'\neq k$, the conditional indicator $\onee(k',t)\mid \nu_{-k}$ is deterministic. 
$\mathbb{E}_{\mb{\nu_{k}}}[\onee(k,t)\mid \nu_{-k} ]$ is the likelihood of $k$ being available at $t$ conditioned on fixing the usage durations of all other units. 
We show that it suffices to condition the expectation only on a certain subset of units as opposed to all units. This subset consists of all units that strictly precede $k$ in the ordering defined below. 

\emph{Order of Units}: {\color{black}Unit $k_i$ of resource $i$ precedes unit $k_j$ of resource $j$, denoted $k_i\succ k_j$, iff $r_i (1-g(k_i/c_i))>r_{j} (1-g(k_j/c_j))$, i.e., \dpg\ would prefer to match $k_i$ rather than $k_j$ if both were available.} In case of a tie we let the unit with the lower resource index precede and let \dpg\ follow the same tie breaking rule. Observe that this ordering is transitive.

{\color{black}	Let $D(t)$ denote the unit matched to $t$ by \dpg. Given edge $(i,t)$ and a unit $k$ of $i$, let $\onenu[k\succeq D(t)]$ be the indicator for the event $k\succeq D(t)$ in \dpg. Notice that if $\onenu[k\succeq D(t)]=1$, then on sample path $\nu$, all units with an edge to $t$ that precede $k$ are unavailable when $t$ arrives. If $k$ is available at $t$ and $k\succeq D(t)$, then we have $D(t)=k$. 

\begin{lemma}\label{indep}
For every edge $(i,t)$ and unit $k$ of $i$, the random variable $\onenu[k\succeq D(t)]$ is independent of the usage duration of $k$ and all units that succeed $k$. 
\end{lemma}
}

See Appendix \ref{appx:rbaproofs} for the proof.  Let $k^+$ denote all units that strictly precede $k$ and 
let $\nukp$ 
denote the sample path of usage durations for these units. 
As a consequence of Lemma \ref{indep}, we have the following properties for \dpg.
{\color{black}	\begin{corollary}\label{probsimp}
Given arrival $t$ with edge to $i$ and a unit $k\in[c_i]$,  we have 
$\mathbb{P}_{\mb{\nu_{k}}}[k> z_i(t) \mid \nu_{k^+}]=\mathbb{P}_{\mb{\nu_{k}}}[k> z_i(t) \mid \nu_{-k}].$ 
\end{corollary}
The corollary follows by using Lemma \ref{indep} to transform the RHS of equation \eqref{probtoexp}. See Appendix \ref{appx:rbaproofs} for the proof.
}
\begin{corollary}\label{sigma}
For every unit $k$, given a sample path $\nukp$ we have an ordered set of arrivals $\mb{\sigma}(\nukp)=\{\sigma_1,\cdots,\sigma_e\}$ with $\sigma_1<\cdots<\sigma_e$ such that for any arrival $t$,
\begin{itemize}
\item If $t\not\in \mb{\sigma}(\nukp)$, then conditioned on $\nukp$, the probability that $k$ is matched to $t$ is 0.
\item If $t\in\mb{\sigma}(\nukp)$, then conditioned on $\nukp$, $k$ is matched to $t$ w.p.\ 1 if it is available.
\end{itemize} 
Therefore, conditioned on $\nukp$, 
the $(F_i,\bsigma(\nukp))$ random process fully characterizes the matches of $k$ in \dpg. 
\end{corollary}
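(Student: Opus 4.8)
The plan is to show that conditioning on $\nukp$ freezes the entire ``schedule'' at which \dpg\ would try to match unit $k$, leaving as the only residual randomness the i.i.d.\ usage durations of $k$ itself — which is precisely the content of an $(F_i,\bsigma)$ process. The engine is Lemma \ref{indep}: for any arrival $t$ with an edge to $i$, the event $\onenu[k\succeq D(t)]$ (``\dpg\ would match $t$ to $k$ if $k$ were available'') is independent of the usage durations of every unit $w\preceq k$. Since the units recorded in $\nukp$ are exactly those strictly preceding $k$, conditioning on $\nukp$ determines $\{k\succeq D(t)\}$ as a deterministic function of $\nukp$, for every such $t$.

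First I would define $\bsigma(\nukp)$ to be the set of arrivals $t$ with an edge to $i$ for which $k\succeq D(t)$ holds under the conditioning on $\nukp$, listed in increasing order of arrival time as $\sigma_1<\cdots<\sigma_e$. The previous paragraph guarantees this set is well-defined (a deterministic function of $\nukp$), even though the identity of the actually matched resource $D(t)$ may itself still depend on the durations of $k$ and of later units.

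Next I would verify the two bullet points. If $t\notin\bsigma(\nukp)$ then $k\not\succeq D(t)$, so $D(t)\succ k$ strictly; \dpg\ strictly prefers its choice $D(t)$ to $k$, hence it never matches $t$ to $k$ whether or not $k$ is available, and the conditional probability that $k$ is matched to $t$ is $0$. If instead $t\in\bsigma(\nukp)$ then $k\succeq D(t)$, so whenever $k$ is available at $t$ it is the highest-ranked available choice and \dpg\ matches $t$ to $k$ with probability $1$ (if $k$ is unavailable it cannot be matched). This is exactly an availability-driven rule with trigger probability $1$ at each point of $\bsigma(\nukp)$.

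Finally I would assemble the random-process correspondence. Along $\bsigma(\nukp)$, unit $k$ is matched exactly when available, and each match renders it in-use for a duration drawn from $F_i$. Corollary \ref{indepcoro} ensures that, conditioned on availability, the decision to match $k$ is independent of $k$'s past durations, so these durations are genuine fresh i.i.d.\ draws from $F_i$ rather than selected-on samples. Hence the available/in-use evolution of $k$ conditioned on $\nukp$ obeys exactly the single-unit dynamics of an $(F_i,\bsigma(\nukp))$ random process, proving the claim. The one delicate point — and the real crux — is the well-definedness in the second step: a priori $D(t)$ can depend on the durations of $k$, so it is only the comparison $k\succeq D(t)$, and not $D(t)$ itself, that is frozen by $\nukp$; this is precisely why Lemma \ref{indep} (rather than a naive ``$D(t)$ is determined'' statement) is the necessary hypothesis.
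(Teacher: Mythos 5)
Your proof is correct and takes essentially the same route as the paper: the paper presents Corollary \ref{sigma} as an immediate consequence of Lemma \ref{indep}, and your write-up fills in exactly that reasoning — conditioning on $\nukp$ makes each event $\onenu[k\succeq D(t)]$ deterministic, which defines $\bsigma(\nukp)$; the two bullets then follow from the greedy rule (if $k\succeq D(t)$ and $k$ is available, the matched unit must be $k$ itself), leaving only the i.i.d.\ $F_i$ durations of $k$ as residual randomness, i.e., an $(F_i,\bsigma(\nukp))$ process. Your closing remark that it is the comparison event, not $D(t)$ itself, that is frozen by $\nukp$ is precisely the crux the paper's argument rests on.
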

{\color{black} See Appendix \ref{appx:rbaproofs} for the proof.} As an immediate consequence of Corollary \ref{sigma}, $r(F_i,\mb{\sigma}(\nukp))$ gives the expected number of times $k$ is matched in \dpg\ conditioned on sample sample path $\nukp$. We remind the reader that technically the set $\bsigma$ in an $(F,\bsigma)$ random process is a set of arrival times. We are using arrivals and arrival times interchangeably for convenience.
{\color{black}	\begin{corollary}\label{indepcoro}
For every arrival $t$ with an edge to unit $k$ of $i$, conditioned on $\nukp$ and also on the availability of unit $k$ at $t$, the event $D(t)=k$ 
is independent of the (past) usage durations of $k$.
\end{corollary} 
The corollary follows from the fact that conditioned on $\nukp$, the $(F_i,\mb{\sigma}(\nukp))$ random process fully characterizes the matches of $k$ in \dpg. }
\subsubsection*{Ingredient 2: Covering.}
If the probabilities $\pd[k>z_i(t)\mid \nukp]$ were all either 1 or 0, 
the rest of the analysis could proceed along the same lines as the analysis of \galg. 
Of course, in general these probabilities have non-binary values and are hard to meaningfully bound. To address this challenge we introduce the ingredient of \emph{covering}. For intuition behind the idea, consider an instance where \dpg\ matches all $n$ units of a resource to arrivals in an infinitesimal interval $[0,\epsilon]$. Each unit returns after 1 unit of time with probability 0.5 and never returns with probability 0.5. 
Let $\onee(k,2)$ indicate that unit $k$ of the resource is available at time $2$. We have,
\begin{eqnarray*}
\pd[\onee(k,2)]=\frac{1}{2}
\text{ and therefore, } \pd[k>z_{ 1}(2)]<\frac{1}{2^k}.
\end{eqnarray*} 
This implies that $z_{1}(2)\geq n-O(\log n)$ w.h.p..\ Therefore, the fact that all units are available at $t$ with a constant probability implies that the probability $\pd[k>z_{1}(2)]$ decreases geometrically as $k$ increases, resulting in a small value ($O(\log n)$) for the negative term in \eqref{lam4c}.  Inspired by this, for every $k$ and sample path $\nukp$, we 
classify all arrivals $t$ into two groups. Roughly speaking, the first group consists of arrivals where $k$ is available with sufficiently high probability. These are called \emph{uncovered} arrivals. We show that the overall contribution to \eqref{lam4c} from this group of arrivals, summed appropriately over all units $k$, is well approximated by the expectation of a geometric random variable. This generalizes the bound that we observed in the example above.  
The remaining \emph{covered} arrivals, are the ones where the probability that $k$ is available is low. 
We show that the contribution from these terms to \eqref{lam4c} is effectively canceled out by $\theta_i$ 
, i.e., these arrivals are \emph{covered} by $\theta_i$. 

To describe this formally, for values $\epsilon_i\in(0,1]$,
let $\mathcal{X}_k(\nu_{k^+},\epsilon_i,t)\in\{0,1\}$ denote the \emph{covering function} that performs this classification.
\medskip

\fbox{\begin{minipage}{16.2cm}
Fix unit $k$, a value $\epsilon_i\in(0,1]$, and condition on sample path $\nukp$. Given an arrival $t$, we say that $t$ is uncovered and set 
$\mathcal{X}_k(\nu_{k^+},\epsilon_i,t)=0$ if,
\[\mathbb{P}_{\mb{\nu_{k}}}[\onee(k,t)=1 \mid \nu_{k^+}]\geq 
\epsilon_i, \]
i.e., at an uncovered arrival we have a lower bound of $\epsilon_i$ on the probability of $k$ being available. Equivalently, this condition imposes a lower bound of $\epsilon_i$ on the probability of $k$ being free at time $a(t)$ in an $(F_i,\mb{\sigma}(\nukp))$ random process. 
\end{minipage}}
\smallskip

Notice that as a direct consequence of the above definition, for every covered arrival $t$ we have $\mathcal{X}_k(\nu_{k^+},\epsilon_i,t)=1$ and $\mathbb{P}_{\mb{\nu_{k}}}[\onee(k,t)=1 \mid \nu_{k^+}]< \epsilon_i$.
For ease of notation we omit the input $\epsilon_i$ in covering functions and use the abbreviation,
\[\mathcal{X}_k(\nukp,t),\]
with the understanding that $\epsilon_i$ is present and will be set later to optimize the result. Note that a given arrival $t$ may be covered w.r.t.\ one unit $k$ of $i$ but uncovered w.r.t.\ another unit $k'$ of $i$. 

\subsubsection*{Combining the Ingredients:}	The following lemma uses the notions of covering and conditioning to further decompose and upper bound \eqref{lam4c}. Proof of the lemma is deferred to Appendix \ref{appx:rbaproofs}.
\begin{lemma} \label{decompose2}
For every resource $i$,
\[\eo\Big[ \sum_{t\in O(\omega,i)} \sum^{c_i}_{k=1} \Delta g (k) \pd \big[k> z_i(t)\big]\Big]\leq \mathbb{E}_{\mb{\omega}}\Big[\sum_{t\in O(\omega,i)} \Big(\sum^{c_i}_{k=1} \Delta g (k) \mathbb{E}_{\mb{\nu_{k^+}} } \big[\mathcal{X}_k(\nu_{k^+},t) \big]\Big)\Big] + \frac{2}{r_i\epsilon_ic_i} \opt_i . \]
\end{lemma}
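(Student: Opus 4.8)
The plan is to bound the left-hand side arrival-by-arrival and unit-by-unit, first rewriting each probability $\pd[k>z_i(t)]$ through the conditioning of Ingredient~1 and then splitting it with the covering function. Concretely, by the tower property together with Corollary~\ref{probsimp}, for each unit $k$ I would write $\pd[k>z_i(t)]=\mathbb{E}_{\mb{\nu_{k^+}}}\big[\mathbb{P}_{\mb{\nu_{k}}}[k>z_i(t)\mid\nukp]\big]$, since the inner conditional probability is $\nukp$-measurable. Inserting the identity $1=\mathcal{X}_k(\nukp,t)+\big(1-\mathcal{X}_k(\nukp,t)\big)$ inside this expectation decomposes every term of the left-hand side into a \emph{covered} contribution ($\mathcal{X}_k=1$) and an \emph{uncovered} contribution ($\mathcal{X}_k=0$); the two terms on the right-hand side of the lemma come from these two pieces, respectively.

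The covered piece is immediate: bounding the conditional probability by $1$ gives $\mathcal{X}_k(\nukp,t)\,\mathbb{P}_{\mb{\nu_{k}}}[k>z_i(t)\mid\nukp]\le \mathcal{X}_k(\nukp,t)$, so after taking $\mathbb{E}_{\mb{\nu_{k^+}}}$, multiplying by $\Delta g(k)$, summing over $k\in[c_i]$ and $t\in O(\omega,i)$, and finally taking $\mathbb{E}_{\mb{\omega}}$, I recover exactly the first term $\mathbb{E}_{\mb{\omega}}\big[\sum_{t\in O(\omega,i)}\sum_k \Delta g(k)\,\mathbb{E}_{\mb{\nu_{k^+}}}[\mathcal{X}_k(\nukp,t)]\big]$.

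The real work is the uncovered piece, where I want a per-arrival bound of $\tfrac{2}{\epsilon_i c_i}$. Fixing $t$, on an uncovered sample path the covering definition supplies the lower bound $\mathbb{P}_{\mb{\nu_{k}}}[\onee(k,t)=1\mid\nukp]\ge\epsilon_i$, and the factorization \eqref{probtoexp} (valid by Corollary~\ref{probsimp}) lets me peel off units one at a time: each uncovered unit among $\{k,\dots,c_i\}$ that must be unavailable for the event $\{k>z_i(t)\}$ contributes a factor at most $(1-\epsilon_i)$, while covered units contribute a factor at most $1$. Thus the uncovered contribution of unit $k$ is bounded by $(1-\epsilon_i)^{m_k}$, where $m_k$ counts the uncovered units among $\{k,\dots,c_i\}$. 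Using $\Delta g(k)=e^{-k/c_i}(e^{1/c_i}-1)\le \tfrac{2}{c_i}$ and summing the resulting geometric series, $\sum_{j\ge 1}(1-\epsilon_i)^j\le\tfrac{1}{\epsilon_i}$, gives $\sum_k \Delta g(k)(1-\epsilon_i)^{m_k}\le \tfrac{2}{\epsilon_i c_i}$ for each fixed $t$. Summing over $t\in O(\omega,i)$ and taking $\mathbb{E}_{\mb{\omega}}$ then yields $\tfrac{2}{\epsilon_i c_i}\,\mathbb{E}_{\mb{\omega}}[|O(\omega,i)|]=\tfrac{2}{r_i\epsilon_i c_i}\opt_i$, as required; adding the two pieces proves the lemma.

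The main obstacle is making the geometric bound $(1-\epsilon_i)^{m_k}$ rigorous, because the covering status of a unit is defined through a conditional probability that depends on the realization $\nukp$ of the \emph{higher} units, so the covered/uncovered classification is not fixed but varies across sample paths, and distinct units are conditioned on distinct $\sigma$-fields. To iterate the $(1-\epsilon_i)$ factor I must reveal the units of $i$ from the top down and combine Corollary~\ref{probsimp} (so that, conditioned on $\nukp$, the availability of $k$ is genuinely independent of everything already revealed) with the structural fact that more preferred, higher-index units of $i$ get matched more often and are therefore \emph{less} available; this monotonicity is what forces the uncovered units to form a contiguous low block on each path, guarantees that $m_k$ grows as $k$ decreases, and prevents a covered unit high in the ordering from breaking the geometric chain. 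I expect the careful bookkeeping of these nested conditionings — rather than the closing geometric summation — to be the delicate part of the argument; once it is in place, the remaining estimates are the routine bound on $\Delta g(k)$ and the geometric series above.
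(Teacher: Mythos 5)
Your skeleton is exactly the paper's: the tower property plus Corollary \ref{probsimp} to rewrite $\pd[k>z_i(t)]$ as $\mathbb{E}_{\mb{\nu_{k^+}}}\big[\mathbb{P}_{\mb{\nu_{k}}}[k>z_i(t)\mid\nu_{k^+}]\big]$, the split by $\mathcal{X}_k$, the trivial bound on the covered piece, and for the uncovered piece a per-arrival bound of $\frac{2}{\epsilon_i c_i}$ obtained from $\Delta g(k)\leq \frac{2}{c_i}$ and a geometric sum, followed by $\opt_i=r_i\,\eo\big[|O(\omega,i)|\big]$. The gap is in the step you yourself flag as delicate. Your per-unit bound $(1-\epsilon_i)^{m_k}$ is not well defined as stated, because $m_k$ (the number of uncovered units among $\{k,\dots,c_i\}$) is a random quantity determined by the nested conditionings $\nukp$, so it cannot sit inside a deterministic bound on $\mathbb{E}_{\mb{\nu_{k^+}}}\big[(1-\mathcal{X}_k(\nukp,t))\,\mathbb{P}_{\mb{\nu_k}}[k>z_i(t)\mid\nukp]\big]$. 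Worse, the repair you propose — that higher-ranked units of $i$ are less available, so the uncovered units form a contiguous low block on each path — is neither proved in the paper (it appears only as an unproven ``typically'' remark inside this very proof) nor true pathwise: a high unit matched early may return quickly while a lower unit matched later is still in use, so availability at $t$ is not monotone in the unit index. And it is also unnecessary: since $m_k$ counts uncovered units in $\{k,\dots,c_i\}$ \emph{including} $k$ itself, the $j$-th uncovered unit from the top automatically has exponent $j$ whatever the interleaving with covered units, so contiguity buys you nothing.

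The paper's Lemma \ref{geom} avoids all of this by bounding the whole sum $\sum_{k}(1-\mathcal{X}_k(\nukp,t))\,\onee\big(k>z_i(t)\big)$ at once via a sequential revelation argument. Sample the units' durations in decreasing order of precedence (legitimate because, by Lemma \ref{indep}, the matching decisions for unit $k$ depend only on $\nukp$); each time a unit $k$ of $i$ is reached with $\mathcal{X}_k(\nukp,t)=0$, append a binary entry that equals $1$ iff $k$ turns out to be available at $t$ — an event whose conditional probability, given everything revealed so far, is at least $\epsilon_i$ by the definition of uncovered. The sum in question is at most the number of entries before the first $1$, whose expectation is dominated by that of a geometric random variable with success probability $\epsilon_i$, hence at most $\frac{1}{\epsilon_i}$; covered units simply generate no entry, so no ordering or monotonicity between covered and uncovered units is ever needed. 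Multiplying by $\max_k \Delta g(k)\leq \frac{2}{c_i}$, summing over $t\in O(\omega,i)$ and taking $\eo$ then completes the proof exactly as in your final step. So your route is the paper's route, but the bookkeeping you identify as the crux must be done by this stochastic-domination argument over the summed count, not unit-by-unit with a fixed exponent.
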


\noindent \emph{Remarks:} The term $\frac{2}{r_i\epsilon_ic_i} \opt_i $ on the RHS is an upper bound on the contribution of the terms arising due to uncovered arrivals. As we discussed when motivating the second ingredient, this bound is a result of geometrically decreasing probability of multiple uncovered units being jointly unavailable at $t$. 
The formal proof of this bound relies crucially on the property that the usage durations of units $k'\prec k$ do not affect the matching decision of \dpg\ for unit $k$ (the first ingredient). 

The other term on the RHS of inequality in Lemma \ref{decompose2} captures the contribution from covered arrivals. 
As we informally stated earlier, we shall upper bound this term by $\theta_i$, i.e.,
\begin{equation}\label{tangled}
\eo\Big[\sum_{t\in O(\omega,i)} \Big(\sum^{c_i}_{k=1} \Delta g (k) \mathbb{E}_{\mb{\nu_{k^+}} } \big[\mathcal{X}_k(\nu_{k^+},t) \big]\Big)\Big]\leq \frac{1+o(1)}{r_i}\theta_i,
\end{equation}
where $o(1)$ represents a term that goes to 0 as $c_i\to +\infty$. 
An obstacle to proving this inequality is the dependence on set $O(\omega,i)$, which is determined by actions of \opt. At a high level, we tackle this difficulty in the same way as the proof of Lemma \ref{galgua}. We untangle the dependence on \opt\ in the LHS of \eqref{tangled} by upper bounding it as the expected reward of an $(F,\bsigma)$ random process. From Corollary \ref{sigma} we have a way of lower bounding $\theta_i$ by the expectation of a (different) random processes. Consequently, showing \eqref{tangled} boils down to proving a new perturbation property of $(F,\bsigma)$ random processes.  

\subsubsection*{Upper bounding the LHS in \eqref{tangled}:} Consider the 
set $\mb{s}(\nukp)$ of all arrivals $t$ that are covered i.e., \[\mb{s}(\nukp)=\{t\in T \mid \mathcal{X}_k(\nu_{k^+},t)=1\}.\]  
The following lemma shows that the quantity,
$r\big(F_i,\mb{s}(\nukp)\big)$ is an upper bound on the contribution from covered arrivals to \eqref{lam4c}. 

\begin{lemma}\label{transport}For any resource $i$, unit $k$, and path $\nukp$, we have,
\[\mathbb{E}_{\mb{\omega}}\Big[\sum_{t\in O(\omega,i)} \mathcal{X}_k(\nu_{k^+},t) \Big]	\leq c_i r\big(F_i,\mb{s}(\nukp)\big).\]
Consequently,
\[\eo\Big[\sum_{t\in O(\omega,i)} \Big(\sum^{c_i}_{k=1} \Delta g (k) \mathbb{E}_{\mb{\nu_{k^+}} } \big[\mathcal{X}_k(\nu_{k^+},t) \big]\Big)\Big]\leq \Big(1+\frac{2}{c_i}\Big) \sum^{c_i}_{k=1} \Big(g \Big(\frac{k}{c_i}\Big) \cdot \mathbb{E}_{\mb{\nu_{k^+}} } \big[r\big(F_i, \mb{s}(\nukp) \big) \big]\Big).\]		
\end{lemma}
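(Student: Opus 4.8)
The plan is to establish the first (per-path) inequality by a coupling argument that directly mirrors the proof of \eqref{interim2} in Lemma~\ref{galgua}, and then to obtain the stated consequence by exchanging the order of the two \emph{independent} expectations over $\omega$ and $\nukp$ and invoking the identity $\Delta g(k)=g(k/c_i)(e^{1/c_i}-1)$. Throughout I fix the path $\nukp$, so that the covering function $\mathcal{X}_k(\nu_{k^+},\cdot)$ and the covered set $\mb{s}(\nukp)=\{t:\mathcal{X}_k(\nu_{k^+},t)=1\}$ are determined and the only remaining randomness on the left-hand side is $\omega$.

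For the first inequality, since \opt\ matches each arrival of resource $i$ to exactly one of its $c_i$ units, I would split $O(\omega,i)=\bigsqcup_{k_O\in[c_i]} O(\omega,k_O)$ and write
\[\eo\Big[\sum_{t\in O(\omega,i)} \mathcal{X}_k(\nu_{k^+},t)\Big]=\sum_{k_O\in[c_i]}\eo\big[\,|O(\omega,k_O)\cap \mb{s}(\nukp)|\,\big].\]
It then suffices to show $\eo[\,|O(\omega,k_O)\cap \mb{s}(\nukp)|\,]\le r(F_i,\mb{s}(\nukp))$ for each fixed unit $k_O$, since summing over the $c_i$ units produces the factor $c_i$. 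For a fixed $k_O$, I would fix all randomness in \opt\ except the usage durations of $k_O$ and use the two-list coupling of \eqref{interim2}: List~1 feeds the $(F_i,\mb{s}(\nukp))$ random process (drawn in the order the process matches), its copy List~$1_O$ supplies durations whenever $k_O$ is matched to a \emph{covered} arrival, and a separate List~2 supplies durations for matches of $k_O$ to non-covered arrivals.

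The crux is the domination. Because the random process matches at every covered arrival at which it is free, whereas $k_O$ is \emph{additionally} made busy by its non-covered matches, a straightforward induction on the match index shows that the process's $j$-th match occurs at a covered arrival no later than $k_O$'s $j$-th covered match, both drawing the identical $j$-th value of List~1${}={}$List~$1_O$. Hence every draw from List~$1_O$ has its counterpart in List~1 drawn, so $|O(\omega,k_O)\cap \mb{s}(\nukp)|$ is at most the number of rewards collected by the process; taking expectations gives the per-unit bound. I expect this domination to be the main obstacle, and it rests squarely on the matching-specific structure of Corollary~\ref{sigma} (via Lemma~\ref{indep}): conditioned on $\nukp$, unit $k$ is matched at a \emph{fixed} ordered set of arrivals whenever available, so $\mb{s}(\nukp)$ is well defined and the process is genuinely more aggressive than $k_O$ on covered arrivals.

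For the consequence, I would use that $\omega$ and $\nukp$ are independent to exchange the expectations for each fixed $k$ and apply the first inequality inside:
\[\eo\Big[\sum_{t\in O(\omega,i)} \mathbb{E}_{\mb{\nu_{k^+}}}\big[\mathcal{X}_k(\nu_{k^+},t)\big]\Big]=\mathbb{E}_{\mb{\nu_{k^+}}}\Big[\eo\Big[\sum_{t\in O(\omega,i)} \mathcal{X}_k(\nu_{k^+},t)\Big]\Big]\le c_i\, \mathbb{E}_{\mb{\nu_{k^+}}}\big[r(F_i,\mb{s}(\nukp))\big].\]
Summing over $k$ with weights $\Delta g(k)$ and substituting $\Delta g(k)=g(k/c_i)(e^{1/c_i}-1)$ turns the prefactor $c_i\,\Delta g(k)$ into $c_i(e^{1/c_i}-1)\,g(k/c_i)$; the elementary bound $c_i(e^{1/c_i}-1)=1+\tfrac{1}{2c_i}+\cdots\le 1+\tfrac{2}{c_i}$ for $c_i\ge 1$ then yields the claimed factor $(1+2/c_i)$ and completes the proof.
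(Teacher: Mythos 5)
Your proof is correct and takes essentially the same route as the paper's: decompose $O(\omega,i)$ over \opt's units $k_O$, rewrite the per-unit sum as $\big|O(\omega,k_O)\cap \mb{s}(\nukp)\big|$, bound it by $r\big(F_i,\mb{s}(\nukp)\big)$ via the two-list coupling of \eqref{interim2}, then exchange the independent expectations over $\omega$ and $\nukp$ and conclude with the elementary estimate $c_i\big(e^{1/c_i}-1\big)\leq 1+\frac{2}{c_i}$ (the paper packages this same algebra as $\big(1-\frac{1}{c_i}\big)\Delta g(k)\leq \frac{1}{c_i}g(k/c_i)$). Your explicit induction on the match index merely spells out the domination that the paper leaves implicit by citing the earlier coupling argument, so there is nothing to correct.
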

The proof of this lemma is included in Appendix \ref{appx:rbaproofs} and closely mimics the two list coupling argument used to establish the inequality 
$	\eo\Big[\sum_{t\in O(\omega, k_O)} \onee(\neg k,t^+)\Big] \leq r(F_i,\mb{s}(k)),$
in the analysis of \galg. To finish the overall analysis of \dpg, we need the following new property for random processes.

\begin{proposition}[{Perturbation property}]\label{relatexp}For every resource $i$, every unit $k$ of $i$, every path $\nukp$, and parameter $\epsilon_i=\frac{1}{o(c_i)}$, we have
\begin{equation}\label{conjecture}
\boxed{	r\big(F_i, \mb{s}(\nukp) \big) \leq \big(1+\kappa_i( \epsilon_i,c_i)\big) r\big(F_i,\mb{\sigma}(\nukp) \big), }
\end{equation}
for some non-negative function $\kappa_i$ that approaches 0 for $c_i\to +\infty$.
\end{proposition}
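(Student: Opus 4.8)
The plan is to prove \eqref{conjecture} by reducing both sides to a deterministic statement about the fluid dynamics and then \emph{charging} the matches of the process on $\mb{s}(\nukp)$ against those of the process on $\mb{\sigma}(\nukp)$. Fix $i$, $k$, and the conditioning path $\nukp$, and write $w:=F_i^{-1}(\epsilon_i)$. First I would pass to the fluid picture: by the equivalence Lemma \ref{equiv}, the two quantities $r\big(F_i,\mb{s}(\nukp)\big)$ and $r\big(F_i,\mb{\sigma}(\nukp)\big)$ equal the total fractions consumed by the fluid $(F_i,\mb{s}(\nukp))$ and $(F_i,\mb{\sigma}(\nukp))$ processes, and the covering condition $\mathbb{P}_{\mb{\nu_{k}}}[\onee(k,t)=1\mid\nukp]<\epsilon_i$ becomes the deterministic statement that the fluid $\mb{\sigma}$-process has available fraction strictly below $\epsilon_i$ at every $t\in\mb{s}(\nukp)$. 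This strips away the stochastic dependence on $\mb{\nu_{k}}$ and lets me reason about a single availability trajectory.

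The structural heart is a \emph{recent-consumption} lemma: when the fluid $\mb{\sigma}$-process has availability below $\epsilon_i$ at a time $\tau$, a large part of the currently unavailable mass must have been consumed within the preceding window $(\tau-w,\tau)$, since any mass consumed more than $w$ earlier has returned a fraction at least $F_i(w)=\epsilon_i$. Thus every covered arrival lies within distance $\approx w$ of recent $\mb{\sigma}$-consumption, and I can (outside a vanishing boundary) localize the comparison to windows of length $w$ and charge the consumption of the $\mb{s}$-process window by window to that of the $\mb{\sigma}$-process. Within one window the $\mb{s}$-process can re-consume the unit beyond its first match only if the unit returns inside a length-$w$ window, and the mass that can return is at most $\max_{x}\big(F_i(x+w)-F_i(x)\big)=\epsilon_iL_i(\epsilon_i)$ by the definition \eqref{defineL} of $L_i$; the resulting geometric bookkeeping gives a per-window multiplicative overhead $\big(1-\epsilon_iL_i(\epsilon_i)\big)^{-1}=1+O(\epsilon_iL_i(\epsilon_i))$. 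Summing the charges and using monotonicity (Lemma \ref{monotone}) together with the zero-probability reduction (Lemma \ref{zeroset}) to discard covered arrivals carrying no available mass yields \eqref{conjecture} with $\kappa_i=O(\epsilon_iL_i(\epsilon_i))$, which vanishes as $c_i\to\infty$ exactly under the stated hypothesis $\epsilon L_i(\epsilon)\to0$.

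For the individual families I would then specialize. For two-point usage $\{d_i,+\infty\}$ the window argument is essentially exact, since consumed mass returns only at the single lag $d_i$ so at most one re-consumption per window is possible; for the exponential family memorylessness makes the renewal rate explicit; and for IFR with non-increasing density one has $L_i\equiv1$, so the overhead is simply $O(\epsilon_i)$. In each case the final convergence rate $\delta$ in Table \ref{summary} comes from optimizing the free parameter $\epsilon_i$ to balance the covering overhead $\kappa_i=O(\epsilon_iL_i(\epsilon_i))$ against the uncovered term $O\big(1/(\epsilon_i c_i)\big)$ isolated in Lemma \ref{decompose2}, which is why the exponent $\eta$ of $\epsilon L_i(\epsilon)=O(\epsilon^{\eta})$ propagates directly into the rate.

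The hard part will be twofold. First, the two processes do \emph{not} share an arrival set: $\mb{s}(\nukp)$ may contain arrivals outside $\mb{\sigma}(\nukp)$, and at its first covered arrival the $\mb{s}$-process starts with a \emph{full} unit available while the $\mb{\sigma}$-process is nearly saturated, so the charging must absorb this ``head-start'' discrepancy without double-counting across overlapping windows. This is precisely where the low-availability threshold $\epsilon_i$ and the renewal-rate cap (a busy period occupies the single unit and bounds how many matches either process can extract from a window) carry the argument. Second, an arbitrary mass at $+\infty$ permits busy periods that never end, so windows need not clear and the clean localization of unavailable mass to a recent window breaks; this requires separately tracking the permanently consumed fraction and trading the window length against $c_i$, which degrades the exponent to $\tilde O\big(c_{\min}^{-\eta/(1+2\eta)}\big)$ in the last row of Table \ref{summary}. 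Proving \eqref{conjecture} for genuinely arbitrary $F_i$, beyond these families, is the step that remains conjectural.
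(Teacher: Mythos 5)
Your high-level architecture does match the paper's for the bounded-IFR case: localize the covered set to length-$\delta_0$ windows trailing $\mb{\sigma}(\nukp)$-arrivals (this is exactly the paper's superset $\mb{S}(\nukp)=\{t\mid a(t)\in[a(\sigma_j),a(\sigma_j)+\delta_0)\}$ with $F_i(\delta_0)=\epsilon_i$), compare the two processes by a coupling with overhead governed by $\epsilon_iL_i(\epsilon_i)$, and optimize $\epsilon_i$ against the $O(1/(\epsilon_ic_i))$ uncovered term; the fluid reduction via Lemma \ref{equiv} is harmless and parallels the paper's random-process phrasing. But two steps are genuinely gapped. First, your justification of the localization is the wrong one: ``mass consumed more than $w$ earlier has returned fraction $\geq\epsilon_i$'' only controls freshly consumed cohorts, whereas at an arrival with no $\mb{\sigma}$-point in the preceding window all mass is mid-usage with arbitrary elapsed ages, and you need the conditional statement $\big(F_i(x+\delta_0)-F_i(x)\big)/\big(1-F_i(x)\big)\geq F_i(\delta_0)$ — i.e., the IFR property itself — to push availability back above $\epsilon_i$; without it the localization is simply false. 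Second, and more seriously, the window-by-window charging with per-window overhead $\big(1-\epsilon_iL_i(\epsilon_i)\big)^{-1}$ does not address the compounding lag: in each window the $\mb{s}$-process consumes up to $\delta_0$ \emph{later} than the corresponding $\mb{\sigma}$-arrival, these shifts accumulate across busy periods, and the $\mb{s}$-process can thereby extract extra matches that no single window accounts for. This is precisely where the paper spends all its machinery: a modified distribution $F^m$ coupled so that every large sample of $F_i$ shrinks by $\delta_0$ (costing $F$-mass $L\epsilon$ per sample, Proposition \ref{prop:moddist}), a two-pointer argument showing the $\mb{\sigma}$-side pointer never skips nor lags, and a renewal comparison of $F^m$ to $F$ whose IFR-based mean bound $\mu\geq\Omega\big(\delta_1/\log(1/L\epsilon)\big)$ is the source of the factor $\log(1/L\epsilon)$ in the paper's $\kappa_i=O\big(L\epsilon\log(1/L\epsilon)\big)$. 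Your claimed $\kappa_i=O(\epsilon_iL_i(\epsilon_i))$ is strictly stronger than what the paper proves, and the phrase ``the low-availability threshold and the renewal-rate cap carry the argument'' names the difficulty without supplying a mechanism.

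Your treatment of the two-point family is also wrong as stated. The distribution $\{d_i,+\infty\}$ has an atom at $+\infty$, so the breakdown you attribute only to the fourth row of Table \ref{summary} afflicts the first row as well: once the availability probability falls below $\epsilon_i=p_i^{l_0-1}$ it stays below forever, and the paper's superset for this family is $\mb{S}(\nukp)=\{t\mid a(t)-a(\sigma_j)\in[0,d_i)\text{ or }t\geq\sigma_{l_0}\}$ — an unbounded tail of covered arrivals arbitrarily far from any window. Your ``at most one re-consumption per window, essentially exact'' picture bounds none of that tail's contribution; the paper instead couples the number of finite durations in the two processes and derives the recursion $r(F_i,\mb{S}(\nukp))\leq r(F_i,\mb{\sigma}(\nukp))+p_i^{l_0}\,r(F_i,\mb{S}(\nukp))$, giving $\kappa_i=\epsilon_i/(1-\epsilon_i)$, with a separate refined argument in Appendix \ref{refined2pnt} for the $O(\log c_i/c_i)$ rate. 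To repair your proposal you would need (a) the explicit IFR step in the localization, (b) a shift-absorbing coupling in the charging step, accepting the attendant logarithmic loss in $\kappa_i$, and (c) the tail/recursion treatment for \emph{any} mass at infinity, including the two-point row, followed by Lemma \ref{monotone} to pass from $\mb{S}(\nukp)$ back to $\mb{s}(\nukp)$ in \eqref{conjecture}.
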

\noindent Recall that given set of arrivals $T$, subset $\bsigma(\nukp)$, and value $\epsilon_i$, the set $\mb{s}(\nukp)$ of covered arrivals is fully determined by the $\big(F_i,\bsigma(\nukp)\big)$ random process. Thus, inequality \eqref{conjecture} is purely a statement about an explicitly defined random process. 
Note, the requirement that $\epsilon_i=\frac{1}{o(c_i)}$ or $\frac{1}{\epsilon_ic_i} \to 0$ for $c_i\to +\infty$, is clearly a necessity due to the term $\frac{2}{r_i\epsilon_ic_i} \opt_i$ in Lemma \ref{decompose2}. 
In general, the difficulty in proving \eqref{conjecture} is that we require $\epsilon_i$ to be suitably large and also desire $\kappa_i$ to be small. Larger $\epsilon_i$ leads to more arrivals $\mb{s}(\nukp)\backslash \bsigma(\nukp)$, leading to a (possibly) larger value of $r\big(F_i, \mb{s}(\nukp) \big)$. This tension between the two parameter values is made explicit through the competitive ratio guarantee in the next lemma.


\begin{lemma}\label{master}
Given, values $\theta_i$ and $\lambda_t$ set according to \eqref{theta}. For every usage distribution such that Proposition \ref{relatexp} holds, 
we have
\[\theta_i + \eo\Big[ \sum_{t\in O(\omega,i) }\lambda_t\Big] \geq \alpha_i \opt_i,\quad \forall i\in I,\]
with values $\alpha_i=\frac{\big(1-1/e-\frac{2}{\epsilon_i \, c_i}\big)}{ \big(1+\kappa_i( \epsilon_i,c_i)\big)(1+2/c_i)}$. If we can choose $\epsilon_i,\kappa_i$ such that, $\frac{1}{\epsilon_i\,c_i}\to 0$ and $\kappa_i( \epsilon_i,c_i)\to 0$ for $c_i\to +\infty$, then \dpg\ is asymptotically $(1-1/e)$--competitive, with the rate of convergence given by, \[O\Big(\min_i \Big\{\frac{1}{\epsilon_i\,c_i}+\kappa_i(\epsilon_i,c_i) \Big\}\Big).\] 
\end{lemma}
\begin{proof}{Proof.}
For convenience, we refer to $\kappa_i( \epsilon_i,c_i)$ simply as $\kappa_i$ respectively. By definition of $\theta_i$ in \eqref{lambda},
\begin{eqnarray*}
\frac{1}{r_i}\theta_i &= &\ed\Big[\sum_{t\mid D(t)=i} g\Big(\frac{z_i(t)}{c_i}\Big) \Big],\\
&= &\sum_{k=1}^{c_i} g\Big(\frac{k}{c_i}\Big)\sum_{t\mid D(t)=i,\, z_i(t)=k} 1,\\
&= &\sum_{k=1}^{c_i} g\Big(\frac{k}{c_i}\Big) \mathbb{E}_{\mb{\nu_{k^+}} } \big[r\big(F_i, \mb{\sigma}(\nukp) \big) \big],
\end{eqnarray*}
where the last equality follows from Corollary \eqref{sigma}. Combining this with Lemma \ref{decompose2}, Lemma \ref{transport} and Proposition \ref{relatexp}, we have,
\begin{eqnarray*}
\eo\Big[ \sum_{t\in O(\omega,i)} \sum^{c_i}_{k=1} \Delta g (k) \pd \big[k> z_i(t)\big]\Big]
&\leq & \Big(1+\frac{2}{c_i}\Big)(1+\kappa_i)\, \frac{1}{r_i}\theta_i + \frac{2}{r_i\epsilon_ic_i} \opt_i.\nonumber 
\end{eqnarray*}
Substituting this in Lemma \ref{decompose1}, we get,
\[	\eo\Big[ \sum_{t\in O(\omega,i) }\lambda_t\Big]\geq \Big(1-1/e -\frac{2}{\epsilon_ic_i}\Big)\opt_i - \Big(1+\frac{2}{c_i}\Big)(1+\kappa_i)\, \theta_i.\]
\hfill	\Halmos
\end{proof}
The final step in proving Theorem \ref{genmatch} is to establish the Perturbation property \eqref{conjecture} for the various families of usage distributions discussed in Section \ref{sec:rba}. In Appendix \ref{appx:prop1}, we prove \eqref{conjecture} for these families of distributions, starting with the simplest case of two-point distributions. 
These proofs rely on using the knowledge of distribution $F_i$ to further characterize the set $\mb{s}(\nukp)$, followed by carefully constructed coupling arguments that also critically use the structure of $F_i$.  

Note that to prove \eqref{conjecture}, we will typically show a stronger statement by considering a set $\mb{S}(\nukp)$ that is a superset of $\mb{s}(\nukp)$ and show that,
\[r\big(F_i,\mb{S}(\nukp) \big) \leq \big(1+\kappa_i( \epsilon_i,c_i)\big) r\big(F_i,\mb{\sigma}(\nukp) \big), \]
for a suitably small $\kappa_i$.
Inequality \eqref{conjecture} then follows by using the monotonicity property (Lemma \ref{monotone}). This will give us freedom to consider sets $\mb{S}(\nukp)$ that are easier to characterize than $\mb{s}(\nukp)$. In absence of any assumptions about the distribution $F_i$, it is not formally clear to us if \eqref{conjecture} still holds, though we conjecture that it does.

\subsection{Proofs of Individual Components}\label{appx:rbaproofs}


\begin{repeatlemma}[Lemma \ref{decompose1}.]
Given $\lambda_t$ as defined by \eqref{lambda}, we have for every resource $i$,
\[\eo\Big[\sum_{t\in O(\omega,i)} \lambda_t\Big]\geq (1-1/e) \opt_i - r_i \eo\Big[ \sum_{t\in O(\omega,i)} \sum^{c_i}_{k=1} \Delta g (k) \pd \big[k> z_i(t)\big]\Big]. \]
\end{repeatlemma}
\begin{proof}{Proof.}
We start with the LHS,
\begin{eqnarray}
\eo\Big[\sum_{t\in O(\omega,i)} \lambda_t\Big]&&= \eod \Big[ \sum_{t\in O(\omega,i)} r_{D(t)} [1-g\Big(\frac{z_{D(t)}}{c_{D(t)}}\Big)]\Big],\nonumber\\
&&\overset{(a)}{\geq} \eod \Big[ \sum_{t\in O(\omega,i)} r_i [1-g\Big(\frac{z_{i}(t)}{c_{i}}\Big)]\Big],\nonumber\\
&&\overset{(b)}{=}r_i \eod \Big[ \sum_{t\in O(\omega,i)} \Big(1-1/e- \sum_{k=z_i(t)+1}^{c_i} \Delta g (k)\Big)\Big],\nonumber\\
&&\overset{}{=}(1-1/e) \opt_i -r_i\eod \Big[ \sum_{t\in O(\omega,i)} \sum_{k=z_i(t)+1}^{c_i} \Delta g (k)\Big],\label{lam2'}
\end{eqnarray}
where $(a)$ follows by the fact that \dpg\ matches every arrival to the resource that has maximum reduced price. {\color{black} Equation $(b)$ follows from the inequality, 
\[1-g\Big(\frac{z_{i}(t)}{c_{i}}\Big)=(1-1/e)- \sum_{k=z_i(t)+1}^{c_i} \Delta g (k),\]
obtained by setting $k_i=z_i(t)$ in \eqref{delg2} (which states one of the properties of function $\Delta g(\cdot)$).
Let $\onee_\nu(k>z_i(t))$ indicate the event that $k>z_i(t)$. We rewrite the $\eod[\cdot]$ term in \eqref{lam2'} as follows, 
\begin{eqnarray}
\eod \Big[\sum_{t\in O(\omega,i)} \sum_{k=z_i(t)+1}^{c_i} \Delta g (k)\Big]
&& =\eo\Big[\sum_{t\in O(\omega,i)} \ed \, \Big[ \sum^{c_i}_{k=z_i(t)+1} \Delta g (k)\Big]\Big],\nonumber\\
&&=\eo\Big[\sum_{t\in O(\omega,i)} \ed \, \Big[ \sum^{c_i}_{k=1} \Delta g (k)\, \onee_\nu(k>z_i(t)) \Big]\Big],\nonumber\\
&&=\eo\Big[ \sum_{t\in O(\omega,i)} \sum^{c_i}_{k=1} \Delta g (k) \pd \big[k> z_i(t)\big]\Big], \label{lam2''}
\end{eqnarray}
here the final equality follows by linearity of expectation. Combining \eqref{lam2'} and \eqref{lam2''} completes the proof. }
\hfill\Halmos\end{proof}
\begin{repeatlemma}[Lemma \ref{indep} (restated).]
Let $D(t)$ denote the resource matched to $t$ by \dpg. Given arrival $t$ with edge to $i$ let $\onenu[k\succeq D(t)]$ denote the event $k\succeq D(t)$ in \dpg. Then, the event $\onenu[k\succeq D(t)]$ is independent of the usage durations of $k$ and all units that succeed $k$.
\end{repeatlemma}
{\color{black}	\begin{proof}{Proof.}
Suppose that the statement is false for some unit $k$. Then, there exists sample paths $\nu_1$ and $\nu_2$ such that $\onee_{\nu_1}[k\succeq D(t)]=1$ and $\onee_{\nu_2}[k\succeq D(t)]=0$ for some arrival $t$ and $\nu_1$ and $\nu_2$ agree on the usage durations of all units except unit $\ell$, for some $\ell\prec k$. For every unit $k'\succeq k$, the usage durations of $k'$ are the same on both $\nu_1$ and $\nu_2$. Without loss of generality,
\[\onee_{\nu_1}[k'\succeq D(\tau)]=\onee_{\nu_2}[k'\succeq D(\tau)],\] 
for every $k'\succeq k$ and every arrival $\tau$ prior to $t$. Thus, the matching decisions over units $k'\succeq k$ are identical prior to arrival $t$ and when $t$ arrives, the availability status of units $k'\succeq k$ is the same on both sample paths. Now by definition of \dpg, if $t$ is matched to a unit $k'\succeq k$ on one path then it will be matched to the same unit on the other path, contradiction. 
\hfill\Halmos\end{proof}}

{\color{black}	\begin{repeattheorem}[Corollary \ref{probsimp}.]
Given arrival $t$ with edge to $i$ and a unit $k\in[c_i]$,  we have 
$\mathbb{P}_{\mb{\nu_{k}}}[k> z_i(t) \mid \nu_{k^+}]=\mathbb{P}_{\mb{\nu_{k}}}[k> z_i(t) \mid \nu_{-k}].$ 
\end{repeattheorem}}
{\color{black}	\begin{proof}{Proof.} 
Equality \eqref{probtoexp} states that,
\[
\mathbb{P}_{\mb{\nu_{k}}}[k> z_i(t) \mid \nu_{-k}]=\mathbb{E}_{\mb{\nu_{k}}}[1-\onee(k,t) \mid \nu_{-k}] \times \prod_{k'>k}^{c_i} [(1-\onee(k',t))\mid \nu_{-k}],
\]
where $\onee(k',t)\mid \nu_{-k}$ is deterministic for every $k'\neq k$.  In fact, $\onee(k',t)\mid \nu_{-k}=\onee(k',t)\mid \nukp\,\, \forall k'>k$.
From Lemma \ref{indep}, we have, $\mathbb{E}_{\mb{\nu_{k}}}[\onee(k,t) \mid \nu_{-k}]=\mathbb{E}_{\mb{\nu_{k}}}[\onee(k,t) \mid \nu_{k^+}].$ Thus,
\[\mathbb{P}_{\mb{\nuk}}[k> z_i(t) \mid \nu_{-k}]=\mathbb{E}_{\mb{\nuk}}[1-\onee(k,t) \mid \nukp] \times \prod_{k'>k}^{c_i} [(1-\onee(k',t))\mid \nukp] =\mathbb{P}_{\mb{\nu_{k}}}[k> z_i(t) \mid \nu_{k^+}].\]

\hfill\Halmos		\end{proof}
}
\begin{repeattheorem}[Corollary \ref{sigma}.]
For every unit $k$, given a sample path $\nukp$ we have an ordered set of arrivals $\mb{\sigma}(\nukp)=\{\sigma_1,\cdots,\sigma_e\}$ with $\sigma_1<\cdots<\sigma_e$ such that for any arrival $t$,
\begin{itemize}
\item If $t\not\in \mb{\sigma}(\nukp)$, then conditioned on $\nukp$, the probability that $k$ is matched to $t$ is 0.
\item If $t\in\mb{\sigma}(\nukp)$, then conditioned on $\nukp$, $k$ is matched to $t$ w.p.\ 1 if it is available.
\end{itemize} 
Therefore, conditioned on $\nukp$, 
the $(F_i,\bsigma(\nukp))$ random process fully characterizes the matches of $k$ in \dpg. 
\end{repeattheorem}
{\color{black}\begin{proof}{Proof.}
From Lemma \ref{indep} we have that the matching decisions over units $k^+$ (that strictly precede $k$) do not depend on usage durations of unit $k$ and its successors. We condition on $\nukp$ and this fixes the state of units $k^+$ at every arrival.  
Let $\mb{\sigma}(\nukp)$ denote the ordered set of arrivals that have an edge to $k$ and are not matched to a unit preceding $k$. If arrival $t$ has an edge to $k$ and $t\not\in \mb{\sigma}(\nukp)$, then $D(t)\succ k$ and $k$ is not matched to $t$. If $t\in\mb{\sigma}(\nukp)$, then $k\succeq D(t)$ by definition and $k$ is matched to $t$ if it is available at $t$. The availability of $k$ at $t\in\mb{\sigma}(\nukp)$ depends only on the usage durations of $k$ prior to $t$, which are sampled independently according to distribution $F_i$. 
Thus, the $(F_i,\bsigma(\nukp))$ random process fully characterizes the matches of $k$ in \dpg.

\hfill\Halmos	\end{proof}}
\begin{repeatlemma}[Lemma \ref{decompose2}.]
For every resource $i$,
\[\eo\Big[ \sum_{t\in O(\omega,i)} \sum^{c_i}_{k=1} \Delta g (k) \pd \big[k> z_i(t)\big]\Big]\leq \mathbb{E}_{\mb{\omega}}\Big[\sum_{t\in O(\omega,i)} \Big(\sum^{c_i}_{k=1} \Delta g (k) \mathbb{E}_{\mb{\nu_{k^+}} } \big[\mathcal{X}_k(\nu_{k^+},t) \big]\Big)\Big] + \frac{2}{r_i\epsilon_ic_i} \opt_i . \]
\end{repeatlemma}
\begin{proof}{Proof.}
We start with the observation, $\pd \big[k> z_i(t)\big]=\mathbb{E}_{\mb{\nu_{k^+}} } \big[\mathbb{P}_{\mb{\nu_{k}}}[k> z_i(t) \mid \nu_{k^+}]\big]$. Using the classification of each arrival $t$ afforded by the covering function, we get,
\begin{eqnarray}
\mathbb{P}_{\mb{\nu_{k}}}[k> z_i(t) \mid \nu_{k^+}]&&=  \mathcal{X}_k(\nu_{k^+},t)\,\, \mathbb{P}_{\mb{\nu_k}}[k>z_i(t)\mid \nu_{k^+}]\nonumber\\
&&\quad + (1-\mathcal{X}_k(\nu_{k^+},t))\,\, \mathbb{P}_{\mb{\nu_k}}[k>z_i(t)\mid \nu_{k^+}]\nonumber\\
&&\leq \mathcal{X}_k(\nu_{k^+},t) + (1-\mathcal{X}_k(\nu_{k^+},t))\cdot \mathbb{P}_{\mb{\nu_k}}[k>z_i(t)\mid \nu_{k^+}],\label{probdecomp}
\end{eqnarray}
here we use the upper bound of 1 on probability $\mathbb{P}_{\mb{\nu_k}}[k>z_i(t)\mid \nu_{k^+}]$ for the scenario where $t$ is covered\footnote{As an aside from the proof, one might wonder if this is too loose an upper bound. Conditioned on sample path $\nukp$, in general it can be shown that if $t$ is covered in $k$ then it is also covered in all units of $i$ preceding $k$, conditioned on corresponding partial sample paths that are consistent with $\nukp$. So the bound is reasonably tight.}. 
Now, by substitution,
\begin{eqnarray*}
&&\eo\Big[ \sum_{t\in O(\omega,i)} \sum^{c_i}_{k=1} \Delta g (k) \pd \big[k> z_i(t)\big]\Big]\leq \\
&&\eo\Big[\sum_{t\in O(\omega,i)} \Big(\sum^{c_i}_{k=1} \Delta g (k) \mathbb{E}_{\mb{\nu_{k^+}} } \big[\mathcal{X}_k(\nu_{k^+},t) \big]\Big)\Big]+\\
&&\eo\Big[\sum_{t\in O(\omega,i)} \Big(\sum^{c_i}_{k=1} \Delta g (k) \mathbb{E}_{\mb{\nu_{k^+}} } \big[(1-\mathcal{X}_k(\nu_{k^+},t))\mathbb{P}_{\mb{\nu_k}}[k>z_i(t)\mid \nu_{k^+}] \big]\Big)\Big].
\end{eqnarray*}
The first term in the inequality is as desired. The next lemma (Lemma \ref{geom}) shows that the second term is further upper bounded by $\frac{1}{r_i\epsilon_ic_i}\opt_i$.
\hfill\Halmos\end{proof}

{\color{black} Let us recall the definition of the covering function $\mathcal{X}_k$. Fix a resource $i$ and a value $\epsilon_i\in(0,1]$. Consider unit $k$ of $i$ and condition on sample path $\nukp$. Given an arrival $t$, we say that $t$ is uncovered and set 
$\mathcal{X}_k(\nu_{k^+},\epsilon_i,t)=0$ if,
\[\mathbb{P}_{\mb{\nu_{k}}}[\onee(k,t)=1 \mid \nu_{k^+}]\geq 
\epsilon_i, \]
i.e., at an uncovered arrival we have a lower bound of $\epsilon_i$ on the probability of $k$ being available. Due to the importance of parameter $\epsilon_i$ in the next lemma, we use the full form of the covering function without omitting $\epsilon_i$. 

\begin{lemma}\label{geom}
For any resource $i$ and arrival $t$, 
we have, 
\begin{eqnarray*} 
\sum_{k=1}^{c_i}\Delta g(k) \mathbb{E}_{\mb{\nu_{k^+}}} \big[ (1-\mathcal{X}_k(\nu_{k^+},\epsilon_i,t))\,\, \mathbb{P}_{\mb{\nu_{k}}}[k> z_i(t) \mid \nu_{k^+}] \big]\leq \frac{2}{\epsilon_i c_i}
\end{eqnarray*}
Consequently,
\[	\eo\Big[\sum_{t\in O(\omega,i)} \Big(\sum^{c_i}_{k=1} \Delta g (k) \mathbb{E}_{\mb{\nu_{k^+}} } \big[(1-\mathcal{X}_k(\nu_{k^+},\epsilon_i,t))\mathbb{P}_{\mb{\nu_k}}[k>z_i(t)\mid \nu_{k^+}] \big]\Big)\Big]\leq \frac{2}{r_i\epsilon_i c_i}\opt_i.\]
The upper bound can be tightened to 0 for $\epsilon_i=1$.
\end{lemma}}
\begin{proof}{Proof.}
{\color{black}	Let us focus on the first part of the lemma. If $\epsilon_i=1$, then for every unit $k$ and sample path $\nukp$, conditioned on $\nukp$, we have,
\[1-\mathcal{X}_k(\nu_{k^+},1, t)=1\quad \Rightarrow \quad  \mathbb{P}_{\mb{\nu_{k}}}[\onee(k,t)=1 \mid \nu_{k^+}]=1\quad \Rightarrow \quad \mathbb{P}_{\mb{\nu_{k}}}[z_i(t)\geq k\mid \nukp]=1.\]

Thus, $(1-\mathcal{X}_k(\nu_{k^+},1,t))\times \mathbb{P}_{\mb{\nu_k}}[k>z_i(t)\mid \nu_{k^+}]=0$, and we are done. 
So, let $\epsilon_i<1$. 
Using $e^x\leq 1+x+x^2$ for $x\in[0,1]$, we have that, 
\[\Delta g(k) \leq (e^{\frac{1}{c_i}}-1)\leq 1/c_i+ 1/c_i^2\leq 2/c_i\text{, for $c_i\geq 1$.}\] Therefore, to prove the first part of the lemma, it suffices to show that for every $\epsilon_i<1$,
\begin{eqnarray}
\sum_{k=1}^{c_i}\mathbb{E}_{\mb{\nu_{k^+}}} \Big[ (1-\mathcal{X}_k(\nu_{k^+},\epsilon_i,t))\, \mathbb{P}_{\mb{\nu_{k}}}[k> z_i(t) \mid \nu_{k^+}] \Big]\leq \frac{1}{\epsilon_i}. \label{geominterim}
\end{eqnarray}
The proof relies on establishing that the LHS is upper bounded by the expectation of a geometric r.v.\ with success probability $\epsilon_i$. }

Fix an arbitrary arrival $t$ and consider the following iterative process for generating sample paths $\nu$.
We sample the usage durations of units one by one in decreasing order of precedence over units. Suppose that as we generate $\nu$, we also grow an array ${\bf b}(\nu)$ with binary values. ${\bf b}(\nu)$ is initially an empty array. 
At the step where usage durations of all units $k^+$ that strictly precede unit $k$ of $i$ have been sampled, 
the covering function $\mathcal{X}_k(\nukp,\epsilon_i,t)$ is well defined. If $t$ is not covered w.r.t.\ $k$ on the sample path $\nukp$ generated so far, then we add an entry to the array ${\bf b}$. Next, we sample the durations of $k$, i.e., generate $\nu_k$. This determines the availability of $k$ at arrival $t$. We set the new array entry to 1 if $k$ is available at $t$ and set the entry to 0 otherwise. Thereafter, we move to the next unit dictated by the precedence order. We repeat the same procedure whenever we reach a unit of resource $i$. For all other units, we simply generate a sample path of its usage durations and move to the next unit in order. At the end of the process (after visiting all units in this manner), we add an entry to the end of the array with value 1 so that there is always an entry with value 1 in the array.

{\color{black}	Let $M(\nu)$ denote the first entry in the array that has value 1. We show that the LHS of \eqref{geominterim} is upper bounded by the expected value of $M(\nu)$. Then, we prove that $\mathbb{E}_{\mb{\nu}}[M(\nu)]\leq \frac{1}{\epsilon_i}$. 
Let $b_k(\nu)$ denote the array entry (if any) generated at unit $k$. If no array entry is generated at $k$ then we set $b_k(\nu)=\emptyset$. If an entry is generated at $k$, then $b_k(\nu)$ is a boolean random variable which takes value 1 on sample paths where $k$ is available at $t$. 
Note that the value of $b_k(\cdot)$ depends only on $\nukp$ and $\nu_k$, and is independent of the usage durations of units that succeed $k$. Let $\onee(b_k(\nu)\neq \emptyset)$ indicate the event that a new entry is added to the array when we generate the usage durations of unit $k$. By definition, 
\[1-\mathcal{X}_k(\nu_{k^+},\epsilon_i,t)=\onee(b_k(\nu)\neq \emptyset).\]
Let $\ell(\nu)$ denote the first unit of $i$ where the corresponding array entry has value 1, i.e., \[b_{\ell(\nu)}(\nu)=1\quad \text{ and }\quad b_k(\nu)\in\{0,\emptyset\}\,\, \forall k\in\{\ell(\nu)+1,\cdots, c_i\}.\] 
If $\ell(\nu)\leq k$ then there exists a unit $k'$ of $i$ that is available at $t$ and that precedes $k$. In this scenario, we have, $k\leq z_i(t)$. Therefore, 
\[\mathbb{P}_{\mb{\nu_{k}}}[k> z_i(t) \mid \nu_{k^+}]\leq \mathbb{P}_{\mb{\nu_{k}}}[\ell(\nu)>k \mid \nu_{k^+}].\]
Using the observations above, we have the following upper bound on the LHS of \eqref{geominterim},
\begin{eqnarray*}	
\sum_{k=1}^{c_i}\mathbb{E}_{\mb{\nu_{k^+}}} \Big[ (1-\mathcal{X}_k(\nu_{k^+},\epsilon_i,t))\, \mathbb{P}_{\mb{\nu_{k}}}[k> z_i(t) \mid \nu_{k^+}] \Big]&\leq& 	\sum_{k=1}^{c_i}\mathbb{E}_{\mb{\nu_{k^+}}}\big[ \onee(b_k(\nu)\neq \emptyset)\,\,\mathbb{P}_{\mb{\nu_{k}}}[\ell(\nu)>k \mid \nukp] \big],\\
&=&\sum_{k=1}^{c_i}\mathbb{P}_{\mb{\nu}}\big[ (b_k(\nu)\neq \emptyset)\,\, \wedge\,\, (\ell(\nu)>k) \big],\\
&\leq &\sum_{k=1}^{c_i}\mathbb{P}_{\mb{\nu}}\big[ (b_k(\nu)\neq \emptyset)\,\, \wedge\,\, (\ell(\nu)\geq k) \big],\\
&=&	\mathbb{E}_{\mb{\nu}}\left[\sum_{k=1}^{c_i} \onee\left(b_k(\nu)\neq \emptyset\,\, \wedge\,\, \ell(\nu)\geq k\right) \right],\\
&=& \mathbb{E}_{\mb{\nu}}[M(\nu)].
\end{eqnarray*}
It remains to show that $\mathbb{E}_{\mb{\nu}}[M(\nu)]\leq \frac{1}{\epsilon_i}$. Since the last array entry is set to 1 we have $M(\nu)\in[c_i+1]$ on every sample path $\nu$. Recall that for any unit $k$ of $i$, conditioned on $\nukp$, 
a new entry is generated only if $t$ is not covered w.r.t.\ unit $k$, i.e., $\mathcal{X}_{k}(\nukp,\epsilon_i,t)=0$. By definition, $\mathbb{P}_{\mb{\nu_{k}}}[\onee(k,t)=1 \mid \nu_{k^+}]\geq \epsilon_i$. When all units of $i$ have been visited, we add a final entry to the array with value 1. Thus, conditioned on the previous array entries, any new entry in the array has value 1 w.p.\ at least $\epsilon_i$ and we get,
\[\mathbb{P}_{\mb{\nu}}[M(\nu)=j\mid M(\nu)> j-1]\,\,\geq\,
\begin{cases}
&\epsilon_i\quad \forall j\in[c_i],\\
&1 \quad \text{ for } j=c_i+1.
\end{cases} \] 
Using this we show that the expected number of entries until we see an entry with value 1 is at most $\frac{1}{\epsilon_i}$. 
\begin{eqnarray*}
\mathbb{E}_{\mb{\nu}}[M(\nu)]&=&\sum_{j= 0}^{c_i} \mathbb{P}_{\mb{\nu}}[M(\nu)> j]\\
&=&1+\sum_{j=1}^{c_i} \mathbb{P}_{\mb{\nu}}[M(\nu)>j-1]\,\, \mathbb{P}_{\mb{\nu}}[M(\nu)>j\mid M(\nu)> j-1]\\
&\leq &1+(1-\epsilon_i) \sum_{j= 1}^{c_i}\mathbb{P}_{\mb{\nu}}[M(\nu)>j-1],\\
&\leq & 1 + (1-\epsilon_i)\,\mathbb{E}_{\mb{\nu}}[M(\nu)]\\
&\leq &\frac{1}{\epsilon_i}.
\end{eqnarray*}

To show the second part of the lemma, we observe that \eqref{geominterim} holds for every arrival $t$. Thus, for any set $S$ of arrivals, we have, 
\[ \sum_{t \in S}\sum_{k=1}^{c_i}\Delta g(k) \mathbb{E}_{\mb{\nu_{k^+}}} \big[ (1-\mathcal{X}_k(\nu_{k^+},t))\,\, \mathbb{P}_{\mb{\nu_{k}}}[k> z_i(t) \mid \nu_{k^+}] \big]\leq \frac{2}{\epsilon_i c_i}|S|. \]
Plugging in $S=O((\omega,i)$ in the inequality above completes the proof.
}		\hfill\Halmos\end{proof}

\begin{repeatlemma}[Lemma \ref{transport}.] For any resource $i$, unit $k$, and path $\nukp$, we have,
\[\mathbb{E}_{\mb{\omega}}\Big[\sum_{t\in O(\omega,i)} \mathcal{X}_k(\nu_{k^+},t) \Big]	\leq c_i r\big(F_i,\mb{s}(\nukp)\big).\]
Consequently,
\[\eo\Big[\sum_{t\in O(\omega,i)} \Big(\sum^{c_i}_{k=1} \Delta g (k) \mathbb{E}_{\mb{\nu_{k^+}} } \big[\mathcal{X}_k(\nu_{k^+},t) \big]\Big)\Big]\leq \Big(1+\frac{2}{c_i}\Big) \sum^{c_i}_{k=1} \Big(g \Big(\frac{k}{c_i}\Big) \,\, \mathbb{E}_{\mb{\nu_{k^+}} } \big[r\big(F_i, \mb{s}(\nukp) \big) \big]\Big).\]		
\end{repeatlemma}
\begin{proof}{Proof.} 
Fix an arbitrary unit $k_O$ of $i$ in \opt\ and let $O(\omega,k_O)$ denote the set of arrivals matched to this unit on sample path $\omega$ in \opt. It suffices to show that,
\begin{equation}\label{single}
\mathbb{E}_{\mb{\omega}}\Big[\sum_{t\in O(\omega,k_O)} \mathcal{X}_k(\nu_{k^+},t) \Big]\leq r\big(F_i,\mb{s}(\nukp)\big).
\end{equation}
Now, recall that the set $\mb{s}(\nukp)$ is defined as the set of all covered arrivals on sample path $\nukp$, i.e., arrivals $t$ for which $\mathcal{X}_k(\nukp,t)=1$. Therefore, 
\[\mathbb{E}_{\mb{\omega}}\Big[\sum_{t\in O(\omega,k_O)} \mathcal{X}_k(\nu_{k^+},t) \Big]=\mathbb{E}_{\mb{\omega}}\Big[\big| O(\omega,k_O)\cap \mb{s}(\nukp)\big| \Big].\]
Now the proof of \eqref{single} follows by a straightforward application of the two list coupling argument used to prove \eqref{interim2} (Lemma \ref{galgua}). To prove the corollary statement, we change the order of summation in the LHS,
\begin{eqnarray*} 
\eo\Big[\sum_{t|O(t)=i} \Big(\sum^{c_i}_{k=1} \Delta g (k) \mathbb{E}_{\mb{\nu_{k^+}} } \big[\mathcal{X}_k(\nu_{k^+},t) \big]\Big)\Big]
=	\eo\Big[ \sum^{c_i}_{k=1} \Delta g (k)\sum_{t|O(t)=i}  \mathbb{E}_{\mb{\nu_{k^+}} } \big[\mathcal{X}_k(\nu_{k^+},t) \big]\Big].
\end{eqnarray*}
Using $\Big(1- \frac{1}{c_i}\Big)\Delta g(k)\leq \frac{1}{c_i}g(\frac{k}{c_i})$ completes the proof.
\hfill\Halmos\end{proof}

\subsection{Proof of Perturbation Property (Proposition \ref{relatexp}) for Various Families}\label{appx:prop1}
\subsubsection{$\{d_i,+\infty\}$ Distributions.}\label{diinf}
We consider the case where the usage duration for resource $i$ takes a finite value $d_i$ w.p.\ $p_i$ and with remaining probability $1-p_i$, the unit is never returned. This generalizes both non-reusable resources as well as deterministic reusability. 
We show inequality \eqref{conjecture} for this family of usage distributions. Then using Lemma \ref{master} we have that for any set of values $\{p_i \}_{i\in I}$, \dpg\ is  
$(1-1/e)$--competitive with convergence rate $O(\frac{1}{\sqrt{c_{min}}})$. 
Focusing on a single resource $i$ with two-point distribution, we show more strongly that regardless of the usage distributions of other resources, condition \eqref{cert2} holds for $i$ with $\alpha_i=1-1/e-O(\frac{1}{\sqrt{c_{i}}})$. We can also sharpen the convergence rate to $O\big(\frac{\log c_{i}}{c_{i}}\big)$, as promised in Table \ref{summary}. This requires a subtler analysis that will take us away from the outline developed previously. Thus, we include a proof with sharper convergence rate of $O\big(\frac{\log c_{i}}{c_{i}}\big)$ separately in Appendix \ref{refined2pnt}. 

Now, to prove \eqref{conjecture}, fix an arbitrary resource $i$, unit $k$ and path $\nukp$. For convenience we treat $\epsilon_i,\kappa_i$ as parameters instead of functions. Their relationship with each other and with parameter $c_i$ will be determined towards the end when we optimize the convergence rate. First, we claim the following.
\begin{lemma}\label{sigmasep}
Any two arrivals $t_1,t_2$ in the set $\mb{\sigma}(\nukp)$ are such that $|a(t_1)-a(t_2)|\geq d_i$.
\end{lemma}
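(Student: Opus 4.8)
The plan is to argue by contradiction via an extremal (descent) argument that exploits the special feature of the two-point law $\{d_i,+\infty\}$: once unit $k$ is matched at some time $s$, it is unavailable on the \emph{entire} interval $(s,s+d_i)$, since it either returns exactly at $s+d_i$ or never returns. I would first recall the operational meaning of $\bsigma(\nukp)$ supplied by Corollary \ref{sigma}: conditioned on $\nukp$, unit $k$ is matched only at arrivals of $\bsigma(\nukp)$, it is matched with probability one at any such arrival at which it happens to be available, and (taking $\bsigma(\nukp)$ to be the minimal such set, i.e.\ arrivals at which $k$ is matched with positive probability) every $t\in\bsigma(\nukp)$ has strictly positive probability of $k$ being available when $t$ arrives. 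These three facts are all I will use.

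Then suppose, for contradiction, that $\bsigma(\nukp)$ contains two arrivals less than $d_i$ apart, and among all such pairs $t_1<t_2$ with $a(t_2)-a(t_1)<d_i$ pick the one whose earlier element $t_1$ is smallest. Since $t_2\in\bsigma(\nukp)$, fix a realization of $\nu_k$ on which $k$ is available at $t_2$. The crux is to trace back the last time $s<a(t_2)$ at which $k$ was matched on this realization. First, $k$ cannot be available at $t_1$ here: if it were, then by Corollary \ref{sigma} it would be matched at $t_1$, hence in use throughout $(a(t_1),a(t_1)+d_i)\ni a(t_2)$, contradicting availability at $t_2$. So $k$ is in use at $t_1$, which forces a previous match, and the duration of that match cannot be $+\infty$ (else $k$ would be unavailable at $t_2$ as well), so it equals $d_i$.

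Writing $s$ for the last match before $t_2$, I would check that $s\in\bsigma(\nukp)$ (matches occur only at points of $\bsigma(\nukp)$), that $s<a(t_1)$ (if $s\geq a(t_1)$ then $s+d_i\geq a(t_1)+d_i>a(t_2)$ and $k$ would still be in use at $t_2$), and that $s+d_i>a(t_1)$ (otherwise $k$ would have returned by $t_1$ and, being available and top priority there, would be re-matched at $t_1$, contradicting that $s$ is the last match before $t_2$). The last two inequalities give $a(t_1)-s<d_i$, so $(s,t_1)$ is again a pair of elements of $\bsigma(\nukp)$ less than $d_i$ apart but with strictly smaller earlier element $s<a(t_1)$, contradicting the minimality of $t_1$. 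This establishes the claim for any two arrivals, since any two elements of $\bsigma(\nukp)$ are at least as far apart as some consecutive pair.

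I expect the main obstacle to be the bookkeeping in the backward-tracing step, namely correctly ruling out the $+\infty$ duration, pinning down that the last match $s$ lies strictly before $t_1$, and excluding a spurious re-match of $k$ in the window $(s+d_i,a(t_2))$, together with ensuring that the notion of $\bsigma(\nukp)$ one works with guarantees the positive-availability property that seeds the chosen realization. Everything else (the interval-occupancy fact for the two-point law and the finite strict descent) is routine once that structure is in hand.
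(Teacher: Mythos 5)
Your proof is correct, and it isolates the same two structural facts the paper uses: under the two-point law a match at time $s$ makes $k$ unavailable on all of $(s,s+d_i)$ or forever, and by Corollary \ref{sigma} availability at an arrival of $\mb{\sigma}(\nukp)$ forces a match there. The paper's own proof is organized differently, as a pruning iteration rather than a descent: it picks the \emph{earliest contiguous} violating pair, argues that on \emph{every} realization $k$ is unavailable at the later arrival (if $k$ is free at the earlier one it gets matched and blocks the gap; if $k$ is in use there, the covering match lies at least $d_i$ in the past because all earlier contiguous gaps are already $\geq d_i$, so its duration must be $+\infty$), then deletes that zero-probability arrival ``w.l.o.g.''\ and repeats. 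Your minimality convention --- restricting $\mb{\sigma}(\nukp)$ to arrivals where $k$ is matched with positive probability --- is exactly the same normalization, justified by the same observation that removing never-available points leaves the process unchanged (in the spirit of Lemma \ref{zeroset}); you were right to flag that the lemma as literally stated needs it, since an unpruned $\mb{\sigma}(\nukp)$ may contain never-available arrivals. Given that normalization, your single-witness-realization descent (trace the last match $s$ before $t_2$, verify $s<a(t_1)<s+d_i$, obtain a violating pair with strictly earlier first element) is a valid contrapositive of the paper's step, and it handles non-contiguous pairs directly --- which incidentally makes your closing remark about consecutive pairs redundant; finiteness of the arrival set is what terminates the descent and deserves a word. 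What each formulation buys: the paper's version concretely exhibits which arrivals are discarded, which is the form actually needed downstream (one simply redefines the set), while yours is a self-contained proof that the pruned set satisfies the separation; the one step to make fully explicit in your write-up is that the duration drawn at $s$ is finite (needed for the return in the step $s+d_i>a(t_1)$), which, as you note, follows from $k$ being available at $t_2$ with no intervening match.
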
  
\begin{proof}{Proof.}
If we are given an ordered set $\mb{\sigma}(\nukp)$ where this is not true, then consider the earliest pair of contiguous arrivals in $\mb{\sigma}(\nukp)$ where this is false and note that the probability of $k$ being matched to the later of the two arrivals is 0. So we can remove this arrival from the set w.l.o.g.\ Repeating this gives a set with the desired property.
\Halmos\end{proof}
The next lemma gives a superset on the set $\mb{s}(\nukp)$. For a positive value $\epsilon_i\in(0,1]$, let $l_0\geq 1$ denote the largest integer such that $p^{l_0-1}_i\geq \epsilon_i$. The value of $\epsilon_i$, which is also the probability lower bound used in defining the covering function, will be suitably chosen later in order to optimize the convergence rate. If $\mb{\sigma}(\nukp)$ contains less than $l_0$ elements, let $\sigma_{l_0}=T$.  
\begin{lemma}
Define, $\mb{S}(\nukp)= \{t\mid \exists \sigma_j \text{ s.t. }  a(t)-a(\sigma_j)\in[0,d_i) \text{ or } t\geq \sigma_{l_0} \}. $ Then $\mb{s}(\nukp)\subseteq\mb{S}(\nukp)$.
\end{lemma}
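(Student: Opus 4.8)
The plan is to prove the contrapositive: if an arrival $t$ does not lie in $\mb{S}(\nukp)$, then $t$ is \emph{uncovered} with respect to unit $k$, i.e.\ $\mathbb{P}_{\mb{\nu_k}}[\onee(k,t)=1\mid \nukp]\geq \epsilon_i$, which by definition of the covering function means $t\notin \mb{s}(\nukp)$. By Corollary \ref{sigma}, conditioned on $\nukp$ the allocation of $k$ in \dpg\ is described exactly by the $(F_i,\mb{\sigma}(\nukp))$ random process, so the availability probability above is simply the probability that the single unit is free at time $a(t)$ in that process. The whole claim thus reduces to a direct computation of the free-probability of the two-point process at an arbitrary time, and this is where Lemma \ref{sigmasep} does the heavy lifting.

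First I would compute the free-probability in the $(F_i,\mb{\sigma}(\nukp))$ process. Because $F_i$ is supported on $\{d_i,+\infty\}$, each match either returns after exactly $d_i$ (w.p.\ $p_i$) or never returns (w.p.\ $1-p_i$), independently across matches. Lemma \ref{sigmasep} guarantees $a(\sigma_{j+1})-a(\sigma_j)\geq d_i$ for consecutive arrivals in $\mb{\sigma}(\nukp)$, so any match that returns does so no later than the next arrival. Consequently the unit is free at $\sigma_j$ (just before the $j$-th match) if and only if all of the first $j-1$ matches returned, an independent chain of Bernoulli$(p_i)$ events; hence the unit is free at $\sigma_j$ with probability $p_i^{\,j-1}$, and once a match fails to return the unit stays in use forever. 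Letting $\sigma_j$ denote the latest arrival of $\mb{\sigma}(\nukp)$ with $a(\sigma_j)\leq a(t)$ (and noting the unit is free w.p.\ $1$ if no such $\sigma_j$ exists), the same reasoning shows that the unit is free at time $a(t)$ precisely when the first $j$ matches all returned and the $j$-th return has already occurred by $a(t)$, i.e.\ with probability $p_i^{\,j}\,\mathbf{1}\big[a(t)-a(\sigma_j)\geq d_i\big]$; here we use that $\sigma_j$ being the latest preceding arrival leaves no intervening match in $(a(\sigma_j),a(t)]$.

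Given this formula, the conclusion is immediate. Suppose $t\notin \mb{S}(\nukp)$; then $a(t)<a(\sigma_{l_0})$ and there is no $\sigma_j$ with $a(t)-a(\sigma_j)\in[0,d_i)$. If $a(t)<a(\sigma_1)$ the free-probability is $1\geq\epsilon_i$. Otherwise let $\sigma_j$ be the latest preceding arrival; since $a(\sigma_j)\leq a(t)$ and $a(t)-a(\sigma_j)\notin[0,d_i)$ we must have $a(t)-a(\sigma_j)\geq d_i$, so the indicator equals $1$, and since $a(\sigma_j)\leq a(t)<a(\sigma_{l_0})$ we have $j\leq l_0-1$. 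The free-probability is therefore $p_i^{\,j}\geq p_i^{\,l_0-1}\geq \epsilon_i$ by the defining property of $l_0$. In either case $t$ is uncovered, so $t\notin\mb{s}(\nukp)$, giving the desired inclusion $\mb{s}(\nukp)\subseteq\mb{S}(\nukp)$.

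I expect the free-probability computation to be the only real obstacle, and it is entirely enabled by the separation property of Lemma \ref{sigmasep}: without the guarantee that returns happen before the next arrival, a pending (returning) match could overlap the next arrival and the availability at $\sigma_j$ would no longer factor into the independent product $p_i^{\,j-1}$, breaking the clean geometric bound that drives the argument. The remaining bookkeeping—the boundary case $a(t)<a(\sigma_1)$ and the convention $\sigma_{l_0}=T$ when $\mb{\sigma}(\nukp)$ has fewer than $l_0$ elements—is routine.
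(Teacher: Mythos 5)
Your proof is correct and follows essentially the same route as the paper: arguing the contrapositive that any $t\notin \mb{S}(\nukp)$ has availability probability at least $p_i^{l_0-1}\geq \epsilon_i$, using the $d_i$-separation of $\mb{\sigma}(\nukp)$ (Lemma \ref{sigmasep}) to conclude that when the early usage durations are finite, $k$ is in use only during $\cup_j\big(a(\sigma_j),a(\sigma_j)+d_i\big)$, which avoids $a(t)$. The only difference is cosmetic: you compute the exact free-probability $p_i^{\,j}$ at the latest preceding $\sigma_j$, whereas the paper simply lower bounds availability by the single event that the first $l_0-1$ durations are all finite.
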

\begin{proof}{Proof.}
For any arrival $t\not\in \mb{s}(\nukp)$, clearly $t< \sigma_{l_0}$. Now, if $k$ has usage duration $d_i$ for at least the first $l_0-1$ uses, then we claim that $k$ is available at $t$. This would imply that that the probability of $k$ being available at $t$ is at least $p_i^{l_0-1},$ as desired. To see the claim, note that since $k$ can be matched at most $l_0-1$ times prior to any $t<\sigma_{l_0}$, and a finite duration for the first $l_0-1$ uses of $k$ implies that $k$ is in use precisely during the intervals $\cup_{j=1}^{l_0-1}(a(\sigma_j),a(\sigma_j)+d_i)$, which do not contain $t$ by assumption.	Therefore, $k$ is available at $t$ when its first $l_0-1$ durations are finite. 
\hfill\Halmos\end{proof}
\noindent  It remains to show that,
$r\big(F_i, \mb{S}(\nukp)\big) \leq \big(1+\kappa_i\big) r\big(F_i,\mb{\sigma}(\nukp)\big)$. 
\begin{proof}{Proof of \eqref{conjecture}.}
The proof follows by considering the simple coupling where we sample the same number $l_f$ of finite usage durations, for both random processes in question. When the value $l_f\leq l_0-1$, the number of matches on set $\mb{S}(\nukp)$ equals the number of matches on $\mb{\sigma}(\nukp)$. For $l_f>l_0-1$ we have a higher reward on the LHS, however we also have the following,
\[r\big(F_i,\mb{S}(\nukp)\big)\leq r\big(F_i,\mb{\sigma}(\nukp)\big) + p_i^{l_0} r\big(F_i,\mb{S}(\nukp)\big).  \]
Applying Lemma \ref{monotone} then gives us \eqref{conjecture}, with $\kappa_i=\frac{\epsilon_i}{1-\epsilon_i}$.
\hfill\Halmos\end{proof}

\noindent  \textbf{Convergence rate}: The rate $O\big(\frac{1}{\epsilon_ic_i}+\frac{\epsilon_i}{1-\epsilon_i}\big)$ with $\epsilon_i\to0$ for large $c_i$, is optimal for $\epsilon_i=\frac{1}{\sqrt{c_i}}$. A sharper rate of $O\big(\frac{\log c_{i}}{c_{i}}\big)$ is shown in Appendix \ref{refined2pnt}.

\subsubsection{IFR Distributions.}\label{sec:ifr}

A major advantage in the $\{d_i,+\infty\}$ case was that if a unit returned after usage, the duration of usage was always $d_i$. In other words, we had the additional structure that a returning unit of $i$ in \dpg\ and in \opt\ was used for the exact same duration and the main question was whether the unit returned at all. More generally, it is not simply a matter of an item returning after use but also the duration of usage. In particular, the probability that an item is available for (potential) $l$-th use is not stated as simply as $p^{l-1}_i$. In this section, we address the new issues that arise for 
continuous IFR distributions. 

For convenience we treat $\epsilon_i,\kappa_i$ as parameters instead of functions. Their relationship with each other and with parameter $c_i$ will be determined towards the end when we optimize the convergence rate. 
We use $f_i$ to refer to the p.d.f.\ and $F_i$ to refer to the c.d.f..\ Recall that the function $L_i(\epsilon)$ is defined as the maximum possible value of the ratio $\big(F_i(x+F_i^{-1}(\epsilon))-F_i(x)\big)/\epsilon$. 
We will show that when the usage distribution of $i$ is IFR and bounded in the following sense, 
\[L_i(\epsilon_i)\epsilon_i\to 0 \text{ as }\epsilon_i\to 0,\]  then \dpg\ is asymptotically $(1-1/e)$--competitive. More specifically,  inequality \eqref{cert2} is satisfied with, 
\[\alpha_i=(1-1/e)-O\bigg(\frac{1}{\epsilon_i c_i}+L_i(\epsilon_i)\epsilon_i\log \Big(\frac{1}{L_i(\epsilon_i)\epsilon_i}\Big)\bigg).\] 
For distributions where $L_i(\epsilon_i)\epsilon_i=O(\epsilon_i^{\eta})$ for some $\eta>0$, the optimal rate is thus, $\tilde{O}\Big( c^{-\frac{\eta}{1+\eta}}_{i}\Big)$. In Appendix \ref{ifrexamples}, we evaluate this parameter $\eta$ for some commonly known families of IFR distributions.
Observe that for IFR distributions that have non-increasing densities, such as exponential, uniform etc., $L_i(\epsilon_i)=1$ and we have a resulting convergence rate of $O(\log c_i/\sqrt{c_i})$. In fact, for exponential distributions we will show a stronger convergence rate of $O(1/\sqrt{c_i})$. 

To prove these claims, let us fix an arbitrary resource $i$, unit $k$, path $\nukp$. 
This also fixes the set $\mb{\sigma}(\nukp)$ of arrivals that $k$ could be matched to in \dpg. Since we have fixed $i$, for simplicity we use the abbreviations,
\[\epsilon := \epsilon_i, \quad L := L_i(\epsilon_i), \quad F:=F_i  \]
We assume that the density $f_i$ is continuous and consequently, let $\delta_0$ and $\delta_L$ be such that, 
\[F(\delta_0)=\epsilon \quad \text{ and }\quad F(\delta_L)= L\epsilon, \]
where note that, $ L\epsilon\leq 1$ by definition of $L$. Now, we claim that for a $\mathcal{X}_k(\nukp,\epsilon,t)$ covering, the following set of arrivals is a superset of $\mb{s}(\nukp)$,
\[\mb{S}(\nukp)=\{t \mid  a(t)\in[a(\sigma_j), a(\sigma_j)+\delta_0) \text{ for some } \sigma_j\in\mb{\sigma}(\nukp) \}. \]
\begin{lemma}$\mb{s}(\nukp)\subseteq \mb{S}(\nukp).$
\end{lemma}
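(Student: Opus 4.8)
The plan is to prove the contrapositive: I will show that every arrival $t \notin \mb{S}(\nukp)$ is \emph{uncovered}, i.e.\ $\mathbb{P}_{\mb{\nu_k}}[\onee(k,t)=1 \mid \nukp] \geq \epsilon$, which by the definition of the covering function means $t \notin \mb{s}(\nukp)$. The first step is to translate the hypothesis $t \notin \mb{S}(\nukp)$ into a clean gap condition. By the definition of $\mb{S}(\nukp)$ there is no $\sigma_j \in \mb{\sigma}(\nukp)$ with $a(t) \in [a(\sigma_j), a(\sigma_j)+\delta_0)$; equivalently, no arrival of $\mb{\sigma}(\nukp)$ lies in the window $(a(t)-\delta_0, a(t)]$. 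Hence, letting $\sigma_m$ denote the most recent arrival of $\mb{\sigma}(\nukp)$ with $a(\sigma_m) \leq a(t)$ (if none exists, $k$ is free at $t$ with probability $1 \geq \epsilon$ and we are done), we get $a(t)-a(\sigma_m) \geq \delta_0$, so $F(a(t)-a(\sigma_m)) \geq F(\delta_0) = \epsilon$.

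By Corollary \ref{sigma}, conditioned on $\nukp$ the allocation of $k$ in \dpg\ is governed by the $(F_i,\mb{\sigma}(\nukp))$ random process, so it suffices to lower bound the probability that the single unit is free at time $a(t)$ in this process. I would condition on the state of the unit immediately before $\sigma_m$ and split into two cases. If the unit is free just before $\sigma_m$, then it is matched at $\sigma_m$ (probability $1$ in the $(F_i,\mb{\sigma})$ process) for a fresh duration $d \sim F_i$; since there are no arrivals of $\mb{\sigma}(\nukp)$ in $(\sigma_m, t]$, the unit is free at $a(t)$ exactly when $d \leq a(t)-a(\sigma_m)$, an event of probability $F(a(t)-a(\sigma_m)) \geq \epsilon$. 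If instead the unit is in use just before $\sigma_m$, it is not rematched at $\sigma_m$, and it is free at $a(t)$ iff the ongoing usage completes within $a(t)-a(\sigma_m)$ of $\sigma_m$; here the age of the ongoing usage is some $a \geq 0$, and conditioning on the unit being in use at $\sigma_m$ amounts to conditioning the duration to exceed $a$.

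The key step, and the main obstacle, is this in-use case, where I will invoke the IFR (new-better-than-used) property: for an IFR distribution $\mathbb{P}[X - a \leq y \mid X > a] \geq \mathbb{P}[X \leq y] = F(y)$ for every age $a \geq 0$, since $\bar{F}(a+y)/\bar{F}(a)$ is nonincreasing in $a$. Taking $y = a(t)-a(\sigma_m) \geq \delta_0$ shows the residual usage finishes in time with probability at least $F(\delta_0) = \epsilon$, uniformly over the random age $a$ and over which earlier arrival started the ongoing usage. Thus in both cases the conditional probability that $k$ is free at $a(t)$ is at least $\epsilon$, and averaging over the state at $\sigma_m$ gives $\mathbb{P}_{\mb{\nu_k}}[\onee(k,t)=1 \mid \nukp] \geq \epsilon$, so $t$ is uncovered and $t \notin \mb{s}(\nukp)$. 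The point requiring care is verifying that the event ``in use at $\sigma_m$ with ongoing usage started at $\sigma_\ell$'' coincides exactly with $\{d > a(\sigma_m)-a(\sigma_\ell)\}$ on the conditioned sample space, so that the residual-life bound applies verbatim; this holds because between consecutive arrivals of $\mb{\sigma}(\nukp)$ the unit undergoes no state change other than a possible return.
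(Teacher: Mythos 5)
Your proof is correct and follows essentially the same route as the paper's: the paper likewise fixes the closest arrival $\sigma_j\in\mb{\sigma}(\nukp)$ preceding $t$ (at distance at least $\delta_0$) and invokes the IFR property to conclude that $k$ becomes free within $\delta_0$ of $a(\sigma_j)$ with probability at least $F(\delta_0)=\epsilon$, so $t$ is uncovered. Your write-up merely makes explicit the case split (fresh match at $\sigma_m$ versus residual life of an ongoing usage, handled via the new-better-than-used consequence of IFR) and the conditioning subtlety that the paper's two-sentence proof leaves implicit.
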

\begin{proof}{Proof.}
Consider an arrival $t\notin \mb{S}(\nukp)$. The closest arrival preceding $t$ in $\mb{\sigma}(\nukp)$, is at least $\delta_0$ time before $a(t)$. Using the IFR property, we have that the probability that $k$ switches from being in-use to free between $a(\sigma_j)$ and $a(\sigma_j)+\delta_0$ is at least $F(\delta_0)=\epsilon$.
\hfill\Halmos\end{proof}
\noindent Given this, in order to show \eqref{conjecture}, we now aim to to show the following lemma and then apply the monotonicity Lemma \ref{monotone}.
\begin{lemma}\label{ifrkey}
Given an IFR distribution $F$ and two sets or arrivals $\mb{S}(\nukp)$ and $\mb{\sigma}(\nukp)$, with the property that for any arrival $t$ in $\mb{S}(\nukp)$, there exists an arrival in $\mb{\sigma}(\nukp)$ that precedes $t$ by at most $\delta_0$ time. We have,	\[r\big(F,\mb{S}(\nukp) \big)\leq \Bigg(1+ O\bigg(F(\delta_L)\log\Big(\frac{1}{F(\delta_L)}\Big)\bigg)\Bigg)\, r\big(F,\mb{\sigma}(\nukp)\big).\]
\end{lemma}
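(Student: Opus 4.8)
The plan is to upper bound $r(F,\mb{S}(\nukp))$ by $r(F,\mb{\sigma}(\nukp))$ plus a controlled count of the ``extra'' matches created by the arrivals in $\mb{S}(\nukp)\setminus\mb{\sigma}(\nukp)$, and to show these contribute only a multiplicative $q\log(1/q)$ overhead, where $q:=F(\delta_L)=L\epsilon$. Writing expected reward as a sum of availability probabilities via Lemma \ref{equiv}, so that $r(F,\mb{S})=\sum_{t\in\mb{S}}\mathbb{P}_S(\text{free at }t)$, I would first establish an availability-monotonicity statement: inserting arrivals can only decrease the probability that the unit is free at any fixed later arrival. This follows from the same list-coupling used to prove Lemma \ref{monotone}, since more arrivals mean (weakly) more time spent in-use and hence weakly smaller availability downstream. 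Consequently $\sum_{\sigma_j}\mathbb{P}_S(\text{free at }\sigma_j)\le\sum_{\sigma_j}\mathbb{P}_\sigma(\text{free at }\sigma_j)=r(F,\mb{\sigma})$, and the whole problem reduces to bounding the expected number of extra matches $\sum_{t\in\mb{S}\setminus\mb{\sigma}}\mathbb{P}_S(\text{free at }t)$.

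Next I would exploit the block structure in the hypothesis: every extra arrival lies in a window $[a(\sigma_j),a(\sigma_j)+\delta_0)$ attached to some $\sigma_j\in\mb{\sigma}(\nukp)$. Within a single window I would split on the state of the unit at $\sigma_j$. If the unit is free at $\sigma_j$ it is matched there (a match already counted in $\mb{\sigma}$), and any further match inside the window requires a fresh usage to complete within a sub-window of length $<\delta_0$, an event of probability at most $F(\delta_0)=\epsilon\le q$; chaining these gives a geometric tail, so such blocks contribute $O(\epsilon)$ extra matches each, hence $O(\epsilon)\,r(F,\mb{\sigma})$ after summing (again using availability monotonicity to bound the number of ``free at $\sigma_j$'' blocks by $r(F,\mb{\sigma})$). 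The genuinely new case is when the unit is in-use at $\sigma_j$ and its in-progress usage completes inside the window. This is exactly where \eqref{defineL} enters: for any elapsed time $x$ the probability that a usage lands in a specific window of length $\delta_0=F^{-1}(\epsilon)$ is at most $L\epsilon=F(\delta_L)=q$, uniformly in $x$. This uniform bound replaces the clean ``return-or-not'' dichotomy available in the two-point analysis of Appendix \ref{diinf}.

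To finish I would charge each in-use extra match to the match whose usage is in progress: a match at time $\tau$ returns at $\tau+d$ and can trigger an extra match only if $\tau+d$ falls within $\delta_0$ of a preceding $\sigma$-arrival. Summing the per-usage probabilities with $q$ as the uniform per-window bound, I would bound the total expected number of triggering returns by a geometric-type series; truncating its contribution at roughly $1/q$ returns and bounding the tail produces a harmonic sum $\sum_{\ell\lesssim 1/q} q/\ell\approx q\log(1/q)$, which after the $(1+O(\epsilon))$ within-window factor gives the stated overhead. Applying the set-monotonicity Lemma \ref{monotone} to pass from the superset $\mb{S}(\nukp)$ back to the true covered set then yields \eqref{conjecture}.

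The hard part will be this last step: extracting the exact $q\log(1/q)$ factor rather than a crude $O(q)\cdot(\#\text{windows})$ bound. The subtlety is that a single return can lie in the windows of several nearby $\sigma$-arrivals yet create at most one extra match, so a naive union bound overcounts; the correct accounting must charge each return to the \emph{nearest} preceding $\sigma$-arrival and control how the residual-usage state carries across blocks. This is precisely where the IFR monotonicity of the hazard rate and the truncation at $1/q$ (which supplies the logarithm) must be combined carefully.
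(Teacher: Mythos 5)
There is a genuine gap, and it occurs at the very first step of your reduction. Your plan hinges on a pointwise ``availability-monotonicity'' claim: that inserting arrivals can only decrease the probability the unit is free at any fixed later arrival, so that $\sum_{\sigma_j}\mathbb{P}_{S}(\text{free at }\sigma_j)\le r\big(F,\bsigma(\nukp)\big)$. This claim is false, and the list-coupling behind Lemma \ref{monotone} does not deliver it: that coupling only compares the \emph{cumulative count} of matches (and in the direction $r(F,\bsigma)\le r(F,\mb{S})$, which is the wrong one for you). Pointwise, an inserted arrival desynchronizes the consumption of the sample list: once an extra arrival in a window $[a(\sigma_j),a(\sigma_j)+\delta_0)$ is matched, the two processes hold \emph{different} duration samples at every subsequent $\sigma$-arrival, so with positive probability the $\mb{S}$-process is free at a later $\sigma_j$ while the $\bsigma$-process is busy (e.g.\ the inserted match consumes a long sample that the $\bsigma$-process would otherwise have started at $\sigma_{j+1}$; uniform durations, which are IFR, already exhibit this). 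So the split ``matches at $\sigma$-points are at most $r(F,\bsigma)$, plus extra matches'' has no valid justification, and everything downstream inherits the problem. Relatedly, your second gap is the one you flag yourself: the harmonic sum $\sum_{\ell\lesssim 1/q} q/\ell$ is asserted, not derived. With the uniform per-window bound $q=L\epsilon$ per return, summing over the first $1/q$ returns gives $O(1)$, not $O(q\log(1/q))$; nothing in your charging scheme makes the $\ell$-th return's contribution decay like $1/\ell$.

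The paper avoids both problems by never comparing the processes arrival-by-arrival. It defines a truncated distribution $F^m$ (density $f/(1-L\epsilon)$ on $[0,\delta_1]$ with $F(\delta_1)=1-L\epsilon$) together with a map $\pi$ satisfying $F(t)-F(\pi(t))=L\epsilon$, hence $t-\pi(t)\ge\delta_0$ by the definition of $L$ in \eqref{defineL}. A two-pointer coupling on a single sample list (the $\bsigma$-process consumes only ``large'' samples, each shortened by $\pi$) is then shown, by contradiction, to never skip a large sample: the $\delta_0$ head-start per usage exactly absorbs the $\delta_0$ window slack, which is precisely how the residual-state drift across blocks — the obstacle you correctly identify as the hard part — is controlled globally rather than per window. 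This yields $r\big(F,\mb{S}(\nukp)\big)\le\big(1+2F(\delta_L)\big)\,r\big(F^m,\bsigma(\nukp)\big)$. The logarithm then comes from an entirely different mechanism than your sketch: a renewal (Wald-type) comparison of $F^m$ to $F$, whose key estimate is the IFR quantile bound $\mu\ge\Omega\big(\delta_1/\log(1/L\epsilon)\big)$ (proved via the dyadic decomposition $1-F(\sum_{j\le s}x_j)=2^{-s}$ with $x_j\ge x_{j+1}$), showing the expected number of $F^m$-samples needed to cover one $F$-sample is $1+O\big(q\log(1/q)\big)$. Without a substitute for both the global drift control and this mean/quantile argument, your outline does not close.
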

\begin{proof}{Proof.}
We will show this in two main steps. First, we show for a certain modified distribution $F^m$ ($F^m_i$ to be precise),
\begin{eqnarray}
r\big(F,\mb{S}(\nukp) \big)\leq (1+2F(\delta_L)) r\big(F^m,\mb{\sigma}(\nukp)\big). \label{moddist}
\end{eqnarray}
In the second step, we will relate $r(F^m,A)$ to $r(F,A)$ for any set $A$. The modified distribution $F^m$ needs to satisfy a property that we describe next. The exact definition of the distribution will be introduced later in the proof.
\begin{proposition}\label{prop:moddist}
Given a c.d.f.\ $F$, we require a modified distribution $F^m$ such that, there exists a coupling that generates samples distributed according to $F^m$ by using i.i.d.\ samples from $F$, with the property that for every sample of $F$ that has value $d\geq \delta_L$, the coupled sample of $F^m$ has value at most $d-\delta_0$.
\end{proposition}
Suppose a distribution $F^m$ that satisfies the above property exists and let $\pi:[\delta_L,+\infty) \to \mathbb{R}^+$ be the mapping induced by the coupling, from samples of $F$ to those of $F^m$. Then, to show \eqref{moddist}, we use the following coupling between $(F,\mathcal{S}(\nukp))$ and $(F^m,\mb{\sigma}(\nukp))$ random processes. For the rest of the proof, let us refer to these as processes as $R$ and $R^m$ resp..\

Consider a long list of i.i.d.\ samples from distribution $F$. 
Starting with a pointer $P$ at the first sample in the list, every time we have a transition to in-use in process $R$, we draw the sample $P$ is pointing to and move $P$ to the next sample on the list. 
Thus, pointer $P$ draws from the list in order without skipping over samples.
To couple the two processes, we introduce another pointer on the list, $P^m$, for process $R^m$. 
Let us call any sample with value at least $\delta_L$, a \emph{large} sample. Other samples are called \emph{small} samples. Pointer $P^m$ starts at the first large sample in the list. Each time $R^m$ needs a new sample, 
we draw the sample pointed to by $P^m$ and pass it through the function $\pi(\cdot)$. The output is then passed to process $R^m$ and $P^m$ moves down the list to the next large sample. If $P$ and $P^m$ point to the sample sample and $P$ moves down the list, $P^m$ also moves down the list to the next large sample. At every arrival, we first provide a sample, if required, to $R^m$ and update $P^m$ before moving to $R$. Thus, pointer $P^m$ never lags behind $P$. Compared to the coupling used for Lemma \ref{transport}, there are two important differences. First, here we do rejection sampling by ignoring small values, as $P^m$ always skips to the next large sample. Second, we pass the value pointed to by $P^m$ through the function $\pi(\cdot)$ before giving the sample to $R^m$. 

We claim that $P^m$ never skips a large sample 
in the list. We shall prove this by contradiction. Consider the first large sample in the list which is skipped by $P^m$, let its position in the list be $q$. Clearly, $q$ is a sample that is used by $R$ otherwise $P^m$ would never skip it. Further, $q$ cannot be the first large sample in the list, as the earliest arrival in $\mb{\sigma}(\nukp)$ is also the first arrival in $\mb{S}(\nukp)$. Therefore, $R^m$ uses the first large sample at least as early as $R$ does. More generally, given that $P^m$ never lags $P$, we have that all large samples preceding $q$ were first used in $R^m$. Consider the arrival $\sigma_j\in\mb{\sigma}(\nukp)$, where $R^m$ uses the large sample immediately preceding $q$ in the list. Due to the fact that function $\pi(\cdot)$ reduces the sample value passed to $R^m$ by at least $\delta_0$, we claim that $R^m$ needs the next large sample, i.e., $q$, before $R$ does. To see the claim, let $s_j$ be the arrival in $\mb{S}(\nukp)$ where $R$ uses the large sample immediately preceding $q$. Observe that $\sigma_j$ occurs at least as early as $s_j$ due to $P^m$ leading $P$ and by the assumption on $q$ being the first sample skipped by $P^m$. Now, due to the decrease in sample value resulting from applying $\pi$, we have that in $R^m$, $k$ returns to free state after becoming matched to $\sigma_j$, at least $\delta_0$ time before $k$ returns to free state after getting matched to $s_j$ in $R$. Letting $t_{\sigma}$ and $t_s$ denote these return times resp., we have, $t_{\sigma}\leq t_s-\delta_0$. Suppose that the first arrival in $\mb{S}(\nukp)$ after time $t_s$ occurs at time $\tau_s$ and the first arrival in $\mb{\sigma}(\nukp)$ after time $t_{\sigma}$ occurs at time $\tau_{\sigma}$. To establish the claim that $R^m$ uses sample pointed by $q$ before $R$ does, it suffices to argue that $\tau_{\sigma}\leq \tau_s$. From the definition of set $\mb{S}(\nukp)$, we have that either $\tau_s \in \mb{\sigma}(\nukp)$ (in which case, we are done) or $\tau_s$ is preceded by an arrival in $\mb{\sigma}(\nukp)$ that occurs in the interval $[\tau_s-\delta_0,\tau_s)$. Therefore, 
$t_{\sigma}\leq t_s-\delta_0\leq \tau_s-\delta_0,$ and we have that $\tau_{\sigma}\leq \tau_s$. 
So $R^m$ requires sample $q$ before $R$, 
which contradicts that $q$ is the first large sample skipped by $P^m$. Hence, $P^m$ does not skip any large samples. 
Since $P^m$ never lags $P$, we also have that the total number of large samples passed to $R$ is upper bounded by the number of samples passed to $R^m$. Let $r^{lg}(F,\mb{S}(\nukp))$ denote the expected number of large sample transitions from available to in-use in process $R$. So far we have shown that,
\[r^{lg}(F,\mb{S}(\nukp))\leq r(F^m,\mb{\sigma}(\nukp)). \]

\noindent Next, we claim that, 
\[r(F,\mb{S}(\nukp))\leq (1+2F(\delta_L)))r^{lg}(F,\mb{S}(\nukp)).\]
Consider an arbitrary transition from available to in use in $R$. The probability that duration of this transition is large is $1-F(\delta_L)$. Thus, the expected contribution from this transition to $r^{lg}(F,\mb{S}(\nukp))$ is $1-F(\delta_L)$. Summing over all transitions, we have the desired. This completes the proof of \eqref{moddist}. To finish the main proof it remains to define a modified distribution $F^m$ such that Proposition \ref{prop:moddist} holds and compare the expected rewards of random processes
$(F^m,\mb{\sigma}(\nukp))$ with  $(F,\mb{\sigma}(\nukp))$. This is the focus of the next lemma.
\hfill\Halmos\end{proof}
\begin{lemma}
Given IFR distribution $F$ and value $\epsilon\in(0,1]$ such that $\epsilon L(\epsilon)\leq 1/2$, there exists a modified distribution $F^m$ that satisfies Proposition \ref{prop:moddist}, such that for any given set of arrivals $A$,
\begin{equation}
r(F^m,A)\leq \Bigg(1+ O\bigg(F(\delta_L)\log\Big(\frac{1}{F(\delta_L)}\Big)\bigg)\Bigg) r(F,A).\label{modtonormal}
\end{equation}
\end{lemma}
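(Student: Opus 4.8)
The plan is to realize $F^m$ as the left shift, by $\delta_0$, of $F$ conditioned on large values. Concretely, take $\pi(d)=d-\delta_0$ for $d\ge \delta_L$ and let $F^m$ be the law of $\pi(D)$ with $D$ drawn from $F$ conditioned on $\{D\ge\delta_L\}$, so that $\overline{F^m}(y)=\overline F(y+\delta_0)/\overline F(\delta_L)$ for $y\ge \delta_L-\delta_0$ and $\overline{F^m}(y)=1$ otherwise. Since $L\ge 1$ forces $\delta_L\ge\delta_0$, every $d\ge\delta_L$ satisfies $\pi(d)=d-\delta_0\in[0,d-\delta_0]$, so Proposition~\ref{prop:moddist} holds with the coupling that rejects each $F$-sample below $\delta_L$ and shifts each surviving sample down by $\delta_0$. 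Choosing $\pi$ to be the largest admissible map (exactly $d-\delta_0$) makes $F^m$ as stochastically large as the constraint permits, which is precisely what we want, since longer durations produce fewer matches and hence the smallest possible $r(F^m,A)$; the hypothesis $\epsilon L(\epsilon)\le 1/2$ guarantees $\overline F(\delta_L)=1-F(\delta_L)\ge 1/2$, keeping the renormalizing constant in $F^m$ bounded.

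The core estimate is a single consequence of the IFR assumption together with the definitions $F(\delta_0)=\epsilon$ and $F(\delta_L)=L\epsilon$: writing $G:=F(\delta_L)=L\epsilon$, the definition of $L$ with $F^{-1}(\epsilon)=\delta_0$ gives $F(x+\delta_0)-F(x)\le L\epsilon=G$ for every $x\ge 0$, i.e.\ a unit that is in use returns during any window of width $\delta_0$ with probability at most $G$. I would then couple the two random processes $(F^m,A)$ and $(F,A)$ on one shared list of i.i.d.\ $F$-samples, exactly as in the proof of \eqref{moddist}: the $F$ process consumes every sample in order, while the $F^m$ process consumes only the samples of value at least $\delta_L$, each shifted down by $\delta_0$. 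Against this coupling, the only mechanism by which $F^m$ overtakes $F$ is that a busy period of $F^m$ can end inside the length-$\delta_0$ interval by which it was shortened, letting $F^m$ capture an arrival of $A$ that $F$ is still busy through; by the window estimate each such early return occurs with probability at most $G$, while the short durations of $F$ (those below $\delta_L$, of total probability $G$) furnish the compensating early returns on the $F$ side and are folded in through a factor $1+O(G)$ via the monotonicity Lemma~\ref{monotone}.

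The delicate point, and the origin of the logarithmic factor, is that an early return of $F^m$ spawns a fresh busy period which can again end early, so the excess matches propagate along a chain and must be summed rather than bounded one at a time. I would organize this by generations, charging each excess match of $F^m$ to the $F$-match (or earlier excess match) whose shortened busy period created it. The window estimate bounds the per-generation excess by a factor $O(G)$ relative to the base reward $r(F,A)$, and the number of generations that contribute non-negligibly is only $O(\log(1/G))$: after that many successive early returns the probability of sustaining the chain (and the residual availability it would require) has decayed below the relevant $1/\mathrm{poly}$ scale, so deeper generations are absorbed into the error term. Summing the $O(G)$ contributions over these $O(\log(1/G))$ generations yields $r(F^m,A)\le \big(1+O(G\log(1/G))\big)\,r(F,A)$, which is \eqref{modtonormal}. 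The main obstacle is making this chained, generation-by-generation accounting rigorous: one must show that the compounding genuinely terminates after $O(\log(1/G))$ effective levels, and that the two distinct early-return mechanisms ($F$'s short durations versus $F^m$'s uniform $\delta_0$ shift) can be reconciled inside a single coupling. This is exactly where log-concavity of $\overline F$ (the IFR hypothesis) and the boundedness condition $\epsilon L(\epsilon)\to 0$ are indispensable, consistent with the exponential case, where $F^m=F$ by memorylessness and the factor is trivially $1$.
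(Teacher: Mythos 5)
Your construction of $F^m$ and the verification of Proposition \ref{prop:moddist} are fine, and in fact slightly different from the paper's: you take $F^m$ to be the law of $D-\delta_0$ conditioned on $D\geq \delta_L$ (so $\pi(d)=d-\delta_0$ exactly), whereas the paper takes $F^m$ to be $F$ truncated at $\delta_1$ (where $F(\delta_1)=1-L\epsilon$) and realizes the coupling through the quantile shift $F(t)-F(\pi(t))=L\epsilon$, which forces $t-\pi(t)\geq \delta_0$ by the definition of $L$. Either choice satisfies the Proposition, so up to this point your route is legitimate.

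The genuine gap is in the core estimate \eqref{modtonormal}. Your ``generations'' argument asserts that each generation of excess $F^m$-matches contributes a factor $O(G)$ and that only $O(\log(1/G))$ generations matter ``because the probability of sustaining the chain decays,'' but no decay mechanism is ever exhibited: an early return of the $F^m$ process does not make the next early return any less likely, so there is no probabilistic reason the chain terminates after $\log(1/G)$ levels. The arithmetic is also internally inconsistent -- if successive generations really decayed geometrically you would get a total excess of $O(G)$, not $O(G\log(1/G))$, and if they do not decay you have no cutoff at all. In the paper the logarithm has an entirely different origin: the comparison is reduced to the renewal quantity $\hat{n}$, the expected number of i.i.d.\ $f^m$-samples needed to cover a single $f$-sample, and the recursion $\hat{n}\leq (1-L\epsilon)+L\epsilon\big(O(\log(1/L\epsilon))+\hat{n}\big)$ yields $\hat{n}\leq 1+O(L\epsilon\log(1/L\epsilon))$. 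The $O(\log(1/L\epsilon))$ term there is the expected number of truncated samples needed to cover the interval $[0,\delta_1]$, and proving it requires two quantitative IFR inputs that are absent from your sketch: the mean lower bound $\mu\geq \Omega(1)\,\delta_1/\log(1/L\epsilon)$, proved via the dyadic decomposition $1-F(\sum_{j\leq s}x_j)=2^{-s}$ with $x_j$ nonincreasing (this is exactly where IFR enters), and the conditional-mean inequality \eqref{mupb} transferring that bound to $\mu^m$. Your only use of IFR is the window bound $F(x+\delta_0)-F(x)\leq L\epsilon$, which controls per-event probabilities but cannot control how many $F^m$-busy-periods fit inside one $F$-busy-period -- and that ratio of renewal rates is precisely what \eqref{modtonormal} is about. (Two smaller points: with your $F^m$ you would need an analogous lower bound on $\mathbb{E}[D-\delta_0\mid D\geq \delta_L]$ relative to $\mu$, which again needs mean estimates of the paper's type; and your appeal to Lemma \ref{monotone} for the ``factor $1+O(G)$ from short durations'' is a misattribution -- that lemma concerns monotonicity in the probability sequence $\mb{p}$, while the corresponding step in the paper, $r(F,\cdot)\leq (1+2F(\delta_L))\,r^{lg}(F,\cdot)$ inside Lemma \ref{ifrkey}, is a direct expectation argument over transitions.)
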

\begin{proof}{Proof.}
Let us start with the case of exponential distribution as a warm up. We claim that in fact, choosing $F^m=F$ suffices in this case. Inequality \eqref{modtonormal} follows directly for this choice and it remains show that Proposition \ref{prop:moddist} is satisfied, i.e., prove the existence of a coupling/mapping $\pi$. Consider the following modified density function,
\[f^m(x)= \frac{f(x+\delta_0)}{1-F(\delta_0)}.\]
$f^m=f$ owing to the memoryless property of exponentials. Now, consider the straightforward coupling that samples from $F$ and rejects all samples until the first large sample is obtained, which is reduced by $\delta_0$ before it is output. Observe that this process generates samples with distribution $F^m$, and 
satisfies Proposition \ref{prop:moddist}.

For other IFR distributions, it is not clear if $r(F^m,\mb{\sigma}(\nukp))$ and $r(F,\mb{\sigma}(\nukp))$ are comparable given the current definition of $F^m$. So we introduce a new modified distribution. Define $\delta_1$ so that,
\[F(\delta_1)= 1-L\epsilon.\] 
Since we started with $\epsilon L\leq 1/2$ for this lemma, we have that $\delta_L\leq \delta_1$. Now, the new distribution is,
\begin{equation*}\label{ifrmodf}
f^m(x)=
\begin{cases*}
\frac{1}{1-L\epsilon} f(x) & $x\in [0,\delta_{1}]$ \\
0       & $x > \delta_{1}$
\end{cases*}
\end{equation*}
So $f^m$ is a truncated version of $f$. First, we show that Proposition \ref{prop:moddist} holds by defining a mapping $\pi$ such that, for every $t\geq \delta_L$ we have $F(t)-F(\pi(t))=L\epsilon$. By definition of $L$ and $\epsilon$, this would imply $t-\pi(t)\geq \delta_0$. 

Consider values $x\in [0,L\epsilon)$ and non-negative $j\in \mathbb{Z}$. Let $t(x,j)=F^{-1}(x+(j+1)L\epsilon)$. Varying $x$ and $j$ we have that $t(x,j)$ takes all possible values in the range $[\delta_L,+\infty)$. Defining, 
$\pi(t(x))=F^{-1}(x+jL\epsilon)$, we have that $F(t)-F(\pi(t))=L\epsilon$ for every $t\geq \delta_L$, as desired. Finally, the distribution generated by sampling i.i.d.\ values from $F$ and applying mapping $\pi$ to all large samples (while ignoring all small samples), corresponds to the distribution $F^m$ as defined.
To show \eqref{modtonormal}, consider the random variable defined as the minimum number of values drawn i.i.d.\ with density $f^m$, such that the sum of the values is at least as large as a single random value that is drawn independently with	 density $f$. Let $\hat{n}$ denote the expected value of this random variable. 
Observe that in order to show the main claim it suffices to show that $\hat{n}\leq (1+\gamma)$, where $\gamma = O(L\epsilon \log (1/L\epsilon))$. To show this, let $\mu, \mu^m$ denote the 
mean of $f$ and $f^m$ respectively. Recall that $f^m$ is a truncated version of $f$. By definition of $f^m$ and the IFR property of $f$, we have that,
\begin{equation}
\mu\leq (1-L\epsilon)\cdot \mu^m + L\epsilon (\delta_1+\mu).\label{mupb}
\end{equation}
For the time being, we claim $\mu$ is lower bounded as follows,
\begin{eqnarray}
\mu \geq O(1)\cdot \frac{\delta_1}{\log (1/L\epsilon)}.\label{meanlb}
\end{eqnarray}
We proceed with the proof assuming \eqref{meanlb} holds and prove this claim later. Substituting this inequality in \eqref{mupb} and using the fact that $x\log (1/x)<1$ for $x\in(0,1]$, we get, 
\begin{eqnarray}
\mu^m\geq O(1)\cdot \frac{\delta_{1}}{\log (1/L\epsilon)} 
.\label{meanlb2} 
\end{eqnarray}  

\noindent Now, to upper bound $\hat{n}$ consider the following coupling: Given a random sample $t$ from $f$, for $t\leq \delta_1$ we consider an equivalent sample drawn from $f^m$. For durations $t>\delta_1$, we draw independent samples from $f^m$ until the sum of these samples is at least $t$. Thus, w.p.\ $(1-L\epsilon)$, we draw exactly one sample from $f^m$ and it suffices to show that in the remaining case that occurs w.p.\ $L\epsilon$, we draw in expectation $O(\log (1/L\epsilon))$ samples from $f^m$. Now, given sample $t\geq \delta_{1}$, note that from \eqref{meanlb2}, the expected number of samples drawn from $f^m$ before their sum exceeds $\delta_1$ is, $O(\log (1/L\epsilon))$. 
Second, 
using the IFR property of $f$, we have that the expected number of samples of $f^m$ such that the sum of the sample values is at least $t-\delta_{1}$, is upper bounded by $\hat{n}$. Overall, we have the recursive inequality, \[\hat{n}\leq (1-L\epsilon) + L\epsilon \big(O(\log (1/L\epsilon)) +\hat{n}\big).\] 
The desired upper bound on $\hat{n}$ now follows. 

It only remains to show \eqref{meanlb}. Let $x_1$ be such that $1-F(x_1)=1/2$. Then, let $x_2$ be such that $1-F(x_1+x_2)=1/4$. More generally, let $\{x_1,\cdots,x_s\}$ be the set of values such that $1-F(\sum_{j=1}^s x_j)=1/2^s$. We let $s$ be the smallest integer greater that equal to $\log(1/L\epsilon)$. Now, the mean $\mu\geq \sum_{j=1}^s x_j/2^j$.  Therefore, $\mu/\delta_1$ is lower bounded by $\frac{\sum_{j=1}^s x_j/2^j}{\sum_{j=1}^s x_j}$. While in general this ratio can be quite low, due to the IFR property we have $x_j\geq x_{j+1}$ for every $j\geq 1$. Consequently, the minimum value of this ratio is $\frac{O(1)}{s}=\frac{O(1)}{\log (1/L\epsilon)}$, and occurs when all values $x_j$ are equal (which incidentally, implies memoryless property).  
\hfill\Halmos\end{proof}
\textbf{Convergence rate}: We have $\kappa=  O\bigg(F(\delta_L)\log\Big(\frac{1}{F(\delta_L)}\Big)\bigg)=O\big(L\epsilon\log (1/L\epsilon)\big).$ Thus, over all rate of convergence for resource $i$ is, $O\bigg(\frac{1}{\epsilon_i c_i}+L_i(\epsilon_i)\epsilon_i\log \Big(\frac{1}{L_i(\epsilon_i)\epsilon_i}\Big)\bigg)$. Suppose $L_i(\epsilon_i)\epsilon_i=O(\epsilon_i^{\eta})$ for some $\eta>0$, then the optimal rate is, $\tilde{O}\Big( c^{-\frac{\eta}{1+\eta}}_{i}\Big)$, where the $\tilde{O}$ hides a $\log c_{i}$ factor.

\subsubsection{IFR with Arbitrary Mass at $+\infty$.}\label{massatinf}
In the previous section we, showed that for bounded IFR distributions \dpg\ is asymptotically $(1-1/e)$--competitive. In certain practical scenarios, it may be reasonable to expect that with some probability resource units under use may never return back to the system. 
We model this by allowing an arbitrary mass at infinity. 
Specifically, for resource $i$, let $p_i$ denote the probability that a usage duration takes value drawn from distribution with c.d.f.\ $F_i$ and with the remaining probability $1-p_i$, the duration takes value $+\infty$. In this section, we show that \dpg\ is still asymptotically $(1-1/e)$--competitive for such a mixture of non-increasing IFR 
usage distributions with arbitrary mass at infinity. 

Fix, $i,k,\nukp$ and define $L$,$\epsilon$, $\delta_0$ and $\delta_L$ as before. 
Given set $\mb{\sigma}(\nukp)$, consider the random process $(F,\mb{\sigma}(\nukp))$ and let $\onee(t,{\tt finite})$ indicate the event that $k$ has not hit a $+\infty$ duration at arrival $t$. For a chosen value $\gamma\in(0,1]$ (finalized later to optimize convergence rate), let $\sigma_{l_f}$ be the last arrival in $\mb{\sigma}(\nukp)$ where,
\[\mathbb{E}[\onee(\sigma_{l_f},{\tt finite})]\geq \gamma. \]
If no such arrival exists, let $\sigma_{l_f}=T$. This implies,
\begin{eqnarray}
&&\mathbb{E}[\onee(\sigma_j,{\tt finite})]\geq \gamma\quad \text{for every } \sigma_j<\sigma_{l_f} \text{ in } \mb{\sigma}(\nukp), \label{finite}\\
\text{ and } && \mathbb{E}[\onee(\sigma_j,{\tt finite})]< \gamma\quad \text{for every } t>\sigma_{l_f} \label{infinite}.
\end{eqnarray}
Using this define the set of arrivals,
\[\mb{S}(\nukp)=\{t\mid t>\sigma_{l_f} \text{ or } \exists \sigma_j\in\mb{\sigma}(\nukp) \text{ s.t. } a(t)\in [a(\sigma_j),a(\sigma_j)+\delta_0)  \}. \]
\begin{lemma}
For a $\mathcal{X}_k(\nukp,\gamma\epsilon,t)$ covering, we have that the set of covered arrivals satisfies the following relation,	$\mb{s}(\nukp)\subseteq \mb{S}(\nukp) .$
\end{lemma}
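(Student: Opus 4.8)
The plan is to prove the contrapositive: I will show that every arrival $t\notin \mb{S}(\nukp)$ is \emph{uncovered} with respect to the $\mathcal{X}_k(\nukp,\gamma\epsilon,\cdot)$ covering, i.e. that $\mathbb{P}_{\mb{\nu_{k}}}[\onee(k,t)=1\mid \nukp]\geq \gamma\epsilon$, which is exactly the condition $\mathcal{X}_k(\nukp,\gamma\epsilon,t)=0$ and hence $t\notin\mb{s}(\nukp)$. Unpacking $t\notin \mb{S}(\nukp)$ yields two facts at once: first $t\leq \sigma_{l_f}$, and second, letting $\sigma_j\in\bsigma(\nukp)$ be the closest arrival preceding $t$, we have $a(t)-a(\sigma_j)\geq \delta_0$ (since $a(t)\notin[a(\sigma_j),a(\sigma_j)+\delta_0)$ while $a(t)\geq a(\sigma_j)$). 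By the choice of $\sigma_j$ there are no points of $\bsigma(\nukp)$ in the open interval $(a(\sigma_j),a(t))$, so in the $(F_i,\bsigma(\nukp))$ random process the state of $k$ is not disturbed by new matches between $\sigma_j$ and $t$. (The degenerate case in which no arrival of $\bsigma(\nukp)$ precedes $t$ is immediate, since then $k$ is surely free at $t$ and availability probability is $1\geq\gamma\epsilon$.)

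The core of the argument is to lower bound $\mathbb{P}_{\mb{\nu_{k}}}[\onee(k,t)=1\mid\nukp]$ by a product of an ``aliveness'' factor and a ``timing'' factor. First I would use $\sigma_j<\sigma_{l_f}$ together with \eqref{finite} to conclude $\mathbb{E}[\onee(\sigma_j,{\tt finite})]\geq \gamma$, i.e. with probability at least $\gamma$ the unit $k$ has not yet drawn a $+\infty$ duration by $\sigma_j$ and is therefore still capable of returning. Conditioning on this event, I would invoke the IFR (residual-life / new-better-than-used) property of $F_i$: if $k$ is in use at $\sigma_j$ through a finite draw, its remaining usage time is stochastically dominated by a fresh draw from $F_i$, so it returns to the free state within $\delta_0$ with probability at least $F_i(\delta_0)=\epsilon$; and if $k$ is free at $\sigma_j$ it is matched there, and conditioned on that use being finite it returns within $\delta_0$ with probability at least $F_i(\delta_0)=\epsilon$. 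In either case, returning within $\delta_0\leq a(t)-a(\sigma_j)$ together with the absence of intervening arrivals of $\bsigma(\nukp)$ forces $k$ to be available at $t$. Multiplying the aliveness factor $\gamma$ by the timing factor $\epsilon$ gives $\mathbb{P}_{\mb{\nu_{k}}}[\onee(k,t)=1\mid\nukp]\geq \gamma\epsilon$, completing the contrapositive and hence $\mb{s}(\nukp)\subseteq \mb{S}(\nukp)$.

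The main obstacle is the clean separation of the mass-at-$+\infty$ effect from the IFR-timing effect, and specifically the free-at-$\sigma_j$ case, where the fresh match drawn at $\sigma_j$ could itself be infinite. The delicate point is that the indicator $\onee(\sigma_j,{\tt finite})$ must be read so that conditioning on aliveness already captures the finiteness of the use that governs availability at $t$; once this is arranged, the conditional return probability is controlled purely by $F_i$, and the IFR bound $\epsilon=F_i(\delta_0)$ applies uniformly in both the in-use and free cases. I would make this precise through a coupling that tracks, for each candidate usage duration of $k$, first whether it is finite and then whether it lands below $\delta_0$, and I would express the resulting availability probability in the $(F_i,\bsigma(\nukp))$ process using the equivalence and monotonicity properties (Lemmas \ref{equiv} and \ref{monotone}). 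The boundary case $\sigma_{l_f}=T$ is handled automatically: then $t>\sigma_{l_f}$ is vacuous, every preceding $\sigma_j$ meets the $\gamma$ lower bound on finiteness, and the same estimate applies.
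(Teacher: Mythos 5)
Your proof is correct and follows essentially the same route as the paper's: arguing the contrapositive for $t\notin \mb{S}(\nukp)$, using \eqref{finite} to get the ``aliveness'' factor $\gamma$ at the closest preceding arrival $\sigma_j\in\bsigma(\nukp)$, and the IFR property to get the return-within-$\delta_0$ factor $F_i(\delta_0)=\epsilon$, yielding availability probability at least $\gamma\epsilon$. You additionally spell out details the paper leaves implicit --- the degenerate no-preceding-arrival case, persistence of availability until $t$ since no $\bsigma(\nukp)$ arrival lies in $(a(\sigma_j),a(t))$, and the reading of $\onee(\sigma_j,{\tt finite})$ needed so the free-at-$\sigma_j$ case also yields the factor $\epsilon$ rather than $p_i\epsilon$ --- all consistent with the paper's intended argument.
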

\begin{proof}{Proof.}
Consider an arrival $t\notin \mb{S}(\nukp)$. Clearly, $t<\sigma_{l_f}$ and the closest arrival preceding $t$ in $\mb{\sigma}(\nukp)$, call it $\sigma_t$, is at least $\delta_0$ time before $a(t)$. From \eqref{finite}, we have that $k$ has not hit a duration of $\infty$ by the time $\sigma_t$ arrives, w.p.\ at least $\gamma$. Conditioned on this, we have from the IFR property that $k$ switches from being in-use to free between $a(\sigma_j)$ and $a(\sigma_j)+\delta_0$ w.p.\ at least $F(\delta_0)=\epsilon$. Together, this implies that the probability that $k$ is free at $t$ is at least $\gamma\epsilon.$
\hfill\Halmos\end{proof}
\begin{lemma} 
$r\big(F,\mb{S}(\nukp) \big)\leq \Big(1+ O\big(L\epsilon\log (1/L\epsilon)\big)\Big)(1+2\gamma)\, r\big(F,\mb{\sigma}(\nukp)\big).$
\end{lemma}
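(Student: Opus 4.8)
The statement to be proven is now purely about two $(F,\cdot)$ random processes driven by the common mixture law $F$ (the bounded IFR distribution $F_i$ together with an added mass $1-p_i$ at $+\infty$), so the plan is to reuse the coupling machinery developed for the pure IFR case in Lemma \ref{ifrkey} and to pay one extra multiplicative factor for the mass at infinity. Concretely, I would chain two reductions whose factors multiply to give the claimed bound: an IFR-shift perturbation handling the arrivals of $\mb{S}(\nukp)$ that sit within $\delta_0$ after some $\sigma_j\in\mb{\sigma}(\nukp)$, which should reproduce the $\big(1+O(L\epsilon\log(1/L\epsilon))\big)$ factor exactly as before, and a $\gamma$-threshold truncation of the tail $\{t>\sigma_{l_f}\}$, which should be responsible for the $(1+2\gamma)$ factor.

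For the perturbation step I would keep the rejection-plus-shift coupling essentially verbatim from Lemma \ref{ifrkey}: generate one i.i.d.\ list of durations from $F$, reject the small (value below $\delta_L$) samples, and feed the large ones through the $\delta_0$-reducing map $\pi$ into the sparse process $(F,\mb{\sigma}(\nukp))$, now extending $\pi$ by $\pi(+\infty)=+\infty$ so that an infinite duration passes through unchanged and kills the unit simultaneously in both processes. The only hypothesis Lemma \ref{ifrkey} used was that every arrival of the dense set lies within $\delta_0$ after some arrival of the sparse set; this holds for the part of $\mb{S}(\nukp)$ that is covered near a $\sigma_j$ by construction, and conditioned on the unit still being alive the realized durations are exactly draws from the pure IFR law $F_i$, so the leading-pointer argument and the $\hat n$ computation should carry over and deliver the $\big(1+O(L\epsilon\log(1/L\epsilon))\big)$ factor.

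The genuinely new piece is the tail $\{t>\sigma_{l_f}\}$, which is appended to $\mb{S}(\nukp)$ even though these arrivals need not be close to any $\sigma_j$, so the perturbation coupling does not apply to them directly. Here I would use \eqref{infinite}: past $\sigma_{l_f}$ the unit has survived (drawn only finite durations) with probability strictly below $\gamma$, and since every match after $\sigma_{l_f}$ requires survival, the tail can contribute only a $\gamma$-fraction of the reward mass. I expect this to be the main obstacle, for two reasons. First, the reward of a random process is not additive over subsets of arrivals, so I cannot simply add a $\gamma\,r(F,\mb{\sigma}(\nukp))$ term; I would instead show that truncating the tail changes the reward by at most a $(1+2\gamma)$ factor, mirroring the step $r(F,\mb{S})\le(1+2F(\delta_L))\,r^{lg}(F,\mb{S})$ of Lemma \ref{ifrkey} with the event ``this transition corresponds to a large duration'' replaced by ``the unit is still alive at this transition'', whose relevant probability is controlled by $\gamma$ past $\sigma_{l_f}$ via \eqref{finite}--\eqref{infinite}. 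Second, the mass at infinity couples aliveness across time and across the two processes — more matches mean a higher chance of having already drawn $+\infty$ — so the survival probabilities must first be compared monotonically (more matches can only decrease aliveness) before the threshold bound can be invoked.

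Once both factors are established for the superset $\mb{S}(\nukp)$, the perturbation property \eqref{conjecture} for $\mb{s}(\nukp)$ follows immediately from the monotonicity property (Lemma \ref{monotone}), since the preceding lemma gives $\mb{s}(\nukp)\subseteq\mb{S}(\nukp)$ and hence $r(F,\mb{s}(\nukp))\le r(F,\mb{S}(\nukp))$. Optimizing $\gamma$ and $\epsilon$ jointly against $c_i$ then yields the convergence rate recorded in Table \ref{summary} for the IFR-with-mass-at-infinity family.
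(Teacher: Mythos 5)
Your high-level decomposition is exactly the paper's: truncate at $\sigma_{l_f}$ using \eqref{finite}--\eqref{infinite} to pay the $(1+2\gamma)$ factor (the paper writes $r\big(F,\mb{S}(\nukp)\big)< r\big(F,\mb{\hat{S}}(\nukp)\big)+\gamma\, r\big(F,\mb{S}(\nukp)\big)$ for the truncated set $\mb{\hat{S}}(\nukp)=\mb{S}(\nukp)\backslash\{t\mid t>\sigma_{l_f}\}$, hence $r\big(F,\mb{S}(\nukp)\big)\leq(1+2\gamma)\,r\big(F,\mb{\hat{S}}(\nukp)\big)$), then run the IFR perturbation machinery on the truncated set; your tail discussion and the closing appeal to Lemma \ref{monotone} are sound. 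The gap is in your perturbation step. Extending the map by $\pi(+\infty)=+\infty$ does preserve the leading-pointer half of Lemma \ref{ifrkey} (the argument is sample-path based and yields $r^{lg}\big(F,\mb{\hat{S}}(\nukp)\big)\leq r\big(F^m,\mb{\sigma}(\nukp)\big)$), but the second half --- the $\hat{n}$ computation comparing $r(F^m,\cdot)$ to $r(F,\cdot)$ --- does \emph{not} carry over to the mixture law: that computation rests on the IFR property and on finite means (the bounds $\mu\leq(1-L\epsilon)\mu^m+L\epsilon(\delta_1+\mu)$ and $\mu\geq O(1)\delta_1/\log(1/L\epsilon)$), whereas the mixture has infinite mean and is not IFR. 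Quantitatively, when the single $f$-draw to be covered equals $+\infty$ (probability $1-p_i$), covering it requires $f^m$-draws until the first infinite one, a geometric count with success probability roughly $(1-p_i)/\big(1-p_iF_i(\delta_L)\big)$; this event alone contributes about $(1-p_i)\cdot\frac{1-p_iF_i(\delta_L)}{1-p_i}\approx 1$ to $\hat{n}$, so $\hat{n}\approx 2$ rather than $1+O\big(L\epsilon\log(1/L\epsilon)\big)$. A constant multiplicative loss here would destroy the asymptotic $(1-1/e)$ guarantee, so the verbatim extension fails.

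Your parenthetical remark --- that conditioned on the unit being alive the realized durations are draws from the pure IFR law $F_i$ --- points at the correct repair but does not constitute one: aliveness is endogenous and has different laws in the dense and sparse processes (the dense process matches more often and so tends to die earlier in time), which is precisely the coupling obstruction you flag for the tail step and then leave unresolved in the perturbation step. The paper's device is to couple the finite/infinite coin flips across the two processes and condition on the common number $l$ of finite transitions followed by a fatal infinite one; conditioned on this event the realized durations in each process are i.i.d.\ from the pure IFR law $F_i$, the comparison of Lemma \ref{ifrkey} applies for each fixed $l$, and taking expectation over $l$ delivers the $\big(1+O(L\epsilon\log(1/L\epsilon))\big)$ factor on the truncated set. (A minor further inaccuracy: under the shared-list coupling the fatal sample does not kill the two processes ``simultaneously'' --- the pointers sit at different list positions at any time; what the pointer-ordering argument actually gives is that the sparse process consumes the fatal large sample no later in sample order, which is all the first half needs.)
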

\begin{proof}{Proof.}
The proof requires a combination of the two analyses so far. Consider the truncated set $\mb{\hat{S}}(\nukp)=\mb{S}(\nukp)\backslash \{t \mid t>\sigma_{l_f}\}.$ First, we have that,
\begin{eqnarray*}
r\big(F,\mb{S}(\nukp)\big)&&< r\big(F,\mb{\hat{S}}(\nukp)\big) + \gamma r\big(F,\mb{S}(\nukp)\big),\\
r\big(F,\mb{S}(\nukp)\big)&&\leq (1+2\gamma) r\big(F,\mb{\hat{S}}(\nukp)\big),
\end{eqnarray*}
where we used the fact that $\gamma\leq 1$.
It suffices to therefore show that, \[r\big(F,\mb{\hat{S}}(\nukp)\big)\leq  \Big(1+ O\big(L\epsilon\log (1/L\epsilon)\big)\Big) r\big(F,\mb{\sigma}(\nukp)\big).\]
Let us condition on the first $l$ transition in each process being finite and the $l+1$-th transition being $\infty$. Then, the resulting expected number of transitions can be compared in the same way as the case of IFR distributions given in Lemma \ref{ifrkey}, since we now draw independently from an IFR distribution for the first $l$ durations (having conditioned on these durations being finite). This holds for arbitrary $l$. Taking expectation over $l$ then gives the desired. 
\hfill\Halmos\end{proof}
\textbf{Convergence rate}: We have $\kappa= O\big(\gamma+L\epsilon\log (1/L\epsilon)\big)$. Moreover, the probability lower bound at uncovered arrivals is actually $\gamma\epsilon$, instead of $\epsilon$. Thus, over all convergence rate for resource $i$ is, $O\bigg(\frac{1}{\epsilon_i\gamma_i c_i}+\gamma_i+L_i(\epsilon_i)\epsilon_i\log \Big(\frac{1}{L_i(\epsilon_i)\epsilon_i}\Big)\bigg)$. Suppose $L_i(\epsilon_i)\epsilon_i=O(\epsilon_i^{\eta})$, then the optimal rate is, $\tilde{O}\Big( c^{-\frac{\eta}{1+2\eta}}_{i}\Big)$.

\subsection{Refined Bound for $\{d_i,+\infty\}$}\label{refined2pnt}
The main bottleneck in getting a stronger convergence factor is the need to ensure  large enough probability lower bound $\epsilon_i$, for uncovered arrivals. In particular, we chose the parameter $l_0$ to ensure that $p_i^{l_0-1}\geq 1/\sqrt{c_i}$, and consequently bound the error term $\frac{1}{p_i^{l_0-1}c_i}$ arising out of Lemma \ref{geom}. Choosing a larger $l_0$ 
would worsen the convergence rate since term designates the contribution from uncovered arrivals in \opt\ to \eqref{lam4c}. The separation of contributions from uncovered and covered arrivals makes the analysis tractable in general. It can also be, at least for $\{d_i,+\infty\}$ distributions, pessimistic from the point of view of convergence to the guarantee. As an extreme but illustrative example, consider a sample path in \opt\ where a unit $k_O$ of $i$ is matched only to arrivals $t$ that occur late and are all uncovered given some unit $k$ and path $\nukp$ in \dpg. Specifically, at each of the arrivals $k_O$ is matched to, let the probability of $k$ in \dpg\  being available (conditioned on $\nukp$), equal $p_i^{l_0-1}$. In this case, the contributions to \eqref{lam4c} from uncovered arrivals is significant and that part of the analysis is tight, but there is no contribution from covered arrivals in \opt. Thus, the $\theta_i$ term could be used to neutralize some of the negative terms arising out of the uncovered arrivals and this would in turn allow us to set a larger value of $l_0$. This observation is the key idea behind improving the analysis.

\begin{lemma}For every $i$, \eqref{cert2} is satisfied with $\alpha_i=(1-1/e)-O\Big(\frac{\log c_i}{c_i}\Big)$.
\end{lemma}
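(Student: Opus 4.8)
The plan is to establish the refined bound by revisiting the single lossy step in the general two-point analysis of Appendix \ref{diinf}, namely the bound $\frac{2}{r_i\epsilon_ic_i}\opt_i$ on the uncovered contribution coming from Lemma \ref{geom}. Starting from Lemma \ref{decompose1}, it suffices to show that the negative term
\[ r_i\,\eo\Big[\sum_{t\in O(\omega,i)}\sum_{k=1}^{c_i}\Delta g(k)\,\pd\big[k>z_i(t)\big]\Big] \]
is at most $\theta_i + O\big(\tfrac{\log c_i}{c_i}\big)\opt_i$; substituting into \eqref{lam4} then yields \eqref{cert2} with the claimed $\alpha_i$. In the general argument this term was split into a covered part (absorbed by $\theta_i$ through the perturbation property \eqref{conjecture} at multiplicative cost $1+\kappa_i$, with $\kappa_i=\tfrac{\epsilon_i}{1-\epsilon_i}$) and an uncovered part bounded crudely by $\tfrac{2}{r_i\epsilon_ic_i}\opt_i$. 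Balancing $\kappa_i$ against $\tfrac{1}{\epsilon_ic_i}$ forced $\epsilon_i=\tfrac{1}{\sqrt{c_i}}$ and the $O(1/\sqrt{c_i})$ rate, so everything except the uncovered term is already $O(\tfrac{\log c_i}{c_i})$ or smaller once $\epsilon_i$ is chosen well.

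First I would push $\epsilon_i$ down to $\epsilon_i=\Theta\big(\tfrac{\log c_i}{c_i}\big)$, i.e.\ choose $l_0=O(\log c_i)$ as the largest integer with $p_i^{\,l_0-1}\ge\epsilon_i$. This makes the covering perturbation $\kappa_i=\tfrac{\epsilon_i}{1-\epsilon_i}=O\big(\tfrac{\log c_i}{c_i}\big)$ and the $(1+2/c_i)$ factor of Lemma \ref{transport} negligible, so the covered contribution costs $\theta_i$ only a $\big(1+O(\tfrac{\log c_i}{c_i})\big)$ factor. The price is that the naive bound $\tfrac{2}{\epsilon_ic_i}\opt_i=O(\tfrac{1}{\log c_i})\opt_i$ is now useless, so the heart of the proof is a replacement that is tight up to $O(\tfrac{\log c_i}{c_i})\opt_i$. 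The reason Lemma \ref{geom} is wasteful is that it only uses the lower bound $\epsilon_i$ on the availability of each uncovered unit and hence charges the worst-case geometric mean $1/\epsilon_i$ per \opt\ arrival, whereas (as in the motivating computation of Section \ref{sec:rba}) most uncovered units are available with probability far exceeding $\epsilon_i$, giving much faster geometric decay of $\pd[k>z_i(t)]$ in $k$ and a per-arrival contribution of only $O(1/c_i)$.

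For the tighter bound I would exploit the clean geometric structure of the two-point family: conditioned on $\nukp$ and using the $d_i$-separation of Lemma \ref{sigmasep}, unit $k$ is available at the $j$-th arrival of $\mb{\sigma}(\nukp)$ with probability exactly $p_i^{\,j-1}$. I would group the uncovered units above $z_i(t)$ by dyadic availability levels $p_i^{\,j}$; the levels bounded away from $\epsilon_i$ decay geometrically with a success probability that is a constant times their availability, contributing $O(1/c_i)$ per arrival after weighting by $\Delta g(k)\le 2/c_i$, while only the units whose availability is within a constant factor of $\epsilon_i$ (the ``deep'' ones, $j\approx l_0$) decay slowly. The new ingredient is to charge these slow-decaying contributions against the slack in $\theta_i$: whenever \opt\ matches $k_O$ to a deep uncovered arrival $t$, the set $\mb{\sigma}(\nukp)$ must already contain $\approx l_0$ active arrivals, so $r(F_i,\mb{\sigma}(\nukp))$ and hence $\theta_i=r_i\sum_k g(\tfrac{k}{c_i})\mathbb{E}_{\nukp}[r(F_i,\mb{\sigma}(\nukp))]$ carry exactly the budget needed. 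Formalizing this via a two-list coupling between \opt's matches of $i$ and the $(F_i,\mb{\sigma}(\nukp))$ process, as in the proofs of Lemma \ref{transport} and of \eqref{conjecture}, and then invoking Lemma \ref{master}-style algebra, should leave only an $O(\tfrac{\log c_i}{c_i})\opt_i$ residual.

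The main obstacle I anticipate is that $\theta_i$ is a single global quantity whereas the uncovered contributions accrue arrival-by-arrival, so the charging cannot be done independently for each $t$: the disjointness of \opt's matches together with the geometric decay of $p_i^{\,j-1}$ must be combined to show the deep-arrival contributions do not overcount the available slack. A secondary technical point is controlling the return-window arrivals in $[a(\sigma_j),a(\sigma_j)+d_i)$, where the availability of $k$ is not simply $p_i^{\,j-1}$; I would dispose of these by passing to the explicit superset $\mb{S}(\nukp)$ of Appendix \ref{diinf} and applying the monotonicity property (Lemma \ref{monotone}), so that the entire refined estimate reduces to statements about the explicitly defined $(F_i,\mb{\sigma}(\nukp))$ random process.
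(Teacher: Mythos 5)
You correctly isolate the bottleneck (the uniform availability threshold in Lemma \ref{geom}) and correctly sense that the fix must exploit the exact two-point structure, under which unit $k$ is available at the $j$-th arrival of $\mb{\sigma}(\nukp)$ with probability $p_i^{j-1}$. But the step you yourself flag as the main obstacle --- charging the ``deep'' uncovered contributions against slack in $\theta_i$ --- is a genuine gap, and it cannot be repaired inside your threshold-covering framework. Concretely: suppose \dpg\ matches every unit of $i$ exactly $l_0-1$ times in an early burst, so that at every later arrival each unit is available with probability $p_i^{l_0-1}\approx \epsilon_i$, and suppose \opt\ hoards its units and makes each unit's \emph{first} match at a late arrival. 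Every such arrival is uncovered under $\mathcal{X}_k$ for every $k$; your dyadic availability levels are all concentrated in the deepest band (the dyadic sum $\sum_m 2^m$ is dominated by its last term, so the ``fast decay at shallow levels'' buys nothing); the expected number of uncovered units above $z_i(t)$ really is $\Theta(1/\epsilon_i)$; and the total uncovered term is $\Theta\big(\frac{1}{\epsilon_i c_i}\big)\opt_i=\Theta\big(\frac{1}{\log c_i}\big)\opt_i$ for your choice $\epsilon_i=\Theta\big(\frac{\log c_i}{c_i}\big)$. There is no slack in $\theta_i$ to absorb this: by the Lemma \ref{master} algebra, any multiplicative overcharge $(1+\delta)$ on $\theta_i$ enters $\alpha_i$ directly, so $\delta$ must itself be $O\big(\frac{\log c_i}{c_i}\big)$, and since $\theta_i=O(\opt_i)$ the available budget $O\big(\frac{\log c_i}{c_i}\big)\theta_i$ falls short of the $\Theta\big(\frac{1}{\log c_i}\big)\opt_i$ deficit by a factor of order $c_i/\log^2 c_i$.

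The paper's proof (Appendix \ref{refined2pnt}) escapes this with two devices your plan lacks. First, it conditions on a single \opt\ unit $k_O$ and the partial path $\omega_{-k_O}$, and truncates $k_O$'s potential match sequence at $l_0$ (the largest integer with $p_i^{l_0-1}\geq \frac{\log c_i}{c_i}$, at a relative loss of only $\frac{\log c_i}{c_i}$), so that the $l$-th potential match $t_l$ carries \opt's \emph{own} survival weight $p_i^{l-1}$. Second, it abandons the probability-threshold covering in favor of a position-based one $\mathcal{Y}_k$: arrival $t_l$ is covered iff at least $l$ arrivals of $\mb{\sigma}(\nukp)$ precede it. This yields two exact bounds: (a) the covered part satisfies $\sum_l p_i^{l-1}\mathcal{Y}_k(\nukp,t_l)\leq r\big(F_i,\mb{\sigma}(\nukp)\big)$ via a coupling that reinterprets $p_i^{l-1}$ as survival of $k$ --- in particular, your problematic late first matches are reclassified as \emph{covered} and absorbed into $\theta_i$ additively, pair by pair $(k_O,k)$, with no multiplicative blowup; and (b) at an uncovered $t_l$, fewer than $l$ arrivals of $\mb{\sigma}(\nukp)$ precede it, so $k$ is available with probability at least $p_i^{l-1}$, and applying Lemma \ref{geom} with the level-dependent threshold $\epsilon_i=p_i^{l-1}$ gives a contribution $p_i^{l-1}\cdot\frac{2}{p_i^{l-1}c_i}=\frac{2}{c_i}$ per level: \opt's weight exactly cancels the geometric blowup, for a total of $\frac{2l_f}{c_i}=O\big(\frac{\log c_i}{c_i}\big)\frac{1-p_i^{l_f}}{1-p_i}$ as in \eqref{final}. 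This per-index cancellation --- not a charge against $\theta_i$ --- is the missing mechanism, and without it your proposal does not close.
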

\begin{proof}{Proof.}
We start by fixing also an arbitrary unit $k_O$ in \opt\ and show the following inequality,
\begin{eqnarray} 
&&\frac{(1+2/c_i)}{r_ic_i}\theta_i + \eo\Big[ \sum_{t\mid O(t)=(i,k_O)} \Big((1-1/e)- \sum^{c_i}_{k=1} \Delta g (k) \pd \big[k> z_i(t)\big]\Big)\Big]\nonumber \\
&& \geq \Big[1-1/e -O\Big(\frac{\log c_i}{c_i}\Big)\Big] \eo\Big[ \big| \{t\mid O(t)=(i,k_O)\}\big|\Big].\label{unitopt}
\end{eqnarray}
The proof then follows by linearity of expectation.

To show \eqref{unitopt}, we start by conditioning on all randomness in \opt\ arising out of usage durations of resources and units other than unit $k_O$ of $i$, as well as any intrinsic randomness in \opt. Let this partial sample path be denoted as $\omega_{-k_O}$. This fixes the set of arrivals that $k_O$ is matched to if durations of $k_O$ are all finite. Let this set be $\Gamma (\omega_{-k_O})=\{t_1(\omega_{-k_O}),t_2(\omega_{-k_O}),\cdots,t_{l_f(\omega_{-k_O})}(\omega_{-k_O})\}$. 
Since usage durations of $k_O$ are sampled independently, we have that,
\[ \eo\Big[ \big| \{t\mid O(t)=(i,k_O)\}\big|\Big]= \mathbb{E}_{\mb{\omega_{-k_O}}}\Big[ \sum_{l\geq 1}p_i^{l-1} \onee\big(l\leq l_f(\omega_{-k_O})\big)  \Big] \]
Now, define $l_0$ as the largest integer such that $p_i^{l_0-1}\geq \frac{\log c_i}{c_i}$. On every path $\omega_{-k_O}$, it suffices to focus on at most $l_0-1$ finite durations for $k_O$ as,\
\[\sum_{l\geq 1}^{l_0} p_i^{l-1} \onee\big(l\leq l_f(\omega_{-k_O})\big)\geq \Big(1-\frac{\log c_i}{c_i}\Big) \sum_{l\geq 1}p_i^{l-1} \onee\big(l\leq l_f(\omega_{-k_O})\big). \]
Thus, in order to prove \eqref{unitopt} it suffices to show that,
\begin{eqnarray} 
&&\frac{(1+2/c_i)}{r_ic_i}\theta_i + \mathbb{E}_{\mb{\omega_{-k_O}}} \Big[ \sum_{l\geq 1}^{l_0} p_i^{l-1}\onee\big(l\leq l_f(\omega_{-k_O})\big) \Big((1-1/e)- \sum^{c_i}_{k=1} \Delta g (k) \pd \big[k> z_i(t_{l}(\omega_{-k_O}))\big]\Big)\Big]\nonumber\\
&& \geq \Big[1-1/e -O\Big(\frac{\log c_i}{c_i}\Big)\Big] \mathbb{E}_{\mb{\omega_{-k_O}}} \Big[ \sum_{l\geq 1}^{l_0} p_i^{l-1} \onee\big(l\leq l_f(\omega_{-k_O})\big)\Big].\nonumber
\end{eqnarray}
In fact, it suffices to show more strongly that for every ordered collection $\Gamma=\{t_1,\cdots,t_{l_f}\}$ of arrivals such that any two consecutive arrivals in the set are at least $d_i$ time apart and $l_f\leq l_0$, we have,
\begin{eqnarray} 
&&  \sum_{l\geq 1}^{l_f} p_i^{l-1} \Big(\sum^{c_i}_{k=1} \Delta g (k) \pd \big[k> z_i(t_{l})\big]\Big)- \frac{(1+2/c_i)}{r_ic_i}\theta_i\nonumber\\
&& = O\Big(\frac{\log c_i}{c_i}\Big)  \sum_{l\geq 1}^{l_f} p_i^{l-1}=O\Big(\frac{\log c_i}{c_i}\Big)  \frac{1-p_i^{l_f}}{1-p_i}.\label{final}
\end{eqnarray}
%
\noindent Now, fix unit $k$, sample path $\nukp$, and consequently, the set $\mb{\sigma}(\nukp)$. We classify arrivals $t_l\in \Gamma$ into covered and uncovered in a new way, given by a function $\mathcal{Y}_k$. Using Lemma \ref{sigmasep}, we let any two arrivals in $\mb{\sigma}(\nukp)$ be at least $d_i$ time apart. Then,

\emph{$t_l$ is covered and $\mathcal{Y}_k(\nukp,t_l)=1$ iff there are $l$ or more arrivals preceding $t_l$ in $\mb{\sigma}(\nukp)$. }

\noindent Using this definition and the decomposition we performed in Lemma \ref{decompose2}, we have,
\begin{eqnarray}
&&\sum_{l\geq 1}^{l_f} p_i^{l-1} \Big(\sum^{c_i}_{k=1} \Delta g (k) \pd \big[k> z_i(t_{l})\big]\Big)\leq \nonumber\\
&&\sum_{l\geq 1}^{l_f} p_i^{l-1} \Big(\sum^{c_i}_{k=1} \Delta g (k)  \mathbb{E}_{\mb{\nu_{k^+}} } \big[\mathcal{Y}_k(\nu_{k^+},t_{l}) \big]\Big)+\label{summand1}\\
&&\sum_{l\geq 1}^{l_f} p_i^{l-1} \Big(\sum^{c_i}_{k=1} \Delta g (k) \mathbb{E}_{\mb{\nu_{k^+}} } \big[(1-\mathcal{Y}_k(\nu_{k^+},t_{l}))\mathbb{P}_{\mb{\nu_k}}[k>z_i(t_{l})\mid \nu_{k^+}] \big]\Big)\nonumber.
\end{eqnarray}
Now, for any unit $k$ in \dpg, in \eqref{summand1} we may interpret probabilities $p_i^{l-1}$ as the probability that at least $l$ durations of $k$ are finite. Then, by definition of the coupling we have that for every path $\nukp$,
\[\sum_{l\geq 1}^{l_f} p_i^{l-1}  \mathcal{Y}_k(\nu_{k^+},t_{l})  \leq r(F_i,\mb{\sigma}(\nukp)).  \]
Using the same algebra as the corollary statements in Lemma \ref{transport} and Proposition \ref{prop:moddist} then gives us that,
\[\sum_{l\geq 1}^{l_f} p_i^{l-1} \Big(\sum^{c_i}_{k=1} \Delta g (k)  \mathbb{E}_{\mb{\nu_{k^+}} } \big[\mathcal{Y}_k(\nu_{k^+},t_{l}) \big]\Big)\leq  \frac{(1+2/c_i)}{r_ic_i}\theta_i. \]
So in order to prove \eqref{final}, it remains to show that,
\[\sum_{l\geq 1}^{l_f} p_i^{l-1} \Big(\sum^{c_i}_{k=1} \Delta g (k) \mathbb{E}_{\mb{\nu_{k^+}} } \big[(1-\mathcal{Y}_k(\nu_{k^+},t_{l}))\mathbb{P}_{\mb{\nu_k}}[k>z_i(t_{l})\mid \nu_{k^+}] \big]\Big)=O\Big(\frac{\log c_i}{c_i}\Big)  \frac{1-p_i^{l_f}}{1-p_i}. \]
To establish this, again fix a unit $k$ and path $\nukp$ and note that if some arrival $t_{l}$ is uncovered, we have at most $l-1$ arrivals in $\mb{\sigma}(\nukp)$ preceding $t_{l}$. Therefore, at any uncovered arrival $t_{l}$,
\[\mathbb{P}_{\mb{\nu_k}}[\onee(k,t_{l})=1\mid \nu_{k^+}]\geq p_i^{l-1}.\]
Then, applying Lemma \ref{geom} with $\epsilon_i=p_i^{l-1}$, we have that for any $p_i<1$,
\[	\sum_{k=1}^{c_i}\Delta g(k) \mathbb{E}_{\mb{\nu_{k^+}}} \big[ (1-\mathcal{Y}_k(\nu_{k^+},t_{l}))\cdot \mathbb{P}_{\mb{\nu_{k}}}[k> z_i(t_{l}) \mid \nu_{k^+}] \big]\leq \frac{2}{p_i^{l-1}c_i}, \]
and in case $p_i=1$, the RHS equals 0 (which gives us the desired). For $p_i<1$, observe that,
\[\sum_{l\geq 1}^{l_f} p_i^{l-1}\cdot \frac{2}{p_i^{l-1}c_i}=\frac{2l_f}{c_i}\quad \text{ and }\quad l_f \leq O(\log c_i)\frac{1-p_i^{l_f}}{1-p_i}\quad \forall l_f\leq l_0, \]
which completes the proof.
\hfill\Halmos\end{proof}
\subsection{Performance of \dpg\ Beyond Online Matching}\label{sec:rbafail}

Recall that to analyze \galg, we captured the allocation of each individual unit 
via a natural $(F,\bsigma,\mb{p})$ random process. In an $(F,\bsigma)$ random process, at every $\sigma_t\in \bsigma$ the only event of relevance is the availability of the item at $\sigma_t$ (as probability $p_t=1$). If the item is available, then it is matched to $\sigma_t$ regardless of the usage durations realized before $\sigma_t$. 
In order to naturally capture the actions of \dpg\ on individual units through a random process, a similar property must hold in \dpg. 
To be more specific, consider a single unit $k$ of $i$ in \dpg\ and fix the randomness associated with all units and resources except $k$. Suppose we have an arrival $t$ with edge to $i$ and two distinct sample paths over usage durations of $k$ prior to the arrival of $t$. We are also given that $k$ is available at $t$ on both sample paths. Clearly, \dpg\ makes a deterministic decision on each sample path. Is it possible that \dpg\ matches $t$ to $k$ on one sample path but not the other? 

Interestingly, in the case of matching we can show that this is impossible, i.e., \dpg\ always makes the same decision on both sample paths in such instances. We show that this is not the case in general. In particular, when we include the aspect of customer choice or budgeted allocations (even one of these), reusability can interact with these elements in an undesirable way.  

For the setting of online budgeted allocations (i.e., online matching with multi-unit demand), we generalize \dpg\ as follows: given arrival $t$ with demand $b_{it}\geq 0$ for resource $i\in I$ and set $S_t$ denoting the set of available resources with an edge to $t$, 
we match $t$ to,
\begin{equation}
\argmax_{i\in S_t}\, \sum_{k=1}^{b_{it}} \, r_i\left(1-g\left(\frac{z_i(t,k)}{c_i}\right)\right),\nonumber 
\end{equation}
where $z_i(t,k)$ denotes the $k-$th highest available unit of $i$ when $t$ arrives. In the general model of online assortments with multi-unit demand, each arrival has an associated choice model $\phi_t$ and we offer the following assortment to $t$,
\begin{equation}
\argmax_{S\subseteq S_t}\, \sum_{i\in S} \left[r_i \phi_t(S,i) \sum_{k=1}^{b_{it}}\left(1-g\left(\frac{z_i(t,k)}{c_i}\right)\right)\right].\nonumber 
\end{equation}
The key ingredients of our analysis for matching namely, conditioning and covering, are not helpful in case of budgeted allocations or assortments. To understand this in more detail, consider first the fundamental ordering property for online matching and Corollary \ref{indepcoro}. In case of budgeted allocations and assortments, for arbitrary arrival $t$ with an edge to $i$, whether unit $k$ of $i$ is matched to $t$ can depend on the past usage duration of unit $k$ itself. 
The root cause is that customers can now have different preference ordering among resources. A resource $j$ may be less preferred than resource $i$ by customer $1$ but more preferred by customer 2. This substantially complicates the stochastic dependencies that arise out of reusability. Consider the following concrete example for budgeted allocations. 
\begin{eg}
\emph{We are given two resources $\{1,2\}$, each with a reward of 1 per unit and capacity of 2 units (example can be generalized to a setting with arbitrary capacity). Let resource 2 be non-reusable and let the usage durations of resource 1 come from a two point distribution with support $\{0.5,1.5\}$ and probability $0.5$ of either possibility. Consider a sequence with three arrivals $t_1,t_2,t_3$ occurring at time $1,2,$ and $3$ respectively. The first two arrivals have a bid of $2$ for resource 1 and bid $1$ for resource 2 i.e., both arrivals prefer resource 1 if both its units are available. Arrival $t_3$ has the opposite preference and requires $1$ unit of resource 1 or 2 units of resource 2. }

\emph{Consider the actions of \dpg\ on this instance. Arrival $t_1$ is allocated 2 units of resource 1. 
If these units return before the second arrival then resource 1 is also matched to $t_2$. Overall, with non-zero probability resource 1 is matched to the first two arrivals and is still available when $t_3$ arrives. However, in this scenario resource 2 is matched to $t_3$. 
Now, consider a different sample path where the units of resource 1 allocated to arrival $t_1$ do not return by $t_2$. In this case, \dpg\ allocates a unit of the non-reusable resource 2 to arrival $t_2$. Subsequently, arrival $t_3$ is allocated a unit of resource 1. Therefore, we have two sample paths where resource 1 is available to allocate to arrival $t_3$, however, the actions of \dpg\ are different on these sample paths. This behavior is quite unlike the case of matching 
(for instance, recall Corollary \ref{indepcoro}). }
\end{eg}

Next, we construct a very similar instance for the setting of online assortments. 
\begin{eg}
\emph{ Consider three arrivals $t_1,t_2,t_3$ that come in that order and two resources $\{1,2\}$ with unit reward and unit capacity (example can be generalized to arbitrary capacity). Let resource $2$ be non-reusable. Each arrival has a multinomial logit (MNL) choice model, i.e., probability $\phi(S,i)=\frac{v_i}{v_0+\sum_{j\in S}v_j}$ for $i\in\{1,2\}$ and any set $S$ containing $i$. MNL parameters for arrivals $t_1$ and $t_2$ are as follows: $v_1=100$,  $v_2=1$ and $v_0=0.01$. 
Arrival $t_3$ has $v^3_1=1$, $v^3_2=100$ and $v^3_0=0.01$. Now, consider the actions of \dpg\ on this instance. Observe that \dpg\ offers set $\{1,2\}$ to arrival $t_1$ and with probability close to 1, resource $1$ is chosen by this arrival. Suppose that the probability of resource $1$ returning before arrival $t_2$ is $p\in(0,1)$ and resource $1$ returns before arrival $t_3$ w.p.\ 1. Then with probability $p$ we offer arrival $t_2$ the set $\{1,2\}$ and resource $1$ is chosen again w.h.p..\ Subsequently, arrival $t_3$ will choose resource $2$ w.h.p., even if resource $1$ returns and is available. In other words, resource $2$ is the most preferred available resource for arrival $t_3$ in this case. }

\emph{		On the other hand, consider the scenario where resource $1$ does not return before arrival $t_2$. Arrival $t_2$ takes resource $2$ w.h.p..\ Given that resource $2$ is non-reusable, arrival $t_3$ accepts resource 1 w.h.p..\ Therefore, whether arrival $t_3$ accepts resource $1$ depends not just on whether resource $1$ is available at arrival $t_3$, but also on the past usage duration of resource $1$ itself. If resource $1$ returns before arrival $t_2$ then arrival $t_3$ does not accept resource $1$, otherwise arrival $t_3$ accepts resource $1$ w.h.p..\ Note that on both sample paths, resource $1$ is available at arrival $t_3$. This violates key properties that enable the analysis of \dpg\ for matching (see Corollary \ref{indepcoro}).}
\end{eg}

Overall, the stochasticity in reusability interacts in a non-trivial way with the arrival dependent aspect of bids in budgeted allocations, and random choice rankings in case of assortments. Without reusability, and more specifically, without stochasticity in reusability, this interaction disappears. Indeed, as shown by \cite{feng}, the results from online assortments with non-reusable resources generalize naturally to the special case of reusable resources with deterministic usage durations.
In summary, the ingredients that enable us to derive a general framework of analysis for \dpg\ in case of online matching, do not apply more generally. Substituting for these ingredients in more general models appears to be challenging and is left open as an interesting technical question. 

\section{Faster Implementations of \galg}\label{appx:faster}

Executing \galg\ has two main bottlenecks. First, we need to update values $Y(k_i)$ by taking into account past partial matches of unit $k_i$. Second, we may fractionally match every arrival $t$ to many units and in the worst case, to all $\sum_{i\in I} c_i$ units. The first issue is somewhat less limiting, as one can in practice continue to update the states during the time between any two arrivals and update states of different units in parallel. 
The second issue is more important and has a direct impact on the time taken to decide the match for every arrival. 
\subsection{From Linear to $\log$ Dependence on Capacity}
The first approach to improve the runtime of \galg\ is based on the observation that 
we only need to find an estimate of the index of the highest available unit. In particular, we can geometrically quantize the priority index of units for every resource. The rank of each unit of resource $i$ now takes a value $\floor{(1+\epsilon)^j}$ for some $j\in\{0,1,\cdots,\floor{\log_{1+\epsilon} c_i}\}$, where parameter $\epsilon>0$ is a design choice that trades off the runtime with performance guarantee. Larger $\epsilon$ translates to smaller runtime and larger reduction in guarantee.

Since many units of a resource may now have the same index, for the sake of computation we treat all units with the same index as a single `unit'. The improvement in runtime is immediate. In each iteration of the while loop (except the last) we decrease the index of the highest available 'unit' of at least one resource. So for $\epsilon>0$, there are at most $O(\frac{1}{\epsilon}\sum_{i\in I} \log c_i)$ iterations to match each arrival. To understand the impact on the performance guarantee, notice that the reduced prices, $r_i\big(1-g(\frac{z_i(t)}{c_i})\big)$, computed for fractionally matching each arrival are off by a factor of at most $(1\pm\epsilon)$ for every resource. To address this in the analysis, we modify the candidate solution for the certificate as follows,
\[	\lambda_t=\frac{1}{1-\epsilon}\sum_{i\in I} r_i\sum_{k\in[c_i]} y(k_i,t) \bigg(1-g\Big(\frac{k}{c_i}\Big)\bigg).\]
The value of $\beta$ in condition \eqref{cert1} is now larger by a factor of $(1-\epsilon)^{-1}$. The rest of Lemma \ref{galgua} follows as is, resulting in an asymptotic guarantee of $(1-\epsilon)(1-1/e)$.
\subsection{Capacity Independent Implementation}
The second approach rests on the observation that if we could ensure that at each arrival, every unit either has at least a small $\epsilon>0$ fraction available or is completely unavailable, then the number of units any arrival is partially matched to is at most $\frac{1}{\epsilon}$. We implement this idea in \galg\ by treating each unit as \emph{unavailable unless at least $\epsilon$ fraction of it is available}. Here $\epsilon$ is a design choice; smaller the value, closer the competitive ratio guarantee to $(1-1/e)$ and larger the runtime. In particular, the time to match $t$ now reduces to $O\big(\frac{|I|}{\epsilon}\big)$, which is within a $\frac{1}{\epsilon}$ factor of the time taken by much simpler Balance and \dpg\ algorithms. 

The deterioration in competitive ratio guarantee is more challenging to unravel. Fix a resource $i$ and unit $k$, and recall the ordered set of arrivals $\mb{s}(k)$ from the analysis of Lemma \ref{galgua}. Roughly speaking, this is the set of arrivals where 
where unit $k$ is unavailable in \galg. Observe that as a consequence of treating available fraction less than $\epsilon$ as unavailable, the set $\mb{s}(k)$ is now larger. To see the implication of this, define the $(F_i,\mb{T},\mb{p}(k))$ process to capture actions of \galg, in the same way as Lemma \ref{galgua}. A key step in Lemma \ref{galgua} involves showing that the expectation $r(F_i,\mb{T},\mb{p}(k)\cup \mb{1}_{\mb{s}(k)})$, is the same as $r(F_i,\mb{T},\mb{p}(k))$. 
However, as the set $\mb{s}(k)$ is now larger, this equality does not hold. 
Therefore, for a suitably defined non-negative valued function $\kappa$, we aim to show the weaker statement,
\begin{equation}
r\big(F_i,\mb{T},\mb{p}(k)\cup \mb{1}_{\mb{s}(k)}\big)\leq \big(1+\kappa(\epsilon)\big)\, r\big(F_i,\mb{T},\mb{p}(k)\big), \label{compare}
\end{equation} 
If true, this would establish $(1-1/e)\big(1+\kappa(\epsilon)\big)^{-1}$--competitiveness (asymptotically) for \galg. Ideally, would like to show inequality \eqref{compare} with a function $\kappa$ that takes values as small as possible for every $\epsilon$. Recall that for $\epsilon=0$, we showed in Lemma \ref{galgua} that inequality \eqref{compare} holds with $\kappa(\epsilon)=0$. 
However, for non-zero but small $\epsilon$, it is not clear if inequality \eqref{compare} holds with a small value $\kappa(\epsilon)$ in general. 

Interestingly, this inequality has a remarkably strong connection to Proposition \ref{relatexp} in the analysis of \dpg. In some sense, it is equivalent to Proposition \ref{relatexp}. More concretely, one can show the inequality \eqref{compare} for the families of usage distributions where we establish validity of Proposition \ref{relatexp} in this paper. 
For instance, when distribution $F_i$ is exponential, inequality \eqref{compare} holds with the linear function $\kappa(\epsilon)=2\epsilon$, leading to a $O(n/\epsilon)$ algorithm with asymptotic guarantee $(1+\epsilon)^{-1}(1-1/e)$. More generally, we have $k(\epsilon)=\epsilon^{\eta}$ and a  $(1+\epsilon^{\eta})^{-1}(1-1/e)$ guarantee for bounded IFR distributions, where $\eta$ is as defined in case of \dpg\ (Appendix \ref{ifrexamples}) Note that while the guarantees for \dpg\ hold only for online matching, the guarantee for this modified version of \galg\ holds for budgeted allocation as well as assortment.


\section{Challenge with Small Inventory: Connection to Stochastic Rewards}\label{sec:stochrew}\label{apx:connection}

While our work provides algorithms with the best possible guarantee for reusable resources in the large inventory regime, finding an algorithm that outperforms greedy for small inventory remains open. It is worth noting that the case where all inventories are equal to 1 is the most general setting of the problem (see Proposition 1 in \cite{reuse}). In this section, we shed new light on the difficulty of this problem by establishing a connection between a very special case of reusability and the well studied problem of online matching with stochastic rewards.

Consider the setting where matched resources return immediately and can be re-matched to subsequent arrivals i.e., usage durations are deterministically 0. It is not surprising that the greedy solution is optimal for this instance, as the capacity of each resource is virtually unlimited. At the other end of the spectrum is the case of non-reusable resources where matched units never return. 
Now consider perhaps the simplest 
setting that captures both these extreme cases where every matched unit returns immediately with probability $p$, and never returns (usage duration $+\infty$) w.p. $1-p$. This setting isolates a key aspect of reusability -- 
the stochastic nature of the problem. At first glance one might expect this setting to be straightforward given the observations for the extreme cases where $p=0$ or $p=1$. As it turns out, the general case is more interesting. 

To make this formal, we consider the stochastic rewards problem of \cite{deb} for non-reusable resources. This problem generalizes online matching by associating a probability of success $p_{it}$ with every edge $(i,t)\in E$. 
When a match is made i.e., edge is chosen, it succeeds independently with this probability. 
If the match fails the arrival departs but the resource is available for future rematch. The goal is to maximize the expected number of successful matches. We show the following connection.  

\begin{lemma}\label{connection}
The problem of online matching with reusability where resources have identical two point usage distributions supported on $\{0,+\infty\}$, is equivalent in the competitive ratio sense to the problem of online matching with stochastic rewards and identical edge probabilities,  i.e., an $\alpha$--competitive online algorithm in one setting can be translated to an $\alpha$--competitive online algorithm in the other. 
\end{lemma}

Proof of the lemma is presented later in this section. For small inventory, the stochastic rewards problem is well known to be fundamentally different from classic online matching. 
For instance, \cite{deb} showed that when comparing against a natural LP benchmark, no online algorithm can have a guarantee better than $0.621<(1-1/e)$ for stochastic rewards with small inventory, even with identical probabilities. While a recent result for stochastic rewards shows that this barrier can be circumvented by comparing directly against offline algorithms instead of LP benchmark\footnote{\cite{stochrew} give a $(1-1/e)$ result for instances of stochastic rewards with decomposable edge probabilities i.e., when for every edge $(i,t)\in E$ the probability can be decomposed as a product $p_{it}=p_i\times p_t$.}, in general, the setting of \emph{reusable resources sharply diverges and becomes much harder than stochastic rewards}.  
In particular, the stochastic rewards problem with heterogeneous edge probabilities admits a $1/2$--competitive result for arbitrary inventory.\footnote{This is achieved with greedy algorithms \citep{mehta,negin}. Improving this is an open problem. A tight $(1-1/e)$ result is known for large capacity \citep{msvv,negin}.}
In contrast, the corresponding generalization for reusable resources, where 
the probability of immediate return is arrival/edge dependent, does not admit any non-trivial competitive ratio result even for large capacity (Theorem 2 in \cite{reuse}).

To prove Lemma \ref{connection}, we first introduce an equivalent form of stochastic rewards where the reward is deterministic and independent of the success/failure of matching. Formally, consider the stochastic rewards setting with identical success probability $p$ for every edge and resources with unit reward. We transform this to an instance of the following problem:

\textbf{Online matching with stochastic consumption:} Each edge has a probability $p$ of success. We assume that $p>0$. If an arrival is matched to some resource, we earn a unit reward. After each match, a unit of the matched resource is used forever w.p.\ $p$, independent of other outcomes. With the remaining probability $1-p$, we do not lose a unit of the resource. Recall, we earn a unit reward in either realization.

We can denote an instance of either of these problems simply as $(G,p)$, where $G$ is the graph and $p>0$ is the edge probability. Now, consider the family of \emph{non-anticipative algorithms} for these problems, i.e., algorithms (online or offline) that do not know the realization of any match beforehand. Every online algorithm is naturally non-anticipative. Offline algorithms such as the clairvoyant benchmark, as well the stronger fully offline benchmark that can match arrivals in an arbitrary sequence \citep{stochrew}, are both non-anticipative. Evaluating competitive ratios against non-anticipative offline algorithms, we have the following result. 
\begin{lemma}
The problem of online matching with stochastic consumption (with probability $p>0$) is equivalent in the competitive ratio sense to the problem of online matching with stochastic rewards and identical edge probabilities,  i.e., an $\alpha$--competitive online algorithm in one setting can be translated to an $\alpha$--competitive online algorithm in the other. 
\end{lemma}
\begin{proof}
Consider an instance of the stochastic rewards problem and a non-anticipative algorithm $\mathcal{A}$, that can be offline or online. 
Consider the alternate reward function where each time $\mathcal{A}$ makes a match we obtain a deterministic reward $p$ regardless of the outcome of the match. Due to non-anticipativity of $\mathcal{A}$ and using the linearity of expectation, the expected total \emph{alternative} reward of $\mathcal{A}$ is the same as its expected total reward. 

Now, consider an instance $(G,p)$ of the (online matching with stochastic) consumption problem with $p>0$.  Given algorithm $\mathcal{A}$ for the stochastic rewards problem, we can obtain an algorithm $\mathcal{A}'$ for the stochastic consumption problem by simulating $\mathcal{A}$ on a coupled instance $(G,p)$ of the stochastic reward problem. 
The total expected reward of $\mathcal{A}'$ for the consumption problem is $\frac{1}{p}$ times the alternative reward of $\mathcal{A}$ on instance $(G,p)$ for the stochastic rewards problem. 

Observe that we can proceed in the reverse direction with similar arguments, i.e., given an algorithm $\mathcal{B}'$ for the consumption problem, we can construct an algorithm $\mathcal{B}$ for the stochastic rewards problem via simulating $\mathcal{B}'$ on a coupled instance of the consumption problem. The expected reward of $\mathcal{B}$ on an instance $(G,p)$ of the stochastic rewards problem is $p$ times the expected reward of $\mathcal{B}'$ on the instance $(G,p)$ of the consumption problem. 

Since these arguments hold for both online and offline algorithms, the constant factor of $p$ (or $1/p$) cancels out for $p>0$ and we have the desired competitive ratio equivalence.
\hfill\Halmos\end{proof}

\begin{proof}{Proof of Lemma \ref{connection}.}
It is now easy to see that the stochastic consumption problem is equivalent to the setting of online matching with reusable resources when the usage distributions for every resource is supported on $\{0,+\infty\}$, with probability of return $1-p$. To make this connection, we simply interpret unsuccessful consumption of a resource in the stochastic consumption setting as the resource returning with duration $0$ in the reusable resources setting, and vice versa. 
Given this interpretation, we can now directly use an algorithm from one setting in the other setting with the same expected reward.
\hfill\Halmos\end{proof} 

More generally, consider the stochastic rewards problem with heterogeneous edge probabilities $p_{it}$. 
The greedy algorithm that matches each arrival to the resource with highest expected reward is $1/2$--competitive for this general problem. Now, consider the corresponding generalization in the reusable resource setting 
with two point usage distributions supported on $\{0,+\infty\}$ and return probability $1-p_{it}$ for edge $(i,t)$. This problem does not admit any constant factor competitive ratio result (Theorem 2 in \cite{reuse}). The proof of equivalence breaks down since the expected reward of a match $(i,t)$ in the stochastic rewards setting is now $p_{it}$. In contrast, the reward for a match in the reusable resource setting is 1, as before. As the ratio between these rewards is now arrival dependent (unlike the case of identical probabilities) it does not cancel out as a constant factor when evaluating the competitive ratios.

\section{Miscellaneous}
\subsection{Impossibility for Stronger Benchmark}\label{appx:stronger}
For online matching with reusable resources consider the offline benchmark that in addition to the arrival sequence also knows the realizations of all usage durations in advance. The following example illustrates that no non-trivial competitive ratio result is possible against this benchmark. 

Consider a setting with $n$ resources. Resources have identical reward and usage distribution. Consider a two point distribution uniformly supported on $\{0,\infty\}$. Suppose we see $n^2$ arrivals, each with an edge to all $n$ resources. The expected reward of any online algorithm is at most $2n$, whereas an offline algorithm that knows the realizations of all durations in advance can w.h.p.\ match all $n^2$ arrivals as $n\to\infty$.
\subsection{Clairvoyant is Deterministic}\label{appx:miscfrac}
\begin{lemma}
There exists a deterministic algorithm that is optimal among the class of all offline algorithms that know the entire arrival sequence but match (or decide assortments) in order of arrival and do not know realizations of stochastic elements (usage durations and customer choice) in advance. 
\end{lemma}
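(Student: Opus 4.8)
The plan is to cast the entire class of offline algorithms described in the statement as the set of policies of a finite-horizon stochastic dynamic program, and then invoke the standard fact that such a program admits a deterministic optimal policy. First I would fix an instance $\mathcal{G}$ and identify the decision epochs with the finitely many arrivals $t\in T$, processed in their known order $a(1)\le\cdots\le a(T)$. At epoch $t$ the algorithm has observed a history $h_t$ consisting of the realized returns of previously matched units that have come back by time $a(t)$ together with the realized choices of previous arrivals. Crucially, because the algorithm observes returns as they occur and sees each customer's selection after offering an assortment, the set of currently available resources is \emph{fully observed}, so this is an observable sequential decision process rather than a partially observed one. The feasible action set $A_t(h_t)$ at epoch $t$ --- match $t$ to one of the currently available neighboring resources or leave it unmatched, respectively offer a feasible assortment $S\in\mathcal{F}_t$ --- is \emph{finite}, since $|I|$ is finite and $\mathcal{F}_t\subseteq 2^{I}$.

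Next I would define the optimal value-to-go functions $V_t(h_t)$ by backward induction from $t=T$ down to $t=1$. Setting $V_{T+1}\equiv 0$, the Bellman recursion reads
\[
V_t(h_t)=\max_{a\in A_t(h_t)}\Big\{\rho_t(h_t,a)+\mathbb{E}\big[V_{t+1}(h_{t+1})\,\big|\,h_t,a\big]\Big\},
\]
where $\rho_t(h_t,a)$ is the expected immediate reward of action $a$ (an expectation over the usage duration realized at $t$ and, for assortments, the customer choice), and the conditional expectation governs the transition from $h_t$ to $h_{t+1}$. The maximum is attained because $A_t(h_t)$ is finite, and every quantity is finite because the horizon $T$ is finite and the per-match expected rewards $r_i=\mathbb{E}_{d\sim F_i}[r_i(d)]$ are assumed finite. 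A measurable selector $a_t^\star(h_t)\in\arg\max_{a\in A_t(h_t)}\{\cdots\}$ exists trivially: order the finitely many actions and pick the first maximizer, a piecewise-defined and hence measurable function of $h_t$. The resulting deterministic, history-dependent policy attains $V_1(h_1)$ and is therefore optimal within the class (and it may in fact be taken Markovian, keyed on the availability configuration together with the matched-arrival times of in-use units).

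Finally I would verify that randomization cannot help, which is the step that formally yields the ``deterministic'' conclusion. For any possibly randomized policy $\pi$ drawing its action at epoch $t$ from a distribution $\pi(\cdot\mid h_t)$ over $A_t(h_t)$, its expected remaining reward from $h_t$ equals $\sum_{a}\pi(a\mid h_t)\big[\rho_t(h_t,a)+\mathbb{E}[V_{t+1}^{\pi}\mid h_t,a]\big]$. Under the induction hypothesis $V_{t+1}^{\pi}\le V_{t+1}$, each bracketed term is at most $\rho_t(h_t,a)+\mathbb{E}[V_{t+1}\mid h_t,a]$, so the convex combination is bounded by the maximum, namely $V_t(h_t)$; thus no policy beats the deterministic selector. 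The main obstacle I anticipate is purely technical: the distributions $F_i$ may be continuous or carry mass at $+\infty$, so the history space is uncountable and one must confirm that the $V_t$ are well-defined measurable functions with sensible conditional expectations. This is handled by noting that the underlying filtration is generated by finitely many real-valued durations and finitely many categorical choices, so standard measurable-selection arguments for finite action sets apply --- the finiteness of the action space at every epoch is exactly what keeps the argument elementary and avoids any appeal to heavier selection theorems.
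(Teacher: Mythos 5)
Your proposal is correct and follows essentially the same route as the paper: both cast the clairvoyant as a finite-horizon stochastic dynamic program over the arrival sequence, write the Bellman recursion with value-to-go functions, and conclude by backward induction that a deterministic argmax selection at each epoch is optimal. Your additional care about randomized policies being dominated by the argmax and about measurability over continuous duration distributions is a welcome tightening of details the paper's proof treats informally (e.g., its sum over next states implicitly assumes a discrete transition structure), but it does not change the underlying argument.
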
 
\begin{proof}{Proof.}
Given an arrival sequence the optimal algorithm is given by a dynamic program with the state space given by the number of arrivals remaining and the availability status of the resources, i.e., for each unit, whether it is currently available or in-use and how long it has been in-use for. 
The decision space of clairvoyant is simply the assortment decision for the current arrival in the sequence. Let $V(t,S_t)$ denote the optimal value-to-go at arrival $t$ given that the state of resources is $S_t$. We use $\onee(i,S_t)$ to indicate if a unit of $i$ is available in state $S_t$ and recall that $\mathcal{F}_t$ denotes the set of feasible assortments at $t$. Let $R\left(A,t+1\mid S_t\right)$ denote a possible state of the resources at arrival $t+1$, given state $S_t$ and assortment $A$ at arrival $t$. Let $p\left(R\left(A,t+1\mid S_t\right)\right)$ denote the conditional probability of occurrence for this state and let $\Omega(A,t+1\mid S_t)$ denote the set of all possible states $R\left(A,t+1\mid S_t\right)$. Clearly,
\[V(t,S_t)=\max_{A\in \mathcal{F}_t \mid  \onee(i,S_t)=1, \forall i\in A} \left(R_t(A)+\sum_{R\left(A,t+1\mid S_t\right)\in \Omega(A,t+1\mid S_t)} p\left(R\left(A,t+1\mid S_t\right)\right)\, V\left(t+1,R(A,t+1\mid S_t)\right)\right), \]
where $R_t(A)=\sum_{i\in A} r_i \phi_t(A,i)$. At the last arrival $T$, the future value to go (second term in the above sum) is zero and for any given state $S_T$ of resources at $T$, the optimal decision at $T$ is simply the (deterministic) solution to a constrained assortment optimization problem. Performing a backward induction using the above equation, we have for any given set of values $V\left(t+1,R(A,t+1\mid S_t)\right)$, the optimal assortment decision at arrival $t$ is deterministic. 
\hfill\Halmos\end{proof}
\subsection{Clairvoyant Matches Fractional LP for Large Capacities}\label{appx:LP}
Consider the following natural LP upper bound for online matching with reusable resources \citep{dickerson,baek,feng},
\begin{eqnarray}
OPT(LP)= \max &&  \sum_{(i,t)\in E} r_i y_{it}\nonumber\\
s.t.  &&\sum_{t=1}^{\tau}[1-F_i(a(\tau)-a(t))]y_{it}\leq c_i \quad \forall \tau\in\{1,\dots,T\},\forall i\in I \nonumber\\
&& \sum_{i\in I} y_{it} \leq 1\quad \forall t\in T\nonumber\\
&& 0\leq y_{it}\leq 1 \quad \forall t\in T, \quad i\in I
\end{eqnarray}
Clearly, $\opt \leq \opt (LP)$ and the allocations generated by any algorithm (offline or online) can be converted into a feasible solution for the LP, regardless of $c_i$. Perhaps surprisingly, we show that for large $c_i$, the solution to this LP can be turned into a randomized clairvoyant algorithm (that does not know the realizations of usage in advance) with nearly the same expected reward, implying that the LP gives a tight asymptotic bound and moreover, all  asymptotic competitive ratios shown against the clairvoyant also hold against the LP. 

\begin{theorem}Let $c_{\min}=\min_{i\in I}c_i$. Then,
\[\opt(LP)\Big(1-O\Big(\sqrt{\frac{\log c_{\min}}{c_{\min}}}\Big)\Big)\leq \opt \leq \opt(LP).\] 
Hence, for $c_{\min}\to +\infty$, $\opt \to \opt (LP)$.
\end{theorem}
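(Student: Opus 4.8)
The plan is to prove the two inequalities separately. The upper bound $\opt\le\opt(LP)$ is immediate and is already noted in the excerpt: the (expected) allocation produced by \emph{any} offline-in-order policy, in particular the optimal clairvoyant, satisfies the fluid capacity constraints $\sum_{t\le\tau}[1-F_i(a(\tau)-a(t))]y_{it}\le c_i$ in expectation, and hence yields a feasible point of the LP with at least the same objective value. So the whole content is the lower bound, and the natural route is to exhibit a \emph{single} randomized clairvoyant policy whose expected reward is at least $\opt(LP)\big(1-O(\sqrt{\log c_{\min}/c_{\min}})\big)$; since $\opt$ is the supremum over all offline-in-order policies, this suffices.

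The policy I would use is exactly the rounding step of \salg, but guided by the LP optimum rather than by \galg. Let $\{y^*_{it}\}$ be an optimal solution of the LP and set $\delta_i=\sqrt{2\log c_i/c_i}$ as in \salg. Upon the arrival of $t$, sample at most one resource, choosing $i$ with probability $y^*_{it}/(1+\delta_i)$ (and nothing with the remaining probability); match $t$ to $i$ if some unit of $i$ is currently free, otherwise leave $t$ unmatched. This policy reads off $\{y^*_{it}\}$ from the arrival times $a(t)$ and the known distributions $F_i$, decides in arrival order, and never consults the realizations of future usage durations, so it is a legitimate (randomized) clairvoyant policy and its expected reward lower bounds $\opt$.

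It then remains to show that each $(i,t)\in E$ contributes at least $\tfrac{(1-c_i^{-1})}{1+\delta_i}\,r_i y^*_{it}$ in expectation, which (mimicking Lemma \ref{algvgalg} essentially verbatim) reduces to showing that $i$ is available at $t$ with probability at least $1-c_i^{-1}$. Pre-sampling usage durations for every potential match, define independent indicators $X_\tau=\onee(i\to\tau)\,\onee(d_\tau>a(t)-a(\tau))$ for $\tau<t$, where $\onee(i\to\tau)$ is the independent sampling event of probability $y^*_{i\tau}/(1+\delta_i)$. As in Lemma \ref{algvgalg}, the event that $i$ is free at $t$ contains the event $\sum_{\tau<t}X_\tau<c_i$ (the true in-use count is dominated by the count that ignores availability), while the LP capacity constraint at $\tau=t$ gives
\[
\mu:=\mathbb{E}\Big[\sum_{\tau<t}X_\tau\Big]=\frac{1}{1+\delta_i}\sum_{\tau<t}y^*_{i\tau}\big(1-F_i(a(t)-a(\tau))\big)\le\frac{c_i}{1+\delta_i}.
\]
Lemma \ref{chernoff} then yields $\mathbb{P}[\sum_{\tau<t}X_\tau\ge c_i]\le c_i^{-1}$, i.e.\ availability at least $1-c_i^{-1}$. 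Summing the per-edge bound over $E$, and using that $\delta_i$ is decreasing in $c_i$ (so $\delta_i\le\sqrt{2\log c_{\min}/c_{\min}}$ and $c_i^{-1}\le c_{\min}^{-1}$), the policy's expected reward is at least $\tfrac{1-c_{\min}^{-1}}{1+\sqrt{2\log c_{\min}/c_{\min}}}\,\opt(LP)=\big(1-O(\sqrt{\log c_{\min}/c_{\min}})\big)\opt(LP)$, which is the claim.

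The main obstacle is the same delicate point that underlies Lemma \ref{algvgalg}: justifying that the in-use indicators are genuinely independent and that the true in-use count is stochastically dominated by $\sum_\tau X_\tau$. This is handled by pre-sampling one usage duration per potential edge-match and observing that a unit is consumed only when it is actually matched, so restricting attention to the sampling-and-duration events can only over-count; once independence and domination are in place, the Chernoff bound is routine. Two secondary points to verify are that the LP is feasible and bounded (with $y\equiv 0$ feasible) so an optimal $y^*$ exists, and that the monotonicity $\delta_i\le\sqrt{2\log c_{\min}/c_{\min}}$ holds for $c_i\ge c_{\min}$ in the large-inventory regime, which is precisely where the assumption $c_{\min}\to+\infty$ enters.
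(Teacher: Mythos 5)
Your proof is correct and follows essentially the same route as the paper's: the upper bound by converting any clairvoyant policy's expected allocations into a feasible LP point, and the lower bound by independently sampling each arrival's resource according to a slightly scaled-down LP-optimal solution and invoking the Chernoff-based availability argument of Lemma \ref{algvgalg} (the paper uses a uniform scaling $1/(1+2\delta)$ with $\delta=\sqrt{\log c_{\min}/c_{\min}}$ rather than your per-resource $\delta_i$, an immaterial difference). No gaps to report.
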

\begin{proof}{Proof.}
We focus on the lower bound and more strongly show that every feasible solution of the LP can be turned into an offline algorithm with nearly the same objective value. Let $\{y_{it}\}_{(i,t)\in E}$ be a feasible solution for the LP. Consider the offline algorithm that uses the LP solution as follows,
\[\text{ When $t$ arrives, sample a resource $i$ to offer, according to the distribution $\{y_{it}/(1+2\delta)\}_{i\in I}$,} \]
where  $\delta=\sqrt{\frac{\log c_{\min}}{c_{\min}}}$. Note that if the sampled resource is unavailable, the algorithm leaves $t$ unmatched. Also w.p., $1-\sum_{i\in I}y_{it}$, the algorithm rejects $t$. Since the LP does not use usage durations, the offline algorithm doesn't either. The critical element to be argued is that the expected reward of this algorithm is roughly the same as the objective value for the feasible solution. This holds due to concentration bounds and the argument closely mimics the proof of Lemma \ref{algvgalg}. Finally, since this offline algorithm makes matching decisions in order of the arrival sequence and does not know realizations of usage durations in advance, its performance gives a lower bound on the performance of clairvoyant, i.e., \opt. 
\hfill\Halmos\end{proof}
\emph{Remark:} This result generalizes naturally to the settings of online assortment and budgeted allocations. In case of assortments we use the Probability Matching algorithm from Appendix \ref{appx:asst} to use the concentration bounds.
\subsection{Sufficiency of Static Rewards}\label{appx:miscstatic}
\begin{lemma}
Given an algorithm (online or clairvoyant) for allocation that does not know the realizations of usage durations, the total expected reward of the algorithm is the same if we replace dynamic usage duration dependent rewards functions $r_i(\cdot)$ with their static (finite) expectations $r_i=\mathbb{E}_{d_i\sim F_i} [r_i(d_i)]$, for every resource. 
\end{lemma}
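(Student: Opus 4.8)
The plan is to express the total reward as a sum over individual potential matches and then exploit non-anticipativity to decouple, for each such match, the indicator that the match occurs from the random usage duration \emph{of that very match}. First I would fix the probability space by pre-sampling, exactly as in the proof of Lemma \ref{algvgalg}: for every edge $(i,t)\in E$ (and, in the multi-unit/assortment setting, for every unit of $i$ at every arrival $t$) draw an independent usage duration $d_{it}\sim F_i$ in advance. The algorithm reveals $d_{it}$ only at the instant it actually matches $i$ to $t$; all of its decisions up to and including arrival $t$ are functions solely of the arrival sequence, the algorithm's internal randomness, and the durations of matches made at strictly earlier arrivals. The key structural point is that the dynamics -- and hence the random set of realized matches -- are identical whether we score each match by $r_i(d_{it})$ or by the constant $r_i=\mathbb{E}_{d\sim F_i}[r_i(d)]$, since the reward accounting never feeds back into the state the algorithm observes before deciding. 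This holds for online algorithms and, by the lemma showing clairvoyant is deterministic, for the clairvoyant benchmark, which matches in order of arrival and observes usage realizations only upon return.

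Next I would introduce, for each $(i,t)$, the indicator $\mathbb{1}[A_{it}]$ of the event that $i$ is matched to $t$, and write the total reward as $\sum_{(i,t)\in E}\mathbb{1}[A_{it}]\,r_i(d_{it})$. The central claim is that $\mathbb{1}[A_{it}]$ is independent of $d_{it}$: whether the match occurs is determined by the availability of $i$ at time $a(t)$ together with the decision rule, both of which are measurable with respect to the internal randomness and the durations $\{d_{i't'}:t'<t\}$ of earlier matches -- none of which involve $d_{it}$. This is precisely where \emph{non-anticipativity} enters, since the algorithm cannot observe $d_{it}$ before committing to the match that generates it.

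Given this independence, I would apply it termwise, using finiteness of $r_i$ to guarantee the relevant expectations are well defined: $\mathbb{E}[\mathbb{1}[A_{it}]\,r_i(d_{it})]=\mathbb{P}[A_{it}]\cdot\mathbb{E}_{d\sim F_i}[r_i(d)]=\mathbb{P}[A_{it}]\,r_i=\mathbb{E}[\mathbb{1}[A_{it}]\,r_i]$. Summing over all $(i,t)$ and invoking linearity of expectation then yields $\mathbb{E}\big[\sum_{(i,t)}\mathbb{1}[A_{it}]\,r_i(d_{it})\big]=\mathbb{E}\big[\sum_{(i,t)}\mathbb{1}[A_{it}]\,r_i\big]$, which is exactly the assertion that the total expected reward is unchanged when each $r_i(\cdot)$ is replaced by its mean $r_i$. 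The same decomposition applies verbatim to the multi-unit and assortment settings by indexing over (unit, arrival) matches, each carrying its own independently drawn duration.

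The step I expect to be the main obstacle -- indeed essentially the only substantive point -- is making the independence claim fully rigorous, i.e.\ pinning down the correct filtration so that $\mathbb{1}[A_{it}]$ is measurable with respect to a sub-$\sigma$-algebra that excludes $d_{it}$. The pre-sampling construction renders this transparent, but one must verify that a match's duration is genuinely realized only upon (or after) the match, and that re-use of the same physical unit does not reintroduce a dependence; decomposing by individual units, so that each (unit, arrival) pair carries a freshly drawn, never-reused duration, removes that concern cleanly.
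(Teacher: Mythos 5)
Your proposal is correct and is essentially the paper's own argument: the paper conditions on the sample path $\omega(t)$ observed up to the allocation and notes that, by non-anticipativity, the conditional expected reward of matching $i$ to $t$ is exactly $r_i$, then concludes by linearity of expectation --- precisely the content of your claim that the match indicator is independent of $d_{it}$, so that each term contributes $\mathbb{P}[A_{it}]\, r_i$. Your pre-sampling and filtration language is just a more explicit formalization of that conditioning step, and your closing caveat parallels the paper's remark that the equality fails for benchmarks that observe durations in advance.
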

\begin{proof}{Proof.}
Consider arbitrary algorithm $\mathbb{A}$ as described in the lemma statement. Let $\mathbb{B}$ denote an algorithm that mimics the decisions of $\mathbb{A}$ but receives static rewards $r_i, \, \forall i\in I$ instead. Now, suppose $\mathbb{A}$ successfully allocates resource $i$ to arrival $t$ on some sample path $\omega(t)$ observed thus far. Then, conditioned on observing $\omega(t)$, the expected reward from this allocation is exactly $r_i$. More generally, using the linearity of expectation it follows that the total expected reward of $\mathbb{B}$ is the same as that of $\mathbb{A}$.  

Note that for algorithms that also know the realization of usage durations in advance, the decision of allocation can depend on usage durations and conditioned on observed sample path $\omega(t)$, the expected reward from successful allocation of $i$ to $t$ need not be $r_i$. 
\hfill\Halmos\end{proof}

\subsection{Upper Bound for Deterministic Arrival Dependent Usage} \label{appx:miscdet}
\begin{lemma}
Suppose the usage duration is allowed to depend on the arrivals such that a resource $i$ matched to arrival $t$ is used for deterministic duration $d_{it}$ (revealed when $t$ arrives). Then there is no online algorithm with a constant competitive ratio bound when comparing against offline algorithms that know all arrivals and durations in advance. 
\end{lemma}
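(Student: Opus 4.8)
The plan is to exhibit, for every constant $c>0$, a family of deterministic instances on which every online algorithm earns at most a $c$ fraction of the strong offline benchmark, by embedding an \emph{escalating‑reward / one‑way‑trading} gadget into a single reusable resource using arrival‑dependent durations. The core object is one resource $R$ of unit capacity whose successive match opportunities grow geometrically in value while each match locks $R$ for the remainder of the horizon, so that taking one opportunity forfeits all later (larger) ones. Because a match is irrevocable and the horizon is hidden from the online algorithm, this is exactly the classical online‑search game, whose competitive ratio degrades to $0$ as the value range grows.

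Concretely, I would first set up the gadget. Take a single resource $R$ of unit capacity and arrivals $a_1,\dots,a_L$ at times $\epsilon,2\epsilon,\dots,L\epsilon$, each with an edge only to $R$. I set the duration of matching $a_\ell$ to $d_\ell=2^\ell$ and let the corresponding reward equal $2^\ell$ as well; this is legal because the model permits the per‑unit reward to be a function $r_R(d)$ of the (deterministic, hence arrival‑revealed) duration, and I take $r_R(d)=d$. For $\epsilon$ small we have $L\epsilon\ll 2^1$, so a match of any $a_\ell$ keeps $R$ busy strictly past $a_L$; thus an algorithm can collect reward from at most one arrival, and collecting from $a_\ell$ yields exactly $2^\ell$. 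The strong offline benchmark knows the entire sequence, in particular the last index $L$, so it passes $a_1,\dots,a_{L-1}$ and matches $a_L$: hence \opt\ earns $2^{L}$. Since there are no stochastic elements here, ``knowing the durations in advance'' is precisely the benchmark of the lemma.

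Next I would prove the lower bound. Fix any \emph{deterministic} online algorithm and run it against the all‑continue prefix. It either never matches, in which case the adversary stops the sequence at $L=1$ and \alg\ earns $0$ while \opt\ earns $2$; or it commits to some first index $\ell^*$ that depends only on the observed prefix $a_1,\dots,a_{\ell^*}$. In the latter case the adversary simply extends the sequence to length $K\gg\ell^*$: the online commitment is unchanged because it saw an identical prefix, so \alg\ collects at most $2^{\ell^*}$ while \opt\ collects $2^{K}$, giving ratio $\le 2^{\ell^*-K}\to 0$. To handle \emph{randomized} online algorithms I would invoke Yao's minimax principle, placing a distribution over the horizon $L$ (a heavy‑tailed choice such as $\Pr[L=\ell]\propto 2^{-\ell}$, or the standard online‑search distribution on $\{1,\dots,K\}$) and bounding $\mathbb{E}[\alg]/\mathbb{E}[\opt]$ for every deterministic algorithm. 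The induced game is one‑way trading with unknown horizon, whose value is $O(1/\log M)$ with reward range $M=2^{K}$; letting $K\to\infty$ pushes this below any fixed $c$, and Yao converts it into a single instance that is bad for the given randomized algorithm.

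The main obstacle is the randomized case, i.e.\ choosing the horizon distribution and carrying out the minimax / one‑way‑trading computation so that \emph{every} threshold or adaptive online policy is super‑constant sub‑optimal; the deterministic case is essentially immediate once the gadget is in place. A secondary point worth flagging is \emph{why escalation is necessary}: with bounded rewards (for instance the natural $\{0,\infty\}$ durations with unit reward) an online algorithm can always decline a visibly bad, $\infty$‑duration match at a cost of at most one reward unit, and by keeping a resource free it harvests any spread‑out, $0$‑duration flood exactly as the offline does, so such constructions yield only a constant (indeed $1/2$) gap. It is the combination of irrevocability, a hidden unbounded horizon, and geometrically growing stakes that defeats hedging and drives the competitive ratio to $0$; realizing the growing stakes \emph{within} the permitted model — through the duration‑dependent reward $r_R(\cdot)$, equivalently arrival‑dependent rewards — is the key modeling step to get right.
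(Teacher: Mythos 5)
Your construction does not prove the lemma as intended, because all of the hardness in your instance comes from the reward scale $r_R(d)=d$, not from arrival-dependent durations. In the paper's model the per-match reward is a fixed $r_i$; the duration-dependent extension $r_i(d)$ is stated for the main model where $d\sim F_i$ is arrival-independent, and once you let $d_{it}$ vary with $t$ and set $r_R(d)=d$ you have smuggled in arbitrary arrival-dependent (i.e., edge-weighted) rewards, a model the paper never adopts. Under that relaxation the statement is a known triviality: your durations $d_\ell=2^\ell$ all exceed the horizon $L\epsilon$, so your resource is effectively non-reusable and the game is exactly classical online search on a single non-reusable unit with escalating prices -- a hardness that needs no reusability at all and hence says nothing about deterministic arrival-dependent usage, which is precisely what this lemma (supporting the discussion around Theorem 2 of Gong et al.) is meant to isolate. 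If you restrict to the fixed-reward model the lemma actually concerns, your instance collapses: every arrival is worth the same, at most one can be served, and an algorithm that matches the first arrival is $1$-competitive on your sequences.

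Moreover, your ``secondary point'' -- that with bounded rewards such constructions yield only a constant gap -- is exactly where the approach goes wrong, and the paper's own proof refutes it. The paper keeps \emph{unit} rewards and obtains the escalation from \emph{reuse counts} rather than reward values: the adversary builds nested sequences $A(j)$ shaped like a balanced binary tree in which each arrival's deterministic duration is slightly under half its parent's ($d/2-\epsilon$), with children timed at $t+\epsilon$ and $t+d/2$. Knowing the realized depth $j$, the offline serves the $2^{j-1}$ deepest arrivals back-to-back with a single unit, earning $2^{j-1}$, whereas any deterministic online algorithm, facing identical prefixes, effectively commits to one duration scale and earns $O(1)$ in expectation under the distribution $p_j\propto 2^{-j}$; Yao's minimax then yields an $n/2$ gap. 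Your Yao skeleton (geometric horizon distribution, prefix-indistinguishability for deterministic algorithms) is structurally the same as the paper's, so the repair is to replace your reward gadget with the duration-tree gadget: make \opt\ large by letting the single unit be rematched exponentially many times, not by inflating a single reward.
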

\begin{proof}{Proof.}
Using Yao's minimax, it suffices to show the bound for deterministic online algorithms over a distribution of arrival sequences. For simplicity, suppose we have a single unit of a single resource and a family of arrival sequences $A(j)$ for $j\in[n]$ (the example can be naturally extended to the setting of large capacity). The arrivals in the sequences will be nested so that all arrivals in $A(j)$ also appear in $A(j+1)$. $A(1)$ consists of a single arrival with usage duration of 1. Suppose this vertex arrives at time $0$. $A(2)$ additionally consists of two more arrivals, each with usage duration of $1/2-\epsilon$, arriving at times $\epsilon$ and $1/2$ respectively. More generally, sequence $A(j)$ is best described using a balanced binary tree where every node represents an arrival and the depth of the node determines the usage duration. Each child node has less than half the usage duration ($d/2-\epsilon$) of its parent ($d$). If the parent arrives at time $t$, one child arrives at time $t+\epsilon$ and the other at $t+d/2$. The depth of the tree for sequence $A(j)$ is $j$ (where depth 1 means a single node). Note that the maximum number of arrivals that can be matched in $A(j)$ is $2^{j-1}$.

Let $Z=\sum_{j=1}^n 2^j$. Now, consider a probability distribution over $A(j)$, where probability $p_j$ of sequence $A(j)$ occurring is $\frac{2^{n-j+1}}{Z}$. Clearly, an offline algorithm that knows the full sequence in advance can match $2^{j-1}$ arrivals on sequence $A(j)$ and thus, has revenue $n2^n/Z=n/2$. It is not hard to see that the best deterministic algorithm can do no better (in expectation over the random arrival sequences) than trying to match all arrivals with a certain time duration. Any such deterministic algorithm has revenue at most $Z/Z=1$. Therefore, we have a competitive ratio upper bound of $n/2$. 
\hfill\Halmos\end{proof}
{\color{black}

\section{Numerical Experiments: Missing Details}\label{appx:exp}
\subsection{LP Benchmark}
Let $p_{it}$ denote the probability that arrival $t$ has an edge to $i$. Observe that, if $t$ is a bursty arrival in phase $k$, then 
\begin{equation*}
	p_{i,t}=\begin{cases}
		1 & i=2n-k+1,\\
		0 & \text{otherwise}.
	\end{cases}
\end{equation*}
If $t$ is a normal arrival in phase $k$, then 
\begin{equation*}
	p_{i,t}=\begin{cases}
		\frac{\sum_{\ell\geq i} e^{-\kappa|\ell-(n-k+1)|}}{\sum_{\ell\geq 1} e^{-\kappa|\ell-(n-k+1)|}} & \forall i\in[n],\\
		0 & \text{otherwise}.
	\end{cases}
\end{equation*}
Clairvoyant knows the arrival sequence for each random instance. The optimal solution of the following LP is an upper bound on the expected total reward of clairvoyant.
\begin{eqnarray}
	\textbf{LP benchmark}\quad  \max &&  \sum_{i\in [n], t\in [2cn]} y_{i,t}\nonumber\\
	s.t.  &&\sum_{t=1}^{\tau}[1-F(a(\tau)-a(t))]y_{i,t}\leq c \quad \forall \tau\in[2cn],\forall i\in [n] \nonumber\\
	&& \sum_{i\in [n]} y_{i,t} \leq 1\quad \forall t\in [2cn]\nonumber\\
	&& 0\leq y_{i,t}\leq p_{i,t} \quad \forall t\in T[2cn], \quad i\in [n].\label{ub}
\end{eqnarray}
This LP benchmark closely resembles the standard LP relaxation of clairvoyant (see Appendix \ref{appx:LP}). The key difference is that we impose the upper bounds $y_{i,t}\leq p_{i,t}$ (see \eqref{ub}).
Let $E_{i,t}$ denote the event that there is an edge between resource $i$ and arrival $t$. Let $O_{i,t}$ denote the event that the clairvoyant matches $t$ to $i$ conditioned on the event $E_{i,t}$. To see that this LP is an upper bound on the expected performance of clairvoyant, observe that, if, for every $i\in[n]$ and $t\in[2cn]$, we set the decision variable $y_{i,t}$ as the probability that both $E_{i,t}$ and $O_{i,t}$ occur then all constraints in the LP will be satisfied and the LP objective represents the expected total reward of clairvoyant. 

\subsection{Other Scenarios}

For Tables \ref{suptab1} -- \ref{suptab3}, the performance of each algorithm (\alg) in the table is reported as the ratio of the empirical average performance of \alg\ (based on $20\times100$ trials) and the optimal value of the LP benchmark. Note that the standard deviation of the reported ratios is less than $0.0001$ for all algorithms except \salg, for which it is less than $0.01$. 
	\begin{table}[h]
			\centering
			\begin{tabular}{|c|c|c|c|c|c|}
				\hline
				\textbf{$c$} &\textbf{$F$}&\text{$\quad$\textbf{\dpg}$\quad$} & \text{$\quad$\textbf{Balance}$\quad$} & \text{$\quad$\textbf{Greedy}$\quad$} &\textbf{\salg} \\ \hline
				\multirow{3}{*}{5} & Two-point & 0.72485 & 0.72485 & 0.58894 & 0.5129  \\ \cline{2-6}
				& Exponential  & 0.9901 & 0.9932 & 0.9648 & 0.98907 \\ \cline{2-6}
				& Weibull & 0.9972 & 0.9968 & 0.9732 & 0.91239 \\ \hline
				\multirow{3}{*}{15} & Two-point   & 0.79477 & 0.79477 & 0.61379 & 0.6192  \\ \cline{2-6}
				& Exponential & 0.99907 & 0.9996 & 0.97867 & 0.99907\\ \cline{2-6}
				& Weibull & 0.9999 & 0.9998 & 0.9844 & 0.9999 \\ \hline
			\end{tabular}
			\caption{Average performance of online algorithms in comparison to the LP benchmark when the arrival sequence only includes the sequence of normal arrivals ($T_2$), i.e., the bursty arrivals are excluded. The results are for \( n = 5 \) and \( \kappa = 1 \).  Observe that \dpg, Balance, and greedy are all close to optimal in most scenarios and \dpg\ and Balance have very comparable performance. For two-point usage distribution, we believe that the overall instance is relatively closer to the worst case instance for non-reusable resources and this may be why greedy performs relatively poorly.}\label{suptab1}
	\end{table}
	
	\begin{table}[h]
	\centering
	\begin{tabular}{|c|c|c|c|c|c|}
		\hline
		\textbf{$c$} &\textbf{$F$}&\text{$\quad$\textbf{\dpg}$\quad$} & \text{$\quad$\textbf{Balance}$\quad$} & \text{$\quad$\textbf{Greedy}$\quad$} &\textbf{\salg} \\ \hline
		\multirow{3}{*}{5} & Two-point & 0.87617 & 0.85191 & 0.81745  & 0.75748 \\ \cline{2-6}
		& Exponential    & 0.92708 & 0.92266 & 0.91062  & 0.79072  \\ \cline{2-6}
		& Weibull  & 0.94267 & 0.94025 & 0.93380 & 0.81921 \\ \hline
		\multirow{3}{*}{15} & Two-point & 0.90468 & 0.86780 & 0.81277  & 0.82592  \\ \cline{2-6}
		& Exponential & 0.91876 & 0.90630 & 0.90978 & 0.8595\\ \cline{2-6}
		& Weibull & 0.94677 & 0.93587 & 0.93102 & 0.884263 \\ \hline
		\multirow{3}{*}{25} & Two-point & 0.91813 & 0.87634 & 0.82596 & 0.90428 \\ \cline{2-6}
		& Exponential & 0.91950 & 0.90333 & 0.90711 & 0.91942 \\ \cline{2-6}
		& Weibull & 0.94641 & 0.93414 & 0.93050 &0.945711 \\ \hline
	\end{tabular}
	\caption{Average performance of algorithms for $n=5$ and $\kappa=0$ in comparison to LP benchmark. Comparing the numbers with Table \ref{maintab} for $n=5$ and $\kappa=1$, one can see that all algorithms perform better for $\kappa=0$. Greedy now slightly outperforms the other algorithms in some scenarios. \dpg\ continues to dominate Balance but the gap between them is much smaller as compared to the scenario where $\kappa=1$.} \label{suptab2}
\end{table}

\begin{table}[h]
	\centering
	\begin{tabular}{|c|c|c|c|c|c|}
		\hline
		\textbf{$c$} &\textbf{$F$}&\text{$\quad$\textbf{\dpg}$\quad$} & \text{$\quad$\textbf{Balance}$\quad$} & \text{$\quad$\textbf{Greedy}$\quad$} &\textbf{\salg} \\ \hline
		\multirow{3}{*}{5} & Two-point  & 0.73806 & 0.72307 & 0.68854 & 0.6382  \\ \cline{2-6}
		& Exponential   & 0.9675 & 0.9665 & 0.9751 & 0.91201  \\ \cline{2-6}
		& Weibull  & 0.97312 & 0.97202 & 0.97742 & 0.95657 \\ \hline
		\multirow{3}{*}{15} & Two-point  & 0.74498 & 0.72151 & 0.68462 & 0.73217  \\ \cline{2-6}
		& Exponential & 0.95493 & 0.94877 & 0.9732 & 0.9441\\ \cline{2-6}
		& Weibull & 0.96362 & 0.95792 & 0.97379 & 0.96991 \\ \hline
		\multirow{3}{*}{25} & Two-point& 0.76864 & 0.74169 & 0.69747 & 0.75311 \\ \cline{2-6}
		& Exponential & 0.94867 & 0.93771 & 0.96965  & 0.94631 \\ \cline{2-6}
		& Weibull & 0.95822 & 0.94865 & 0.96916 &0.95812 \\ \hline
	\end{tabular}
	\caption{Average performance of algorithms for \( n=20 \) and \( \kappa=1 \) in comparison to LP benchmark. The overall trend in this case is similar to the case of $n=5$ and $\kappa=0$ (see Table \ref{suptab2} above). For two-point usage distribution, we believe that for a larger value of $n$ the overall instance is relatively closer to the worst case instance for non-reusable resources and this may be why greedy performs relatively poorly.}\label{suptab3} 
\end{table}

\begin{table}[h]
	\centering
	\begin{tabular}{|c|c|c|c|c|c|}
		\hline
		\textbf{$c$} &\textbf{$F$}&\text{$\quad$\textbf{\dpg}$\quad$} & \text{$\quad$\textbf{Balance}$\quad$} & \text{$\quad$\textbf{Greedy}$\quad$} &\textbf{\salg} \\ \hline
		\multirow{3}{*}{5} & Two-point  & 0.84941 & 0.82502 & 0.77134 & 0.68801  \\ \cline{2-6}
		& Exponential    & 0.9785  & 0.9776  & 0.9757 & 0.91239  \\ \cline{2-6}
		& Weibull  & 0.97984 & 0.97944 & 0.97654 & 0.96112 \\ \hline
		\multirow{3}{*}{15} & Two-point  & 0.88137 & 0.84527 & 0.77988 & 0.75819  \\ \cline{2-6}
		& Exponential & 0.97667 & 0.97393 & 0.9735 & 0.95613\\ \cline{2-6}
		& Weibull & 0.9784  & 0.97583 & 0.97583 & 0.97014 \\ \hline
		\multirow{3}{*}{25} & Two-point& 0.89713 & 0.86034 & 0.82436 & 0.76183 \\ \cline{2-6}
		& Exponential & 0.92088 & 0.90234 & 0.91402 & 0.91015 \\ \cline{2-6}
		& Weibull & 0.94488 & 0.93114 & 0.93002 &0.92137 \\ \hline
	\end{tabular}
	\caption{Average performance of algorithms for \( n=20 \) and \( \kappa=0 \) in comparison to LP benchmark. Comparing the numbers with Table \ref{suptab3} for $n=20$ and $\kappa=1$, one can see that all algorithms perform better for $\kappa=0$. \dpg\ continues to dominate all algorithms by a small margin. Greedy now performs comparably with Balance in some scenarios.  For two-point usage distribution, we believe that for a larger value of $n$ the overall instance is relatively closer to the worst case instance for non-reusable resources and this may be why greedy performs relatively poorly.}\label{suptab4} 
\end{table}
}
\end{APPENDICES}
\end{document}